\documentclass[a4paper,11pt]{article}

\usepackage{dsfont}
\usepackage{authblk}
\usepackage{yfonts}

\setlength{\textwidth}{16cm}
\setlength{\textheight}{24cm}
\setlength{\oddsidemargin}{5mm}
\setlength{\evensidemargin}{-5mm}
\setlength{\topmargin}{-2cm}

\usepackage[utf8]{inputenc}
\usepackage{graphicx}        
\usepackage[dvipsnames]{xcolor}
\usepackage{amsfonts}
\usepackage{eufrak}
\usepackage{amsmath}
\usepackage{amssymb}
\usepackage{amsthm}
\reversemarginpar

\usepackage{bm}
\newcommand\normrho[1]{\| #1 \|}
\newcommand\norm[1]{\left|#1\right|}
\newcommand\normu[1]{| #1 |_{\mathbb{R}^{p+q}}}
 
\def\mkill{s}

\def\fptcan{\fpt^{\Psi}}
\def\sptcan{\spt^{\Psi}}

\def\AsHone{$\mathrm{H}_1$}
\def\AsSone{$\mathrm{S}_1$}

\def\Qpol{\Polgen}

\def\qfonetheta{q_1^{\Psi}}
\def\qftwotheta{q_2^{\Psi}}

\def\Y{\mathcal Y}
\def\nur{\mu}
\def\V{\mathcal V}

\def\arho{\alpha_u}
\def\aw{\alpha_w}
\def\zerorho{0_p}
\def\Bp{B^p}
\def\rad{a_0}
\def\z{w}

\def\normx{\normrho{x}}
\def\normr{\norm{x}}

\def\absx{\mbox{abs}(x)}
\def\costwophi{\frac{x^2-y^2}{\normx^2}}
\def\sintwophi{\frac{2xy}{\normx^2}}
\def\partialr{\hat n}
\def\ngamma{\hat\varrho}
\def\nzeta{\hat z}
\def\oaxial{\iota}
\def\starsphere{\star_{\mathbb{S}^2}}
\def\tz{w^u}
\def\bU{V}
\def\H{k}
\def\balpha{\alpha}
\def\bbeta{\beta}
\def\boaxial{\bm{\overline\oaxial}}
\def\baxial{\bm{\overline\axial}}
\def\talpha{\tilde\alpha}

\def\flemma{\gamma}
\def\tlemma{\flemma}
\def\tX{\tilde\upsilon}
\def\tXX{\tilde\zeta}
\def\trho{\eta}
\def\qtheta{q}
\def\Phizero{\Gamma}
\def\Zzero{Z_\Phizero}
\def\Zphi{Z_{\Phi}}

\newcommand\tenstt[1]{\mathring #1}
\newcommand\vect[1]{{#1}_{\parallel}}

\def\vort{\nu}

\def\xc{x}
\def\yc{y}
\def\zc{z}

\def\x{{x_1}}

\newcommand{\trace}[1]{\textswab{#1}}

\def\traceA{\trace{A}}
\def\C{\traceA_1}
\def\A{\traceA_2}
\def\AT{A_2}
\def\B{\traceA_3}
\def\BT{A_3}
\def\DD{\traceA_{\rho u}}
\def\E{\traceA_{\axial u}}
\def\QQ{\traceA_{uv}}
\def\QQz{\traceA_{zz}}

\def\Polgen{\mathcal{P}}
\def\Polz{\widetilde{\mathcal{P}}}

\def\BB{U^3}

\newcommand{\ot}[1]{\mathfrak{#1}}
\def\otp{{\ot{c}}}
\def\otm{{\ot{a}}}
\def\otn{{\ot{b}}}

\def\kvi{\mathfrak{i}}
\def\kvj{\mathfrak{j}}
\def\kvk{\mathfrak{l}}

\def\paramp{\mathfrak{B}}
\def\parampp{\mathfrak{C}}

 
\def\fin{\hspace*{\fill} \rule{2.5mm}{2.5mm}\\ \vspace{0mm}} 

\def \ge {g_\pertp}
\def \gp {K_1}
\def \gpp {K_2}


\def\mmm{M}

\def\lie{\mathcal{L}}



\def\pertp{\varepsilon}
\def\gfam{\hat g}
\def\fpt{\gp}
\def\spt{\gpp}
\def\Kper{\fpt}
\def\Kperper{\spt}

\def\axial{\eta}
\def\stat{\xi}
\def \Wtwo {{\mathcal{W}}}

\def\defi{:=}



\def\D{\overline{D}{}}
\def\dsph{\overline{\d}}
\def\gsph{{g_{\mathbb{S}^2}}}

\def\centresph{\mathcal{C}}


\def\axis{\mathcal{A}}
\def\UinM{\mathcal{U}}
\def\Uaxis{\mathcal{U}_{\axis}}

\def\d{d}


\def\opertbase{\omega}
\def\opert{{\omega^{(1)}}}

\def\foh{h^{(1)}}
\def\fom{m^{(1)}}
\def\fok{k^{(1)}}
\def\fof{f^{(1)}}
\def\sow{\omega^{(2)}}
\def\sowf{\mathcal{W}} 

\def \soh{h^{(2)}}
\def \som{m^{(2)}}
\def \sok{k^{(2)}}
\def \sof{f^{(2)}}

\def\X{{\mathfrak{X}}}
\def\sper{V_1}
\def\sperper{V_2}
\def\K{K}
\def\s{s}
\def\Al{{\mathds{A}}}
\def\lam{\lambda}

\def\la{\langle}
\def\ra{\rangle}

\def\Q{\mathcal{R}}
\def\UR{\chi}
\def\VR{\upsilon}
\def\fy{{Y^3}}
\def\guepsilon{g^{\sharp}_{\pertp}{}}
\def\gepsilon{g_{\pertp}}

\newtheorem{theorem}{Theorem}[section]
\newtheorem{proposition}[theorem]{Proposition}
\newtheorem{lemma}[theorem]{Lemma}
\newtheorem{corollary}[theorem]{Corollary}
\newtheorem{corolemm}[theorem]{Corollary}
\theoremstyle{definition}
\newtheorem{definition}[theorem]{Definition}
\newtheorem{remarkdef}[theorem]{Remark}
\newtheorem{remarklem}[theorem]{Remark}
\newtheorem{remarkpro}[theorem]{Remark}
\newtheorem{remark}[theorem]{Remark}
\numberwithin{equation}{section}














\begin{document}
\title{Gauge fixing and regularity of axially symmetric and axistationary second order perturbations around spherical backgrounds
}
\author[1]{Marc Mars}
\author[2]{Borja Reina}
\author[2]{Ra\"ul Vera}
\affil[1]{Department of Fundamental Physics and\protect\\
Institute of Fundamental Physics and Mathematics, University of Salamanca}
\affil[2]{Department of Theoretical Physics and History of Science,\protect\\ University of the Basque Country UPV/EHU}
\date{}
\maketitle

\begin{abstract}
Perturbation theory in geometric theories of gravitation is a gauge theory of symmetric tensors defined on a Lorentzian manifold (the background spacetime).
The gauge freedom
makes uniqueness problems in perturbation theory particularly hard as one needs to understand in depth the
process of gauge fixing before attempting any uniqueness proof.
This is the first paper of a series of two aimed at deriving an existence and uniqueness result for 
rigidly rotating stars  to second order in perturbation theory in General Relativity.
A necessary step is to show the existence of a suitable choice
of gauge and to understand the differentiability and regularity properties of the resulting gauge
tensors in some ``canonical form'', particularly at the centre of the star.
With a wider range of applications in mind, in this paper we analyse the fixing and regularity problem in a more general setting.
In particular we tackle the problem 
of the Hodge-type decomposition into scalar, vector and tensor components
on spheres of symmetric and axially symmetric tensors
with finite differentiability down to the origin,
exploiting a strategy
in which the loss of differentiability is as low as possible.
Our primary interest, and main result, is to show that stationary and axially symmetric second order
perturbations around static and
spherically symmetric background configurations can indeed be rendered
in the usual ``canonical form'' used in the literature while loosing \emph{only} one
degree of differentiability and keeping all relevant quantities
bounded near the origin.

\end{abstract}

\tableofcontents

\section{Introduction}

Perturbation theory in metric theories of gravity is one of the fundamental tools to tackle many  realistic problems in relativistic astrophysics, ranging from slowly rotating stars to the emission of gravitational waves from
binary systems  in certain limits. Perturbation theory is, in essence, a theory of symmetric tensors defined on a Lorentzian manifold (the background spacetime). From a structural point of view its main particularity is that the theory is not only covariant, but also gauge invariant. If the perturbation theory is
developed to order $k$,  the number of symmetric tensors $K_N$ is also $k$
and the gauge freedom involves $k$ vector fields (see \cite{Bruni_et_al_1997} for
the explicit gauge transformation law at every level $k$).

The gauge freedom is at the same time a feature and a nuisance of the theory. Among its positive consequences,  the gauge freedom can often  be exploited to simplify the problem under consideration (in much the same way as in electromagnetism). On the other hand, the gauge freedom is always there, so any solution of a problem immediately gives rise to the whole class of gauge related solutions that are, a priori, equally valid. An immediate consequence is that uniqueness problems in perturbation theory become much harder, since one needs to understand in depth the process of gauge fixing before attempting any uniqueness proof.

We encountered this difficulty in full when we started the project of proving
a rigorous existence and uniqueness result for slowly rotating stars to second order in perturbation theory. As already mentioned, one of the necessary steps was to show the existence of a suitable choice of gauge and to understand the
differentiability and regularity properties of the resulting gauge fixed tensors, particularly at the centre of the star. Despite the vast literature available on choice of gauges, specifically in our setup of perturbations around spherically symmetric backgrounds, this problem had never been addressed before rigorously. It turned out that the problem is considerable harder than one could have expected a priori. In this paper we report on our results on this subject. Although our primary motivation for this work still lies on the existence and uniqueness problem for slowly rotating stars, the existence of gauges that we analyse in this paper have a much wider range of applicability and are interesting on their own,
independently of the original application we have in mind.  This, combined with the length and level of complication we have encountered, justifies presenting the results in a separate paper.

Our specific interest is to understand the problem of gauge fixing and the properties of the resulting ``canonical form'' for stationary and axially symmetric perturbations around static and spherically symmetric background configurations with a regular centre.  We restrict the perturbations to the so-called orthogonally transitive case and we go to second order in perturbation theory. It is important to emphasize that, for the sake of generality and particularly to apply the results in our subsequent work on slowly rotating stars, we need to
work with finite differentiability, and in fact we want to keep our differentiability requirements as low as possible. This requirement is one of the main sources of complication in our arguments.

We tackle the problem in two separate steps, each of which requires fewer assumptions on the background. The first step is concerned with orthogonally transitive stationary and axially symmetric perturbations. Here the background need not admit any extra symmetry, i.e. the results apply for general backgrounds admitting a stationary and axially symmetric orthogonally transitive action.
This step is not particularly complicated and we deal with it in Section \ref{sec:OT_perturbations}.
We first analise the case of general backgrounds admitting an orthogonally transitive Abelian group action of any dimension and perturbations up to second order that ``inherit'' these background symmetries
(the precise definition of this notion is given  in Definition \ref{def:inherit}).
However, it is only in the case of orthogonally transitive stationary and axially symmetric perturbations that we can ensure the gauge transformation keeps the differentiability
of the perturbation tensors \emph{also} on the axis. The main result for this step is Proposition \ref{Block} where \emph{block-}canonical forms for the first and second order perturbations tensors are given.  

The second step is considerable harder. Here the background is assumed to be spherically symmetric (but not necessarily static).
Perturbations around spherically symmetric backgrounds have been studied extensively in the literature
 and in many different areas
(see e.g. \cite{LRR_Kokkotas_Schmidt1999,brien2004,TR-Berti-Cardoso-Starinets2009,Brizuela2010,
llibre-RotRelStars2013} and references therein).
A very common choice of gauge is to assume that the  angular-angular part of the perturbation tensor is proportional to the standard metric on the sphere. This is for instance one of the defining properties
of the Regge-Wheeler (RW) gauge \cite{Regge-Wheeler}, but it is shared by many other gauge fixing procedures. By far the argument  most widely used to justify that such a choice of gauge is possible is to decompose the perturbation tensor into scalar, vector and tensor spherical harmonics. Then, for each mode it is easy to construct a gauge vector that transforms the perturbation tensor into the desired form. Despite its simplicity, this argument falls short to provide an existence proof of the gauge vector because that would require showing that the collection of  gauge vectors at each mode corresponds to the mode decomposition of a gauge vector. In other words, one must show that the mode series converges. This is not a simple problem.

The second approach is based on using the Hodge-type scalar-vector-tensor (SVT) decomposition
of symmetric tensors on the sphere \cite{york1973}.
The approach of replacing the spherical harmonic
mode decomposition by SVT decompositions
to study perturbations around spherical background
has been used in the literature (see e.g. \cite{Dotti2014,ishibashi-kodama2011}
and \cite{MarsMenaVera2007},
where the full set of perturbations are expressed in terms of functions on the sphere).
Its use to show existence of a suitable gauge vector in four spacetime dimensions and for first order perturbation tensors can be summarized as follows.  For spacetime dimension two, the
SVT decomposition is applied on two-dimensional spheres and it is a well-known fact
that in the two-dimensional sphere the only traceless and transverse symmetric tensor
is the zero tensor. Thus, the SVT decomposition takes a simple form that
involves only a scalar and a vector field
(the explicit form appears in \eqref{eq:spherical_o} below). This, combined
with the gauge transformation law, makes it immediate to show that a gauge vector exists such that  the angular-angular part of the first order perturbation tensor can be made proportional to the standard metric on the sphere. This approach however does not cover all our needs either, even at the first order level. The main difficulty lies at the centre, i.e. at the point(s) where the spheres
defined as the surfaces of transitivity of the spherical action on the spacetime degenerate to points. The SVT decomposition is well-understood on each sphere, but we need to deal with a two-parameter family of spheres that {\it degenerate to a point}. This prevents us from using directly the standard results on SVT decomposition to show existence of the appropriate gauge vector. It should be emphasized, however, that the SVT decomposition is a very important guiding principle for our approach to the problem.

Another important source of complication is our need to use finite differentiability. As we shall see along the text, it is a fact that rendering the perturbation tensor into a canonical form typically lowers the differentiability. Given that we want to keep the differentiability requirements as low as we can, it becomes necessary to find a good strategy where the loss of differentiability is as low as possible.   This prevents us even from adapting directly the standard methods on Hodge decomposition on the sphere in the domain away from the origin. 
Indeed, these methods (see e.g. \cite{IshibashiWald2004}) work by obtaining second order elliptic equations
for each one of the components arising in the Hodge decomposition. To derive these equations, one needs to take two derivatives of the original tensor.
These derivatives are then regained by standard elliptic regularity (working e.g. in H\"older spaces). However, when dealing with a two parameter family of problems as in our case, the loss of two derivatives in the coordinates that label
the spheres cannot be regained by elliptic regularity.

The method we follow
(see Theorem \ref{res:decomp_theorem}) consists in writing directly a system of coupled first order PDE on each sphere.
While we are not aware of any general theorem that gives existence, we can exploit the axial symmetry of the perturbations to
transform the system of PDE into a decoupled system of ODE.
This strategy allows us to achieve a loss of only one derivative (away from the origin). Although we have no proof that this loss is optimal, we do have strong indication that it cannot be improved in general. The method that we follow 
introduces an important complication at the axis of symmetry where the ODEs become singular. In fact, most of the technical work in this paper is devoted to understanding the existence and regularity at the axis of the
solutions of these ODE as well as to understand the regularity with respect to
transversal directions away from the two-sphere and, very particularly, the
behaviour of the solutions near the origin. We devote Appendix \ref{app:iota} to
study all these issues.

It turns out that the behaviour near the origin is complicated. Our main result in this respect is that all the relevant
quantities stay bounded near the origin. However, we do not show that the perturbation tensor is even
continuous at the origin (let alone differentiable). Again we have no proof that our result is optimal, but we strongly suspect that it is not possible to write the perturbation tensor in canonical form and not to lose a great deal of regularity at the origin.

This is in fact one of the points we want to stress in this paper. In the physics literature it is a rather common believe that, as long as the number of restrictions matches the number of free functions in the gauge transformation, 
the process of restricting the gauge and writing the resulting tensor in some predetermined form comes at essentially no cost. The analysis in this paper shows very clearly that these issues are very delicate and that exploiting the gauge in order to transform the perturbation tensors into some useful form may easily spoil some other desired properties (such as continuity or
differentiability at certain places). Only by knowing precisely  how much deterioration is generated, can one decide whether using the canonical form is convenient (or even possible) for the specific problem under consideration.

\subsection{Main result}
The main result, Theorem \ref{theo:main}, can be stated roughly as follows.
Consider a static and spherically symmetric spacetime $(\mmm,g)$
with $g$ of class $C^{n+1}$ with $n\geq 2$, with timelike integrable
KVF $\stat$, and single out a generator of an axial symmetry $\axial$, so that
\[
g   = - e^{\nu(r)} dt^2 + e^{\lambda(r)} dr^2 + \Q^2(r) \left ( d\theta^2
+ \sin^2 \theta d \phi^2 \right ),\qquad \stat=\partial_t,\quad \axial=\partial_\phi.
\]
Now, given any stationary ($\stat$) and axially ($\axial$) symmetric (and orthogonally transitive)
$C^{n+1}$ perturbation to second order around  $(\mmm,g)$ there exists a gauge transformation that yield
first and second order perturbation tensors $\fptcan$ and $\sptcan$
that are $C^{n-1}$ and $C^{n-2}$ outside the origin, respectively,
and can be written as
\begin{align*}
\fptcan =& -4 e^{\nu(r)} \foh(r, \theta) dt^2
-2  \opert(r,\theta) \Q^2(r)\sin^2\theta dtd\phi
+ 4 e^{\lambda(r)} \fom(r, \theta) dr^2 \nonumber\\
&+4   \fok(r, \theta)\Q^2(r)(d\theta^2+ \sin ^2 \theta  d\phi^2)
+ 4e^{\lambda(r)}\partial_\theta \fof(r,\theta)\Q(r) dr d\theta ,
\\
\sptcan =& \left(-4 e^{\nu(r)} \soh(r, \theta) + 2{\opert}^2(r, \theta) \Q^2(r) \sin ^2 \theta \right)dt^2\nonumber\\
&-2 \sow(r,\theta) \Q^2(r) \sin^2\theta  dt d\phi
+ 4 e^{\lambda(r)} \som(r, \theta) dr^2 \nonumber\\
&+4  \sok(r, \theta) \Q^2(r) (d\theta^2+\sin ^2 \theta  d\phi^2)
+ 4e^{\lambda(r)}\partial_\theta \sof(r,\theta)\Q(r)dr d\theta
\end{align*}
outside the axis of symmetry.
Moreover, the result provides full control of the differentiability and boundedness
properties of the functions involved.
Let us stress again the fact that this result
does not ensure
the continuity of either tensor $\fptcan$ or $\sptcan$ at the origin.
We can prove, however, that the gauge vectors extend continuously to zero at the origin.

The gauge freedom involved in the above forms is found and discussed in Section \ref{sec:gauges}.

\subsection{Plan of the paper}
The structure of the paper is as follows. 
Section \ref{sec:pert_scheme} is devoted to produce the necessary definitions
on perturbations that inherit some of the symmetries present in the background.
It serves us also to fix the differentiability  of the perturbation
scheme and the perturbation tensors.
In Section \ref{sec:axial} we analyse the structure
that symmetric 2-covariant tensors
invariant under the axial symmetry must have in a convenient (partly Cartesian)
class of charts in the presence of the axis. The results are presented in Lemma \ref{res:axial},
which generalises the well known results on the form of the metric
in axially symmetric spaces, see e.g. \cite{Carot2000}.
In parallel, Section \ref{sec:OT_perturbations} deals with
orthogonally transitive perturbation schemes, that is,
perturbations that inherit the two-dimensional group of isometries
acting orthogonally transitively admitted by the background (but arbitrary otherwise),
to second order. The result for sationary and axisymmetric orthogonally
transitive spacetimes is given in Proposition \ref{Block}.

Next we retake the results from Section \ref{sec:axial}
and particularise to axial perturbations around spherically symmetric backgrounds.
In particular, we prove in Theorem \ref{res:decomp_theorem}
the existence of the 
decomposition on the sphere of symmetric axially symmetric tensors (of finite differentiability),
down to the behaviour of the decomposition at the origin.
That result is then (partially) used to prove Proposition \ref{prop:K_spher},
which states the existence of a gauge vector that renders the first order
perturbation tensor in some convenient form, while keeping control of the differentiability
properties and behaviour at the origin of the relevant quantities.
The analogous, but much more involved result, for second order is presented in Proposition \ref{prop:K2_spher}.

We finally combine in Section \ref{sec:main} all those results
to build the proof of the main results of this
paper, in the form of Proposition \ref{prop:pre_main}
leading to Theorem \ref{theo:main}.

Let us stress that our work here is purely geometric, we do not make
use of any field equations.  For the same reason, we do not make any
consideration as to the physical meaning of the perturbation.

We have tried to write down this work as self-contained as possible, leaving the
more technical work for the Appendices.
The control of the differentiability (specially on the axis)
and boundedness near the
origin of the relevant components of the perturbation tensors
requires several results on
radially symmetric functions which we state and prove in 
Appendix \ref{app:diff_origin}. Although these results should be essentially known, they
are not easily found in the literature in the form we need.
Finally, the building block in showing existence of gauges
is Lemma \ref{res:S_c4_lemma}. Establishing this result requires some rather long technical work which is left to Appendix \ref{app:iota}.

\subsection{Notation}
Given the various setups considered in this work,
we have been compelled to introduce a substantial amount of notation.
We will fix most of the notation along the way, fundamentally at the start
of the sections in which the relevant frameworks are introduced,
in particular in Section \ref{sec:axis_on_sph} and Appendix \ref{app:iota}.
Nevertheless, we fix here some basic notation that will be used from the start.

A $C^{n+1}$ spacetime $(M,g)$
is a $k$-dimensional ($k\geq 2$) orientable $C^{n+2}$ 
manifold $M$ endowed with a time-oriented Lorentzian metric $g$ of class
$C^{n+1}$. We assume $n \geq 2$ unless otherwise stated. Scalar products of two vector fields $X$, $Y$
with the
metric $g$ will be denoted both by
$g(X,Y)$ and $\la X,Y\ra$. We say that a geometric object is ``smooth''
when it has maximum differentiability allowed by the background.
A function $f$ defined on an open dense subset $\mathcal{U}'$
of some neighbourhood $\mathcal{U}\in M$
is said to be $C^m(\mathcal{U})$ if it can extended to all $\mathcal{U}$ with this property.

We will use the usual square bracket notation $[m]$ for the integer part of $m \in \mathbb{R}$.
We also use Landau's  big-$O$ and little-$o$ notation with its standard meaning.

\section{Definition of perturbation scheme and symmetry preserving perturbations}
\label{sec:pert_scheme}
The construction of a spacetime perturbation relies on a one-parameter family of $C^{n+1}$ ($n \geq 2$) spacetimes
$(\mmm_\pertp, \gfam_\pertp)$, where $\pertp$ takes values
in an open interval $I_0 \subset \mathbb{R}$  containing zero,
from where we single out the background
$(\mmm,g) \defi (\mmm_0, \gfam_0)$, diffeomorphically identified through a gauge $\psi_\pertp$
so that
\begin{equation}
	\psi_\pertp : \mmm \rightarrow \mmm_\pertp,\label{diffeom:spacetime}
\end{equation}
and $\psi_0$ is the identity.
The diffeomorphisms $\psi_\pertp$ are assumed to be $C^{n+2}$ for each $\pertp$.
This  allows us to define a family of metrics
$\ge$ of class $C^{n+1}$ on $\mmm$ related to $\gfam_\pertp$ by
$\ge := \psi_\pertp^*(\gfam_\pertp)$. We further assume that this family
of metrics is at least $C^2$ in $\pertp$ (to guarantee we can go to second order)
and that $\pertp$-derivatives do not affect the differentiability
  class. 
Define also the tensor $\guepsilon^{\mu\nu}$ by $\guepsilon\defi 
d\psi_\pertp^{-1}(\gfam^{\sharp}_\pertp)$
where $\gfam^{\sharp}_{\pertp}$ is the contravariant metric associated to
$\gfam_{\pertp}$.
Note that $\guepsilon$ is the contravariant metric associated to $\gepsilon$. 

The first and second order perturbation tensors $\fpt$ and $\spt$ on $(\mmm,g)$ are obtained
from $\ge$ as follows
\begin{equation}
	\fpt := \left.\frac{d \ge}{d \pertp}\right|_{\pertp=0}, \quad \spt := \left.\frac{d^2 \ge}{d \pertp^2}\right|_{\pertp=0}, \label{def:metricperturbations}
\end{equation}
while the derivatives of the contravariant metrics are
\footnote{All Greek indices in this paper are raised and lowered with the background metric  $g$ and its inverse.}
\begin{align}
\left . \frac{d \guepsilon^{\alpha\beta}}{d \pertp} \right |_{\pertp=0}
= - \Kper^{\alpha\beta}, \quad \quad
\left . \frac{d^2 \guepsilon^{\alpha\beta}}{d \pertp^2} \right |_{\pertp=0}
  = - \Kperper^{\alpha\beta} + 2 \Kper^{\alpha\mu} \Kper{}_{\mu}^{\phantom{\mu}\beta}.
  \label{eq:gsharppert}
\end{align}
The identification by $\psi_\pertp$ is highly non-unique and its freedom
can be realized by
taking into consideration an $\pertp$-dependent diffeomorphism 
$\Omega_\pertp:\mmm\to\mmm$
in $\mmm$ before applying $\psi_\pertp$. The new identification is
$\psi^{g}_{\pertp} \defi \psi_\pertp\circ\Omega_\pertp$ and introduces a new family of 
tensors $\ge^{g}=\psi^{g}_\pertp{}^*(\gfam_\pertp)=\Omega^*_\pertp(\ge)$ on $\mmm$ with corresponding first and second order
perturbation tensors $\fpt^{g}$ and $\spt^{g}$. We again assume that
$\Omega_{\pertp}$ are $C^{n+2}$ diffeomorphisms with $C^2$ dependence in
$\pertp$ and that $\pertp$-derivatives do not change the
  differentiability class.
In terms of the first and second order (spacetime)
\textit{gauge vectors} $\sper$ and $\sperper$, defined as follows
\begin{eqnarray}
\sper &:=& \left. \frac{\partial \Omega_\pertp} {\partial \pertp} \right|_{\pertp=0}, \nonumber\\
       \sperper &:=&\left. \frac{\partial {V}_\pertp}{\partial \pertp}\right|_{\pertp=0}, \qquad               {V}_\pertp := \left.\frac{\partial (\Omega_{\pertp + h} \circ \Omega_\pertp^{-1})}{\partial h}\right|_{h=0},
       \label{def:sp_gauge_vectors}
   \end{eqnarray}
the relation between $\fpt^{g}$, $\spt^{g}$ and $\fpt$, $\spt$ is given by \cite{Bruni_et_al_1997,Mars2005}
\begin{align}
  {\fpt^{g}} &={\fpt} + \lie_{\sper} g,\label{gaugeper}\\
  {\spt^{g}} &= {\spt} + \lie_{\sperper} g + 2\lie_{\sper}{\fpt}^g
- \lie_{\sper} \lie_{\sper} g.
\label{gaugeperper_bis}
\end{align}
Since the background manifold is $C^{n+2}$ and the metric is $C^{n+1}$, the
natural differentiability class preserved by these gauge transformations is
as follows. At first order the gauge vector  $\sper$ is $C^{n+1}$ and
the perturbation tensor $\Kper$ is $C^{n}$. At second order $\sperper$ is $C^n$ and $\Kperper$ is $C^{n-1}$.
We will therefore incorporate this assumption into  our definitions:

\vspace{3mm}

A \emph{perturbation} is, by definition, the family $(\mmm_{\pertp},\hat{g}_{\pertp})$
with all the possible identifications $\{\psi_{\pertp}\}$ 
related to each other by a gauge transformation.  The intrinsic 
gauge freedom of a perturbation is a major source of complication, as a specific problem may have very different
forms in different gauges. A selection of a class of gauges leads to 
what we call a 
\emph{perturbation scheme}, namely a
triple $(\mmm_\pertp, \gfam_\pertp,\{\psi_\pertp\})$,
where $\{\psi_\pertp\}$
denotes the selected class of 
gauges.
A \emph{perturbation scheme} will be said  of class $C^{n+1}$ when
  the family $\gfam_\pertp$ is $C^{n+1}$, the perturbation tensors $\Kper$, $\Kperper$ are, respectively,
  $C^{n}$ and $C^{n-1}$ and the gauge vectors $\sper$, $\sperper$ are, respectively, $C^{n+1}$ and $C^{n}$.

In many specific problems one often needs to preserve some of the symmetries of the background along the perturbation. This leads to the notion of 
``inheritance of symmetries'' which we define next.

\begin{definition}
\label{def:inherit}
Let $(\mmm_\pertp, \gfam_\pertp,\{ \psi_\pertp \})$ be a perturbation scheme
whose background
spacetime  $(\mmm,g)$ admits a Killing vector field $\xi$.
The perturbation scheme is said to {\bf inherit the (local) symmetry generated by $\xi$} 
whenever for all $\pertp\in I_0$  and
all $\psi_{\pertp} \in \{ \psi_{\pertp} \}$, the vector field
$\hat{\xi}_\pertp\defi d\psi_\pertp(\xi)$  is a Killing vector of
$(\mmm_\pertp, \gfam_\pertp)$.
\end{definition}
\begin{remarkdef}
Of course, this definition just recovers the usual idea that the perturbation admits a symmetry when so does the family
$\ge$, since
$$\lie_{\xi} \ge=\psi_\pertp^*(\lie_{\hat{\xi}_\pertp}\gfam_\pertp)=0.$$
Note that the notion of ``inheriting a (local) isometry'' depends not only on the  perturbation itself, but also on the perturbation scheme. Indeed, given
a perturbation scheme $(\mmm_\pertp, \gfam_\pertp,\{ \psi_\pertp \})$ one may
construct
other perturbation schemes belonging to the same perturbation where 
the symmetry is not inherited. This is because a fixed vector
field $\xi$ which is Killing  for all
$g_{\pertp}$ will not, in general, be a Killing  of
$\Omega_{\pertp}^{\star} (g_{\pertp})$.
Thus, demanding the existence of a perturbation scheme where a symmetry
is inherited is useful both to restrict geometrically 
the family of perturbations and {\it also} to restrict the class of
allowed gauges. 
\end{remarkdef}
\begin{remarkdef}
If a perturbation scheme inherits a collection of
 (local) isometries of the background that form a subalgebra $\Al_0$,
then each $(\mmm_\pertp, \gfam_\pertp)$ admits the same algebra, because
push-forwards preserve  commutation relations \cite{Schouten1954}. 
For the same reason, $\Al_0$ must leave invariant the family of tensors $\ge$ on
$\mmm$.
\end{remarkdef}

Besides inheriting (local) isometries, 
in many cases some other geometrical aspects concerning the orbit of the (local)
group are also required
to be preserved. In the following two
sections we consider axially symmetric perturbations
and ``orthogonal transitive'' perturbations  independently.
Our final aim will be the construction of a convenient perturbation scheme
for stationary and axially symmetric perturbations that inherit also
the geometric property of being ``orthogonal transitive.''

\section{Axially symmetric perturbations}
\label{sec:axial}

In this section we recall the concept of axial symmetry and introduce the definition of axially symmetric perturbations as a particular instance of  the general notion of ``symmetry inheritance''. Our main result of the section is Lemma \ref{res:axial} where we explore the consequences of axial symmetry on the structure of symmetric two-covariant tensors. This result will play a relevant role in subsequent sections, where the background is taken to be spherically symmetric.

The definition of axial symmetry is standard (see e.g. \cite{carter_axial,marcsenoaxial,alanaxialcomment}):

\begin{definition}
  \label{def:axial}
	A spacetime $(\mmm,g)$ 
	is axially symmetric whenever there is an effective realization of the
	one-dimensional torus $T$ into $\mmm$ that is an isometry and such that the set of fixed points
	is non-empty.
\end{definition}

We denote by $\axial$ the Killing vector field defined by this
realization assuming that the torus $T$ has been parametrized with the standard  $2 \pi$-periodicity angle.
First consequences from the definition are that the set of fixed points, where $\axial=0$
and which we call the axis $\axis$, is a codimension two
Lorentzian and time-oriented surface \cite{carter_axial,marcsenoaxial}.
Furthermore, $\axis$ is autoparallel and for any point $p\in \axis$ there is a neighbourhood
of $p$ such that $\trho^2\defi \la \axial,\axial\ra$ is non-negative and zero only at points on
the axis (a priori this property may fail sufficiently far away from the axis). Moreover
\begin{equation}
\lim_{\trho^2\to 0}\frac{\la \nabla\trho^2,\nabla\trho^2\ra}{4\trho^2}=1.
\label{regular_axis}
\end{equation}
This is the so-called \emph{regular axis property}, from where the usual \emph{elementary flatness}
around the axis can be inferred.

We can particularize Definition \ref{def:inherit} to the case of axial symmetry and introduce the notion of axially symmetric perturbation scheme. To be explicit:

\begin{definition}
  \label{def:ASP}
  A perturbation scheme $(\mmm_\pertp,\gfam_\pertp,\{\psi_\pertp\})$
  of an axially symmetric background spacetime $(\mmm,g)$ with axial vector $\axial$
  is an \textbf{axially symmetric perturbation scheme} if
  it inherits the axial symmetry in the sense of  Definition \ref{def:inherit}.
  \end{definition}

In this setup
$\hat \axial_\pertp=d\psi_\pertp(\axial)$ is a Killing vector for each
$(\mmm_\pertp,\gfam_\pertp,\{\psi_\pertp\})$,
with axis $\hat\axis{}_{\pertp}$ defined by the points at which
$\hat \axial_\pertp=0$,
i.e. $\hat\axis{}_{\pertp}=\{p_\pertp \in \mmm_\pertp;\hat \axial_\pertp|_{p_\pertp}=
0\}$.
By the invertibility of $\hat \axial_\pertp=d\psi_\pertp(\axial)$
the points $p_\pertp$ that satisfy $\hat \axial_\pertp|_{p_\pertp}=0$
are those $p_\pertp=\psi_\pertp(p)$ such that $\axial|_p=0$,
and therefore $\psi_\pertp$ simply maps the axis at the background $\axis$
to their corresponding $\hat\axis{}_{\pertp}$.

Since the Killing equations $\lie_\axial g_\pertp=0$ hold for all $\pertp$,
we necessarily have, up to second order,
\begin{equation}
\lie_\axial \Kper=0,\qquad
\lie_\axial \Kperper=0.
\label{eq:axial_Ks}
\end{equation}

The regular axis property (\ref{regular_axis}) holds at each $\pertp$-component,
that is, the function
\[
\hat{\varLambda}_\pertp\defi \frac{\hat g_\pertp^{\alpha\beta}\partial_\alpha(\hat\trho_\pertp^2) \partial_\beta(\hat\trho_\pertp^2)}{4\hat\trho_\pertp^2},
\]
where $\hat \trho_\pertp^2\defi \gfam_\pertp (\hat\axial_\pertp,\hat\axial_\pertp)$,
must be 1 at $\hat\axis{}_{\pertp}$. Therefore, by construction, the pullback
$\varLambda_\pertp\defi\psi^*_\pertp(\hat\varLambda_\pertp)=g_\pertp^{\sharp\alpha\beta}\partial_\alpha(\trho_\pertp^2) \partial_\beta(\trho_\pertp^2)/4\trho_\pertp^2$, where
$\trho^2_\pertp\defi\psi^*_\pertp(\hat\trho_\pertp^2)=g_\pertp(\axial,\axial)$,
must attain 1 at $\axis$, i.e.
\[
\lim_{\trho^2\to 0}\varLambda_\pertp=1.
\]
Since $\lim_{\trho^2\to 0}\varLambda_0=1$ (regular background configuration),
the \emph{regular axis property} on the perturbation scheme translates, to second order,  onto
the fact that
\[
\varLambda^{(1)}\defi\left.\frac{d\varLambda_\pertp}{d\pertp}\right|_{\pertp=0},\qquad
\varLambda^{(2)}\defi\left.\frac{d^2\varLambda_\pertp}{d\pertp^2}\right|_{\pertp=0},
\]
satisfy
\begin{equation}
\lim_{\trho\to 0}\varLambda^{(1)}=0,\qquad
\lim_{\trho\to 0}\varLambda^{(2)}=0.
\label{regular_axis_e}
\end{equation}
Since
\[
\left.\frac{d\trho_\pertp^2}{d\pertp}\right|_{\pertp=0}=\Kper(\axial,\axial),\qquad
\left.\frac{d^2\trho_\pertp^2}{d\pertp^2}\right|_{\pertp=0}=\Kperper(\axial,\axial),
\]
and recalling (\ref{eq:gsharppert}) a straightforward calculation shows
\begin{equation}
\varLambda^{(1)}=\frac{1}{4\trho^2}\left(-\Kper(\nabla\trho^2,\nabla\trho^2)+2\la \nabla(\Kper(\axial,\axial)),\nabla\trho^2\ra-\frac{1}{\trho^2}\Kper(\axial,\axial)\la \nabla\trho^2,\nabla\trho^2\ra\right)
\label{eq:varlambda1}
\end{equation}
and
\begin{align}
\varLambda^{(2)}=&\frac{1}{4\trho^2}\left\{-\Kperper(\nabla\trho^2,\nabla\trho^2)
+2\la \nabla(\Kperper(\axial,\axial)),\nabla\trho^2\ra-\frac{1}{\trho^2}\Kperper(\axial,\axial)\la \nabla\trho^2,\nabla\trho^2\ra\right.\nonumber\\
&\left.+ 2(\Kper\cdot\Kper)(\nabla\trho^2,\nabla\trho^2)-4\Kper(\nabla(\Kper(\axial,\axial)),\nabla\trho^2)
+2\la \nabla(\Kper(\axial,\axial)),\nabla(\Kper(\axial,\axial))\ra\right\}\nonumber\\
&\left.-\frac{2}{\trho^2} \varLambda^{(1)}\Kper(\axial,\axial)\right..
\label{eq:varlambda2}
\end{align}

Summarizing we have shown the following result.

\begin{lemma}
  \label{state:axial_pert}
  Consider an axially symmetric perturbation scheme $(\mmm_\pertp,\gfam_\pertp,\{\psi_\pertp\})$
  for an axially symmetric background spacetime $(\mmm,g)$ with axial Killing vector $\axial$.
  Denote $\trho^2\defi\la\axial,\axial\ra$. Then,
  the first and second order perturbation tensors  $\Kper$ and $\Kperper$ satisfy
  (\ref{eq:axial_Ks}), the axis of symmetry of the perturbation coincides with the background
  axis of symmetry $\axis$ and the quantities $\varLambda^{(1)}$ and $\varLambda^{(2)}$,
  given by (\ref{eq:varlambda1}) and (\ref{eq:varlambda2}) respectively, vanish there.
  \fin 
\end{lemma}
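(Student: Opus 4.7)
The plan is to collect three assertions, each of which is essentially prepared in the preceding discussion, and to verify them in turn.

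For the Lie derivative conditions $\lie_\axial \Kper = 0 = \lie_\axial \Kperper$, I would use that Definition \ref{def:inherit} gives $\lie_{\hat\axial_\pertp} \gfam_\pertp = 0$ on $\mmm_\pertp$ for every $\pertp \in I_0$. Since $\psi_\pertp$ is a diffeomorphism with $d\psi_\pertp(\axial) = \hat\axial_\pertp$, naturality of the Lie derivative under pull-back yields $\lie_\axial \gepsilon = 0$ on $\mmm$ for every $\pertp$. Because $\gepsilon$ is $C^2$ in $\pertp$ and $\axial$ does not depend on $\pertp$, the operators $\partial_\pertp$ and $\lie_\axial$ commute, and differentiating once and twice in $\pertp$ at $\pertp = 0$ together with (\ref{def:metricperturbations}) produces (\ref{eq:axial_Ks}).

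The coincidence of the axes is purely a statement about where the push-forward vector vanishes. Since $d\psi_\pertp$ is a linear isomorphism on every tangent space, $\hat\axial_\pertp|_{\psi_\pertp(p)} = d\psi_\pertp(\axial|_p)$ vanishes if and only if $\axial|_p = 0$. Hence $\hat\axis_\pertp = \psi_\pertp(\axis)$, and after pull-back to $\mmm$ the axis of the inherited symmetry coincides with $\axis$.

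For (\ref{regular_axis_e}), I would apply the regular axis property (\ref{regular_axis}) to each $(\mmm_\pertp, \gfam_\pertp)$ with axial Killing vector $\hat\axial_\pertp$ and axis $\hat\axis_\pertp$, and pull back by $\psi_\pertp$ to obtain $\lim_{\trho^2 \to 0} \varLambda_\pertp = 1$ for every $\pertp \in I_0$. The explicit formulas (\ref{eq:varlambda1}) and (\ref{eq:varlambda2}) are then a direct algebraic consequence of $\pertp$-differentiating the definition of $\varLambda_\pertp$ once and twice at $\pertp = 0$, using $\partial_\pertp \trho_\pertp^2|_{\pertp = 0} = \Kper(\axial, \axial)$, the analogous second-order identity, and (\ref{eq:gsharppert}). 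The vanishing in (\ref{regular_axis_e}) would then be inferred from Taylor-expanding $\varLambda_\pertp$ around $\pertp = 0$ and matching the limit $\trho \to 0$ coefficient by coefficient.

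The main obstacle I anticipate is the justification of this last step, namely the commutation of $\partial_\pertp$ at $\pertp = 0$ with $\lim_{\trho \to 0}$. A termwise inspection of (\ref{eq:varlambda1}) shows that each individual piece is only $O(1)$ near the axis, so the required vanishing must arise from a cancellation enforced by the whole family $\varLambda_\pertp \equiv 1$ at $\axis$, and not from each summand separately. Such a cancellation follows from the joint $C^2$ dependence of $\gepsilon$ on $(p,\pertp)$ and the $C^{n+1}$ regularity of $\gepsilon$ in $p$, which together ensure that $\varLambda_\pertp$ and its first two $\pertp$-derivatives admit a common asymptotic expansion near $\axis$ dictated by the regular axis property on the whole family; once this uniform expansion is in place, the vanishing of $\varLambda^{(1)}$ and $\varLambda^{(2)}$ at $\axis$ is immediate.
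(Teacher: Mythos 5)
Your proposal follows essentially the same route as the paper: differentiating $\lie_{\axial}\gepsilon=0$ in $\pertp$ to get (\ref{eq:axial_Ks}), using invertibility of $d\psi_\pertp$ to identify the axes, and pulling back the regular axis property to each $\varLambda_\pertp$ before $\pertp$-differentiating to obtain (\ref{eq:varlambda1})--(\ref{eq:varlambda2}) and their vanishing on $\axis$. The only difference is that you explicitly flag the interchange of $\partial_\pertp|_{\pertp=0}$ with $\lim_{\trho\to 0}$, a step the paper passes over silently (it simply asserts that the regular axis property ``translates to second order'' into (\ref{regular_axis_e})), so your write-up is, if anything, slightly more explicit about that point.
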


Since in this paper we will be concerned with perturbations up to second order it makes sense to relax the definition of axially symmetric perturbation and impose conditions only up to this order. This leads to the following definition.

\begin{definition}
  \label{def:ASP_2}
  A perturbation scheme $(\mmm_\pertp,\gfam_\pertp,\{\psi_\pertp\})$
  of an axially symmetric background spacetime $(\mmm,g)$
  is a \textbf{second order axially symmetric perturbation} if
  it satisfies 
  the outcome of Lemma \ref{state:axial_pert}.
\end{definition}

The defining property of an axially symmetric perturbation scheme is that
  the axial Killing vector of the background is mapped to an axial
  Killing vector of $(\mmm_{\pertp},\gfam_{\pertp})$. Except in very special
  circumstances the spacetime $(\mmm_{\pertp},\gfam_{\pertp})$, $\pertp \neq 0$
  will admit only one axial Killing vector $\hat\axial_\pertp$, so
  $d\psi_\pertp(\axial)$ is forced to be this unique axial Killing field. In exceptional circumstances, where additional axial symmetries are present,
    the axially symmetric
    perturbation scheme becomes automatically larger. This additional freedom,
    however, is of a trivial nature and can be removed by any a priori identification of an axial Killing at every $\pertp \neq 0$. In the next lemma we identify restrictions on the perturbation vectors that arise from either the uniqueness
    of the axial symmetry of $(\mmm_{\pertp}, \gfam_{\pertp})$, $\pertp \neq 0$ or, in the exceptional cases, of having identified one axial Killing at each $\pertp$. This result will be useful at the end of the paper where
  uniqueness issues are discussed.
\begin{lemma}
\label{lemma:s_axial_commute}
In the setup of Lemma \ref{state:axial_pert}, if the class $\{\psi_\pertp\}$ is such that the axial Killing vector $\hat\axial_\pertp=d\psi_\pertp(\axial)$
  of $(\mmm_{\pertp},\gfam_{\pertp})$ is independent of $\psi_\pertp\in\{\psi_\pertp\}$, then the gauge vectors $\sper$ and $\sperper$ defined by a change of gauge within the class satify
  \begin{align*}
    [ \sper,\axial] =0, \qquad [\sperper,\axial] =0.
  \end{align*}
\end{lemma}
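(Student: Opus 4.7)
The plan is to translate the hypothesis on independence of $\hat\axial_\pertp$ within the class $\{\psi_\pertp\}$ into a statement about the reparametrizations $\Omega_\pertp$, and then differentiate once for $\sper$ and twice (appropriately) for $\sperper$. The Lie bracket $[\sper,\axial]$ should fall out of the first derivative in $\pertp$ at the origin of a condition like $(\Omega_\pertp)_*\axial=\axial$, since for any $C^1$ family of diffeomorphisms $\Omega_\pertp$ with $\Omega_0=\mathrm{id}$ and $\partial_\pertp\Omega_\pertp|_0=\sper$ one has $\tfrac{d}{d\pertp}\big|_0 \Omega_\pertp^{*}T=\lie_{\sper}T$ for any tensor $T$.

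More concretely, I would first let $\psi_\pertp$ and $\psi^{g}_\pertp=\psi_\pertp\circ\Omega_\pertp$ be two gauges in the class $\{\psi_\pertp\}$. The hypothesis gives $d\psi_\pertp(\axial)=d\psi^{g}_\pertp(\axial)=d\psi_\pertp\circ d\Omega_\pertp(\axial)$, and invertibility of $d\psi_\pertp$ forces $(\Omega_\pertp)_*\axial=\axial$ for every $\pertp\in I_0$. Differentiating this identity in $\pertp$ at $\pertp=0$, using the fact noted above (applied to $T=\axial$, for which $\lie_\sper\axial=[\sper,\axial]$), yields $[\sper,\axial]=0$.

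For the second order vector, I would proceed analogously but starting from the auxiliary family of diffeomorphisms $\Phi_{\pertp,h}\defi \Omega_{\pertp+h}\circ\Omega_\pertp^{-1}$, which satisfies $\Phi_{\pertp,0}=\mathrm{id}$ and, by the chain rule for pushforwards together with the already established $(\Omega_\pertp)_*\axial=\axial$, also $(\Phi_{\pertp,h})_*\axial=\axial$ for every admissible $(\pertp,h)$. Differentiating in $h$ at $h=0$ produces $[V_\pertp,\axial]=0$ for all $\pertp$, where $V_\pertp$ is the vector field defined in \eqref{def:sp_gauge_vectors}. A further differentiation in $\pertp$ at $\pertp=0$, combined with the fact that the Lie bracket is bilinear and $\pertp$-independent in its second slot, yields $[\sperper,\axial]=0$, as required.

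The only delicate point is the justification of $\tfrac{d}{d\pertp}\big|_0\Omega_\pertp^{*}\axial=\lie_\sper\axial$ for a general smooth family $\Omega_\pertp$ that need not be the flow of $\sper$; this is standard and follows from the fact that only the first-order jet of $\Omega_\pertp$ at $\pertp=0$ enters, so the family has the same first-order effect on tensors as the flow of its generator. The differentiability assumed in the perturbation scheme (namely $\Omega_\pertp$ of class $C^{n+2}$ in $p$ and $C^2$ in $\pertp$) is more than enough to carry out both differentiations and to ensure that the intermediate family $\Phi_{\pertp,h}$ is well-defined and smooth near $(\pertp,h)=(0,0)$.
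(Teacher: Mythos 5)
Your proposal is correct and follows essentially the same route as the paper: translate the hypothesis into $(\Omega_\pertp)_*\axial=\axial$, get $[\sper,\axial]=0$ by differentiating at $\pertp=0$, and obtain $[\sperper,\axial]=0$ via the auxiliary family $\Omega_{\pertp+h}\circ\Omega_\pertp^{-1}$ (i.e. $\lie_{V_\pertp}\axial=0$ for all $\pertp$) followed by a $\pertp$-derivative, using $\tfrac{d}{d\pertp}\lie_{V_\pertp}\axial=\lie_{dV_\pertp/d\pertp}\axial$. The only cosmetic difference is that the paper states the first-order conclusion by evaluating $\lie_{V_\pertp}\axial=0$ at $\pertp=0$ rather than differentiating $\Omega_\pertp^{*}\axial=\axial$ directly.
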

\begin{proof}
    Let $\psi_\pertp$
   and $\psi^g_\pertp$ belong to the class $\{ \psi_{\pertp} \}$. Recalling
   the definition $\Omega_\pertp=\psi_\pertp^{-1}\circ\psi_\pertp^g$, the assumption of the lemma implies (in fact, is equivalent to)  both
   $\bm \axial=\Omega^{-1}_\pertp{}^*(\bm \axial)$
   and $\bm \axial=\Omega_\pertp{}^*(\bm \axial)$,
   for all $\pertp \in I_0$.
   From the expresion of $V_{\pertp}$ in \eqref{def:sp_gauge_vectors} and the definition of Lie derivative we get $\lie_{V_{\pertp}}\bm\axial=0$. Evaluating at $\pertp =0$ yields $\lie_{V_{\pertp=0}} \bm{\axial} = \lie_{\sper} \bm{\axial} =0$.
   Taking the derivative at $\pertp=0$ and using
   (see e.g. the proof of Lemma 1 in \cite{Mars2005})
\[
\frac{d}{d\pertp}\lie_{V_\pertp}\bm\axial=\lie_{\frac{dV_\pertp}{d\pertp}}\bm\axial,
\]
the definition of $\sperper$ in \eqref{def:sp_gauge_vectors} gives
$\lie_{\sperper}\bm \axial=0$.\fin
\end{proof}

In order to explore the consequences of Definition \ref{def:ASP_2}  we need to
study the restrictions imposed by equation (\ref{eq:axial_Ks}). The
following lemma applies to arbitrary symmetric tensors invariant under
an axial Killing and may have independent interest. For related
results in the particular case when $K$ is the spacetime (background)
metric see \cite{Carot2000}.

\begin{lemma} 
\label{res:axial}
Let $(\mmm,g)$ be a $k$-dimensional ($k \geq 2$) 
axially symmetric spacetime with axial Killing 
$\axial$  and axis $\axis$. 
Let $K$ be a symmetric 2-covariant tensor satisfying $\lie_{\axial}K=0$
and $C^m$ ($m \geq 1$) on a neighbourhood
$\Uaxis\subset M$ of a portion of  $\axis$. By restricting
$\Uaxis$ 
if necessary 
we take $\Uaxis$ invariant under $\eta$ and admitting
global coordinates  $\{x,y,\tz\}$, where $u=3,\ldots,k$ if $k\geq 3$ or $\tz=\varnothing$ otherwise,
such that 
$\axial=x\partial_y-y\partial_x$. Let  $U \subset 
\mathbb{R}^k$ be the set where $\{x,y,\tz\}$ take values.
Then, using the notation $\normx \defi\sqrt{x^2+y^2}$ and
$K_{xx}\defi K(\partial_x,\partial_x)$, etc, it follows that, on $U \setminus \{ \normx =0\}$:
\begin{align}
&2K_{xx}=\C(\normx,\tz)+\A(\normx,\tz) \costwophi-\B(\normx,\tz) \sintwophi,\label{Kxx}\\
&2K_{yy}=\C(\normx,\tz)-\A(\normx,\tz) \costwophi+\B(\normx,\tz) \sintwophi,\label{Kyy}\\
&2K_{xy}=\A(\normx,\tz) \sintwophi+\B(\normx,\tz) \costwophi,\label{Kxy}\\
&K_{xu}=\frac{1}{\normx^2} \left ( \DD (\normx,\tz)x+\E(\normx,\tz)y 
\right ), \label{Kzxy1} \\
& K_{yu}= \frac{1}{\normx^2} \left ( -\E(\normx,\tz)x+\DD(\normx,\tz)y
\right ),\label{Kzxy2}\\
&K_{uv}=\QQ(\normx^2,\tz),\label{Kzz} 
\end{align}
where, for each $\paramp \in \{ 1,2,3,\rho u,\axial u,uv\}$,  the functions
$\traceA_{\paramp}(\rho,\tz)$ 
are defined on the domain $U_{\eta} := \{ \rho \in \mathbb{R}_{\geq0}, \tz \in \mathbb{R}^{k-2}
(\rho,0,\tz ) \in U\} \subset \mathbb{R}_{\geq0} \times \mathbb{R}^{k-2}$.
 Moreover,
\begin{align}
\traceA_{\paramp}(\rho,\tz) =\Polgen_{\paramp} (\rho^2, \tz) +\Phi^{(m)}_{\paramp}
(\rho,\tz),
\label{decomp}
\end{align}
where each  $\Polgen_{\paramp} (\rho^2,\tz)$ is a polynomial of degree
$[\frac{m}{2}]$ in $\rho^2$, $\Phi^{(m)}_\paramp$ is
$o(\rho^m)$ and $\Polgen_{\paramp}$, $\Phi^{(m)}_{\paramp}$ are
 $C^m$ on $U_{\eta}$. Furthermore $\Polgen_{2}, \Polgen_{3}, \Polgen_{\rho u}, \Polgen_{\axial u}$
vanish at $\rho=0$ (for all $\tz$).
\end{lemma}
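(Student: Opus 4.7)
The plan is to split the analysis into two parts: first deriving the trigonometric structure (\ref{Kxx})--(\ref{Kzz}) on $U \setminus \{\normx = 0\}$ from axial invariance alone, and then extracting the polynomial-plus-remainder decomposition (\ref{decomp}) using the $C^m$ regularity of $K$ across the axis.

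For the first part, introduce on $U \setminus \{\normx = 0\}$ the polar coordinates $x = r\cos\phi$, $y = r\sin\phi$ with $r = \normx$, so that $\axial = \partial_\phi$. In the frame $\{\partial_r, \partial_\phi, \partial_u\}$ the condition $\lie_\axial K = 0$ simply says that the components of $K$ are independent of $\phi$, hence depend only on $(r, \tz)$. Rewriting the Cartesian components via
\[
\partial_x = \cos\phi\, \partial_r - \frac{\sin\phi}{r}\partial_\phi, \qquad \partial_y = \sin\phi\, \partial_r + \frac{\cos\phi}{r}\partial_\phi,
\]
and applying $2\cos^2\phi = 1 + \cos 2\phi$, $2\sin\phi\cos\phi = \sin 2\phi$ with $\cos 2\phi = (x^2 - y^2)/\normx^2$, $\sin 2\phi = 2xy/\normx^2$, immediately produces the forms (\ref{Kxx})--(\ref{Kzz}), with the identifications $\C = K(\partial_r,\partial_r) + K(\partial_\phi,\partial_\phi)/r^2$, $\A = K(\partial_r,\partial_r) - K(\partial_\phi,\partial_\phi)/r^2$, $\B = 2K(\partial_r,\partial_\phi)/r$, $\DD = r K(\partial_r,\partial_u)$, $\E = -K(\partial_\phi,\partial_u)$, $\QQ = K(\partial_u,\partial_v)$. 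These are $C^m$ in $(r, \tz)$ on $\{r > 0\}$.

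For the second part, pass to the complex variable $z = x + iy$, in which $\axial = i(z\partial_z - \bar z \partial_{\bar z})$ acts diagonally with $\axial(z^j \bar z^k) = i(j - k) z^j \bar z^k$. Form the rotationally weighted combinations
\[
T_0 \defi K_{xx} + K_{yy}, \quad T_2 \defi (K_{xx} - K_{yy}) + 2iK_{xy}, \quad T^u_1 \defi K_{xu} + iK_{yu}, \quad T^{uv}_0 \defi K_{uv},
\]
where the numerical subscript coincides with the rotation weight, namely $\axial T = in T$ for each $T$ labelled by $n$. A direct computation from the forms in the first part yields $T_0 = 2\C$, $T_2 = (\A + i\B) z^2/\normx^2$, $T^u_1 = (\DD - i\E) z/\normx^2$ and $T^{uv}_0 = \QQ$. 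Taylor-expanding each $T$ at the origin in $(x, y)$ to order $m$, the eigenvalue condition forces every Taylor coefficient of $z^j \bar z^k$ with $j - k \neq n$ to vanish (indeed, a polynomial in $(z, \bar z)$ of degree at most $m$ that is $o(|z|^m)$ at the origin is identically zero). Consequently, the Taylor polynomial of $T$ factors as $z^n$ (for $n = 0, 1, 2$) multiplied by a polynomial in $\normx^2$ of degree at most $\lfloor m/2 \rfloor$, with a remainder that is pointwise $o(\normx^m)$. Substituting back and reading off real and imaginary parts produces the polynomial parts $\Polgen_\paramp(\rho^2, \tz)$ and the remainders $\Phi^{(m)}_\paramp(\rho, \tz)$, with the prefactor $z$ or $z^2$ in the cases $n = 1, 2$ forcing $\Polgen_2, \Polgen_3, \Polgen_{\rho u}, \Polgen_{\axial u}$ to vanish at $\rho = 0$.

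The main obstacle is upgrading this pointwise Taylor decomposition to the claimed $C^m$ regularity of both $\Polgen_\paramp$ and $\Phi^{(m)}_\paramp$ as functions of the radial variable $\rho$ on $U_\eta$. The key observation is that the rotation eigenvalue property allows the full $(x, y)$ dependence of each weighted combination to be reconstructed from its restriction to the slice $y = 0$, where the restriction is manifestly $C^m$ in $(x, \tz)$. Rewriting the explicit relations above on this slice (so that $x > 0$ corresponds to $\rho = x$) expresses each trace function as a $C^m$ function of $(\rho, \tz)$, the polynomial parts being polynomials in $\rho^2$ and the remainders inheriting the $o(\rho^m)$ bound from the pointwise Taylor estimate. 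The delicate control near $\rho = 0$, and in particular the statement that division by $z^n$ does not spoil the $C^m$ regularity in $\rho$, relies on the results for rotation-invariant $C^m$ functions collected in Appendix \ref{app:diff_origin}.
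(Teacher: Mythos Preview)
Your proof is correct and parallel in spirit to the paper's, though organized differently. The paper works entirely with real combinations: it writes out the component equations $\lie_\axial K = 0$ explicitly, identifies $K_{xx}+K_{yy}$, $xK_{xu}+yK_{yu}$, $yK_{xu}-xK_{yu}$, $K_{uv}$ as radially symmetric $C^m$ functions on $U$ and applies Lemma~\ref{origin} (with $p=2$) directly to obtain $\C$, $\DD$, $\E$, $\QQ$ together with the decomposition~(\ref{decomp}); for $\A$ and $\B$ it forms the combinations $\frac{x^2-y^2}{\normx^2}(K_{xx}-K_{yy})+\frac{4xy}{\normx^2}K_{xy}$ and its partner, checks they are radially symmetric and continuous across the axis, and then reads off the traces by restricting to the lines $y=0$ and $y=x$ before invoking Lemma~\ref{origin} with $p=1$. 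Your complex eigenvalue packaging $T_n$ with $\axial T_n = in T_n$ is a cleaner way to isolate the same rotation-weighted pieces, and the relation $T_n(r,\phi) = e^{in\phi} T_n(r,0)$ makes the reduction to the ray $y=0$ transparent; the evenness in $x$ needed to feed into Lemma~\ref{origin} for $p=1$ is precisely the statement $T_n(-\rho,0) = e^{in\pi}T_n(\rho,0)$ for $n=0,2$ (and oddness for $n=1$, compensated by the extra factor of $\rho$ in $\DD - i\E = \rho\, T_1^u(\rho,0)$). Both routes ultimately rest on Lemma~\ref{origin} for the polynomial-plus-remainder form; your version is more conceptual, the paper's more explicit about which concrete restriction recovers which trace. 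The only place your write-up is looser than the paper's is the last paragraph, where you gesture at Appendix~\ref{app:diff_origin} without naming Lemma~\ref{origin} or spelling out the parity argument --- worth tightening, but not a gap.
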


\begin{remarklem}
\label{res:axial_remark}
  Several expressions above
  look undetermined at $\normx=0$. The factors $(x^2-y^2)/\normx^2$ and $xy/\normx^2$
  are bounded but have no limit as $\normx\to 0$.
  However, this is only apparent because $\A$, $\B$, $\DD$, $\E$ vanish as $\normx\to 0$.
    Therefore, 
    the expressions are valid in the whole  $U$.
\end{remarklem}

\begin{proof}
The equation $\lie_\axial K=0$ takes the following explicit form in components
\begin{align}
& \axial(K_{xx})=-2K_{xy}, \qquad \axial(K_{yy})=2K_{xy},\qquad \axial(K_{xy})=K_{xx}-K_{yy},
\label{eq:Kxy} \\
& \axial(K_{xu})+K_{yu}=0,\qquad
\axial(K_{yu})-K_{xu}=0, \label{xz}
\\
& \axial(K_{uv}) = 0. \label{zz}
\end{align}
$K$ being at least $C^1$ and $\eta$ vanishing on the axis these equations 
readily imply that $K_{xx}- K_{yy}$, $K_{xy}$, $K_{xu}$, $K_{yu}$ 
all vanish at $\normx=0$. Moreover, the following consequences are
easily obtained
\[
\axial(K_{xx}+K_{yy})=0, \quad \quad
\axial (x K_{x u} + y K_{y u} )=0, \quad \quad
\axial (y K_{x u}-x K_{y u}) =0, \quad \quad
\axial(K_{u v}) = 0,
\]
so $K_{xx}+K_{yy}, x K_{x u} + y K_{y u}$, 
$y K_{x u} -x K_{y u}$,  $K_{uv}$ are radially symmetric
in the variables $\{x,y\}$
and $C^m$ in $U$. By Lemma \ref{origin} in Appendix \ref{app:diff_origin}, there exist
$\C(\rho,\tz)$, $\DD(\rho,\tz)$, $\E(\rho,\tz)$ and $\QQ(\rho,\tz)$ defined on $U_{\eta}$
and satisfying (\ref{decomp}) such that
\begin{align*}
&K_{xx}+K_{yy}=\C(\normx,\tz), &K_{u v} = \QQ(\normx,\tz),\\
&y K_{x u} -x K_{y u} = \E(\normx,\tz), &x K_{x u} + y K_{y u} = \DD(\normx,\tz) .
\end{align*}
Now, $\DD(\rho,\tz)$ and $\E(\rho,\tz)$ vanish at $\rho=0$ (for all
$\tz$), so
the corresponding polynomials $\Polgen_{\rho u}$, $\Polgen_{\eta u}$ have the same property.
From the definitions it is clear that (\ref{Kzxy1}), 
(\ref{Kzxy2}) and 
(\ref{Kzz}) hold. 
Next, define
\[
\AT\defi \frac{x^2-y^2}{\normx^2}(K_{xx}-K_{yy})+\frac{2xy}{\normx^2}2K_{xy},\qquad
\BT \defi -\frac{2xy}{\normx^2}(K_{xx}-K_{yy})+\frac{x^2-y^2}{\normx^2}2K_{xy}
\]
in $U \setminus \{ \normx =0\}$.
The vanishing of $K_{xx} - K_{yy}$ and
$K_{xy}$ at the axis  show that these functions extend continuously
to zero at $\normx=0$.
Moreover, since
\[
\axial \left (\frac{x^2-y^2}{\normx^2} \right )=-2\frac{2xy}{\normx^2},\qquad
\axial \left (\frac{2xy}{\normx^2} \right )=2\frac{x^2-y^2}{\normx^2},
\]
we have
$\axial(\AT)=0$ and $\axial(\BT)=0$, and therefore $\AT,\BT$ are radially
symmetric in $\{x,y\}$. Define the corresponding traces 
$\A,\B:U_\axial\to \mathbb{R}$
by $\AT=\A(\normx,\tz)$ and $\BT=\B(\normx,\tz)$. Obviously
these functions  satisfy $\A(0,\tz)=\B(0,\tz)=0$.
Directly from the definitions 
$f\defi K_{xx}-K_{yy}$ 
takes the form $f=\A(\normx,\tz)\frac{x^2-y^2}{\normx^2}
-\B(\normx,\tz)\frac{2xy}{\normx^2}$ in $U \setminus \{ \normx =0\}$. 
So far we thus have (\ref{Kxx}) and (\ref{Kyy}),
and the first equation in (\ref{eq:Kxy}) leads to (\ref{Kxy}).
It only remains to show that $\A$ and $\B$ admit the decomposition
(\ref{decomp}).
Define the functions
\begin{align*}
&s_2(x,\tz)\defi f(x,y=0,\tz)=\A(\absx,\tz),\\
&s_3(x,\tz)\defi -\frac{1}{\sqrt{2}}f(x,y=x,\tz)=\B(\absx,\tz).
\end{align*}
First of all, the fact that $f$ is $C^m$  (in $U$)
implies  $s_2(x,\tz)$ and $s_3(x,\tz)$ are $C^m$ functions of their arguments,
particularly in the domain $x \geq 0$.
Setting $\rho=\absx$ we have that
$\A(\rho,\tz)=s_2(\rho,\tz)$  and $\B(\rho,\tz)=s_3(\rho,\tz)$,
and by Lemma  \ref{origin} particularised for $p=1$, $\A,\B$  are $C^m$
functions in $U_\axial$ with the structure given in (\ref{decomp}).\fin
\end{proof}

\section{``Orthogonally transitive'' perturbations}
\label{sec:OT_perturbations}
Recall that an Abelian $G_2$ (local) group of isometries on a spacetime $(M,g)$
is {\bf orthogonally transitive}   if, except possibly on a subset with
empty interior, the
orbits of the local isometry are two-dimensional, non-null and their orthogonal spaces form an
integrable distribution (equivalently, almost everywhere in $M$ there exists a foliation by
immersed non-null surfaces orthogonal to the orbits everywhere). 
By definition we say that this property is inherited by a perturbation scheme 
if the background admits an Abelian orthogonally transitive 
$G_2$ local action which is inherited on each $(\mmm_\pertp, \gfam_\pertp)$
and the corresponding (local) orbits
are orthogonally transitive.

\subsection{Orthogonally transitive actions}
\label{sec:OTactions}
The results of this subsection are essentially known.
We include them for completeness since they will be needed later.

Consider a $C^{n+1} (n \geq 0)$ pseudo-Riemannian manifold $(M,g)$ of dimension $k$
and arbitrary signature. Let $\{\xi_\kvi \}$ $\kvi,\kvj,\kvk, \cdots =1,\cdots, \mkill <   k$ 
be
a collection of (linearly independent) Killing vector fields that form an algebra, i.e.
such that
\begin{align*}
[\xi_\kvi, \xi_\kvj ] = C^{\kvk}_{\kvi\kvj} \xi_\kvk.
\end{align*}
For all $p \in M$ define
$\Pi|_p = \mbox{span} ( \xi_1 |_p, \cdots \xi_{\mkill} |_p )$. This is a vector
subspace of $T_p M$ of dimension at most $\mkill$. Define
$M' : = \{ p \in M; \mbox{dim}(\Pi|_p) = \mkill\}$. Linear independence of the Killing vectors implies
$M'$ is dense in $M$.
We make the following two assumptions:
\begin{itemize}
\item[(i)] For all $p \in M'$, the metric $g$ restricted to $\Pi|_p$ is non-degenerate.
\item[(ii)] For all $p \in M'$, the $g$-orthogonal complement $\Pi^{\perp}|_p$
defines an integrable distribution of $M'$.
\end{itemize}
Note that by (i) we have $T_p M = \Pi |_p \oplus \Pi^{\perp} |_p$
for all $p \in M'$.
Since the distribution $\{ \Pi^{\perp} |_p\}$ is
 defined as the set of vectors in the kernel of all\footnote{We use boldface to denote the metrically
  related one-form associated to a vector field.}
 $\bm{\xi}_\kvi :=
g (\xi_\kvi ,\cdot)$, the Fr\"obenius theorem states that (ii) can
be written equivalently in the following two ways
\begin{align*}
\mbox{(ii')} \quad \quad &  d \bm{\xi}_{\kvi_1} \wedge \bm{\xi}_{\kvi_2}
\wedge \cdots \wedge \bm{\xi}_{\kvi_{\mkill+1}} = 0 \quad \quad
\quad \forall \kvi_1, \cdots \kvi_{\mkill+1} = 1, \cdots \mkill, \\
\mbox{(ii'')} \quad \quad & d\bm{\xi}_{\kvi} =
\sum_{\kvj=1}^{\mkill} \bm{\xi}_\kvj \wedge \bm{\Sigma}_{\kvi\kvj} 
\end{align*}
where $\bm{\Sigma}_{\kvi\kvj}$ are smooth one-forms on $M'$.

We note the following facts. If $V,W$ are vector fields everywhere orthogonal
to $\{ \xi_\kvi \}$ (or, in other words, taking values in $\Pi^{\perp}|_p$, for all
$p \in M$) then $[\xi_\kvi, V]$ and $[V,W]$ have the same property. The proof
is by direct computation ($\alpha,\beta, \cdots$ are indices of $M$ and
$\nabla$ is the Levi-Civita derivative of $g$):
\begin{align*}
\xi_\kvi{}_{\alpha} [ \xi_\kvj, V]^{\alpha} & =
\xi_\kvi{}_{\alpha}
\left( \xi_\kvj^{\beta} \nabla_{\beta} V^{\alpha}
- V^{\beta} \nabla_{\beta} \xi_\kvj^{\alpha} \right)
= - V_{\alpha} \left( 
\xi_\kvj^{\beta} \nabla_{\beta} \xi_\kvi{}^{\alpha}
- \xi_\kvi^{\beta} \nabla_{\beta} \xi_\kvj^{\alpha} \right) \\
& = - V_{\alpha} [ \xi_\kvj,\xi_\kvi]^{\alpha} =0 
\end{align*}
and
\begin{align*}
\xi_\kvi{}_{\alpha} [ V, W] ^{\alpha} & =
\xi_\kvi{}_{\alpha}
\left( V^{\beta} \nabla_{\beta} W^{\alpha}
- W^{\beta} \nabla_{\beta} V^{\alpha} \right)
= V^{\beta} W^{\alpha}
 \left( - \nabla_{\beta} \xi_\kvi{}_{\alpha}
 + \nabla_{\alpha} \xi_\kvi{}_{\beta} \right) \\
&  = d \bm{\xi}_\kvi (W,V) = 
 \Big ( \sum_{\kvj=1}^{\mkill} \bm{\xi}_\kvj \wedge \bm{\Sigma}_{\kvi\kvj}
 \Big )(W,V) =0.
 \end{align*}
 Define $\gamma_{\kvi\kvj} := g(\xi_\kvi,\xi_\kvj) $. The next lemma introduces a set of closed one-forms that will then allow us to introduce suitable local coordinates. 
 \begin{lemma}
\label{OTlemma}
 In the setup above, assume further that the Lie
 algebra  $\{\xi_\kvi \}$ is Abelian. Then there exists smooth closed one-forms
 $\bm{\zeta}^\kvi$ on $M'$ such that
 \begin{equation}
 \bm{\xi}_\kvi = \gamma_{\kvi\kvj} \bm{\zeta}^\kvj.
\label{c1}
\end{equation}
 \end{lemma}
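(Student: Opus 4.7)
The plan is to build the forms algebraically by inverting the Gram matrix and then verify closure using Cartan's formula together with the orthogonal decomposition. By assumption (i) the matrix $\gamma_{\kvi\kvj}$ is non-degenerate and smooth on $M'$, so its inverse $\gamma^{\kvi\kvj}$ exists and is smooth there. I would define
\[
\bm{\zeta}^\kvi \defi \gamma^{\kvi\kvj}\bm{\xi}_\kvj,
\]
from which relation \eqref{c1} follows immediately. Two identities will do all the work downstream: evaluating on a Killing field gives $\bm{\zeta}^\kvi(\xi_\kvj)=\gamma^{\kvi\kvk}\gamma_{\kvj\kvk}=\delta^\kvi_\kvj$, a constant, and evaluating on any $V\in\Pi^\perp$ gives $\bm{\zeta}^\kvi(V)=0$ since every $\bm{\xi}_\kvj$ already vanishes on $\Pi^\perp$.

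To show $d\bm{\zeta}^\kvi=0$ I would use Cartan's formula
\[
d\bm{\zeta}^\kvi(X,Y) = X(\bm{\zeta}^\kvi(Y)) - Y(\bm{\zeta}^\kvi(X)) - \bm{\zeta}^\kvi([X,Y])
\]
and, because by (i) we have the pointwise splitting $T_pM' = \Pi|_p\oplus\Pi^\perp|_p$, check it on three types of pairs. If $X=\xi_\kvi$ and $Y=\xi_\kvj$, the first two terms vanish by constancy of $\bm{\zeta}^\kvi(\xi_\cdot)$ and the bracket vanishes by Abelianness. If $X=\xi_\kvi$ and $Y=V\in\Pi^\perp$, the first term vanishes because $\bm{\zeta}^\kvi(V)=0$, the second by constancy, and the bracket term vanishes because the computation proved just before the lemma shows $[\xi_\kvi,V]\in\Pi^\perp$. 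Finally, for $X=V$, $Y=W$ both in $\Pi^\perp$, the first two terms vanish trivially and the bracket term vanishes because $\Pi^\perp$ is involutive by (ii), i.e.\ $[V,W]\in\Pi^\perp$.

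The attractive feature of this organisation is that the three hypotheses (Abelianness, Killing character in conjunction with orthogonality, and Fr\"obenius integrability of $\Pi^\perp$) each enter in exactly one of the three cases, with no interaction. In particular, no derivative of the inverse Gram matrix ever appears, which is what keeps the smoothness bookkeeping trivial: $\bm{\zeta}^\kvi$ is as regular as $\gamma^{\kvi\kvj}$ and $\bm{\xi}_\kvj$ allow. There is no real obstacle; the only point to keep track of is that the whole argument is carried out on $M'$, where the splitting is pointwise valid and $\gamma^{\kvi\kvj}$ is defined, consistent with the statement of the lemma.
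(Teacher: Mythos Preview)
Your proof is correct and essentially identical to the paper's own: the paper also defines $\bm{\zeta}^\kvi := \gamma^{\kvi\kvj}\bm{\xi}_\kvj$, records $\bm{\zeta}^\kvi(\xi_\kvj)=\delta^\kvi_\kvj$, and checks $d\bm{\zeta}^\kvi=0$ via the same formula on the same three types of pairs, invoking Abelianness, the fact $[\xi_\kvi,V]\in\Pi^\perp$, and $[V,W]\in\Pi^\perp$ respectively. Your closing remark about which hypothesis enters which case is a nice bit of commentary but adds nothing mathematically new.
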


\begin{proof}
By condition  (i), $\gamma_{\kvi\kvj}$ is invertible
 on $M'$. Let $\gamma^{\kvi\kvj}$ be the inverse (i.e.
 such that $\gamma^{\kvi\kvk} \gamma_{\kvk\kvj} = \delta^\kvi_\kvj$) and define
 $\bm{\zeta}^\kvi := \gamma^{\kvi\kvj} \bm{\xi}_\kvj$. These are $C^{n+1}$ one-forms
 on $M'$ obviously satisfying (\ref{c1}).
It remains to show that
 $d \bm{\zeta}^\kvi =0$.
Observe first that
 $\bm{\zeta}^\kvi (\xi_\kvj ) = \delta^\kvi_\kvj$.
We use the definition of exterior differential
 \[
 d \bm{\zeta}^\kvi (X,Y)  = X (\bm{\zeta}^\kvi (Y))
  - Y (\bm{\zeta}^\kvi (X)) - \bm{\zeta}^\kvi ([X,Y]) 
\]
 where $X,Y$ are arbitrary vector fields in $M'$. To prove
 $d \bm{\zeta}^\kvi=0$
is suffices to show that the right-hand side of this expression vanishes
in the following three cases, (a) $X= \xi_\kvi, Y = \xi_\kvj$,
(b) $X = \xi_\kvi, Y = W$ and (c) $X=V, Y= W$,
where $V,W$ are everywhere orthogonal to  $\{ \xi_\kvj \}$. Now
\begin{align*}
(a) \quad \quad &  d \bm{\zeta}^\kvi (\xi_\kvj,\xi_\kvk)  = \xi_\kvj ( \delta^\kvi_\kvk )
 - \xi_\kvk ( \delta^\kvi_\kvj)  - \bm{\zeta}^\kvi ([\xi_\kvj,\xi_\kvk])  =0 \\
(b) \quad \quad &  d \bm{\zeta}^\kvi (\xi_\kvj, V)  = 
 - V ( \delta^\kvi_\kvj)  - \bm{\zeta}^\kvi ([\xi_\kvj,V])  =0 \\
(c) \quad \quad &  d \bm{\zeta}^\kvi (V, W)  = 
- \bm{\zeta}^\kvi ([V,W])  =0,
\end{align*}
 where in (a) we used the assumption that algebra is Abelian and
 in (b), (c) the fact that  $[\xi_\kvj,V]$ and $[V,W]$, along with $V$, $W$
are all orthogonal to $\xi_{\kvj}$.\fin
\end{proof}

\begin{corolemm}
\label{OTcoro}
If $M'$ is simply connected, then there exist smooth functions
$z^i$ on $M'$ such that
\begin{equation}
\xi_\kvi (z^\kvj ) = \delta^\kvj_\kvi, \quad  \quad
\bm{\xi}_\kvi = \gamma_{\kvi\kvj} d z^\kvj . 
\label{prop}
\end{equation}
\end{corolemm}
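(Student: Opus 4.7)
The plan is a direct application of the Poincar\'e lemma to the closed one-forms produced by Lemma \ref{OTlemma}. Since $M'$ is simply connected and each $\bm{\zeta}^\kvi$ is closed and smooth on $M'$, standard de Rham theory guarantees the existence of smooth primitives $z^\kvi$ on $M'$ satisfying $\bm{\zeta}^\kvi = d z^\kvi$. These are the candidate functions.

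First I would verify that the required relations in \eqref{prop} follow at once from the properties already established in the proof of Lemma \ref{OTlemma}. Indeed, evaluating the one-form $\bm{\zeta}^\kvi = \gamma^{\kvi\kvj}\bm{\xi}_\kvj$ on $\xi_\kvk$ and using the definition of $\gamma_{\kvj\kvk}$ gives $\bm{\zeta}^\kvi(\xi_\kvk) = \gamma^{\kvi\kvj}\gamma_{\kvj\kvk} = \delta^\kvi_\kvk$. Once the primitives $z^\kvi$ are in hand, the first identity in \eqref{prop} is
\[
\xi_\kvi(z^\kvj) = dz^\kvj(\xi_\kvi) = \bm{\zeta}^\kvj(\xi_\kvi) = \delta^\kvj_\kvi,
\]
and the second identity is obtained by substitution in \eqref{c1}:
\[
\bm{\xi}_\kvi = \gamma_{\kvi\kvj}\bm{\zeta}^\kvj = \gamma_{\kvi\kvj} dz^\kvj.
\]

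There is no real obstacle here beyond invoking the Poincar\'e lemma: Lemma \ref{OTlemma} has already done the substantial work of showing that the candidate one-forms $\bm{\zeta}^\kvi$ are closed, which required the full set of hypotheses (Abelian algebra, non-degeneracy of $g$ on $\Pi$, and the integrability/orthogonal transitivity condition). The only point worth noting for completeness is that the regularity claim follows automatically, since the primitives of a $C^{n+1}$ closed one-form on a simply connected $C^{n+2}$ manifold are $C^{n+2}$ (obtained by line integration along $C^1$ paths, with independence of path ensured by simple connectedness together with closedness).
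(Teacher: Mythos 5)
Your proposal is correct and follows essentially the same route as the paper: simple connectedness makes the closed one-forms $\bm{\zeta}^\kvi$ of Lemma \ref{OTlemma} exact, and both identities in \eqref{prop} then follow immediately from $\bm{\zeta}^\kvi(\xi_\kvj)=\delta^\kvi_\kvj$ and \eqref{c1}. The extra remarks on regularity via line integration are fine but not needed beyond what the paper states.
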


\begin{proof}
By simply connectedness, closed is equivalent to exact. Define
$z^\kvj$ by $\bm{\zeta}^\kvj = d z^\kvj$ and (\ref{prop}) are immediate.\fin
\end{proof}

It is also a well-known  fact that under assumptions (i) and (ii) there exist local coordinates near any point in $M'$ in which the metric  separates in blocks. The argument is standard, but we include it for completeness.
  Fix $p \in M'$ and let $\Sigma_p$ be integrable manifold of the distrution
  $\Pi^{\perp}$ containing $p$. Select any local coordinate system $\{ x^\otm \}$
  on $\Sigma_p$ near $p$ and extend the functions $x^{\otm}$ as constants along the integral manifolds of the distribution $\Pi$. By construction $\xi^{\kvi}(x^{\otm})=0$.
  It is immediate that $\{ x^\otm, z^\kvi\}$ defines  a local coordinate system
  in a neighbourhood of $p$. By the first of \eqref{prop} we have  $\xi^\kvi = \partial_{z^\kvi}$, so the metric in these coordinates only depends on $\{ x^\otm \}$. By the second in \eqref{prop} the cross components $g_{\kvi\otm}$ of the metric vanish and $g_{\kvi\kvj} = \gamma_{\kvi\kvj}$.  The block diagonal structure follows  
\[
g = \gamma_{\kvi\kvj} (x^\otp) dx^\kvi dx^\kvj + h_{\otm\otn}(x^\otp) dx^\otm dx^\otn.
\]

\subsection{``Orthogonally transitive'' perturbation scheme}

Let $(\mmm_\pertp, \gfam_\pertp,\{ \psi_\pertp\})$ be a  perturbation scheme that
inherits an orthogonally transitive 
Abelian $G_{\mkill}$ (local) group of isometries generated by
$\mbox{span} \{ \xi_\kvi \}$, $\kvi,\kvj =1, \cdots, \mkill$, ($1 \leq \mkill < \mbox{dim} (\mmm)$)
 on $(\mmm,g)$. The next proposition shows that 
the first and second order
metric perturbation tensors can be taken as block diagonal in $\mmm'$.
Later we show that in the stationary and axisymmetric case
the transformation does not lower the differentiability
of the perturbation tensors in the whole $\mmm$, that is, including the axis.

\begin{proposition}
\label{Block_pre}
Let $(\mmm_\pertp, \gfam_\pertp,\{ \psi_\pertp\})$ be a  perturbation scheme of class $C^{n+1}$ $(n \geq 2)$
that inherits an Abelian $G_{\mkill}$, $1 \leq \mkill < \mbox{dim} (\mmm)$
(local) isometry group that acts orthogonally transitively. 
Let $\mmm'$ be the open and dense subset  where the orbits of the 
Abelian (local) isometry group are non-null and have dimension $\mkill$ and assume that
$\mmm'$ is simply connected. Then, for 
each $\ge = \psi_{\pertp} (\gfam_{\pertp})$, $\psi_{\pertp} \in \{ \psi_{\pertp} \}$
there exist first and second order gauge vectors $\sper$ and $\sperper$ such that
the corresponding transformed
$\fpt^{g}$ and $\spt^g$ are of class $C^n(\mmm')$ and $C^{n-1}(\mmm')$ respectively and
block diagonal, i.e. take form
\begin{align*}
\fpt dx^\alpha dx^{\beta} & = {\fpt}_{\kvi\kvj}(x^\otp) dx^\kvi dx^\kvj + {\fpt}_{\otm\otn}(x^\otp)
dx^\otm dx^\otn, \quad \quad \\
\spt dx^\alpha dx^{\beta} & = {\spt}_{\kvi\kvj}(x^\otp) dx^\kvi dx^\kvj + {\spt}_{\otm\otn}(x^\otp)
dx^\otm dx^\otn
\end{align*}
in any 
local coordinate system $\{ x^{\alpha} \} = \{ x^\kvi, x^\otm\}$ in $\mmm'$ adapted to $\xi_\kvi$
and to the orthogonal transitivity of the background, i.e. where
\begin{equation}
g = \gamma_{\kvi\kvj} (x^\otp) dx^\kvi dx^\kvj + h_{\otm\otn}(x^\otp) dx^\otm dx^\otn,
\quad \quad \quad
\xi_\kvi = \partial_{x^\kvi}.
\label{form_pre}
\end{equation}
\end{proposition}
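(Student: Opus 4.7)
The plan is to construct both gauge vectors from the closed one-forms furnished by Lemma \ref{OTlemma} applied to the perturbed metric $\gepsilon$, and to show that the one-derivative gain from integration on simply connected $\mmm'$ delivers exactly the claimed differentiability. For each $\pertp\in I_0$, Lemma \ref{OTlemma} combined with Corollary \ref{OTcoro} yields functions $z^\kvi_\pertp$ on $\mmm'$ satisfying
\[
\gepsilon(\xi_\kvi,\cdot) = (\gamma_\pertp)_{\kvi\kvj}\, dz^\kvj_\pertp,
\qquad (\gamma_\pertp)_{\kvi\kvj} := \gepsilon(\xi_\kvi,\xi_\kvj),
\]
normalised so that $z^\kvi_0=z^\kvi$. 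Since $dz^\kvi_\pertp(\xi_\kvj)=\delta^\kvi_\kvj$, one has $\partial z^\kvi_\pertp/\partial z^\kvj=\delta^\kvi_\kvj$ in the adapted coordinates, so $z^\kvi_\pertp=z^\kvi+f^\kvi_\pertp(x^\otp)$ with $f^\kvi_\pertp$ depending only on the transverse coordinates and $f^\kvi_0=0$. Since $\gepsilon$ is $C^2$ in $\pertp$, I expand $f^\kvi_\pertp=\pertp f^{\kvi(1)}(x)+\tfrac{\pertp^2}{2}f^{\kvi(2)}(x)+O(\pertp^3)$ and take
\[
\sper := -f^{\kvi(1)}(x)\,\partial_{z^\kvi},\qquad \sperper := -f^{\kvi(2)}(x)\,\partial_{z^\kvi}.
\]

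At first order, because $\sper$ points in the $\partial_{z^\kvi}$-directions with coefficients depending only on $x^\otp$, a direct computation gives $(\lie_\sper g)_{\kvi\kvj}=0=(\lie_\sper g)_{\otm\otn}$ and $(\lie_\sper g)_{\kvi\otm}=-\gamma_{\kvi\kvj}\partial_\otm f^{\kvj(1)}$. The first-order expansion of the identity above yields $\partial_\otm f^{\kvj(1)}=\gamma^{\kvj\kvk}{\fpt}_{\kvk\otm}$, so that by \eqref{gaugeper} the mixed block $(\fpt^g)_{\kvi\otm}={\fpt}_{\kvi\otm}+(\lie_\sper g)_{\kvi\otm}$ vanishes, establishing the block-diagonal form of $\fpt^g$.

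For the second order I apply \eqref{gaugeperper_bis} and introduce the effective source $R := \spt+2\lie_\sper\fpt^g-\lie_\sper\lie_\sper g$. Using that $\fpt^g$ is block-diagonal and that $\sper$ has vanishing transverse component, the Lie-derivative terms collapse to $R_{\kvi\otm}={\spt}_{\kvi\otm}-2{\fpt}_{\kvi\kvj}\gamma^{\kvj\kvk}{\fpt}_{\kvk\otm}$. The order-$\pertp^2$ expansion of the one-form identity gives exactly $\partial_\otm f^{\kvj(2)}=\gamma^{\kvj\kvi}R_{\kvi\otm}$, hence $(\lie_\sperper g)_{\kvi\otm}=\gamma_{\kvi\kvj}\partial_\otm\sperper^\kvj=-R_{\kvi\otm}$ and $(\spt^g)_{\kvi\otm}=0$; a parallel computation shows that the remaining Lie-derivative contributions affect only the $(\kvi,\kvj)$ and $(\otm,\otn)$ blocks of $\spt^g$.

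Differentiability follows because the primitive operation on simply connected $\mmm'$ gains one derivative: $\fpt\in C^n$ yields $f^{\kvi(1)}\in C^{n+1}$, so $\sper\in C^{n+1}(\mmm')$ and $\fpt^g\in C^n(\mmm')$; similarly $\spt\in C^{n-1}$ combined with $\sper\in C^{n+1}$ produces $\sperper\in C^n(\mmm')$ and $\spt^g\in C^{n-1}(\mmm')$. The main bookkeeping obstacle I anticipate is the second-order step, where one must verify carefully that $2\lie_\sper\fpt^g-\lie_\sper\lie_\sper g$ telescopes exactly to the quadratic correction $-2{\fpt}_{\kvi\kvj}\gamma^{\kvj\kvk}{\fpt}_{\kvk\otm}$ predicted by the order-$\pertp^2$ expansion of the orthogonal-transitivity identity. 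A conceptually cleaner alternative, which I would keep in reserve, is to build the diffeomorphism $\Omega_\pertp(z,x):=(z+f_\pertp(x),x)$ directly from the $\{z^\kvi_\pertp\}$ and read off $\sper,\sperper$ via \eqref{def:sp_gauge_vectors}; the block-diagonality of $\Omega_\pertp^*\gepsilon$ is then automatic to all orders.
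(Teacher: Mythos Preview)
Your proposal is correct and essentially matches the paper's proof: both take $\sper = -U^\kvi\xi_\kvi$, $\sperper = -W^\kvi\xi_\kvi$ with $U^\kvi, W^\kvi$ the first and second $\pertp$-derivatives of $z^\kvi_\pertp - x^\kvi$ furnished by Corollary~\ref{OTcoro}. The only cosmetic difference is that the paper writes $\gepsilon = \gamma^\pertp_{\kvi\kvj}(dx^\kvi + \partial_\otm u^\kvi_\pertp\, dx^\otm)(dx^\kvj + \partial_\otn u^\kvj_\pertp\, dx^\otn) + h^\pertp_{\otm\otn}\,dx^\otm dx^\otn$ and differentiates in $\pertp$ to read off $\Kper, \Kperper$ directly---this is precisely your ``cleaner alternative'' via $\Omega_\pertp$, and it does sidestep the second-order telescoping check you flag.
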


\begin{proof}
By  Lemma \ref{OTlemma} and Corollary \ref{OTcoro},
there exists smooth  scalar functions $z^\kvi_{\pertp} : \mmm' \longrightarrow
\mathbb{R}$ satisfying $\xi_\kvi (z_{\pertp}^\kvj )  = \delta^\kvj_\kvi$
such that the 
one-form fields
$\bm{\xi}^{\pertp}_\kvi := \ge(\xi_\kvi,\cdot)$ 
take the form
\begin{equation*}
\bm{\xi}^{\pertp}_\kvi = \gamma_{\kvi\kvj}^{\pertp} d z^\kvj_\pertp, \quad \quad
\quad \gamma_{\kvi\kvj}^{\pertp} := \ge(\xi_\kvi, \xi_\kvj).
\end{equation*}
Let $\{x^{\alpha}\}$ be any local coordinate where the background metric
takes the form  (\ref{form_pre}). The condition $\xi_\kvi (z^\kvj_{\pertp}) = 
\delta^\kvj_\kvi$ implies the existence of scalar functions $u^\kvj_{\pertp} (x^\otm)$ 
such that $z^\kvj_\pertp = x^\kvj + u^\kvj_{\pertp}(x^\otm)$. The form of $\bm{\xi}^{\pertp}_\kvi$ 
forces that the metric $\ge$ in these coordinates takes the form
\begin{equation*}
\ge = 
 \gamma^{\pertp}_{\kvi\kvj} ( dx^\kvi + \partial_\otm u_{\pertp}^\kvi dx^\otm) (dx^\kvj + 
\partial_\otn u_{\pertp}^\kvj dx^\otn )  + h^{\pertp}_{\otm\otn} dx^\otm dx^\otn.
\end{equation*}
Since $\xi_\kvi=\partial_{x^\kvi}$ are Killing vectors of this metric, 
$\gamma^{\pertp}_{\kvi\kvj} (x^\otm)$ and $h^{\pertp}_{\otm\otn}(x^\otp)$. Moreover, the equality $g =
g_{\pertp=0}$ implies
\begin{equation*}
\gamma^{\pertp}_{\kvi\kvj} |_{\pertp=0} = \gamma_{\kvi\kvj},
\quad \quad
h^{\pertp}_{\kvi\kvj} |_{\pertp=0} = h_{\kvi\kvj},
\quad \quad
\left. u^\kvi_\pertp \right |_{\pertp =0} = 0.
\end{equation*}
Computing the first and second order perturbation tensors, one finds after a
simple calculation
\begin{align*}
\fpt  
= &{\fpt}_{\kvi\kvj} dx^\kvi dx^\kvj + 2 \gamma_{\kvi\kvj} \partial_\otm U^\kvj dx^\kvi dx^\otm
+ {\fpt}_{\otm\otn} dx^\otm dx^\otn, \\
\spt  = &
{\spt}_{\kvi\kvj} dx^\kvi dx^\kvj + 4 {\fpt}_{\kvi\kvj} \partial_\otm U^\kvj dx^\kvi dx^\otm
+ 2 \gamma_{\kvi\kvj} \partial_\otm U^\kvi \partial_\otn U^\kvj dx^\otm dx^\otn
+ 2 \gamma_{\kvi\kvj} \partial_\otm W^\kvi dx^\kvi dx^\otm  \\
& + {\spt}_{\otm\otn} dx^\otm dx^\otn,
\end{align*}
where 
\begin{align*}
{\fpt}_{\kvi\kvj} & := 
\left. \frac{d \gamma^{\pertp}_{\kvi\kvj}}{d \pertp} 
\right|_{\pertp=0}, 
\quad \quad
{\fpt}_{\otm\otn} := 
\left. \frac{d h^{\pertp}_{\otm\otn}}{d \pertp} 
 \right|_{\pertp=0}, \quad \quad
U^\kvi := \left. \frac{ d u^\kvi_{\pertp}}{d \pertp}  \right|_{\pertp=0}, \\
{\spt}_{\kvi\kvj} & := 
\left. \frac{d^2 \gamma^{\pertp}_{\kvi\kvj}}{d \pertp^2} 
\right |_{\pertp=0}, 
\quad \quad
{\spt}_{\otm\otn} := 
\left. \frac{d^2 h^{\pertp}_{\otm\otn}}{d \pertp^2} 
\right|_{\pertp=0}, \quad \quad
W^\kvi := \left. \frac{ d^2 u^\kvi_{\pertp}}{d \pertp^2}  \right|_{\pertp=0}.
\end{align*}
We consider the gauge vectors $\sper = -  U^\kvi(x^\otm) \partial_\kvi$ and $ \sperper=
- W^\kvi(x^\otm) \partial_\kvi$. It is immediate that
\begin{equation*}
\lie_{\sper} g = - 2 \gamma_{\kvi\kvj} \partial_\otm U^\kvj dx^\kvi dx^\otm
\end{equation*}
and hence $\fpt^g = \fpt + \lie_{\sper} g  
= {\fpt}_{\kvi\kvj} dx^\kvi dx^\kvj + {\fpt}_{\otm\otn} dx^\otm dx^\otn$ is block diagonal, as claimed.
For the second order perturbation we use the form (\ref{gaugeperper_bis}).
A simple calculation gives
\begin{equation*}
\lie_{\sper} \fpt^g = - 2 {\fpt^{g}}_{\kvi\kvj} \partial_\otm U^\kvj dx^\kvi dx^\otm, \quad
\quad \quad
\lie_{\sper} \lie_{\sper} g =
2 \gamma_{\kvi\kvj} \partial_\otm U^\kvi \partial_\otn U^\kvj dx^\otm dx^\otn
\end{equation*}
and we conclude that $\spt = 
{\spt}_{\kvi\kvj} dx^\kvi dx^\kvj 
+ {\spt}_{\otm\otn} dx^\otm dx^\otn$, i.e. block diagonal. We have performed the computation
to change the gauge in local coordinates, but it is
clear that both $\sper$ and $\sperper$ are
globally defined
on $\mmm'$ and smooth because they are intrinsically defined  by
\begin{equation}
\label{eq:build_gauge_vectors_OT}
\sper
=  - \left. \frac{d z^\kvi_{\pertp}}{d \pertp} \right|_{\pertp=0} \xi_\kvi
\, =- \left. \frac{d u^\kvi_{\pertp}}{d \pertp} \right|_{\pertp=0} \xi_\kvi,
\quad \quad 
\sperper
= - \left. \frac{d^2 z^\kvi_{\pertp}}{d \pertp^2} \right|_{\pertp=0} \xi_\kvi
\, =- \left. \frac{d^2 u^\kvi_{\pertp}}{d \pertp^2} \right|_{\pertp=0} \xi_\kvi.
\end{equation}\fin
\end{proof}

\begin{remarkpro}
  This proposition includes as a particular case the situation when
  the perturbation scheme
 inherits a static (local) isometry, i.e. an orthogonally transitive
 one-parameter (local) isometry group with timelike orbits.
\end{remarkpro}

At the level of generality of Proposition \ref{Block_pre} we can only establish the differentiablity of  $\sper$ and $\sperper$
    on $\mmm'$, i.e. away from points where the distribution
    $\{\Pi_p\}$ degenerates. The reason lies in the non-invertibility
    of  $\gamma_{\kvi\kvj}$ at the  complement
    of $\mmm'$. Extending the differentiability of the gauge vectors to the whole of $\mmm$ is a delicate issue which, however, must be addressed in our problem. Indeed, an important step of our argument will be applying Lemma \ref{res:axial} on pertubation tensors written in block diagonal form,
    for which it is crucial that these tensors
    are differentiable everywhere, including the axis. To accomplish this we will exploit the fact that the metrics $\gfam_{\pertp}$ are stationary and axisymmetric. Using the results in Lemma \ref{res:axial} we will be able to analyse the behaviour of the one-forms $\bm\zeta_{\pertp}^\kvi$ defined in Lemma  \ref{OTlemma} near the  axis. This will be sufficient, via integration of $
    d z^{\kvi}_{\pertp}= \bm\zeta_{\pertp}^{\kvi}$, to show via
    \eqref{eq:build_gauge_vectors_OT}, that the gauge vectors
    $\sper$ and $\sperper$ are differentiable everywhere.
    
We need several well-known facts (see e.g. \cite{Exact_solutions})
  about stationary and axisymmetric  group actions. Recall that this is a spacetime isometry
generated
by an axial Killing vector $\axial$, see Section \ref{sec:axial},
assumed to be spacelike everywhere outside the axis,
and a timelike Killing vector $\stat$. The group is necessarily Abelian and
  the points where the group orbits are two-dimensional and non-null
  (i.e. $\mmm'$) are determined by $\det\gamma=\la\stat,\stat\ra \la\axial,\axial\ra-\la\stat,\axial\ra^2< 0$. Given that $\trho^2=\la\axial,\axial\ra$ is non-negative and zero only at points of the axis $\axis$, $\det\gamma=0$ also defines $\axis$.
The construction of a stationary and axially symmetric perturbation scheme
implies all the above for each $\pertp$, in particular for  $\det\gamma^\pertp$. 
Lemma \ref{state:axial_pert} ensures that the axis of the perturbation,
i.e. the axis at each $\pertp$, coincides with $\axis$,
and therefore we have that $\det\gamma^\pertp$ vanishes on $\axis$ and
only there.

\begin{proposition}
\label{Block}
Let $(\mmm_\pertp, \gfam_\pertp,\{ \psi_\pertp\})$ be a  perturbation scheme of class $C^{n+1}$ $(n \geq 2)$
that inherits a stationary and axisymmetric
 isometry group that acts orthogonally transitively. 
Assume that $\mmm'=\mmm\setminus \axis$ is simply connected.
Then, for 
each $\ge = \psi_{\pertp} (\gfam_{\pertp})$, $\psi_{\pertp} \in \{ \psi_{\pertp} \}$
there exist first and second order gauge vectors $\sper$ and $\sperper$ such that
the corresponding transformed
$\fpt^{g}$ and $\spt^g$ are of class $C^n(\mmm)$ and $C^{n-1}(\mmm)$ respectively and
block diagonal, i.e. take form
\begin{align*}
\fpt dx^\alpha dx^{\beta} & = {\fpt}_{\kvi\kvj}(x^\otp) dx^\kvi dx^\kvj + {\fpt}_{\otm\otn}(x^\otp)
dx^\otm dx^\otn, \quad \quad \\
\spt dx^\alpha dx^{\beta} & = {\spt}_{\kvi\kvj}(x^\otp) dx^\kvi dx^\kvj + {\spt}_{\otm\otn}(x^\otp)
dx^\otm dx^\otn
\end{align*}
in any 
local coordinate system $\{ x^{\alpha} \} = \{ x^\kvi, x^\otm\}$ in $\mmm'$ adapted to $\xi_\kvi$
and to the orthogonal transitivity of the background, i.e. where
\begin{equation}
g = \gamma_{\kvi\kvj} (x^\otp) dx^\kvi dx^\kvj + h_{\otm\otn}(x^\otp) dx^\otm dx^\otn,
\quad \quad \quad
\xi_\kvi = \partial_{x^\kvi}.
\label{form}
\end{equation}
\end{proposition}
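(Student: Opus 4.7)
The plan is to build on Proposition \ref{Block_pre}, which already supplies gauge vectors $\sper, \sperper$ and the block-diagonal form on $\mmm' = \mmm \setminus \axis$; what remains is to verify that $\sper$ and $\sperper$ extend to $C^{n+1}(\mmm)$ and $C^n(\mmm)$ vector fields respectively, so that by \eqref{gaugeper}--\eqref{gaugeperper_bis} the tensors $\fpt^g, \spt^g$ belong to $C^n(\mmm)$ and $C^{n-1}(\mmm)$. I would do this by analysing the intrinsic expression \eqref{eq:build_gauge_vectors_OT} in a Cartesian-type chart around each point of $\axis$.

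Fix $p \in \axis$ and introduce on a neighbourhood $\UinM$ of $p$ coordinates $\{t, x, y, \tz\}$ in which $\stat = \partial_t$, $\axial = x\partial_y - y\partial_x$, and $\axis \cap \UinM = \{\rho := \sqrt{x^2+y^2} = 0\}$; here $\tz$ collectively denotes the remaining coordinates. This is the setup of Lemma \ref{res:axial}. Applying that lemma to each (axially symmetric) $\gfam_\pertp$ gives, writing $\xi_1 = \stat, \xi_2 = \axial$ and $\gamma^\pertp_{\kvi\kvj} := \gfam_\pertp(\xi_\kvi, \xi_\kvj)$: the entry $\gamma^\pertp_{22} = \trho^2_\pertp$ factors as $\tfrac{1}{2}(\C_\pertp - \A_\pertp)\rho^2$ with $\tfrac{1}{2}(\C_\pertp - \A_\pertp)|_{\rho=0} = 1$ by the regular axis property \eqref{regular_axis}; $\gamma^\pertp_{12}$ is $O(\rho^2)$; and $\gamma^\pertp_{11}$ is a radial function of $(\rho^2, \tz)$. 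Consequently $\det\gamma^\pertp = \rho^2 \hat D_\pertp$ with $\hat D_\pertp$ radial and non-vanishing at the axis, so that $(\gamma^\pertp)^{22} \sim 1/\rho^2$ while $(\gamma^\pertp)^{11}, (\gamma^\pertp)^{12}$ are smooth functions of $(\rho^2, \tz)$.

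In adapted coordinates $(t, \phi, x^\otm)$ on $\mmm'$, the construction in the proof of Proposition \ref{Block_pre} gives $\sper = -U^1\stat - U^2\axial$ where $U^\kvj(x^\otm) = \partial_\pertp u^\kvj_\pertp|_0$ satisfies
\[
\partial_\otm U^\kvj = \gamma^{\kvj\kvi}\fpt_{\kvi\otm}.
\]
The key analytic step is to show that the right-hand side, re-expressed in Cartesian coordinates via the Jacobian of $(x, y) \leftrightarrow (\rho, \phi)$ and the Lemma \ref{res:axial} decomposition of $\Kper$, has the radial structure making $U^\kvj$ a $C^{n+1}$ function of $(\rho^2, \tz)$. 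The crucial cancellation is that the $1/\rho^2$ singularity of $(\gamma^\pertp)^{22}$ is absorbed by the combined effect of the Jacobian factor and the axis-vanishing orders of the components $\fpt_{\phi\otm}, \fpt_{t\otm}$ dictated by Lemma \ref{res:axial}; after integration (with initial values fixed so that $U^\kvj$ is continuous at the axis), the functions $U^\kvj$ inherit the decomposition \eqref{decomp}.

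With $U^\kvj$ of this form, the Cartesian components $\sper^t = -U^1$, $\sper^x = yU^2$, $\sper^y = -xU^2$ are $C^{n+1}$ on $\UinM$ by Lemma \ref{origin} of Appendix \ref{app:diff_origin}. The second-order argument is parallel: the source for $W^\kvj := \partial^2_\pertp u^\kvj_\pertp|_0$ involves $\Kperper$ and quadratic terms in $\Kper$ as in \eqref{gaugeperper_bis}, and the same bookkeeping, with one fewer derivative available in $\Kperper$, yields $\sperper \in C^n(\mmm)$ and $\spt^g \in C^{n-1}(\mmm)$. The local vectors patch consistently with those from Proposition \ref{Block_pre} thanks to the intrinsic formula \eqref{eq:build_gauge_vectors_OT}. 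The main obstacle is the careful bookkeeping of the decomposition \eqref{decomp} through the inversion of $\gamma^\pertp$, multiplication by $\Kper$, and integration in $\rho$ — all steps compatible with the exact vanishing orders supplied by Lemma \ref{res:axial} and the regular axis property.
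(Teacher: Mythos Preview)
Your strategy is correct and matches the paper's: invoke Proposition~\ref{Block_pre} on $\mmm'$, then show that the gauge vectors \eqref{eq:build_gauge_vectors_OT} extend across $\axis$ by analysing their form in Cartesian-type coordinates via Lemma~\ref{res:axial}. The paper, however, organises the computation differently, and the difference is worth noting.

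You differentiate first and then analyse regularity: you derive $\partial_{\otm} U^{\kvj} = \gamma^{\kvj\kvi}\fpt_{\kvi\otm}$ (using the \emph{background} inverse $\gamma^{\kvj\kvi}$) and plan to check that its Cartesian translation integrates to something $C^{n+1}$. At second order this forces you to handle $\partial_{\otm} W^{\kvj} = \gamma^{\kvj\kvi}\spt_{\kvi\otm} - 2\gamma^{\kvj\kvk}\gamma^{\kvi\kvl}\fpt_{\kvk\kvl}\fpt_{\kvi\otm}$, with its doubly singular factor $\gamma^{\kvj\kvk}\gamma^{\kvi\kvl}\sim\rho^{-4}$. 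The paper instead carries the full $\pertp$-dependence: it expresses the closed one-forms $\bm{\zeta}^{\kvi}_{\pertp} = (\gamma^{\pertp})^{\kvi\kvj}\bm{\xi}^{\pertp}_{\kvj}$ directly in Cartesian coordinates, integrates $dz^{\kvi}_{\pertp} = \bm{\zeta}^{\kvi}_{\pertp}$ to obtain $C^{n+1}$ functions $Z^{\kvi}_{\pertp}(\normx,z)$, and only then differentiates in $\pertp$. This treats first and second order uniformly and avoids the separate bookkeeping for $W^{\kvj}$.

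One correction: your claim that the regular axis property~\eqref{regular_axis} forces $\tfrac{1}{2}(\C_{\pertp}-\A_{\pertp})|_{\rho=0}=1$ is not right; that condition constrains gradients of $\trho^2$, not the coefficient itself. The non-vanishing of $\hat D_{\pertp}$ on the axis (equivalently of $D_{\gamma^{\pertp}}$ in the paper) comes instead from the Lorentzian signature: $\C_{\pertp}(0,z)=2g_{\pertp}(\partial_x,\partial_x)|_{\rho=0}>0$ and $\traceA^{\pertp}_{tt}=g_{\pertp}(\stat,\stat)<0$.

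Finally, what you call ``the careful bookkeeping of the decomposition~\eqref{decomp} through inversion, multiplication, and integration'' is precisely the content of the paper's intermediate \emph{Claim}: that each $\widetilde{\traceA}^{\pertp}_{\parampp}:=\traceA^{\pertp}_{\parampp}/\rho$ and each product $\rho^{-1}\widetilde{\traceA}^{\pertp}_{\parampp}\widetilde{\traceA}^{\pertp}_{\parampp'}$ admits an expansion $\rho\sum_k \rho^{2k}P_k(z)+o(\rho^n)$ with the stated regularity. This claim (together with Lemma~\ref{origin}) is what turns the singular $1/\rho^2$ in $(\gamma^{\pertp})^{-1}$ into something whose $\rho$-primitive is $C^{n+1}(\Uaxis)$; it is the heart of the argument and would need to be established in your route as well.
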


\begin{proof} We are in the setting of
Proposition \ref{Block_pre}
with  
  $\mkill=2$, $\xi_1=\stat$ 
  and $\xi_2=\axial$. We define $\bm\xi^{\pertp}_{\kvi}$ as in this proposition
  and  $\bm{\zeta}^{\kvi}_{\pertp}
  := \gamma^{\pertp}{}^{\kvi\kvj}  \bm\xi^{\pertp}_{\kvj}$
  on $\mmm \setminus \axis$ (cf. Lemma \ref{OTlemma}). The core of the proof is understanding in detail the
  behaviour of the closed one-forms $\bm{\zeta}^{\kvi}_{\pertp}$
  as we approach $\axis$, for which we will use the results in Lemma \ref{res:axial}.  Introducing coordinates $\{x,y,z,t\}$ in a  neighbourhood $\Uaxis\subset\mmm$
  of the axis such that $\axial=x\partial_y-y\partial_x$
  and $\xi= \partial_t$, we may apply Lemma \ref{res:axial}
  with
  $K=g_{\pertp}$, $m=n+1$ and $w^u = \{ z,t\}$ ($u=3,4$). As in this lemma, 
  we call $U\subset \mathbb{R}^4$
  the set where $\{x,y,z,t\}$ take values.

We shall use expressions \eqref{Kxx}-\eqref{Kzz} with
   $K\to g_\pertp$ and $\traceA\to \traceA^\pertp$.
Since $g_\pertp$ is stationary,  $\traceA^\pertp_\paramp$
do not depend on $t$, i.e. they are functions
$\traceA^\pertp_\paramp(\normx,z)$, where  $\normx^2\defi x^2+y^2$.
Their traces satisfy \eqref{decomp} and are defined on the domain
$D_{\eta} := \{ \rho \in \mathbb{R}_{\geq 0}, z \in \mathbb{R} ;
(\rho,0,z,t ) \in U\} \subset \mathbb{R}_{\geq 0} \times \mathbb{R}$.
Lemma \ref{origin} in Appendix \ref{app:diff_origin} ensures that 
$\traceA^\pertp_\paramp(\normx,z)$ are $C^{n+1}(\Uaxis)$. From on on, we
write $\traceA^\pertp_\paramp$ when we refer to
$\traceA^\pertp_\paramp(\normx,z)$, i.e. as functions on
$\Uaxis$.

\vspace{2mm}
The analysis of $\bm\zeta^\kvi_\pertp$ relies on the following:

\vspace{2mm}

 {\bf Claim:} For $\parampp \in \{2,3,\rho t, \rho z, \axial t,\axial z\}$, the functions
  $\widetilde\traceA^\pertp_\parampp(\rho,z)\defi \frac{1}{\rho}\traceA^\pertp_\parampp(\rho,z)$
  and
  $\frac{1}{\rho} \widetilde\traceA^\pertp_{\parampp}\widetilde\traceA^\pertp_{\parampp'}(\rho,z)$ admit the  expansion (we drop a label $\pertp$ in the right-hand
  side for simplicity)
\begin{align}
  \widetilde\traceA^\pertp_\parampp(\rho,z)
& =\rho \sum_{k=0}^{[\frac{n-1}{2}]} \rho^{2k} P_{\parampp \, k} (z)
 +\tilde\Phi^{(n)}_{\parampp}(\rho,z), \label{eq:tilde_gothic_A} \\
  \frac{1}{\rho} \widetilde\traceA^\pertp_{\parampp}\widetilde\traceA^\pertp_{\parampp'} (\rho,z)
& = \rho \sum^{[\frac{n-1}{2}]}_{k=0} \rho^{2k} P_{\parampp\parampp' k}(z) + \tilde\Phi^{(n)}_{\parampp\parampp'}(\rho,z), \label{eq:product_tildes}
  \end{align}
  where $P_{\parampp \, k} (z)$, $P_{\parampp \, \parampp' \, k} (z)$ are $C^{n+1}$ functions of $z$ and  $\tilde\Phi^{(n)}_{\parampp}(\rho,z), \tilde\Phi^{(n)}_{\parampp\, \parampp'}(\rho,z)$ are $C^n$ and $o(\rho^n)$ with respect to $\rho$ and $C^{n+1}$ with respect to $z$. 

\vspace{2mm}

{\it Proof of the claim:}  Lemma \ref{res:axial} with $m=n+1$ establishes that 
  $\traceA^\pertp_{\parampp}(\rho,z)$ vanishes at the axis and admits an expansion
\[
\traceA^\pertp_{\parampp} (\rho,z)= \rho^2
\sum_{k=0}^{[\frac{n-1}{2}]} \rho^{2k} P_{\parampp \, k} (z)
+ \Phi^{(n+1)}_{\parampp}(\rho,z),
\]
where $P_{\parampp \, k} (z)$ are $C^{n+1}$ functions of $z$ and
$\Phi^{(n+1)}_{\parampp}$ are $C^{n+1}(D_\axial)$ and $o(\rho^{n+1})$. Setting
$\tilde\Phi^{(n)}_{\parampp}(\rho,z)
\defi\frac{1}{\rho} \Phi^{(n+1)}_{\parampp}(\rho,z)$, the expansion \eqref{eq:tilde_gothic_A} follows. The function $\tilde\Phi^{(n)}_{\parampp}(\rho,z)$ 
is $o(\rho^n)$ and $C^n$
with respect to $\rho$
by item \emph{(iii)} of
Lemma \ref{res:lemma_nkf} applied to the one-dimensional case.
Moreover, this function is $C^{n+1}$ in $z$,
since, being $o(\rho^n)$, all partial derivatives with respect to $z$
extend to $\rho=0$, where they vanish.
For the product function, it is immediate from the definition that 
\begin{align*}
& \frac{1}{\rho} \widetilde\traceA^\pertp_{\parampp}\widetilde\traceA^\pertp_{\parampp'}(\rho,z)=
  \rho \sum^{[\frac{n-1}{2}]}_{k=0} \rho^{2k} P_{\parampp\parampp' k}(z) \\
& + \rho \sum^{2[\frac{n-1}{2}]}_{k= [\frac{n+1}{2}]} \rho^{2k} P_{\parampp\parampp' k}(z)     
+\sum_{k=0}^{[\frac{n-1}{2}]} \rho^{2k}
\left(P_{\parampp \, k} (z) \tilde\Phi_{\parampp'}^{(n)}+P_{\parampp' \, k} (z)\tilde\Phi_\parampp^{(n)}\right)  + \frac{1}{\rho} \tilde\Phi_\parampp^{(n)}(\rho,z) \tilde\Phi_{\parampp'}^{(n)}(\rho,z). \nonumber
\end{align*}
The second line defines $\tilde\Phi^{(n)}_{\parampp\,\parampp'}(\rho,z)$ and the expansion \eqref{eq:product_tildes} follows. The property that
$\tilde\Phi^{(n)}_{\parampp\,\parampp'}(\rho,z)$ is $C^{n+1}$ in $z$ is clear, as it holds for each term. Concerning the property
of being $C^n$ in $\rho$ and $o(\rho^n)$, this is immediate for the first two terms. For the last term, it follows from its product structure, as shown in Corollary \ref{corollary:ufff} of Appendix \ref{app:iota}.

\vspace{2mm}

Combining this claim with Lemma \ref{origin}, we also conclude that
the functions
$\widetilde\traceA^\pertp_{\parampp}\widetilde\traceA^\pertp_{\parampp'}(\normx,z)$
are $C^{n}(\Uaxis)$.
We are now ready to compute $\bm\zeta_\pertp^\kvi$. 

Recall that $\gamma^\pertp_{\kvi\kvj}=g_\pertp(\xi_\kvi,\xi_\kvj)$. Using the explicit expressions for $g_{\pertp}$ given in Lemma \ref{res:axial}, a straightforward calculation gives
\[
  \det\gamma^\pertp= \frac{1}{2} \normx^2 \left ( 
  (\C^\pertp-\A^\pertp)\traceA^\pertp_{tt}-2\widetilde\traceA^\pertp_{\axial t}\widetilde\traceA^\pertp_{\axial t}\right ) := \normx^2 D_{\gamma^\pertp}.
\]
From the above, $D_{\gamma^\pertp}$ 
is $C^{n}(\Uaxis)$ and its restriction
to any value of $\normx$ is $C^{n+1}$ in $z$.
Moreover, the value of $D_{\gamma^\pertp}$ on the axis
is $D_{\gamma^\pertp}|_{\normx=0}(z)=\C^\pertp(0,z)\traceA_{tt}^\pertp(0,z)$.
This is not zero
because
$\traceA^\pertp_{tt}=g_\pertp(\stat,\stat)< 0$, c.f. \eqref{Kzz},
while \eqref{Kxx} restricted to $x=y=0$
provides $2g_\pertp(\partial_x,\partial_x)|_{\normx=0}=\C(0,z)$,
so that $\C(0,z)>0$. Since $\det\gamma^\pertp$ vanishes only at the axis, it follows that
$D_{\gamma^\pertp}$ is nowhere zero on $\Uaxis$.
Therefore the inverse $D^{-1}_{\gamma^\pertp}\defi 1/D_{\gamma^\pertp}$
is also $C^{n}(\Uaxis)$, and
by Lemma \ref{origin}, the corresponding trace function $D^{-1}_{\gamma^\pertp}(\rho,z)$
must have the form
\begin{equation}
\label{eq:inverse_determinant}
D^{-1}_{\gamma^\pertp}(\rho,z)=\sum_{k=0}^{[\frac{n}{2}]}\rho^{2k}P_{\gamma^\pertp k}(z)
+\tilde\Phi^{(n)}_{\gamma^\pertp}(\rho,z),
\end{equation}
where
$P_{\gamma^\pertp 0}(z)$ is nowhere zero,
$P_{\gamma^\pertp k}(z)$ are $C^{n+1}$ in $z$, 
and $\tilde\Phi^{(n)}_{\gamma^\pertp}(\rho,z)$ are $C^{n+1}$ in $z$,
$C^n$ in $\rho$ and $o(\rho^n)$. From
  the  definition of $\bm\zeta^i_{\pertp}$ (see Lemma \ref{OTlemma}), a
  direct computation gives
\begin{align*}
&\bm\zeta^1_\pertp=dt+ \frac{1}{\normx}\trace{Z}^1_\pertp{}_\rho (xdx+ydy)+ \trace{Z}^1_\pertp{}_z  dz,\\
&\bm\zeta^2_\pertp=d\arctan\frac{y}{x}
+\frac{1}{\normx}\trace{Z}^2_\pertp{}_\rho (xdx+ydy)+\trace{Z}^2_\pertp{}_z  dz,
\end{align*}
where
\begin{align*}
&\trace{Z}^1_\pertp{}_\rho(\normx,z)\defi  \frac{1}{2D_{\gamma^\pertp}}
\left((\C^\pertp-\A^\pertp)\widetilde\traceA^\pertp_{\rho t}+\B^\pertp \widetilde\traceA^\pertp_{\axial t}\right),
&& \trace{Z}^1_\pertp{}_z(\normx,z)\defi \frac{1}{2D_{\gamma^\pertp}}
\left((\C^\pertp-\A^\pertp)\traceA^\pertp_{tz}-2\widetilde\traceA^\pertp_{\axial t}\widetilde\traceA^\pertp_{\axial z}\right), \\
&\trace{Z}^2_\pertp{}_\rho(\normx,z)\defi \frac{1}{2D_{\gamma^\pertp}}\left(\widetilde\B^\pertp\traceA^\pertp_{tt}
+2\frac{1}{\normx}\widetilde\traceA^\pertp_{\axial t}\widetilde\traceA^\pertp_{\rho t}\right),
&& \trace{Z}^2_\pertp{}_z(\normx,z)\defi  \frac{1}{2D_{\gamma^\pertp}}\frac{1}{\normx}
\left(\traceA^\pertp_{tz}\widetilde\traceA^\pertp_{\axial t}-\traceA^\pertp_{tt}\widetilde\traceA^\pertp_{\axial z}\right).
\end{align*}
We can now integrate
$dz^\kvi_\pertp=\bm\zeta^\kvi_\pertp$.
Using $\normx d \normx = x dx + y dy$, 
  the functions  $z^\kvi_\pertp$ take the form (the integrability
  conditions are ensured by Corollary \ref{OTcoro})
  \begin{align*}
&z_\pertp^1=t+\int_0^{\normx} \trace{Z}_\pertp^1{}_\rho(s,z) ds
+ \int \trace{Z}_\pertp^1{}_z(0,z) dz := x^1 + u^1_{\pertp}, \\
&z_\pertp^2=\arctan\frac{y}{x}+\int_0^{\normx} \trace{Z}_\pertp^2{}_\rho(s,z) ds
+ \int \trace{Z}_\pertp^2{}_z(0,z) dz := x^2 + u^2_{\pertp},
  \end{align*}
  where we have set $x^1=t$ and $x^2 =\arctan(y/x) = \varphi$.
    By \eqref{eq:build_gauge_vectors_OT}, the gauge vectors
    are constructed from  $u^\kvi_\pertp$, so we only need to care about the
    integral terms and show that they define
    $C^{n+1}(\Uaxis)$ functions.

    From Lemma \ref{res:axial}    and the claim above it follows that
    $\trace{Z}_\pertp^\kvi{}_z(0,z)$ are $C^{n+1}$ functions of $z$, so the
    two integrals over $z$  are $C^{n+2}$ functions of $z$. Concerning  the
    $\rho$-integrals, \eqref{decomp}, \eqref{eq:inverse_determinant}
    and the claim imply that
        \begin{equation*}
\trace{Z}_\pertp^\kvi{}_\rho(\rho,z)=\rho \sum^{[\frac{n-1}{2}]}_{k=0}\rho^{2k}P_\pertp^\kvi{}_{k}(z)+\tilde\Phi^\kvi_\pertp{}^{(n)}(\rho,z),
\end{equation*}
where $P_\pertp^\kvi{}_{k}(z)$ and $\tilde\Phi^\kvi_\pertp{}^{(n)}$
are $C^{n+1}$ with respect to $z$
and $\tilde\Phi^\kvi_\pertp{}^{(n)}$
is $C^{n}$ in $\rho$  and $o(\rho^n)$.
As a result, 
\[
  Z^\kvi_\pertp(\rho,z)\defi
  \int_0^{\rho} \trace{Z}_\pertp^\kvi{}_\rho(s,z) ds=
\sum^{[\frac{n-1}{2}]}_{k=0}P_\pertp^\kvi{}_{k}(z)\frac{1}{2k+1}\rho^{2(k+1)}+
\int_0^{\rho} \tilde\Phi^\kvi_\pertp{}^{(n)}(s,z)ds
\]
is $C^{n+1}$ in the domain $D_\axial$.
The first term is a polynomial in $\rho^2$, while the last integral
  is   $o(\rho^{n+1})$
  (e.g. by the mean value theorem).
Lemma \ref{origin} ensures that 
$Z^\kvi_\pertp(\normx,z)$ is $C^{n+1}(\Uaxis)$. Applying the
  definifion \eqref{eq:build_gauge_vectors_OT} we conclude that the
  gauge vectors $\sper$ and $\sperper$ are $C^{n+1}(\Uaxis)$.  \fin
\end{proof}

Given a pertubation scheme that inherits a stationary and
  axisymmetric orthogonally transitive
group action, we shall always consider the related 
perturbation scheme whose existence is proved in this proposition. The
corresponding  metric perturbation tensors will be denoted 
$\fpt$ and $\spt$ (without the $g$ superindex).

\begin{remarkpro}
\label{Remark:diag}
When the codimension of the Abelian (local) group action is two, the metric
$h^{\pertp}(x^\otp)_{\otm\otn} dx^\otm dx^\otn$ is two-dimensional. Given any point 
$p \in \mmm'$ and  a suficiently small neighbourhood $U_p$ of $p$ 
there exist  coordinates  $\{ x^{\kvi}, x^\otm\}$ adapted to the Killings
where the background metric is not only block diagonal, but also
with $h_{\otm\otn}$ diagonal.  By restricting $U_p$ if necessary, it
is also true that there exists a change of coordinates 
$\{ x^\kvi, x^\otm_{\pertp}(x^\otn)\} $ where $h^{\pertp}_{\otm\otn}$ is also 
 diagonal for all $\pertp$. This map can be considered as a local diffeomorphism
and hence as a local gauge transformation. In the transformed gauge one has, by construction, that
${\fpt^g}_{\otm\otn}$ and ${\spt^g}_{\otm\otn}$ are both diagonal.
It should be emphasized however, that in general this local gauge transformation does not exist globally, so
it is unjustified to assume that ${\fpt}_{\otm\otn}$ and ${\spt}_{\otm\otn}$  are diagonal
in  some atlas. Generally speaking, this
limitation is present  even in the most favourable situation
when a global coordinate system $\{x^\alpha\}$ 
adapted to the
Killings exist on $\mmm'$ in which the background metric if both block diagonal
and with diagonal $h_{\otm\otn}$.
\end{remarkpro}

\section{Axially symmetric perturbations on spherically symmetric backgrounds}
\label{sec:axis_on_sph}
From now on we restrict ourselves to the case in which
the spacetime dimension is four and the background spacetime
is spherically symmetric. Before introducing some
convenient definitions and notation on spherically symmetric spacetimes
we start by fixing some notation on the unit sphere $(\mathbb{S}^2,\gsph)$.

\subsection{Spherically symmetric backgrounds}
\label{sec:spher_symm}
Let $Y^a$ ($a=1,2,3$) 
be the spherical harmonics with $\ell=1$ on the sphere. More specifically,
$Y^a$ is defined 
as the restriction of the Cartesian
coordinates 
of $\mathbb{R}^3$ to the unit sphere, i.e.
\[
Y^1=\sin\theta\cos\phi,\qquad
Y^2= \sin\theta\sin\phi,\qquad
Y^3=\cos\theta.
\]
We will refer to $\{x^A\}=\{\theta,\phi\}$ as the standard angular spherical coordinates,
in which the spherical unit metric reads $\gsph=d\theta^2+\sin^2\theta d\phi^2$.
Let us denote by $\dsph$ the exterior differential on $\mathbb{S}^2$,
and by
$\D_A$ the covariant derivative on $(\mathbb{S}^2,\gsph)$.
The spherical harmonics $Y^a$  satisfy
$\D_A \D_B Y^a = - Y^a\gsph{}_{AB}$ and the six dimensional algebra
of conformal Killing vectors on $\mathbb{S}^2$ is spanned by
$\{\D_A Y^a (=\dsph Y^a)\} $ (proper conformal Killings) and 
$\{ \epsilon_{AB} \D{}^B Y^a (=-\starsphere\dsph Y^a)\} $ (Killing vectors) 
where $\epsilon_{AB}$ is the volume form of $(\mathbb{S}^2,\gsph)$. 
In standard spherical coordinates
we choose the orientation so that the Hodge dual $\starsphere$
acts as $\starsphere d\phi=-1/\sin\theta d\theta$ and
$\starsphere d\theta=\sin\theta d\phi$.

\begin{definition}
  \label{def:spher}
A (four-dimensional) spacetime $(\mmm,g)$ is \textbf{spherically symmetric}
if it admits an $SO(3)$ group of isometries
acting transitively on spacelike surfaces (which may degenerate to points).
Denote by $\centresph\subset \mmm$ the set of fixed points of the group action (which may be empty).
\end{definition}

As it is known \cite{Szenthe04}, every connected component of  $\centresph$ is the image of a timelike geodesic
which is a closed set in $\mmm$. The set
$\mmm\setminus \centresph$, the principal part of $\mmm$,
is dense in $\mmm$.
Standard results \cite{OT-Schmidt1967} show that the surfaces of transitivity of the group (orbits)
$SO(3)$ generating the spherical symmetry in $(\mmm\setminus\centresph,g)$
are spheres $ \mathbb{S}^2(\subset \mathbb{R}^3)\to S_r\subset\mmm$,
admit a family
of orthogonal (and thus timelike) surfaces $S_\perp$ (the integrable distribution of assumption (ii) in
subsection \ref{sec:OTactions}), and $\mmm\setminus\centresph$ is
diffeomorphic to a warped product $S_\perp \times_{\Q_\perp} \mathbb{S}^2$.
Therefore, on $\mmm\setminus\centresph$ there exist coordinates
$\{x^I, x^A\}$ 
for which  $g$
takes the form
\begin{equation}
  g   = g_{\perp}{(x^I)}_{JK}dx^Jdx^K +\Q_\perp{}^2 \gsph_{AB}dx^A dx^B,
\label{metric_OT}
\end{equation}
where $g_\perp$ is Lorentzian and $\Q_\perp$ is a positive
function $\Q_\perp:S_\perp\to \mathbb{R}$.
The function $\Q_\perp$ is the restriction to $S_\perp$ of a
function $\Q:\mmm\setminus\centresph\to\mathbb{R}$ invariant under the
$SO(3)$ group and which extends to $\centresph$ as  $\Q(\centresph)=0$.
The function $\Q$ measures the area $4\pi \Q^2$ of
the orbits of the $SO(3)$ group, which are thus spheres $(S_r,\Q^2\gsph)$
endowed with the metric $\Q^2\gsph$.

Any vector field $V$ on $\mmm\setminus\centresph$
can  be decomposed as $V=V_\bot+\vect{V}$,
where $\vect{V}=V^A\partial_A$ is tangent to $S_r$ and $V_\bot$ is tangent 
to its $g$-orthogonal complement, $S_\bot$.
For tangent vectors $X=\vect{X}$
we will use $\bm{\overline X} := \gsph(\vect{X}, \cdot)$
to distinguish $\bm{\overline X}=\overline X_Adx^A$ from $\bm X=X_A dx^A=g(X, \cdot)$,
so that $\bm{X} = \Q^2 \bm{\overline X}$.

For later use, it is convenient to express the Lie derivative of any 2-covariant
symmetric tensor $T$ along any vector of the form $\vect{s}=s^A(x^B)\partial_A$.
In coordinates $\{x^I,x^A\}$ this is
\begin{align}
\lie_{\vect{s}} T=s^A\partial_A T_{IJ} dx^I dx^J
+2\left(s^B\partial_B T_{IA}+T_{IB}\partial_As^B  \right) dx^I dx^A+
(\lie_{\vect{s}}\tenstt{T})_{AB}dx^A dx^B,
\label{eq:Lie_s_T}
\end{align}
where we denote by $\tenstt{T}$ the ``full tangent  part'' to $S_r$ of $T$, i.e.
$\tenstt{T}=T_{AB} dx^Adx^B$.
The application of this expression to $T=g$, and taking into account that $\tenstt{g}=\Q^2\gsph$,
yields, for any pair of vectors $X,Y$, the following
equality on $\mmm\setminus\centresph$
\begin{equation}
  \lie_{\vect{s}} g(X,Y)=\Q^2\lie_{\vect{s}}\gsph(\vect{X},\vect{Y}). 
\label{eq:lie_oaxial}
\end{equation}

Let us now single out one axial Killing vector $\axial$
of the $so(3)$ algebra, which after
a  convenient rotation of the spherical coordinates can be set to be $\axial=\partial_\phi$.
We denote the corresponding axis by $\axis$.
Observe that $\centresph\subset \axis$.
Using the above, the axial Killing vector $\axial$, since $\axial=\vect{\axial}$,
defines an axial Killing vector on the unit sphere,
$\axial=\axial^A\partial_A$,
and we take the labels $a$ on $Y^a$
so that the rotation generated by $\eta$ has axis along $x^3$.
By doing that we are saying that the maps $\mathbb{S}^2(\subset \mathbb{R}^3) \rightarrow S_r\subset \mmm$
are such that $x^3=\pm 1$ are mapped onto $\axis\cap S_r$.
We thus have, by construction,
$\overline\axial_A :=  \gsph{}_{AB}\axial^B = \epsilon_{AB} \D^B Y^3$,
this is $\baxial=-\starsphere \dsph Y^3$.
We denote by $\vect{\oaxial}$ the conformal
Killing vector on the sphere given by
\[
  \boaxial\defi \starsphere \baxial=\dsph Y^3=\d Y^3,
\]
where in the last equality we have used the fact that for any function $f(x^A)$
we have $\dsph f=\d f$.

We finally define the vector $\oaxial$
on $(M,g)$ tangent to each sphere and satisfying $\oaxial=\vect{\oaxial}$.
Clearly $\la\oaxial,\oaxial\ra=\la\axial,\axial\ra$ and $\la\oaxial,\axial\ra=0$ by construction.
The vector $\oaxial$ is thus defined on $\mmm\setminus\centresph$
and has the same differentiability as $\axial$ there. However, since $\axial=0$ on $\axis$ (including $\centresph\subset \axis$), $\oaxial$ extends  continuously to $\centresph$ by setting
$\oaxial|_\centresph=0$.
In spherical coordinates $x^A=\{\theta,\phi\}$ we just have $\axial=\partial_\phi$ and
$\oaxial=-\sin\theta\partial_\theta$. Notice also that
$
\bm\oaxial=\Q^2\d Y^3.
$

Consider a spherically symmetric $C^{n+2}$ spacetime $(\mmm,g)$ with $C^{n+1}$ metric $(n \geq 2)$.
Using the construction above  we take a spherically symmetric neighbourhood
$\UinM\subset \mmm$ that may contain a connected component $\centresph_0$ of $\centresph$.
We make the following assumption on the existence of
Cartesian coordinates with suitable differentiability.

\vspace{3mm}

\noindent
\textbf{Assumption {\AsSone}:}
We assume the existence of a $C^{n+2}$ coordinate chart
that maps $\UinM$ onto
$U = U^3\times I $
where $I$ is an open interval and $U^3 \subset \mathbb{R}^3$ is a radially symmetric domain. Moreover, $\centresph_0 \cap \UinM $ (if non empty) is mapped to\footnote{We denote the origin of $\mathbb{R}^p$ by $0_p$, see Appendix \ref{app:iota}.} $ \{0_3\}\times I \subset U$
and using Cartesian coordinates 
$x^i=\{x,y,z\}$ for $\mathbb{R}^3$ and  $\{t\}$ along $I$, the metric $g$ takes the form
\begin{equation}
g=-e^{\nu} dt^2+ 2 \nur dt (x_idx^i)+\VR (x_i dx^i)^2+ 
\UR\delta_{ij}dx^i dx^j,
\label{eq:metric:cartesian}
\end{equation}
where $\nu$, $\nur$, $\VR$  and $\UR$ are $C^{n+1}$ functions of $\{x,y,z,t\}$ and radially symmetric in $\{x,y,z\}$.

\vspace{3mm}

Without loss of generality, we demand that the corresponding spherical coordinates $\{r,\theta,\phi\}$
defined by
\begin{equation}
x=r Y^1=r\sin\theta\cos\phi,\qquad 
y=r Y^2=r\sin\theta\sin\phi,\qquad 
z=r Y^3=r\cos\theta
\label{def:cartesian_coord}
\end{equation}
are such that $\{\theta,\phi\}$ correspond to the above $\mathbb{S}^2\to S_r$ construction.

As usual we identify geometric objects on $\UinM$ with their representation
in this chart. We introduce
$\normr\defi\sqrt{x^2+y^2+z^2}$  (which corresponds to the spherical
coordinate $r$ above) and observe that $\centresph_0 \cap \UinM$ corresponds to
the set of points with vanishing $\normr$.
A consequence of assumption {\AsSone} is that $\UR$ does not vanish on $\UinM$
(as $g$ would degenerate where $\UR=0$),
and the function $\Q$  takes the form $\Q^2=\UR\normr^2$.
Therefore, for any function $f:\UinM\to \mathbb{R}$, 
$\Q^2f\in O(\normr^{l})$ is equivalent to $\UR f\in O(\normr^{l-2})$, and hence
to $f\in O(\normr^{l-2})$. Analogously, if $\Q^2f\in o(\normr^{l})$ then $f\in o(\normr^{l-2})$.

In this chart  
the axial Killing vector $\axial$ reads $\axial=x\partial_y-y\partial_x$
and its square norm
$\trho^2\defi\la\axial,\axial\ra$ is $\trho^2=(\Q^2/\normr^2)( x^2+y^2)=\UR \normx^2$.
The (piece of) axis of symmetry $\axis\cap\UinM$ is thus located at $\axis\cap\UinM=\{p\in\UinM; x(p)=y(p)=0\}$.

In the following (see also Appendix \ref{app:iota})
  we extend the little-o notation in terms of a limit on the axis $\axis$:
For any positive function $g$ defined on  $\UinM \setminus \axis$ we set
(observe we are using boldface in ${\bf o}$)
$$f\in {\bf o}(g)\iff \lim_{\normx\to 0}f g^{-1}=0.$$
Note that for any function $f\in{\bf o}(\normx^l)$ we have
that $f/(\sqrt{\trho^2})^l$ vanishes on all $a\in\axis$.

We define $\partialr=\frac{1}{\normr} x^i\partial_i$ as the radial vector (normal to each $S_r$)
outside the origin, which in spherical coordinates reads $\partialr=\partial_r$.
It is convenient to introduce
the following smooth 
vector field defined on
$\UinM$:
\begin{equation}
\ngamma\defi x\partial_x+y\partial_y 
\label{eq:ngamma}
\end{equation}
and simply use the shorthand $\nzeta\defi\partial_z$.
In terms of these objects, the following expressions hold
on $\UinM\setminus\centresph_0$
\begin{align}
&\oaxial=
-\frac{1}{\normr}(z\ngamma-\normx^2\nzeta), 
\label{eq:oaxial}\\
&\partialr=\frac{1}{\normr}\left(\ngamma+z \nzeta\right).\label{eq:partialr_2}
\end{align}
In addition to being  $C^{n+1}$ on $U\setminus \centresph_0$, it is
clear that $\partialr$ is bounded near the origin and $\oaxial$ extends continuously to $\centresph_0$ as $\oaxial|_{\centresph_0}=0$. An immediate consequence of
(\ref{eq:oaxial})-(\ref{eq:partialr_2}) is 
\begin{equation}
\normr\ngamma=\normx^2\partialr-z\oaxial
\quad \quad \mbox{ on } \quad \UinM\setminus\centresph_0.
\label{eq:ngamma2}
\end{equation}
When convenient  we will use $\{\tz\}$ to refer to $\{z,t\}$.

\subsection{Axially symmetric perturbations on spherically symmetric backgrounds}
In this subsection we prepare the stage that will allow us to prove
  Propositions \ref{prop:K_spher} and \ref{prop:K2_spher} below.
  The aim of the propositions is to look (at first and second order respectively)
for a change of gauge that
takes the perturbation tensors $\Kper$ and $\Kperper$,
in the form given in Proposition \ref{Block},
to a form in which their angular part
is proportional to the metric on the sphere, while keeping the block structure.
This is part of the very well known
Regge-Wheeler gauge. As we shall see, the gauge vectors $\sper$ and $\sperper$
that take us from $\Kper$ and $\Kperper$ to the gauged $\Kper^g$ and $\Kperper^g$
with the desired properties satisfy suitable differential equations with
inhomogeneous terms. Since one of our aims is to understand
whether or not the regularity properties (and differentiability) of $\Kper$ and $\Kperper$
are kept in $\Kper^g$ and $\Kperper^g$, it becomes necessary to study
the properties of those inhomogeneous terms, as well as the effect they have on the
regularity of the gauge vectors via the differential equations that are satisfied.

We start by establishing several facts of
  the original perturbation tensors that will play a role in determining the
  properties of the inhomogeneous terms.
These first results that follow are stated for general symmetric
  tensors invariant under the axial symmetry,
  which will be denoted by $K$ and taken to be of class $C^m$.
  Later on, these results will be applied to
  $\Kper$ and $\Kperper$, of class $C^n$ and $C^{n-1}$ respectively, by letting $m$
  take the values $n$ and $n-1$ correspondingly.

\begin{lemma}
\label{res:q-q+}
Let $K$ be a symmetric 2-covariant $C^m$ 
tensor in $\UinM$ for $m\geq 1$  satisfying $\lie_{\axial}K=0$.
Then,
\begin{enumerate}
\item the functions $K(\partialr,\oaxial)$
  and $K(\partialr,\partialr)$
  are $C^m(\UinM\setminus\centresph_0)$ and radially symmetric in $\{x,y\}$.
  Moreover, $K(\partialr,\oaxial) \in {\bf o}(\normx)$
and $K(\partialr,\partialr)$ is bounded near $\centresph_0$.
 \item the functions  $K(\axial,\axial)/\normx^2$ and 
   $K_{\oaxial\oaxial}^{(m)} := \frac{\normr^2}{\normx^2} K(\oaxial,\oaxial)$
   are $C^m(\UinM)$ and radially symmetric in $\{x,y\}$. Moreover,
   $K_{\oaxial\oaxial}^{(m)}  \in  O(\normr^2)$.   
\item the function $K(\axial,\partial_{u})$, as well as  $q_-$, $q_+$, $q_\times$, $\qtheta$ defined
  by
\begin{align}
&q_-\defi \frac{\Q^2}{2\trho^2}\left\{ K(\oaxial,\oaxial)-K(\axial,\axial)\right\},\qquad
q_+\defi \frac{\Q^2}{2\trho^2}\left\{ K(\oaxial,\oaxial)+K(\axial,\axial)\right\},
\label{def:q-q+}\\
&q_\times\defi \normr \frac{\Q^2}{\trho^2 }K(\oaxial,\axial) ,
\qquad \qquad \qquad \qtheta \defi \normr^2K(\partialr,\oaxial)
\label{def:qx}
\end{align}
are all $C^m(\UinM)$, radially symmetric in $\{x,y\}$ 
and have the following structure
\begin{align}
  &K(\axial,\partial_{u})=\normx^2 \sum^{[m/2]-1}_{k=0} \normx^{2k} P^u_{k}(\tz)+\Phi^{u(m)}, \label{eq:Katz}\\
&q_-=\sum_{k=1}^{[m/2]+1}\normx^{2k}P^-_{0k} +z\Phi^{-(m)}_{10}+\normx^2\Phi^{-(m)}_{01},
\label{eq:q-}\\
&q_+=z^2 P^+_{20}+\sum_{k=1}^{[m/2]+1}\normx^{2k}P^+_{0k}+z\Phi^{+(m)}_{10}+\normx^2\Phi^{+(m)}_{01},
\label{eq:q+}\\
&q_\times= z^2\normx^2  P^\times_{21}+\sum_{k=2}^{[m/2]+1}\normx^{2k}P^\times_{0k}
+ z^2 \Phi^{\times (m)}_{20}+\normx^2 \Phi^{\times (m)}_{01},
\label{eq:qx}\\
 & \qtheta= 
 z\normx^2 P^*_{11}+\sum_{k=2}^{[m/2]+1}\normx^{2k}P^*_{0k}  +z^2\Phi^{*(m)}_{20}+\normx^2\Phi^{*(m)}_{01}
   ,\label{eq:r2Kon}
\end{align}
where all $P$ functions are $C^m$ functions of $\{z,t\}$ and $\Phi^{(m)}\in C^m(U)$ are
radially symmetric in $\{x,y\}$ 
and ${\bf o}(\normx^m)$.\end{enumerate}
In particular, $\qtheta$ and $q_{-}$ are ${\bf o}(\normx)$,
$\qtheta\in O(\normr^3)$ and $q_-\in O(\normr^2)$.
Moreover, if $m\geq 2$ we also have $\partialr(q_-)\in {\bf o}(\normx)$.
 \end{lemma}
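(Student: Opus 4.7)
The plan is to apply Lemma \ref{res:axial} to $K$ with its differentiability parameter $m$ in a Cartesian chart given by Assumption~{\AsSone}, and then compute all the required contractions via (\ref{eq:oaxial})--(\ref{eq:partialr_2}) together with $\axial=x\partial_y-y\partial_x$. The $C^m$ regularity on $\UinM$ (including the axis) will come from Lemma \ref{origin} applied to the radially symmetric traces that emerge in each contraction, while the polynomial-plus-remainder structures of items 2 and 3 will be read off from (\ref{decomp}) combined with the vanishing of $\mathcal{P}_2,\mathcal{P}_3,\mathcal{P}_{\rho u},\mathcal{P}_{\axial u}$ at $\rho=0$ asserted by Lemma \ref{res:axial}.

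The central algebraic observation is that (\ref{Kxx})--(\ref{Kzz}), together with the identities $(x^2-y^2)\costwophi+2xy\sintwophi=\normx^2$ and $-(x^2-y^2)\sintwophi+2xy\costwophi=0$, imply
\[
K(\ngamma,\ngamma)=\tfrac{\normx^2}{2}(\C+\A),\quad K(\axial,\axial)=\tfrac{\normx^2}{2}(\C-\A),\quad K(\ngamma,\axial)=\tfrac{\normx^2}{2}\B,\quad K(\ngamma,\partial_u)=\DD,\quad K(\axial,\partial_u)=-\E,
\]
for $u\in\{z,t\}$, while $K(\partial_u,\partial_v)=\QQ$. The systematic $\normx^2$ prefactor in the tangential contractions is the mechanism by which the singular ratio $\Q^2/\trho^2=\normr^2/\normx^2$ entering (\ref{def:q-q+})--(\ref{def:qx}) gets absorbed. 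From $\normr\partialr=\ngamma+z\nzeta$ and $\normr\oaxial=-z\ngamma+\normx^2\nzeta$ one obtains
\begin{align*}
\normr^2K(\partialr,\partialr)&=\tfrac{\normx^2}{2}(\C+\A)+2z\DD+z^2\QQz, & \tfrac{\normr^2}{\normx^2}K(\oaxial,\oaxial)&=\tfrac{z^2}{2}(\C+\A)-2z\DD+\normx^2\QQz,\\
\normr^2K(\partialr,\oaxial)&=-\tfrac{z\normx^2}{2}(\C+\A)+(\normx^2-z^2)\DD+z\normx^2\QQz,
\end{align*}
and $K(\axial,\axial)/\normx^2=(\C-\A)/2$. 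Each bracket is a radially symmetric $C^m$ function of $(\normx,\tz)$, so Lemma \ref{origin} gives the $C^m$ regularity (on $\UinM$ for item 2, on $\UinM\setminus\centresph_0$ for item 1 due to the $\normr^{-2}$). Boundedness of $K(\partialr,\partialr)$ near $\centresph_0$ and the bound $K_{\oaxial\oaxial}^{(m)}\in O(\normr^2)$ follow from $\normx^2,z^2\leq\normr^2$ together with the factorisation $\DD=\normx^2\tilde{\DD}+\Phi_{\rho u}^{(m)}$; the statement $K(\partialr,\oaxial)\in\mathbf{o}(\normx)$ uses the additional bound $|\normx^2-z^2|\leq\normr^2$.

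For item 3, substituting the above identities into the definitions (\ref{def:q-q+})--(\ref{def:qx}) and simplifying with $\normr^2=\normx^2+z^2$ gives closed expressions such as
\[
q_-=\tfrac{z^2}{2}\A-z\DD+\tfrac{\normx^2}{2}\QQz-\tfrac{\normx^2}{4}(\C-\A),\qquad q_+=\tfrac{z^2}{2}\C-z\DD+\tfrac{\normx^2}{2}\QQz+\tfrac{\normx^2}{4}(\C-\A),
\]
together with $K(\axial,\partial_u)=-\E$ (from which (\ref{eq:Katz}) is immediate via the decomposition of $\E$) and analogous formulas for $q_\times,\qtheta$. Plugging in (\ref{decomp}) and using $\mathcal{P}_2=\normx^2\tilde{\mathcal{P}}_2$, $\mathcal{P}_3=\normx^2\tilde{\mathcal{P}}_3$, $\mathcal{P}_{\rho u}=\normx^2\tilde{\mathcal{P}}_{\rho u}$, $\mathcal{P}_{\axial u}=\normx^2\tilde{\mathcal{P}}_{\axial u}$, each expression organises into (i) a polynomial part $\sum_{k\geq 1}\normx^{2k}P_{0k}(\tz)$ with $P_{0k}$ polynomial in $z$ of degree $\leq 2$ and $C^m$ in $\tz$, (ii) one or two ``resonant'' terms such as $z^2 P^+_{20}$ or $z\normx^2 P^*_{11}$ that come from the non-vanishing $\mathcal{P}_1(0,\cdot)$ and $\mathcal{P}_{uv}(0,\cdot)$, and (iii) remainder pieces $z\Phi^{(m)}_{10}$ and $\normx^2\Phi^{(m)}_{01}$, each $C^m$ and $o(\normx^m)$. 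This reproduces (\ref{eq:Katz})--(\ref{eq:r2Kon}); the upper summation bound $k=[m/2]+1$ comes from the extra $\normx^2$ multiplier. The ``in particular'' statements follow by inspection: every summand of $q_-$ and $\qtheta$ carries enough powers of $\normx$ (using $\Phi^{(m)}=o(\normx^m)$ with $m\geq 1$) to yield $\mathbf{o}(\normx)$ after division by $\normx$, while the $O(\normr^l)$ bounds reduce to $z^{2j}\normx^{2k}\leq\normr^{2(j+k)}$. For $\partialr(q_-)\in\mathbf{o}(\normx)$ when $m\geq 2$, one applies $\partialr f=(f_{\normx}\normx+zf_z)/\normr$ to (\ref{eq:q-}) and uses that a $C^m$ function of $\normx$ which is $o(\normx^m)$ has radial derivative $o(\normx^{m-1})$ at each axis point, by Taylor's theorem.

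The main technical hurdle is the bookkeeping in item 3: one must identify correctly the polynomial powers of $z$ entering each coefficient $P_{0k}$, extract the resonant $z^2$ and $z^2\normx^2$ pieces, and verify the clean separation of the remainder into $z$-prefixed and $\normx^2$-prefixed parts. The vanishing of $\mathcal{P}_2,\mathcal{P}_3,\mathcal{P}_{\rho u},\mathcal{P}_{\axial u}$ at $\rho=0$ asserted in Lemma \ref{res:axial} is precisely what makes this separation possible and, crucially, cancels the apparent $1/\normx^2$ singularity in the definitions of the $q$-functions.
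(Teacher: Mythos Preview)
Your proposal is correct and follows essentially the same route as the paper: apply Lemma~\ref{res:axial} in the Cartesian chart of Assumption~{\AsSone}, compute the contractions $K(\ngamma,\ngamma)$, $K(\axial,\axial)$, $K(\ngamma,\axial)$, $K(\ngamma,\partial_u)$, $K(\axial,\partial_u)$ in terms of the $\traceA_{\paramp}$, insert (\ref{eq:oaxial})--(\ref{eq:partialr_2}), and then read off the claimed structures from the polynomial-plus-remainder decomposition~(\ref{decomp}). The only notable difference is that for $\partialr(q_-)\in{\bf o}(\normx)$ the paper explicitly invokes Lemma~\ref{res:lemma_nkf} from Appendix~\ref{app:iota} on the remainder $\Phi_{q-}^{(m)}$, whereas you appeal to Taylor's theorem directly; the content is the same, but note that the remainder depends on $(\normx,z,t)$ and not on $\normx$ alone, so the $\partial_z$-contribution in $\partialr$ also needs the axis-vanishing argument, which is precisely what Lemma~\ref{res:lemma_nkf} packages.
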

\begin{proof}
The neighbourhood $\UinM$
(and its corresponding $U$) and $K$ fit the assumptions of
Lemma \ref{res:axial}. From the definitions in (\ref{eq:ngamma}) it is straightforward to obtain
(with obvious notation when the subindices $u,v$ refer to $w^u=\{z,t\}$,
and $\C$ with no arguments stands for
$\C(\normx,\tz)$ on $\UinM$, etc)
\begin{align}
  &2K(\axial,\axial)=\normx^2(\C-\A),\qquad
    2K(\ngamma,\ngamma)=\normx^2(\C+\A),\qquad
    K(\ngamma,\nzeta)=\traceA_{\rho z},\label{eq:Kaa_Krr}\\
    &K(\nzeta,\nzeta)= K_{zz} = \QQz,
    \qquad \qquad
    K(\axial,\partial_u)
    =-\E
    \label{eq:Kzz}
\end{align}
on $U$. Combining this with  (\ref{eq:oaxial})-(\ref{eq:partialr_2}) lead to
\begin{align}
  K(\partialr,\oaxial)
=&-\frac{z\normx}{\normr^2}\frac{1}{2}\normx(\C+\A-2 \QQz)+\frac{1}{\normr^2}(\normx^2-z^2)\traceA_{\rho z}, 
\label{eq:Kno_pre} \\
K(\partialr,\partialr)
=&\frac{\normx^2}{\normr^2}\frac{\C+\A}{2}+2\frac{z}{\normr}\frac{\traceA_{\rho z}}{\normr}
+\frac{z^2}{\normr^2} \QQz, \label{eq:Knono}\\
K_{\oaxial\oaxial}^{(m)}  := & \frac{\normr^2}{\normx^2} K(\oaxial,\oaxial)
=\frac{1}{2} z^2(\C+\A)-2z \traceA_{\rho z}+\normx^2 \QQz. 
\label{eq:Koo}
\end{align}
By Lemma \ref{res:axial} the
functions $\traceA_{\paramp}(\rho,\tz)$, $\paramp \in \{1,2,3,\rho u,\axial u,uv\}$ admit an expansion
\begin{align}
\traceA_{\paramp}(\rho,z,t) = \sum_{k=0}^{[m/2]} \rho^{2k} P_{\paramp \, k} (z,t)
+ \Phi^{(m)}_{\paramp}(\rho,z,t),  \label{PoltraceA}
\end{align}
where $P_{\paramp \, k} (z,t)$ are $C^m$ functions of $z,t$ and
$\Phi^{(m)}_{\paramp}(\rho,z,t)$ are $o(\rho^m)$ and $C^m$ functions in the domain
$U_{\eta} := \{ \rho \in \mathbb{R}_{\geq 0}, (z,t) \in \mathbb{R}^{2} ;
(\rho,0,z,t ) \in U\} \subset \mathbb{R}_{\geq 0} \times \mathbb{R}^{2}$.
Moreover $P_{2 \, 0}, P_{3 \, 0}, P_{\rho u \, 0}, P_{\axial u\, 0} $ are identically zero
because the functions $\traceA_{2}, \traceA_{3}, \traceA_{\rho u},\traceA_{\axial u}$ vanish at $\rho=0$.
By Lemma \ref{origin} in  Appendix \ref{app:diff_origin}, all the functions
$\traceA_{\paramp}(\normx,z,t)$ are $C^{m}(\UinM)$ (and are obviously radially symmetric in $\{x,y\}$).

Point \emph{1.} of the lemma follows form (\ref{eq:Kno_pre})
and (\ref{eq:Knono}).
The right-hand sides are clearly $C^{m}(\UinM\setminus \centresph_0)$. Since
$z\normx/\normr^2$ vanishes on $\normx=0$, the first term
in (\ref{eq:Kno_pre}) is ${\bf o}(\normx)$.
Boundedness of $\normx^2/\normr^2$ and $z^2/\normr^2$
plus the fact that $\traceA_{\rho z}\in {\bf o}(\normx)$
(because $P_{\rho z \, 0} =0$) 
implies that the second term is also ${\bf o}(\normx)$.
Therefore $K(\partialr,\oaxial)\in {\bf o}(\normx)$.
Concerning $K(\partialr,\partialr)$, all factors in all terms in the right-hand side of (\ref{eq:Knono})
are clearly bounded.
Taking into account that, by Lemma \ref{res:rhon-rn} in Appendix \ref{app:iota}, 
$\traceA_{\rho z}\in {\bf o}(\normx)$ implies
$\traceA_{\rho z}\in o(\normr)$ 
we conclude that $K(\partialr,\partialr)$ is bounded near $\centresph_0$.

Point \emph{2.} is immediate from the first in (\ref{eq:Kaa_Krr}) and (\ref{eq:Koo}), after using again that $\traceA_{\rho z} \in o(\normr)$.

Concerning point \emph{3.}, we compute $q_-$, $q_+$, $q_{\times}, \qtheta$. Directly  from the definitions (\ref{def:q-q+})-(\ref{def:qx}) one finds, using that $\Q^2/\trho^2= \normr^2/\normx^2$,
\begin{align}
  2 q_- & = z^2\A-2z \traceA_{\rho z}+\normx^2 \QQz-\frac{1}{2}\normx^2(\C-\A),
  \label{expr:q-}\\
  2 q_+ & = z^2\C-2z \traceA_{\rho z}+\normx^2 \QQz+\frac{1}{2}\normx^2(\C-\A),
  \label{expr_q+}\\
  q_{\times} & = -\frac{1}{2}z (\normx^2+z^2) \B - (\normx^2+z^2) \traceA_{\axial z},  \label{expr:qtimes} \\
  \qtheta & = - \frac{1}{2} z\normx^2 (\C+\A-2 \QQz)+(\normx^2-z^2)\traceA_{\rho z}.
  \label{expr:q}
\end{align}
Inserting (\ref{PoltraceA}) in these expression as well as
in the expression for $K(\axial,\partial_{u})$
given in (\ref{eq:Kzz})  yields
(\ref{eq:Katz})-(\ref{eq:r2Kon}) after a simple rearranging of terms.
The right-hand sides of (\ref{eq:Katz})-(\ref{eq:r2Kon}) satisfy the requirements
of Lemma \ref{origin} so $K(\eta,\partial_u), q_-, q_+, q_\times, \qtheta$ are
$C^m(\UinM)$ as claimed

For the remaining properties, the fact that  $q, q_{-}\in {\bf o}(\normx)$,
  $\qtheta\in O(\normr^3)$ and $q_-\in O(\normr^2)$ are immediate from
\eqref{eq:r2Kon} and \eqref{eq:q-}. It remains to show that $\partialr(q_{-})\in{\bf o}(\normx)$
when $m\geq 2$.
Observe this is a stronger result than the one provided by Lemma \ref{res:lemma_nkf} for a general
$C^1$ and ${\bf o}(\normx)$ function. The reason behind its validity
is the special structure of $q_{-}$,
\eqref{eq:q-}, which we rewrite here as 
\[
q_-=\sum_{k=1}^{[m/2]+1}\normx^{2k}P^-_{0k} +\Phi_{q-}^{(m)},
\]
where
  $\Phi_{q-}^{(m)}$ is $C^m(U)$, ${\bf o}(\normx^m)$  and radially symmetric in $\{x,y\}$.
Applying $\partialr$ yields
\[
\partialr(q_-)=\sum_{k=1}^{[m/2]+1}\left( 2k \frac{\normx}{\normr}\normx^{2k-1}P^-_{0k}
+\normx^{2k}\frac{z}{\normr}\partial_z P^-_{0k}\right)+\partialr(\Phi_{q-}^{(m)}).
\]
Boundedness of $\frac{\normx}{\normr}$, $\frac{z}{\normr}$, $P^-_{0k}$ and
$\partial_z P^-_{0k}$ imply that the terms in the summation are ${\bf o}(\normx)$.
Lemma \ref{res:lemma_nkf}, since $m\geq 2$ by assumption,
ensures $\partialr(\Phi_{q-}^{(m)})\in {\bf o}(\normx)$.
The result follows.
\fin
\end{proof}

 Our next lemma is concerned with the solutions of a class of differential equations that will arise
  in the process of changing the gauge and on the differentiability and regularity
  properties of the corresponding gauge vector. Most of the technical work is developed in
Appendix  \ref{app:iota}, to which we refer.

\begin{lemma}
\label{res:S_c4_lemma}
Consider the vector field in $\UinM\setminus \centresph_0$
\[S_{(a)}=\frac{1}{\Q^{2a}}\tlemma\oaxial\]
for $a \in \mathbb{R}$, where $\tlemma:\UinM\to \mathbb{R}$  satisfies the equation 
\begin{equation}
  \oaxial(\tlemma)=
  \sum_{(k,l) \in \V} \normrho{x}^{2k} z^lP_{lk}(z,t)
  + \sum_{(k',l') \in \V'}  \normrho{x}^{2k'} z^{l'} \Phi^{(m)}_{l'k'}(x)=:\mathcal{Q}
 \label{eq:for_alpha_1}
\end{equation}
on $\UinM$. Here, $\V, \V'$ are finite subsets of
$\mathbb{N} \times \mathbb{N}$ satisfying, respectively, 
$\V \subset \{ k \geq 1\} \times \{ l \geq 0 \}$ and
$\V' \subset \{ 2k'+l' \geq 1 \}$, 
$P_{lk}$ are $C^m$ in their arguments and $\Phi^{(m)}_{l'k'}$ are $C^m(U)$ and ${\bf o}(\normx^m)$
for $m\geq 1$.
Then there exists an axially symmetric
solution $\tlemma\in C^m(\UinM\setminus\centresph_0)\cap C^0(\UinM)$, and consequently
the vector $S_{(a)}$ is $C^{m}(\UinM\setminus\centresph_0)$.

Moreover, if $\displaystyle{b:= \min_{\V} \{2k+l\}\leq \min_{\V'}\{2k'+l'\}+m}=:c+m$ then
$\tlemma\in O(\normr^{b})$
and $\partialr(\tlemma)\in O(\normr^{b-1})$,
whereas if $b> c+m$  then $\tlemma\in o(\normr^{c+m})$
and $\partialr(\tlemma)\in o(\normr^{c+m-1})$.
On top of that, if $c+m\geq b \geq 2a$ or $b> c+m\geq 2a-1$
then $S_{(a)}$ can be extended continuously to $\centresph_0$ by setting $S_{(a)}|_{\centresph_0}=0$.
\end{lemma}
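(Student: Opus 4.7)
The plan is to exploit axial symmetry to reduce the equation $\oaxial(\tlemma)=\mathcal{Q}$ to a one-parameter family of first-order ODEs on the two-spheres of the background, integrate them with a boundary condition placed on the axis, and then track the behaviour of $\tlemma$ both at the axis and near $\centresph_0$ using the explicit structure of $\mathcal{Q}$. In the spherical coordinates of Subsection \ref{sec:spher_symm} one has $\oaxial=-\sin\theta\,\partial_\theta$, so for an axially symmetric $\tlemma$ the equation becomes
\[
-\sin\theta\,\partial_\theta \tlemma(r,\theta,t) = \mathcal{Q}(r,\theta,t)
\]
at fixed $(r,t)$. The constraints $k\ge 1$ in $\V$, $2k'+l'\ge 1$ in $\V'$, and $\Phi^{(m)}_{l'k'}\in{\bf o}(\normx^m)$ with $m\ge 1$ make every summand of $\mathcal{Q}$ vanish at least like $\normx$ on the axis, so $\mathcal{Q}/\sin\theta$ extends continuously across both poles and $\tlemma$ can be defined by
\[
\tlemma(r,\theta,t):=-\int_0^{\theta}\frac{\mathcal{Q}(r,\theta',t)}{\sin\theta'}\,d\theta',
\]
which is axially symmetric, continuous on $\UinM$, and vanishes at the pole $\theta=0$.

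To establish $\tlemma\in C^m(\UinM\setminus\centresph_0)$ I would substitute the explicit decomposition of $\mathcal{Q}$ into the integral and analyse each summand. Polynomial pieces $\normx^{2k}z^l P_{lk}(z,t)$, after dividing by $\sin\theta=\normx/r$ and integrating in $\theta$, yield finite sums of products of $(\sin\theta,\cos\theta)$ with $r^{2k+l}P_{lk}$ that re-express as $C^m$ functions on $\UinM$ radially symmetric in $\{x,y\}$ via Lemma \ref{origin}. The remainder pieces $\normx^{2k'}z^{l'}\Phi^{(m)}_{l'k'}$ are handled in the same way, provided one uses the fine control of quotients by $\normx$ and of products of ${\bf o}(\normx^m)$ functions developed in Appendix \ref{app:iota}; this bookkeeping is where most of the technical work lies.

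For the behaviour near $\centresph_0$, each polynomial summand of $\mathcal{Q}$ is of order $r^{2k+l}$ (as $\normx,|z|\le r$ and $P_{lk}$ is bounded) and integration in $\theta$ preserves this order, making the polynomial part of $\tlemma$ at most $O(r^b)$ with $b=\min_{\V}\{2k+l\}$. For the remainder, $\Phi^{(m)}_{l'k'}\in{\bf o}(\normx^m)\subset o(\normr^m)$ by Lemma \ref{res:rhon-rn}, so the corresponding contribution to $\tlemma$ is $o(r^{2k'+l'+m})$, with worst order $o(r^{c+m})$ where $c=\min_{\V'}\{2k'+l'\}$. Combining, $\tlemma\in O(r^b)$ when $b\le c+m$ and $\tlemma\in o(r^{c+m})$ when $b>c+m$. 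Differentiation under the integral sign costs one power of $r$: for the polynomial part this is immediate, and for the remainder it relies on the little-$o$ stability under $\partialr$ available thanks to the product structure of the terms (of the type already exploited at the end of the proof of Lemma \ref{res:q-q+}), giving $\partialr(\tlemma)\in O(r^{b-1})$ or $o(r^{c+m-1})$ in the two regimes.

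Finally, from $\Q^2=\UR\normr^2$ with $\UR$ bounded away from zero, and from the Cartesian components of $\oaxial=-\normr^{-1}(z\ngamma-\normx^2\nzeta)$ being $O(\normr)$, the components of $S_{(a)}=\Q^{-2a}\tlemma\,\oaxial$ are controlled by $|\tlemma|/\normr^{2a-1}$. In the regime $c+m\ge b\ge 2a$ one has $\tlemma\in O(\normr^{2a})$ and these components are $O(\normr)$; in the regime $b>c+m\ge 2a-1$ one has $\tlemma\in o(\normr^{2a-1})$ and they are $o(1)$. In both cases $S_{(a)}$ admits the continuous extension $S_{(a)}|_{\centresph_0}=0$. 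The main obstacle in executing this programme is carrying the $C^m$ regularity of $\tlemma$ up to the axis and proving sharp decay rates at $\centresph_0$ that are stable under the radial derivative, which is precisely the role of the technical machinery collected in Appendix \ref{app:iota}.
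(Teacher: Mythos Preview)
Your proposal is correct and follows essentially the same route as the paper: the paper's proof of this lemma is a direct citation of Corollary \ref{res:main_coro} in Appendix \ref{app:iota}, whose proof (via Lemma \ref{res:lemma_oaxial_equation}) does precisely what you outline---reduce $\oaxial(\tlemma)=\mathcal{Q}$ to the ODE $-\sin\theta\,\partial_\theta\tlemma=\mathcal{Q}$, integrate from the pole, treat the polynomial and ${\bf o}(\normx^m)$ pieces separately, and then read off the Cartesian components of $S_{(a)}$ to extend it continuously to $\centresph_0$. You have also correctly located the delicate step (carrying $C^m$ regularity across the axis and controlling $\partialr$ near the origin) and the place in the paper where it is handled.
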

\begin{proof}
As shown in 
Corollary \ref{res:main_coro} in Appendix \ref{app:iota} there exists an
axially symmetric  solution $\tlemma$ of (\ref{eq:for_alpha_1}) which is $C^m(\UinM\setminus \centresph_0)$
and extends continuously to $\centresph_0$, where it vanishes. By the same corollary, if
$\displaystyle{b\leq c+m}$ then  $\tlemma\in O(\normr^b)$
and $\partialr(\tlemma)\in O(\normr^{b-1})$,
and if
$\displaystyle{b> c+m}$ then  $\tlemma\in o(\normr^{c+m})$
and $\partialr(\tlemma)\in o(\normr^{c+m-1})$.
Clearly $S_{(a)}=\tlemma/\Q^{2a}\oaxial$ for any $a$ is a $C^{m}$ vector field in
$\UinM\setminus\centresph_0$. To obtain its behaviour at $\centresph_0$
we simply analyse the application of $S_{(a)}$ to the Cartesian coordinate functions
using \eqref{eq:oaxial} and \eqref{eq:ngamma}, which provide
\[
S_{(a)}(x)=-\UR^{-a}\frac{1}{\normr^{1+2a}}\tlemma xz,\quad S_{(a)}(y)=-\UR^{-a}\frac{1}{\normr^{1+2a}}\tlemma yz,\quad S_{(a)}(z)=\UR^{-a}\frac{1}{\normr^{1+2a}}\tlemma\normx^2
\]
after using  $\Q^2=\UR\normr^2$. Recall that $\UR\in C^{n+1}(U)$
and does not vanish anywhere in $U$.
If 
$\tlemma\in O(\normr^{b})$ (case $c+m \geq b$)
the three components are $O(\normr^{b+1-2a})$ and hence their limit at
  $\centresph_0$ vanish under the hypothesis $b\geq 2a$.
If $\tlemma\in o(\normr^{c+m})$ (case $b > c+m$) the three components are $o(\normr^{c+m+1-2a})$,
which thus vanish as $\normr\to 0$ if $c+m\geq 2a-1$. In both cases we conclude that  $S_{(a)}$
can be extended continuously to $\centresph_0$ as $S_{(a)}|_{\centresph_0}=0$.\fin
\end{proof}

\subsubsection{Decomposition on spheres of symmetric axially symmetric tensors}
Given the above results we are ready to prove an intermediate but important result
that is the core of the existence of the gauge we look for at first order.
We present it as an independent result
on the decomposition on spherically symmetric spaces
of symmetric axially symmetric tensors into scalar, vector and tensor components.
The importance of this result lies on the fact that it determines not only the existence
(known) but also the differentiability of the
decomposition and the behaviour of such decomposition around the origin.
We use the notation and definitions from subsection \ref{sec:spher_symm}
regarding spherically symmetric spaces.
\begin{theorem}
\label{res:decomp_theorem}
  Let $m \geq 1$ and $(\mmm,g)$ be a spherically symmetric $C^{m+2}$ background with a $C^{m+1}$ metric
satisfying assumption {\AsSone}.
Let $K$ be a symmetric 2-covariant $C^m$ tensor on $\UinM\subset \mmm$ satisfying $\lie_\axial K=0$.
There exists a vector $\bU$ tangent to the spheres $S_r$
which is $C^m$  on $\UinM\setminus\centresph_0$ and extends continuously to $\centresph_0$, where it vanishes,
and a function $\H\in C^m(\UinM\setminus \centresph_0)\cap C^0(\UinM)$ and $O(\normr^2)$
such that $\tenstt{K}$, namely  the tangent-tangent part to the spheres of $K$, decomposes as
\begin{equation}
\tenstt{K}_{AB}=\D_A \bU_B + \D_B \bU_A + \H \gsph_{AB}.
\label{eq:spherical_o}
\end{equation} 
The vector $\bU$ is given explicitly by $\bU=\balpha\oaxial+\bbeta\axial$ with
$C^m(\UinM\setminus\centresph_0)$ functions $\balpha$ and $\bbeta$, both bounded near $\centresph_0$.
\end{theorem}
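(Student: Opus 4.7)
The plan is to reduce the tensor equation \eqref{eq:spherical_o} on each sphere $S_r$ to a pair of scalar ODEs in $\theta$ by exploiting the axial symmetry, then to invoke Lemma \ref{res:S_c4_lemma} together with the structural expansions of $q_-,q_+,q_\times$ furnished by Lemma \ref{res:q-q+}. This matches the strategy advertised in the introduction: I work with a first-order system on each sphere so that only one derivative is lost, instead of the two one would forfeit via a standard elliptic approach.

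Using the standard angular coordinates $\{\theta,\phi\}$ and the ansatz $\bU=\balpha\oaxial+\bbeta\axial$ with $\balpha,\bbeta$ axially symmetric, I would compute the components of $\D_A\bU_B+\D_B\bU_A$ explicitly (using $\oaxial=-\sin\theta\,\partial_\theta$, $\axial=\partial_\phi$ and the standard Christoffels on $(\mathbb{S}^2,\gsph)$) and match them against $\tenstt{K}_{\theta\theta}$, $\tenstt{K}_{\theta\phi}$, $\tenstt{K}_{\phi\phi}$. The difference of the two diagonal equations, combined with the identity $K_{\theta\theta}-K_{\phi\phi}/\sin^2\theta=2q_-$, gives $\oaxial(\balpha)=q_-$. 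The off-diagonal equation, using $K_{\theta\phi}=-K(\oaxial,\axial)/\sin\theta$ and $K(\oaxial,\axial)/\sin^2\theta=q_\times/\normr$, gives $\oaxial(\bbeta)=q_\times/\normr$, which I rewrite using $\oaxial(\normr)=0$ as
\begin{align*}
\oaxial(\balpha)=q_-,\qquad \oaxial(\normr\bbeta)=q_\times.
\end{align*}
The trace of \eqref{eq:spherical_o} then determines $\H$ algebraically: using $\D^A\oaxial_A=-2\cos\theta$, $\D^A\axial_A=0$ and axial symmetry one finds $\H=q_+-q_-+2\balpha\cos\theta$.

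These two ODEs fit exactly the source template \eqref{eq:for_alpha_1} of Lemma \ref{res:S_c4_lemma}: from \eqref{eq:q-} one reads off indices $b=2$, $c=1$ for the equation for $\balpha$, and from \eqref{eq:qx} one reads off $b=4$, $c=2$ for the equation for $\normr\bbeta$. The lemma therefore yields axially symmetric solutions $\balpha$ and $\normr\bbeta$ in $C^m(\UinM\setminus\centresph_0)\cap C^0(\UinM)$ with $\balpha\in O(\normr^2)$ and $\normr\bbeta\in O(\normr^{\min(4,m+2)})$ (or the corresponding $o$-estimate when $b>c+m$). In particular $\bbeta$ is bounded near $\centresph_0$, and the continuous extension $\bU|_{\centresph_0}=0$ follows because both $\oaxial$ (by construction) and $\axial=x\partial_y-y\partial_x$ vanish continuously at the origin.

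For the remaining claims about $\H$, I would combine $q_-\in O(\normr^2)$ from Lemma \ref{res:q-q+}, $\balpha\in O(\normr^2)$, and the $L^\infty$ bound $|\cos\theta|\leq 1$ with an analogous $O(\normr^2)$ estimate for $q_+$ extracted from \eqref{eq:q+}: each remainder $\Phi^{+(m)}_{l'k'}$ is $C^1$ and vanishes at the origin, hence is $O(\normr)$, while $|z|\leq\normr$ takes care of the cross term $z\Phi^{+(m)}_{10}$. The same ingredients imply $\H\in C^m(\UinM\setminus\centresph_0)\cap C^0(\UinM)$. The genuine technical obstacle in this whole scheme is the existence and sharp boundary behaviour (near both the axis and the origin) of solutions to the singular transport equation $\oaxial(f)=\mathcal{Q}$; this is precisely what Lemma \ref{res:S_c4_lemma} packages, with its proof deferred to Appendix \ref{app:iota}.
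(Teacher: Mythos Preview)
Your strategy is exactly the paper's: reduce \eqref{eq:spherical_o} to two first–order ODEs along $\oaxial$, match the sources against the structural expansions \eqref{eq:q-}, \eqref{eq:qx} of Lemma~\ref{res:q-q+}, and invoke Lemma~\ref{res:S_c4_lemma}. However, your explicit ODEs are off by a systematic factor of $\Q^2$.

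The decomposition \eqref{eq:spherical_o} involves $\bU_B=g(\bU,\partial_B)=\Q^2\,\gsph(\bU,\partial_B)$, not the unit–sphere one–form $\overline{\bU}_B$. With $\bU=\balpha\oaxial+\bbeta\axial$ one finds $\bU_\theta=-\balpha\,\Q^2\sin\theta$ and $\bU_\phi=\bbeta\,\Q^2\sin^2\theta$, and carrying this through your difference and off–diagonal computations gives
\[
\oaxial(\Q^2\balpha)=q_-,\qquad \oaxial(\normr\,\Q^2\bbeta)=q_\times,
\]
which is how the paper casts them (cf.\ \eqref{eq:alpha_first_b} and \eqref{eq:beta_first_b}). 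Lemma~\ref{res:S_c4_lemma} is then applied with $\tlemma=\Q^2\balpha$ (so $a=1$, $b=2$, $c=1$) and $\tlemma=\normr\Q^2\bbeta$ ($b=4$, $c=2$), yielding $\Q^2\balpha\in O(\normr^2)$ and $\normr\Q^2\bbeta\in O(\normr^4)$ (or $o(\normr^3)$ when $m=1$). The upshot is that $\balpha$ and $\bbeta$ are only \emph{bounded} near $\centresph_0$, exactly as the theorem states; your claim $\balpha\in O(\normr^2)$ over–shoots because it came from solving the wrong equation. The same factor propagates to the trace: the correct expression is $\H=q_+-q_-+2\Q^2\balpha\cos\theta$ (equivalently the paper's \eqref{eq:h}), and the $O(\normr^2)$ bound for $\H$ then comes from $\Q^2\balpha\in O(\normr^2)$ together with $q_+-q_-=\tfrac{\Q^2}{\trho^2}K(\axial,\axial)\in O(\normr^2)$ from point~\emph{2} of Lemma~\ref{res:q-q+}; no separate $O(\normr^2)$ estimate for $q_+$ is needed.

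Once the $\Q^2$ factors are restored your argument coincides with the paper's proof.
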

\begin{remark}
\label{res:decomp_remark}
The proof of the theorem also shows that  the tensor
\begin{equation*}
N\defi K-\lie_{\balpha\oaxial} g
\end{equation*}
satisfies $N(\oaxial,\oaxial)=N(\axial,\axial)$
and $N(\oaxial,\axial)=K(\oaxial,\axial)$.
\end{remark}

\begin{proof}
We start by showing that (\ref{eq:spherical_o}) is equivalent to
\begin{equation}
H\defi K-\lie_\bU g
\label{eq:H}
\end{equation}
satisfying
\begin{equation}
H(\oaxial,\oaxial)=H(\axial,\axial),\quad H(\oaxial,\axial)=0.
\label{eq:spherical}
\end{equation}
Indeed, the contraction of \eqref{eq:H} with any pair of vectors $X=\vect{X} $,
$Y=\vect{Y} $, i.e. tangent to the spheres,
\[
H(X,Y)= K(X,Y)-\lie_\bU g(X,Y),
\]
is equivalent, term by term and on each sphere, to
\[
\tenstt{H}(\vect{X},\vect{Y})
=\tenstt{K}(\vect{X},\vect{Y})-\Q^2 \lie_{\vect{\bU}} \gsph(\vect{X},\vect{Y})
\]
after using \eqref{eq:lie_oaxial} and the fact that $\bU=\vect{\bU}$.
In index notation, this is in turn equivalent to
\[
\tenstt{H}_{AB} 
=\tenstt{K}_{AB}-\Q^2 \lie_{\vect{\bU}} \gsph_{AB}=\tenstt{K}_{AB}-\Q^2 (\D_A \overline \bU_B+ \D_B \overline \bU_A)=
\tenstt{K}_{AB}-\D_A \bU_B- \D_B \bU_A,
\]
and the equivalence between \eqref{eq:spherical_o}
and \eqref{eq:spherical} follows.

Consider $\bU=\balpha\oaxial+\bbeta\axial$.
We start with the following identity for any 2-covariant tensor $T$ 
\begin{equation}
\lie_{\bU}T=\balpha\lie_\oaxial T+ \bbeta\lie_\axial T
+\d\balpha \otimes T(\oaxial,\cdot)+T(\cdot,\oaxial)\otimes \d\balpha
+\d\bbeta \otimes T(\axial,\cdot)+T(\cdot,\axial)\otimes \d\bbeta,
\label{eq:M_S1_to_oaxial_b}
\end{equation}
which applied to $g$ renders
\begin{equation}
\lie_{\bU} g= \balpha\lie_{\oaxial} g +
\d\balpha\otimes\bm\oaxial +\bm\oaxial\otimes\d\balpha
+\d\bbeta\otimes\bm\axial +\bm\axial\otimes\d\bbeta
\label{eq:LieS1_g_b}
\end{equation}
after using $\axial$ is Killing.
Therefore
$H$ takes the form
\[
H =
K -\balpha\lie_{\oaxial}g-\d\balpha\otimes\bm\oaxial
-\bm\oaxial\otimes\d\balpha-\d\bbeta\otimes\bm\axial -\bm\axial\otimes\d\bbeta.
\]
We use now that $\oaxial$ 
is the conformal Killing vector on the sphere with $\bm\oaxial= \Q^2\d Y^3$,
so that using (\ref{eq:lie_oaxial}),
\begin{equation}
  \lie_{\oaxial}g(X,Y) =\Q^2\lie_{\vect{\oaxial}}\gsph(\vect{X},\vect{Y})=-2\Q^2\fy\gsph(\vect{X},\vect{Y}) =-2\fy\la\vect{X},\vect{Y}\ra
  \label{expr:lie_oxaial_g}
  \end{equation}
(recall $\tenstt{g}=\Q^2\gsph$ and $g=g_\perp+\tenstt{g}$),
and therefore
\begin{align}
H(X,Y)&= K(X,Y) +2\fy \balpha \la\vect{X},\vect{Y}\ra
-X(\balpha) \bm\oaxial(\vect{Y}) -\bm\oaxial(\vect{X}) Y(\balpha)
-X(\bbeta) \bm\axial(\vect{Y}) -\bm\axial(\vect{X}) Y(\bbeta).
\label{eq:K1gXY_b}
\end{align}
Hence, in particular,
\begin{align}
  &H(\axial,\axial)=K(\axial,\axial)+2\fy\balpha\la\axial,\axial\ra
    -2\axial(\bbeta)\bm\axial(\axial),\label{eq:K1gxx_b}\\
  &H(\oaxial,\oaxial)=K(\oaxial,\oaxial)+2\fy\balpha \la\oaxial,\oaxial\ra-2\oaxial(\balpha)
    \bm\oaxial(\oaxial),\label{eq:K1gii_b}\\
  &H(\oaxial,\axial)=K(\oaxial,\axial)-\boaxial(\oaxial) \axial(\balpha)
    -\oaxial(\bbeta)\baxial(\axial).\label{eq:K1gix_b}
\end{align}
The equations in \eqref{eq:spherical} are therefore equivalent to
\begin{align}
  \label{eq:for_alpha_prop_b}
  0&=K(\axial,\axial)-K(\oaxial,\oaxial)
  +2\left(\oaxial(\balpha)-\axial(\bbeta)\right)\trho^2, \\
  \label{eq:for_beta_prop_b}
  0& =K(\oaxial,\axial)- \left(\axial(\balpha)+\oaxial(\bbeta)\right) \trho^2,
\end{align}
respectively,
after using $\bm\oaxial(\oaxial)=\bm\axial(\axial)=\la\axial,\axial\ra=:\trho^2$.
These equations clearly imply
$\axial(\oaxial(\balpha))+\axial(\axial(\bbeta))=0$ and $\axial(\oaxial(\bbeta))+\axial(\axial(\balpha))=0$.
Since $[\axial,\oaxial]=0$ it suffices to consider $\balpha$ and $\bbeta$ such that
$\axial(\balpha)=0$ and $\axial(\bbeta)=0$,
and
the equations become
\begin{align}
  0 & =K(\axial,\axial)-K(\oaxial,\oaxial)
  +2\oaxial(\balpha)\trho^2,
\label{eq:for_alpha_prop_bb} \\
  \label{eq:for_beta_prop_bb}
  0 & =K(\oaxial,\axial)- \oaxial(\bbeta)\trho^2.
  \end{align}
 We thus have two separate ODEs, one for $\balpha$ and one for $\bbeta$.
We deal first with equation (\ref{eq:for_alpha_prop_bb}), which
can be cast as
\begin{equation}
\oaxial(\Q^2\balpha)=q_{-}
\label{eq:alpha_first_b}
\end{equation}
with $q_-$ given by \eqref{def:q-q+}.
Lemma \ref{res:q-q+}
ensures that $q_{-}$ satisfies the requirements of
the right-hand side $\mathcal{Q}$ of equation (\ref{eq:for_alpha_1})
of Lemma \ref{res:S_c4_lemma} with  $b=\displaystyle{\min_{\V}\{2k+l\}}=2$
and  $c=\displaystyle{\min_{\V'}\{2k'+l'\}}=1$.
Note that Lemma \ref{res:q-q+} also ensures $q_{-}\in {\bf o}(\normx)$ and $O(\normr^2)$.
By setting $\tlemma=\Q^2\balpha$ and $\mathcal{Q}= q_{-}$,
Lemma \ref{res:S_c4_lemma} thus establishes there exists an axially symmetric
solution $\Q^2 \balpha\in C^m(\UinM\setminus \centresph_0)\cap C^0(\UinM)$ of \eqref{eq:alpha_first_b},
and consequently the vector $\bU_\oaxial\defi\balpha\oaxial=\Q^{-2}\tlemma\oaxial$ is
$C^{m}(\UinM\setminus\centresph_0)$.
Moreover, since we have $b(=2)\leq c+m$ given that $m\geq 1$, $\Q^2 \balpha$ is also $O(\normr^2)$,
and since $b\geq 2a$ for $a=1$, the vector $\bU_\oaxial=\Q^{-2}\tlemma\oaxial$ extends continuously to $\centresph_0$, where it vanishes.
Note that $\Q^2 \balpha\in O(\normr^2)$
implies $\balpha\in C^m(\UinM\setminus \centresph_0)$ is bounded near $\centresph_0$, as claimed.
The use of \eqref{eq:K1gix_b} with $\bbeta=0$ (and $\axial(\balpha)=0$)
together with the above proves, in particular, Remark \ref{res:decomp_remark}.

We next analyse in an analogous manner equation (\ref{eq:for_beta_prop_bb}), which we write
as
\begin{equation}
\oaxial(\normr \Q^2 \bbeta)=q_{\times}
\label{eq:beta_first_b}
\end{equation}
with $q_\times$ given by \eqref{def:qx}.
Lemma \ref{res:q-q+}  
ensures now that $q_{\times}$ satisfies the requirements of $\mathcal{Q}$ in
Lemma \ref{res:S_c4_lemma}
with $b=4$ and  $c=2$.
By setting $\tlemma=\normr \Q^2\beta$ and $\mathcal{Q}=q_{\times}$
Lemma \ref{res:S_c4_lemma} thence ensures that
$\normr \Q^2 \bbeta \in C^m(\UinM\setminus \centresph_0)\cap C^0(\UinM)$
and $O(\normr^4)$ if $m\geq 2$ (because then $b\leq c+m$) and $o(\normr^3)$ if $m=1$
(because $b>c+m$).
This implies, in any case, that $\bbeta \in C^m(\UinM\setminus \centresph_0)$ can be continuously extended
to $\centresph_0$, where it vanishes. As a result, the vector $\bU_\axial\defi \bbeta\axial$
is $C^m(\UinM\setminus \centresph_0)$ and can be continuously extended to $\centresph_0$ as zero.

Clearly, the vector $\bU=\bU_\oaxial+\bU_\axial$ satisfies both conditions in
\eqref{eq:spherical} and thence the outcome of the theorem.

Finally, a straightforward calculation shows, since $\axial(\balpha)=0$, that
$\D^A \bU_A=\D^A \bU_{\oaxial}{}_A=\Q^2(\oaxial(\balpha)- 2 \balpha Y^3)$,
from where it is direct to arrive at
\begin{equation}
\H=\frac{1}{2}\mbox{tr}_{\mathbb{S}^2}\tenstt{K}-\D^A \bU_{\oaxial}{}_A
=\frac{\Q^2}{\trho^2} K(\axial,\axial)+2\Q^2\balpha Y^3
\label{eq:h}
\end{equation}
after using \eqref{eq:for_alpha_prop_bb}.
Since $\Q^2/\trho^2=\normr^2/\normx^2$, we can use point \emph{2.} of Lemma \ref{res:q-q+}
to conclude that $\frac{\Q^2}{\trho^2}K(\axial,\axial)$ is $C^m(\UinM)$ and $O(\normr^2)$,
while we have from the above that
$\Q^2 \balpha\in C^m(\UinM\setminus \centresph_0)\cap C^0(\UinM)$ and $O(\normr^2)$.
Therefore $\H\in C^m(\UinM\setminus \centresph_0)\cap C^0(\UinM)$ and $O(\normr^2)$.\fin
\end{proof}

\subsubsection{Choice of gauge at first order}
We are ready to show that given any axially symmetric perturbation
there exists gauge vectors that render the full angular part of the perturbations
in some convenient manner. At this point we could use the previous theorem in full in order to
achieve a perturbation tensor that is proportional to the unit sphere metric (at first order).
However, for our purposes we will only need the partial result given by its
Remark \ref{res:decomp_remark}, leaving aside the $\{\theta,\phi\}$ crossed term,
which makes things simpler.
That is because later we will focus on perturbations that inherit
an orthogonally transitive two-dimensional group of isometries, in which case the crossed
term vanishes from the beginning, and thus it suffices to take $\bbeta=0$.

\begin{proposition}\label{prop:K_spher}
  Let $n \geq 1$ and $(\mmm,g)$ be a spherically symmetric $C^{n+2}$ background
  with a $C^{n+1}$ metric  satisfying assumption {\AsSone}.
  Let  $\fpt$ be a $C^n(\UinM)$ first order perturbation tensor satisfying
  $\lie_{\axial} \fpt=0$. Then
  there exists  a $C^n (\UinM\setminus \centresph_0)$
  first order gauge vector $\sper=-\talpha\oaxial$,
  that extends continuously to zero on $\centresph_0$,
  such that the corresponding gauge transformed tensor $\Kper^g$ (which is automatically
  $C^{n-1}(\UinM\setminus \centresph_0)$) satisfies
  \begin{equation}
    \fpt^g(\axial,\axial)=\fpt^g(\oaxial,\oaxial),
    \label{proport_cov_g}
  \end{equation}
  where
  \begin{equation}
    \fpt^g(\axial,\axial)=\fpt(\axial,\axial)+2\frac{z}{\normr}\talpha \trho^2.\label{eq:K1gxx_res}
  \end{equation}
  In addition
  \begin{align}
    &\fpt^g(\oaxial,\axial)=\fpt(\oaxial,\axial),\label{eq:K1gix}\\
    &\fpt^g(\oaxial,\vort_1)=\fpt(\oaxial,\vort_1)-\vort_1(\talpha)\trho^2,
      \label{eq:K1gi1}\\
    &\fpt^g(\axial,\vort_1)=\fpt(\axial,\vort_1),\label{eq:K1gx1}\\
    &\fpt^g(\vort_1,\vort_2)=\fpt(\vort_1,\vort_2),\label{eq:K1g12}
  \end{align}
  for any vectors $\vort_1$ and $\vort_2$ orthogonal to the spheres $S_r$.
  The function $\talpha$ is axially symmetric,
  of class $C^n(\UinM\setminus \centresph_0)$ and bounded near $\centresph_0$,
  and $\normr\partialr(\talpha)\in C^{n-1}(\UinM\setminus \centresph_0)$ is also bounded near $\centresph_0$.
  
  Moreover,
  the function $\Kper^g(\axial,\axial)/\trho^2$ is $C^n(\UinM\setminus \centresph_0)$,
  axially symmetric and bounded near $\centresph_0$,
  and the function
  $$\qtheta^g_1:=\normr^2\fpt^g(\oaxial,\partialr) \quad \mbox{ outside } \centresph_0, \quad
  \qtheta^g_1(\centresph_0)=0$$ is $C^{n-1}(\UinM\setminus\centresph_0)\cap C^0(\UinM)$, ${\bf o}(\normx)$
  and takes the  form
  \begin{equation}
    \label{eq:q1theta}
    \qtheta^g_1=
    z\normx^2 P^{*}_{11}+\sum_{k=2}^{[n/2]+1}\normx^{2k}P^{*}_{0k}
    +\normx^2\Phi^{*(n)}_{01}
    +z^2\Phi^{* (n)}_{20}+\normx^2\Phizero_1^{(n-1)}\,
  \end{equation}
  where the $P^{*}$ functions are $C^n$ functions of $\{z,t\}$, $\Phi^{*(n)}\in C^n(\UinM)$ are
  radially symmetric in $\{x,y\}$ 
  and ${\bf o}(\normx^n)$ and $\Phizero_1^{(n-1)}\in C^{n-1}(\UinM\setminus\centresph_0)\cap  C^0(\UinM)$
  is radially symmetric in $\{x,y\}$ 
  and $O(\normr)$.
\end{proposition}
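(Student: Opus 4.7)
The natural plan is to invoke Theorem \ref{res:decomp_theorem} (in the partial form provided by Remark \ref{res:decomp_remark}) directly on $\fpt$. That is, I apply the theorem with $K=\fpt$ and, crucially, with $\bbeta=0$, so that equation \eqref{eq:for_beta_prop_bb} is dropped and only the one-dimensional ODE \eqref{eq:for_alpha_prop_bb}, equivalently $\oaxial(\Q^2\balpha)=q_{-}$, must be solved. The resulting $\balpha$ is then taken as $\talpha$, and the gauge vector is declared to be $\sper:=-\talpha\oaxial$. With this choice the sign convention $\fpt^g=\fpt+\lie_{\sper}g = \fpt-\lie_{\talpha\oaxial}g$ matches exactly the tensor $H=K-\lie_{\bU}g$ of the theorem with $\bU=\balpha\oaxial$, so Remark \ref{res:decomp_remark} immediately yields \eqref{proport_cov_g}.

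Next I would read off the component formulas. Specialising the master Lie derivative identity \eqref{eq:M_S1_to_oaxial_b} with $T=g$ to $\bU=\talpha\oaxial$, using $\axial(\talpha)=0$, and using $\lie_{\oaxial}g(X,Y)=-2\fy\la\vect{X},\vect{Y}\ra$ from \eqref{expr:lie_oxaial_g}, I recover \eqref{eq:K1gxx_res} from \eqref{eq:K1gxx_b} (noting $\fy=Y^3=z/\normr$ and $\trho^2=\la\axial,\axial\ra$), and \eqref{eq:K1gix} from \eqref{eq:K1gix_b}. The remaining formulas \eqref{eq:K1gi1}--\eqref{eq:K1g12} follow by the same direct expansion, using the orthogonality conditions $\bm\oaxial(\vort_i)=\bm\axial(\vort_i)=0$ and $\lie_\oaxial g(\vort_i,\vort_j)=0$ whenever one of the arguments is orthogonal to the spheres $S_r$, since then its tangential part vanishes.

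The differentiability and boundedness of $\talpha$ come straight from the proof of Theorem \ref{res:decomp_theorem}: setting $\tlemma:=\Q^2\talpha$, the function $\tlemma$ solves \eqref{eq:alpha_first_b} with $\mathcal{Q}=q_{-}$. By Lemma \ref{res:q-q+} the right-hand side $q_{-}$ has exactly the polynomial-plus-remainder structure \eqref{eq:for_alpha_1} required by Lemma \ref{res:S_c4_lemma}, with parameters $b=\min\{2k+l\}=2$ and $c=\min\{2k'+l'\}=1$. Since $n\geq 1$ we have $b\leq c+n$, so Lemma \ref{res:S_c4_lemma} produces an axially symmetric solution $\tlemma\in C^n(\UinM\setminus\centresph_0)\cap C^0(\UinM)$ with $\tlemma\in O(\normr^2)$ and $\partialr(\tlemma)\in O(\normr)$, and gives the continuous extension of $\sper=-\talpha\oaxial=-\tlemma\Q^{-2}\oaxial$ to zero on $\centresph_0$ (the case $a=1$ with $b\geq 2a$). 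Writing $\talpha=\tlemma/(\UR\normr^2)$ and applying $\partialr$, both terms produced by the quotient rule, $\partialr(\tlemma)/\Q^2$ and $\tlemma\partialr(\Q^2)/\Q^4$, are $O(1/\normr)$, so $\normr\partialr(\talpha)$ is bounded on $\UinM\setminus\centresph_0$. Boundedness of $\talpha$ itself is then immediate from $\tlemma\in O(\normr^2)$. Since $\sper\in C^n$, the gauge transformed tensor $\fpt^g=\fpt+\lie_\sper g$ is automatically $C^{n-1}(\UinM\setminus\centresph_0)$.

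Finally, the claim on $\qtheta^g_1$. Using \eqref{eq:K1gi1} with $\vort_1=\partialr$ and multiplying by $\normr^2$ gives
\[
\qtheta^g_1=\qtheta-\normr^2\partialr(\talpha)\trho^2=\qtheta+\normx^2\Phizero^{(n-1)}_1,\qquad \Phizero^{(n-1)}_1:=-\UR\normr^2\partialr(\talpha),
\]
after using $\trho^2=\UR\normx^2$. The quantity $\Phizero^{(n-1)}_1$ is axially symmetric, $C^{n-1}(\UinM\setminus\centresph_0)$, and the computation above (each term in $\partialr(\talpha)$ being $O(1/\normr)$) shows that $\Phizero^{(n-1)}_1\in O(\normr)$; in particular it extends continuously as zero to $\centresph_0$. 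Substituting the explicit form \eqref{eq:r2Kon} for $\qtheta$ from Lemma \ref{res:q-q+} (with $m=n$) then yields the announced decomposition \eqref{eq:q1theta}. Each of its summands is ${\bf o}(\normx)$ (the $z^2\Phi^{*(n)}_{20}$ term using $n\geq 1$), whence $\qtheta^g_1\in{\bf o}(\normx)$. The main delicate point in the whole argument, and the one I expect to require most care, is the quantitative behaviour of $\partialr(\talpha)$ near $\centresph_0$: it is the only place where one must unpack the fine structure of the solution provided by Lemma \ref{res:S_c4_lemma} rather than merely invoke the existence and continuity statement.
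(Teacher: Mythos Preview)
Your proposal is correct and follows essentially the same route as the paper: invoke Theorem~\ref{res:decomp_theorem} (via Remark~\ref{res:decomp_remark}) with $K=\fpt$ and $\bbeta=0$, read off the component identities from the master formula \eqref{eq:K1gXY_b}, and then use Lemma~\ref{res:S_c4_lemma} with $b=2$, $c=1$ to control $\tlemma=\Q^2\talpha$ and hence $\normr\partialr(\talpha)$, exactly as the paper does via the explicit expansion \eqref{eq:ntalpha}. The only item you do not spell out is the boundedness and $C^n$ regularity of $\Kper^g(\axial,\axial)/\trho^2$, but this is immediate from \eqref{eq:K1gxx_res} together with point~\emph{2.} of Lemma~\ref{res:q-q+} and the boundedness of $\talpha$ you already established.
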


\begin{proof}
  Let $\talpha$ be the function $\balpha$ of Theorem \ref{res:decomp_theorem} applied to $K = \fpt$ and $n=m$, and define the vector $\sper=-\talpha\oaxial$.
  From  that theorem we know that
  $\talpha$ is $C^n(\UinM \setminus \centresph_0)$ and bounded near the origin.
  By Remark \ref{res:decomp_remark} we have that
  $\fpt^g= \fpt +\lie_{\sper}g$ satisfies \eqref{proport_cov_g} and \eqref{eq:K1gix}.
  
  For later use we note that
  \eqref{eq:K1gXY_b} with $H=\fpt^g$, $K=\fpt$, $\balpha=\talpha$ and $\bbeta=0$ yields, for any any pair of vectors $X,Y$,
  \begin{align}
    \fpt^g(X,Y)
    &=\fpt(X,Y) +2\fy \talpha\la\vect{X},\vect{Y}\ra
      -X(\talpha) \bm\oaxial(\vect{Y}) -\bm\oaxial(\vect{X}) Y(\talpha).
      \label{eq:K1gXY}
  \end{align}
  The equation satisfied by $\talpha$ is
  (from (\ref{eq:alpha_first_b}))
  \begin{align}
    \oaxial(\Q^2\talpha) & =q_{1-}, \qquad \quad q_{1-}\defi \frac{\Q^2}{2\trho^2}\left\{\Kper(\oaxial,\oaxial)-\Kper(\axial,\axial)\right\}. 
                           \label{eq:alpha_first} \\
                         & \Longleftrightarrow \qquad
                           0=K_1(\axial,\axial)-K_1(\oaxial,\oaxial)+2\oaxial(\tilde\alpha)\trho^2.
                           \label{eq:for_alpha_prop}
  \end{align}
  
  As discussed in
  the proof of Theorem \ref{res:decomp_theorem} (for $K=\Kper$, $m=n$
  and $\qtheta_{-}=\qtheta_{1-}$),
  \eqref{eq:alpha_first}
  satisfies the requirements of
  Lemma \ref{res:S_c4_lemma} with $\tlemma=\Q^2\talpha$, $\mathcal{Q}=q_{1-}$ and $m=n\geq 1$,
  with $b=2$ and $c=1$.

  Equations (\ref{eq:K1gxx_res}) and 
  \eqref{eq:K1gi1}-\eqref{eq:K1g12} follow immediately from \eqref{eq:K1gXY} after using 
  $\axial(\talpha)=0$, $\bm\oaxial(\oaxial)=\la\axial,\axial\ra=:\trho^2$.
  The claim that  $\Kper^g(\axial,\axial)/\trho^2$
  is $C^n (\UinM \setminus \centresph_0)$ and bounded near the centre follows
  from equation (\ref{eq:K1gxx_res}),
  since $\talpha$ (by Theorem \ref{res:decomp_theorem}) and
  $\Kper(\axial,\axial)/\trho^2$ (by point \emph{2.} in Lemma \ref{res:q-q+}
  applied to $K= \fpt$ and $m=n$) are both
  $C^n (\UinM \setminus \centresph_0)$ and bounded near the centre.

  Concerning the properties of $\qtheta^g_1$, we use 
  (\ref{eq:K1gi1}) with  $\vort_1=\partialr$, to get (recall that $\trho^2 = \UR \normx^2$)    
  \begin{equation}
    \qtheta^g_1 :=
    \normr^2\fpt^g(\oaxial,\partialr)=\normr^2\fpt(\oaxial,\partialr)
    + \normx^2 \Phizero^{(n-1)}_1, \qquad  \Phizero^{(n-1)}_1  :=-\UR\normr^2\partialr(\talpha).
    \label{eq:r2K1on}
  \end{equation}
  By point \emph{3.} in Lemma \ref{res:q-q+} (with $K=\Kper$, $m=n$),
  the first term extends to a $C^n(\UinM)$ function
  admitting an expression of the form \eqref{eq:r2Kon}. For the
  second term, we compute
  (using  $\Q^2=\UR \normr^2$),
  \begin{equation}
    \normr  \partialr (\talpha)
    =  \normr \partialr(\tlemma/\Q^2)
    = \frac{1}{\UR} \left (
      \frac{\partialr(\tlemma)}{\normr}
      -\frac{\tlemma}{\normr^2}\left(\normr\partialr(\UR)\frac{1}{\UR}+2 \right) \right ).
    \label{eq:ntalpha}
  \end{equation}
  Since $\tlemma\in O(\normr^2)$ and $\partialr(\tlemma)\in O(\normr^{1})$
  by virtue of Lemma \ref{res:S_c4_lemma} (since $c+m \geq b = 2$, see above), and
  the term $\normr\partialr(\UR)/\UR=\frac{1}{\UR}x^i\partial_i \UR$ is
  $C^n(U)$ (because $\UR\in C^{n+1}(U)$ and does not vanish on $\UinM$), we
  conclude that
  $\normr\partialr(\talpha)$ is 
  bounded near $\centresph_0$, as claimed. This implies in particular
  that $\Phizero^{(n-1)}_1$ is $O(\normr)$. Given that
  $\talpha\in C^n(\UinM\setminus \centresph_0)$ it follows immediately that
  $\Phizero^{(n-1)}_1$ (extended at the centre with the value zero) is
  $C^{n-1}(\UinM\setminus \centresph_0)\cap C^0(\UinM)$
  and $O(\normr)$.  We conclude that
  the function $\qtheta^g_1$ in \eqref{eq:r2K1on}
  is $C^{n-1}(\UinM\setminus \centresph_0)\cap C^0(\UinM)$,
  and takes the form \eqref{eq:q1theta}, from where it is direct to check
  that, since $n\geq 1$,
  $\qtheta^g_1$ is also ${\bf o}(\normx)$.
  \fin
\end{proof}

\subsubsection{Choice of gauge at second order}
We now perform an analogous, but more involved, procedure for the second order perturbation.

\begin{proposition}\label{prop:K2_spher}
  Assume the setting of Proposition \ref{prop:K_spher} and restrict
  $n\geq 2$.
  Let $\spt$ be a
  $C^{n-1}(\UinM)$ second order perturbation tensor satisfying
  $\lie_{\axial} \spt =0$. 
  Then,
  there exists a $C^{n-1}(\UinM\setminus\centresph_0)$
  second order gauge vector
  $\sperper=-\tX\oaxial$, that extends continuously to zero on $\centresph_0$,
  such that the corresponding gauge transformed tensor
  $\Kperper^g$ (which is immediately $C^{n-2}(\UinM\setminus\centresph_0)$) satisfies
  \begin{equation}
    \spt^g(\axial,\axial)=\spt^g(\oaxial,\oaxial).
    \label{proport_cov_g_second}
  \end{equation}
  In addition,
  \begin{align}
    \Kperper^g(\oaxial,\axial)=&\spt(\oaxial,\axial)-2\oaxial\left(\talpha\Kper(\oaxial,\axial)\right),\label{eq:K2gix}\\
    \Kperper^g(\oaxial,\vort_1)=&\Kperper(\oaxial,\vort_1)-\vort_1(\tX)\trho^2-2\oaxial\left(\talpha\Kper(\oaxial,\vort_1)\right)+\talpha\trho^2\oaxial\left(\vort_1(\talpha)\right)-2\vort_1(\talpha)\Kper(\oaxial,\oaxial)\nonumber\\
                           &+\talpha (\Kper^g+\Kper)(\oaxial,[\oaxial,\vort_1])+3\oaxial(\talpha)\vort_1(\talpha)\trho^2
                             -4\talpha \frac{z}{\normr}\trho^2\vort_1(\talpha),\label{eq:K2gi1}\\
    \Kperper^g(\axial,\vort_1)=&\Kperper(\axial,\vort_1)-2\talpha\oaxial\left(\Kper(\axial,\vort_1)\right)
                                 -2\vort_1(\talpha)\Kper(\axial,\oaxial)+\talpha(\Kper^g+\Kper)(\axial,[\oaxial,\vort_1]),\label{eq:K2gx1}\\
    \Kperper^g(\vort_1,\vort_2)=&\spt(\vort_1,\vort_2)-2\talpha\oaxial\left(\Kper(\vort_1,\vort_2)\right)
                              +2\vort_1(\talpha)\vort_2(\talpha)\trho^2
                                              +\talpha (\Kper^g+\Kper)([\oaxial,\vort_1],\vort_2) \nonumber \\
                           & +\talpha (\Kper^g+\Kper)(\vort_1,[\oaxial,\vort_2])
                    -2\vort_1(\talpha)\Kper(\oaxial,\vort_2)-2\vort_2(\talpha)\Kper(\oaxial,\vort_1).
                    \label{eq:K2g12}
  \end{align}
  for any vectors $\vort_1$ and $\vort_2$ orthogonal to the spheres $S_r$. The function $\tX$ is
  axially symmetric, $C^{n-1}(\UinM\setminus \centresph_0)$ and bounded near $\centresph_0$.
  
  Moreover, the function $\Kperper^g(\axial,\axial)/\trho^2$  is $C^{n-1}(\UinM\setminus \centresph_0)$
  and bounded near $\centresph_0$, and the function
  $$\qtheta^g_{2}:=\normr^2\spt^g(\oaxial,\partialr) \quad \mbox{ outside } \centresph_0, \quad
  \qtheta^g_{2}(\centresph_0)=0$$ 
  is  $C^{n-2}(\UinM\setminus\centresph_0)\cap C^0(\UinM)$  and ${\bf o}(\normx)$
  and takes the form
  \begin{align}
    \label{eq:q2theta}
    \qtheta^g_2
    &=z\normx^2 P^{**}_{11}+\sum_{k=2}^{[(n-1)/2]+1}\normx^{2k}P^{**}_{0k}
    +\normx^2\Phi^{**(n-1)}_{01}
    +z^2\Phi^{** (n-1)}_{20}+\normx^2 \Phizero_2^{(n-2)}
    -\oaxial\left(2\talpha \qtheta_1\right)
  \end{align}
  where $\qtheta_1 := \normr^2 \Kper(\oaxial,\partialr)$, all the $P^{**}$ functions are $C^{n-1}$ functions of $\{z,t\}$, $\Phi^{**(n-1)}\in C^{n-1}(\UinM)$ are
  radially symmetric in $\{x,y\}$
  and ${\bf o}(\normx^{n-1})$ and $\Phizero_2^{(n-2)}\in C^{n-2}(\UinM\setminus\centresph_0)\cap  C^0(\UinM)$
  is radially symmetric in $\{x,y\}$ and $O(\normr)$. Moreover, the function
  $\oaxial(\talpha\qtheta_1)$ is $C^{n-1}(\UinM\setminus\centresph_0)$,
  ${\bf o}(\normx)$ (in particular, it admits a continuous extension to $\centresph_0$ with value zero).
\end{proposition}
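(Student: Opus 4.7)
\medskip

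\noindent\textbf{Proof plan.} The strategy follows the pattern of Proposition~\ref{prop:K_spher} but the gauge transformation law \eqref{gaugeperper_bis} now couples $\sperper$ to the already-fixed first order gauge vector $\sper=-\talpha\oaxial$. I would set $\sperper=-\tX\oaxial$ (with $\axial(\tX)=0$ enforced from the start, so that $\lie_\axial\sperper=0$ is respected) and expand
\[
\spt^g=\spt+\lie_{\sperper}g+2\lie_{\sper}\fpt^g-\lie_{\sper}\lie_{\sper}g
\]
using the identity \eqref{eq:M_S1_to_oaxial_b} applied successively to $g$, $\fpt^g$ and (implicitly) to $\lie_{\sper}g$. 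The contraction with $(\axial,\axial)$ and $(\oaxial,\oaxial)$ produces, from the $\lie_{\sperper}g$ piece, exactly the first order combination
\[
(\lie_{\sperper}g)(\axial,\axial)-(\lie_{\sperper}g)(\oaxial,\oaxial)=2\oaxial(\tX)\trho^2,
\]
by the same cancellation as in the proof of Proposition~\ref{prop:K_spher}. The remaining contributions (from $\spt$ itself and from $2\lie_{\sper}\fpt^g-\lie_{\sper}\lie_{\sper}g$) can be grouped into a single inhomogeneous term and the condition \eqref{proport_cov_g_second} recast as an ODE on each orbit of $\oaxial$ of the form
\[
\oaxial(\Q^2\tX)=q_{2-},
\]
in complete parallel with \eqref{eq:alpha_first}.

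\medskip

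\noindent The core of the proof is then to show that $q_{2-}$ fits the hypotheses of Lemma~\ref{res:S_c4_lemma} with $m=n-1$, $b=2$ and $c=1$, so that the same lemma produces $\Q^2\tX\in C^{n-1}(\UinM\setminus\centresph_0)\cap C^0(\UinM)$ of order $O(\normr^2)$, yielding $\tX$ of class $C^{n-1}(\UinM\setminus\centresph_0)$, axially symmetric, bounded near $\centresph_0$, and extending continuously to zero at $\centresph_0$. The piece $\tfrac{\Q^2}{2\trho^2}\bigl(\spt(\oaxial,\oaxial)-\spt(\axial,\axial)\bigr)$ has the required structure by Lemma~\ref{res:q-q+} applied to $K=\spt$ with $m=n-1$. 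The truly new contributions come from $2\lie_{\sper}\fpt^g-\lie_{\sper}\lie_{\sper}g$; these involve $\talpha$ and its derivatives and, in principle, would only be $C^{n-1}$ with no polynomial-plus-$\mathbf{o}(\normx^{n-1})$ structure at all. The main obstacle is therefore algebraic: one must rewrite every explicit occurrence of $\oaxial(\talpha)$ (and of products such as $\oaxial(\talpha)\fpt(\ldots)$) using the ODE \eqref{eq:for_alpha_prop} that $\talpha$ satisfies, so that all remaining $\talpha$-dependence appears only as undifferentiated $\talpha$ or as $\partialr(\talpha)$ (this last multiplied by an explicit $\normx^2$ factor coming from $\bm\oaxial(\partialr)\sim-z\normx^2/\normr$ contractions). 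After this rewriting the inhomogeneity splits as in Lemma~\ref{res:q-q+}: a polynomial part in $\normx^2$ with $C^{n-1}$ coefficients in $z,t$, plus remainders that are $\mathbf{o}(\normx^{n-1})$, plus a genuinely $C^{n-1}(\UinM\setminus\centresph_0)\cap C^0(\UinM)$ radial factor — which is precisely what Lemma~\ref{res:S_c4_lemma} allows.

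\medskip

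\noindent With $\tX$ in hand the identities \eqref{eq:K2gix}--\eqref{eq:K2g12} follow by a direct (if lengthy) evaluation of $\spt^g$ on the corresponding pairs of vectors, using $\axial(\tX)=\axial(\talpha)=0$, $[\axial,\oaxial]=0$, $\bm\axial(\oaxial)=0$, $\bm\oaxial(\oaxial)=\trho^2$, and the Killing character of $\axial$; each Lie bracket $[\oaxial,\vort_i]$ that survives is tangent to the spheres, which is why those terms collect into the manifestly intrinsic expressions displayed in \eqref{eq:K2gi1}--\eqref{eq:K2g12}. The statement $\spt^g(\axial,\axial)/\trho^2$ is $C^{n-1}(\UinM\setminus\centresph_0)$ and bounded near $\centresph_0$ reduces, via the analogue of \eqref{eq:K1gxx_res}, to boundedness of $\talpha$ and of $\spt(\axial,\axial)/\trho^2$, the latter via Lemma~\ref{res:q-q+} point~\emph{2.} applied to $K=\spt$.

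\medskip

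\noindent For the angular--radial component $\qtheta^g_2=\normr^2\spt^g(\oaxial,\partialr)$ I would specialise \eqref{eq:K2gi1} to $\vort_1=\partialr$, multiply by $\normr^2$, and isolate the natural groupings: the term $\normr^2\spt(\oaxial,\partialr)$ gives the polynomial-plus-$\mathbf{o}(\normx^{n-1})$ structure by Lemma~\ref{res:q-q+} applied with $m=n-1$; the term $-\partialr(\tX)\trho^2$ contributes $\normx^2$ times a factor which, arguing as in \eqref{eq:ntalpha} but for $\tX$, is $C^{n-2}(\UinM\setminus\centresph_0)\cap C^0(\UinM)$ and $O(\normr)$ (thanks to $\Q^2\tX\in O(\normr^2)$ with $\partialr(\Q^2\tX)\in O(\normr)$ from Lemma~\ref{res:S_c4_lemma}); and the $\oaxial$-derivative of $2\talpha\fpt(\oaxial,\partialr)$, multiplied by $\normr^2$, is precisely $\oaxial(2\talpha\qtheta_1)$, which gives the final term displayed in \eqref{eq:q2theta}. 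The remaining contributions of \eqref{eq:K2gi1} (the commutator $[\oaxial,\partialr]$ terms, the $\vort_1(\talpha)\trho^2$ piece evaluated at $\partialr$, and the $z/\normr$ correction) are either absorbed into $\Phizero_2^{(n-2)}$ or cancel when $\partialr$ is used, using $\bm\oaxial(\partialr)=-z\normx^2/\normr\cdot(1/\normr)$ together with the boundedness of $\partialr(\talpha)$ established in the proof of Proposition~\ref{prop:K_spher}. Finally, that $\oaxial(\talpha\qtheta_1)$ is $\mathbf{o}(\normx)$ follows from $\talpha$ bounded and $\qtheta_1\in\mathbf{o}(\normx)$ (point~\emph{3.} of Lemma~\ref{res:q-q+} for $\fpt$), while the $C^{n-1}$ regularity off the centre is inherited from $\talpha\in C^n(\UinM\setminus\centresph_0)$ and the structure of $\qtheta_1$ in \eqref{eq:q1theta}. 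The expected hardest point, as anticipated, is the algebraic reduction ensuring $q_{2-}$ fits Lemma~\ref{res:S_c4_lemma}; once that is done, every remaining step is parallel to the first-order proof, only with one lower degree of differentiability throughout.
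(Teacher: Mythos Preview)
Your plan correctly identifies the main obstacle --- the $\talpha$-dependent contributions to the source have no reason to carry the polynomial-plus-$\mathbf{o}(\normx^{m})$ structure --- but your proposed resolution does not work. Lemma~\ref{res:S_c4_lemma} requires every piece of the source $\mathcal{Q}$ to be $C^m(U)$ (including at the centre), either as a polynomial coefficient $P_{lk}(z,t)$ or as a $C^m(U)$ function that is $\mathbf{o}(\normx^m)$. There is no clause in that lemma admitting a ``genuinely $C^{n-1}(\UinM\setminus\centresph_0)\cap C^0(\UinM)$ radial factor''; you appear to be importing the extra $\normx^2\Phizero^{(m-1)}$ term from Corollary~\ref{res:main_coro_0}, but that corollary is only invoked later, for the $\qtheta$-type equations, and gives a weaker conclusion (only $C^{m-1}$ on $U\setminus\centresph_0$). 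Rewriting $\oaxial(\talpha)$ via the first-order ODE still leaves undifferentiated $\talpha$ multiplying data, and $\talpha$ is merely bounded near $\centresph_0$ with no $C^m(U)$ control. So the equation $\oaxial(\Q^2\tX)=q_{2-}$ with your all-inclusive $q_{2-}$ cannot be fed to Lemma~\ref{res:S_c4_lemma}.

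The paper's key idea, which you are missing, is to absorb the $\talpha$ terms into the \emph{unknown} rather than the source: one introduces the auxiliary $\tXX:=\Q^2\tX+\talpha q_{1-}$ and checks (using $\oaxial(\Q^2\talpha)=q_{1-}$) that the resulting equation is
\[
\oaxial(\tXX)=q_{2-}+\frac{1}{\Q^2}\,q_{1-}(q_{1-}-2q_{1+}),
\]
where now $q_{2-}$ means only the $\spt$ piece $\tfrac{\Q^2}{2\trho^2}\bigl(\spt(\oaxial,\oaxial)-\spt(\axial,\axial)\bigr)$ and the second term is built solely from $\Kper$-data. Both summands therefore carry the full polynomial-plus-$\mathbf{o}$ structure of Lemma~\ref{res:q-q+}, with \emph{no} residual $\talpha$. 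A further refinement is then needed: the two summands live at different regularity levels ($C^{n-1}$ versus $C^n$), so the paper splits $\tXX=\tXX_{(0)}+\tXX_{(1)}$ and applies Lemma~\ref{res:S_c4_lemma} separately with $(m,b,c)=(n-1,2,1)$ and $(m,b,c)=(n,4,2)$. Only after this does one recover $\tX=\Q^{-2}(\tXX-\talpha q_{1-})$ and read off its boundedness. Your sketch of the $\spt^g(\axial,\axial)/\trho^2$ and $\qtheta^g_2$ analyses is broadly on the right track (though note $[\oaxial,\partialr]=0$, so no commutator terms survive there), but it relies on already having $\tX$ with $\Q^2\tX\in O(\normr^2)$ and $\partialr(\Q^2\tX)\in O(\normr)$, which you cannot obtain without the $\tXX$ trick.
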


\begin{proof}
By Proposition \ref{prop:K_spher} there exists 
a first order gauge vector $\sper=-\talpha\oaxial$,
where $\talpha\in C^n(\UinM\setminus\centresph_0)$ satisfies \eqref{eq:for_alpha_prop},
so that (\ref{proport_cov_g}) holds. The second order gauge transformation
  (\ref{gaugeperper_bis}) can be rewritten
(upon combining with (\ref{gaugeper})) as
\begin{equation}
  \label{gaugeperper_bona}
  {{\spt}^{g}} = {\spt} + \lie_{\sperper} 
  g + \lie_{\sper} M, \qquad
M\defi {\fpt}^g+{\fpt}.  
\end{equation}
Consider the second order gauge vector $\sperper=-\tX\oaxial$. 
Immediate consequences of \eqref{eq:M_S1_to_oaxial_b} (with $\bbeta=0$)
  and \eqref{expr:lie_oxaial_g} are
\begin{align*}
  \lie_{\sperper}g (X,Y)  = &
  2\fy\tX \la\vect{X},\vect{Y}\ra -X(\tX)\bm\oaxial(\vect{Y})-Y(\tX)\bm\oaxial(\vect{X}), \\
   \lie_{\sper}M(X,Y)  =&-\talpha\oaxial\left(M(X,Y)\right)+\talpha M([\oaxial,X],Y)+\talpha M(X,[\oaxial,Y]) \\ &- X(\alpha)M(\oaxial,Y)-M(X,\oaxial)Y(\talpha),
\end{align*}
where to get the second we also used te Leibniz rule for the Lie derivative. Thus, \eqref{gaugeperper_bona} is 
\begin{align}
  \Kperper^g(X,Y)=&\Kperper(X,Y)
  +2\fy\tX \la\vect{X},\vect{Y}\ra -X(\tX)\bm\oaxial(\vect{Y})-Y(\tX)\bm\oaxial(\vect{X})
  -\talpha\oaxial\left(M(X,Y)\right) \nonumber \\
  & +\talpha M([\oaxial,X],Y)+\talpha M(X,[\oaxial,Y])
  - X(\talpha)M(\oaxial,Y)-M(X,\oaxial)Y(\talpha).
\label{eq:K2gXY}
\end{align}
The explicit form of $M$ follows from Proposition \ref{prop:K_spher}, which gives
\begin{align}
&M(\axial,\axial)=2\Kper(\axial,\axial)+2\fy\talpha  \trho^2,\label{eq:Mxx}\\
&M(\oaxial,\oaxial)=2\Kper(\oaxial,\oaxial)+2\fy\talpha  \trho^2-2\trho^2\oaxial(\talpha),\label{eq:Mii}\\
  &M(\oaxial,\axial)=2\Kper(\oaxial,\axial), \qquad  \qquad
M(\oaxial,\vort_1)=2\Kper(\oaxial,\vort_1)-\vort_1(\talpha)\trho^2,
  \label{eq:Mi1} \\
&  M(\axial,\vort_1)=2\Kper(\axial,\vort_1), \qquad \qquad M(\vort_1,\vort_2)=2\Kper(\vort_1,\vort_2), \label{eq:Mi1:rest}
\end{align}
where, to get the second, we have used \eqref{eq:for_alpha_prop}.
Inserting these into (\ref{eq:K2gXY}) and using
  $[\oaxial,\axial]=0$, $\eta(\talpha)=0$, a straightforward computation yields
\eqref{eq:K2gix}-\eqref{eq:K2g12}, as well as
\begin{align}
\Kperper^g(\oaxial,\oaxial)=&\Kperper(\oaxial,\oaxial)+2\fy\tX \trho^2-2\trho^2\oaxial(\tX)
-2\talpha\oaxial\left(\Kper(\oaxial,\oaxial)\right)+\talpha\oaxial\left(-2\fy\talpha \trho^2+2\trho^2\oaxial(\talpha)\right)\nonumber\\
&-4\oaxial(\talpha)\Kper(\oaxial,\oaxial)-4\fy\talpha\oaxial(\talpha)  \trho^2+4\trho^2\oaxial(\talpha)^2, \label{eq_K2gii} \\
\Kperper^g(\axial,\axial)=&\Kperper(\axial,\axial)+2\fy\tX  \trho^2-2\talpha \oaxial\left(\Kper(\axial,\axial)\right)-2\talpha \oaxial\left(\talpha \fy \trho^2\right).\label{eq:K2gxx}
\end{align}
We may now find the explicit form of equation (\ref{proport_cov_g_second}).
Substracting \eqref{eq_K2gii} and \eqref{eq:K2gxx}  one finds, after trivial rearrangements,
that   (\ref{proport_cov_g_second})  becomes
\begin{align}
0=&\Kperper(\axial,\axial)-\Kperper(\oaxial,\oaxial)
-2\talpha \oaxial\left(\Kper(\axial,\axial)-\Kper(\oaxial,\oaxial)\right)+2\trho^2\oaxial(\tX)
-\talpha\oaxial\left(2\trho^2\oaxial(\talpha)\right)\nonumber\\
&+4\oaxial(\talpha)\Kper(\oaxial,\oaxial)+4\fy\talpha \oaxial(\talpha)\trho^2-4\trho^2\oaxial(\talpha)^2. \label{eq:difKperper}
\end{align}
In order to analyse this equation, it turns out to be convenient to
introduce an the auxiliary $\tXX\defi \Q^2\tX+\talpha q_{1-}$, where
$q_{1-}$ was introduced in \eqref{eq:alpha_first}
and it is $C^n(\UinM)$ and $O(\normr^2)$ by Lemma \ref{res:q-q+} applied to
  $K = \Kper$,  $m=n$, $q_-=q_{1-}$. Using
the first equation in (\ref{eq:alpha_first}), a straightforward computation shows that \eqref{eq:difKperper} can be rewritten as
\begin{equation}
  \oaxial(\tXX)=q_{2-}+\frac{1}{\Q^2} q_{1-}(q_{1-}-2q_{1+}),
  \label{eq:tX_final}
\end{equation}
where
\begin{align*}
&q_{1+}\defi \frac{\Q^2}{2\trho^2}\left\{\Kper(\oaxial,\oaxial)+\Kper(\axial,\axial)\right\}, \quad 
&q_{2-}\defi \frac{\Q^2}{2\trho^2}\left\{\Kperper(\oaxial,\oaxial)-\Kperper(\axial,\axial)\right\}.
\end{align*}
The right-hand side of \eqref{eq:tX_final} is invariant under $\axial$, so it
suffices to look for $\tXX$ satisfying $\axial(\tXX)=0$.
It is convenient to decompose $\tXX=\tXX_{(0)}+\tXX_{(1)}$ and split (\ref{eq:tX_final}) into
the two equations
\begin{align}
&\oaxial(\tXX_{(0)})=q_{2-},\label{eq:tXX_0}\\
&\oaxial(\Q^2\tXX_{(1)})=q_{1-}(q_{1-}-2q_{1+}).\label{eq:tXX_1}
\end{align}
The splitting is such that the right-hand side is $C^{n-1} (\UinM)$ in the first equation and $C^n(\UinM)$ in the second. Concerning the first equation
Lemma \ref{res:q-q+}, applied to $K=\Kperper$ and $m= n-1\geq 1$,
tells us that $q_{2-}$ has the form (\ref{eq:q-}) 
and thus satisfies the requirements of Lemma \ref{res:S_c4_lemma} with
$b=\displaystyle{\min_{\V}\{2k+l\}}=2$ and  $c=\displaystyle{\min_{\V'}\{2k'+l'\}}=1$,
and thence $b\leq c+m$.  Applying the lemma, there exists a solution
  $\tXX_{(0)}$ satisfying
  \begin{align}
    \tXX_{(0)} \in C^{n-1}(\UinM\setminus \centresph_0)\cap C^0(\UinM), \qquad
    \tXX_{(0)} \in O(\normr^2), \qquad  \partialr(\tXX_{(0)}) \in O(\normr).
    \label{prop:TXX0}
  \end{align}
Regarding equation (\ref{eq:tXX_1}) we first  obtain the structure of its right-hand side. From Lemma \ref{res:q-q+} applied to $K= \fpt$ (specifically
  from expressions (\ref{eq:q-}) and (\ref{eq:q+}) for $q_{1-}$ and $q_{1+}$) it follows
\[
q_{1-}(q_{1-}-2q_{1+})=z^2\sum_{k=1}^{[n/2]+1}\normx^{2k}\breve{P}_{2k} +
\sum_{k=2}^{2[n/2]+2}\normx^{2k}\breve{P}_{0k}
+\normx^4\breve\Phi^{(n)}_{02}+z^2 \breve\Phi^{(n)}_{20}+z\normx^2\breve\Phi^{(n)}_{11}
\]
where the functions $\breve P$  are $C^n$ in $\{z,t\}$ and $\breve \Phi^{(n)}\in C^n(\UinM)$
radially symmetric in $\{x,y\}$ and ${\bf o}(\normx^n)$. Thus, the right-hand side of (\ref{eq:tXX_1}) satisfies the hypotheses of Lemma \ref{res:S_c4_lemma}
with $m=n\geq 2$, 
$b=\displaystyle{\min_{\V}\{2k+l\}}=4$ and
$c=\displaystyle{\min_{\V'}\{2k'+l'\}}=2$, so that $b\leq c+m$.
By this lemma, there exists a solution $\Q^2 \tXX_{(1}$ of \eqref{eq:tXX_1}
  satisfying
  \begin{align}
    &\Q^2\tXX_{(1)}\in C^{n}(\UinM\setminus \centresph_0)\cap C^0(\UinM),
    \qquad
    \Q^2\tXX_{(1)}\in   O(\normr^4), \qquad
    \partialr(\Q^2\tXX_{(1)}) \in O(\normr^3) \nonumber \\
    & \Longrightarrow \qquad
    \tXX_{(1)}\in C^{n}(\UinM\setminus \centresph_0)\cap C^0(\UinM), \qquad
    \tXX_{(1)}\in   O(\normr^2), \qquad
    \partialr(\tXX_{(1)})\in O(\normr). \label{prop:TXX1}
  \end{align}
Combining \eqref{prop:TXX0} and \eqref{prop:TXX1}, there is a solution
  $\tXX$ of (\ref{eq:tX_final}) satisfying
\begin{align}
  \label{prop:tXX}
  \tXX \in C^{n-1}(\UinM\setminus\centresph_0)\cap C^0(\UinM), \qquad
  \tXX \in O(\normr^2), \qquad
  \partialr(\tXX)\in O(\normr).
\end{align}
This, together with the fact that
$\talpha\in C^n(\UinM\setminus\centresph_0)$ is bounded (by Proposition \ref{prop:K_spher}),
and $q_{1-}$ is $C^n(\UinM)$ and $O(\normr^2)$, imply that
\begin{equation}
\tX=\Q^{-2}(\tXX-\talpha q_{1-}) \quad \quad \mbox{
  is} \quad  C^{n-1}(\UinM\setminus\centresph_0) \quad \quad \mbox{and bounded near}
\quad \centresph_0
\label{warning}
\end{equation}
(note that $\tX$ is not necessarily defined at the origin).
Since $\oaxial$ extends continuously to the centre as the zero vector,
  boundedness of $\tX$ implies that $\sperper=-\tX\oaxial$ extends continuously to the centre with the value zero.

We deal now with the properties of
$\Kperper^g(\axial,\axial)/\trho^2$.  First, we rewrite equation (\ref{eq:K2gxx}) as
\begin{equation}
\frac{1}{\normx^2}\Kperper^g(\axial,\axial)=\frac{1}{\normx^2}\Kperper(\axial,\axial)
+2\frac{z}{\normr}\UR\tX
-2\talpha \frac{1}{\normx^2}\oaxial\left(\Kper(\axial,\axial)\right)
-2\frac{1}{\normx^2}\frac{1}{\normr}\oaxial\left(\talpha z \trho^2\right).
\label{eq:K2aa_g}
\end{equation}
We want to check that all terms
in the right-hand side are $C^{n-1}(\UinM\setminus\centresph_0)$ and bounded near $\centresph_0$.
For the first term this is immediate from
item \emph{2.} of  Lemma \ref{res:q-q+} applied to $\Kperper$ and $m=n-1 \geq 1$.
For the second, it follows from \eqref{warning}  and the properties of $\UR$.
Concerning the third term, point \emph{2.} in Lemma \ref{res:q-q+} implies that
$\Kper(\axial,\axial)$ admits a decomposition of the form
  \begin{align*}
   2 \Kper(\axial,\axial) = \normx^2 \left ( P_{1\,0} (z,t) + \Phi_{\axial}\right ),
  \end{align*}
  with $P_{1\,0}(x,t)$ is a $C^n$ function of its variables and $\Phi_{\axial}$ is $C^n(\UinM)$
  and ${\bf o}(\normx)$. Computing the derivative we get
\[
  2\oaxial(\Kper(\axial,\axial))
  =\normx^2\left\{-2\frac{z}{\normr}\left(P_{1\,0}+\Phi_\axial\right)+
    \frac{\normx^2}{\normr}\partial_zP_{1\,0} +\oaxial(\Phi_\axial)\right\}
\]
after taking into account that $\oaxial(\normx)=- \frac{z}{\normr} \normx$,
$\oaxial(z)=\normx^2/\normr$ and $\oaxial(t)=0$.
By Lemma \ref{res:lemma_oaxial} the function $\Phi_{\axial}$
satisfies $\oaxial(\Phi_\axial)\in {\bf o}(\normx)$. It is clear that
all terms in brackets are
$C^{n-1}(\UinM\setminus \centresph_0)$  
and bounded near $\centresph_0$, so the same holds for
$\normx^{-2} \oaxial(\Kper(\eta,\eta))$ and we conclude that the third term in (\ref{eq:K2aa_g})
is $C^{n-1} (\UinM \setminus \centresph_0)$ and bounded.
Finally, the last term
reads
\[
\frac{1}{\normx^2\normr}\oaxial(\talpha z\trho^2)
=\frac{\qtheta_{1-}}{\normr^2} \frac{z}{\normr}
+\talpha \frac{\trho^2}{\normr^2}-2\UR\talpha \frac{z^2}{\normr^2},
\]
after using (\ref{eq:alpha_first}).
All three terms are $C^n(\UinM\setminus\centresph_0)$,
and since $q_{1-}\in O(\normr^2)$ the first term is, like  the rest, bounded near $\centresph_0$.
Summarizing, \eqref{eq:K2aa_g} implies that
$\Kperper^g(\axial,\axial)/\normx^2$ is $C^{n-1}(\UinM\setminus\centresph_0)$ and bounded near
the origin. The same holds for  $\Kperper^g(\axial,\axial)/\trho^2$ given the properties of $\UR$.

We consider now $\normr^2\Kperper^g(\partialr,\oaxial)$, for which we need
to analyse (\ref{eq:K2gi1}) for $\vort_1=\partialr$. Using $[\oaxial,\partialr]=0$
we obtain,
after simple rearranging,
\begin{align}
  \normr^2\Kperper^g(\oaxial, \partialr)
  =&\normr^2\Kperper(\oaxial,\partialr)
  + \normx^2 \left ( \Phizero_A + \Phizero_B \right )
  -2\oaxial\left(\talpha\normr^2\Kper(\oaxial,\partialr)\right),
  \label{eq:K2g_on}
\end{align}
where we have defined
\begin{align}
  &\Phizero_A \defi\UR\normr\left\{-\normr\partialr(\tX)
+\normr\talpha\oaxial(\partialr(\talpha))
  +\normr\partialr(\talpha)(3\oaxial(\talpha)-4\talpha Y^3)
   \right \}, \label{GammaA}\\
  &\Phizero_B \defi - 2 \frac{\normr^2}{\normx^2} \partialr (\talpha) \Kper(\oaxial,\oaxial). \label{GammaB}
\end{align}
The first term in \eqref{eq:K2g_on}
  is a $C^{n-1}(\UinM)$ function by virtue of Lemma \ref{res:q-q+}
  with $K=\Kperper$, $m=n-1$. In fact, it corresponds to the function $q$ in that lemma, so it  admits an expansion of the form
\begin{align}
\normr^2\Kperper(\oaxial,\partialr) =
  z\normx^2 P^{**}_{11}+\sum_{k=2}^{[(n-1)/2]+1}\normx^{2k}P^{**}_{0k}
+\normx^2\Phi^{**(n-1)}_{01}
+z^2\Phi^{** (n-1)}_{20}
\label{eq:expr_for_q2}
\end{align}
with all properties stated in the Proposition. For the second term we use an analogous procedure
as in Proposition \ref{prop:K_spher}. We may use
(\ref{eq:ntalpha}) replacing
 $\talpha$ by $\tX$ and using the corresponding
 $\tlemma=\Q^2\tX=\tXX-\talpha \qtheta_{1-}$, so that
\begin{align}
\normr\partialr(\tX)=&\frac{1}{\UR}\left(\frac{\partialr(\tXX-\talpha \qtheta_{1-})}{\normr}
-\frac{\tXX-\talpha \qtheta_{1-}}{\normr^2}\left(\normr\partialr(\UR)\frac{1}{\UR}+2 \right)\right)\nonumber\\
=&\frac{1}{\UR}\left(\frac{\partialr(\tXX)}{\normr}-\normr\partialr(\talpha)\frac{ \qtheta_{1-}}{\normr^2}
-\frac{\talpha\partialr( \qtheta_{1-})}{\normr}
   -\frac{\tXX-\talpha \qtheta_{1-}}{\normr^2}\left(\normr\partialr(\UR)\frac{1}{\UR}+2 \right)\right). \label{eq:nv}
   \end{align} 
On the other hand, we compute
\begin{align}
\normr\oaxial(\partialr(\talpha))=\normr\partialr(\oaxial(\talpha))=
\normr\partialr \left (\frac{q_{1-}}{\Q^2} \right )
  =\frac{1}{\UR}\left(\frac{\partialr(q_{1-})}{\normr}-\frac{q_{1-}}{\normr^2}\left(\normr\partialr(\UR)\frac{1}{\UR}+2\right)\right).
  \label{eq:iotanalpha}
\end{align}
All terms in \eqref{eq:nv} and \eqref{eq:iotanalpha} are bounded near the origin as a consequence  of \eqref{prop:tXX}, together with the facts
(we also use that ${\bf o} (\normx) \Longrightarrow o(\normr)$,  see Lemma
\ref{res:rhon-rn})
\begin{align*}
  &\qtheta_{1-} \in O(\normr^2), \qquad \partialr(\qtheta_{1-})\in {\bf o}(\normx) 
  &(\mbox{Lemma  \ref{res:q-q+} for } K = \Kper, m=n \geq 2, q_-=q_{1-}), \\
  &\talpha 
  \quad  \mbox{and} \quad \normr \partialr (\talpha)
    \mbox{ bounded }
  &\mbox{(Proposition \ref{prop:K_spher})}.
\end{align*}
Boundedness of the last term in brackets in \eqref{GammaA} is immediate. The property that all terms  in $\Phizero_A$ are $C^{n-2} (\UinM \setminus \centresph_0)$ is obvious.
Thus, we conclude that 
\begin{align*}
  \Phizero_A \in C^{n-2}(\UinM\setminus \centresph_0), \qquad
  \Phizero_A \in  O(\normr)
\end{align*}
so that, in particular this function extends continuosly to the centre
with the value zero.
Concerning $\Phizero_B$, point \emph{2.} of Lemma 5.2 applied to $K=\Kper$ anb $m=n\geq 2$,
together with the above properties of $\normr \partialr(\talpha)$, ensure $\Phizero_B$
is $C^{n-1}(U\setminus \centresph_0)$ and $O(\normr)$.
The function $\Phizero_2^{(n-2)}$ in the statement of the Proposition is simply
$\Phizero_2^{(n-2)} \defi \Phizero_A + \Phizero_B$ and obviously also satisfies 
$\Phizero_2^{(n-2)} \in C^{n-2} (\UinM \setminus \centresph_0) \cap C^0 (\UinM)$
and $O(\normr)$.

It only remains to show that the last term in
\eqref{eq:K2g_on}, which is clearly $C^{n-1}(\UinM\setminus\centresph_0)$, can
also be extended continuously to $\centresph_0$.
Recall first that the function
$q_1 := \normr^2 \Kper(\iota,\partialr)$ is (by Lemma \ref{res:q-q+}
applied to $K = \fpt$, $m=n$) $C^n(\UinM)$ and ${\bf o}(\normx)$.
Computing the derivative, and using $\oaxial(\talpha)=q_{1-}/\Q^2$, one finds
\begin{equation}
  \oaxial\left(\talpha\normr^2\Kper(\oaxial,\partialr)\right)=
  \UR^{-1} \frac{q_{1-}}{\normr^2} \qtheta_1
  +\talpha
  \oaxial( \qtheta_1)
  ,
  \label{eq:third}
\end{equation}
Since $q_{1-}\in O(\normr^2)$
the first term extends continuously to $\centresph_0$ where it vanishes.
For the second, we apply Lemma \ref{res:lemma_oaxial} to $q_1$ for $l=1$ 
(recall that $n\geq 2$) to conclude that $\oaxial(q_1)\in {\bf o}(\normx)$.
Since $\talpha$ is bounded near the origin, it follows that
$\talpha \oaxial(q_1)$ is ${\bf o}(\normx)$ and hence
extends continuously to $\centresph_0$ with the value
zero. Note that the form
\eqref{eq:q2theta}, given that $n\geq 2$, implies that $\qtheta^g_2$
is ${\bf o}(\normx)$. This completes the proof.
\fin
\end{proof}

\section{General stationary and axisymmetric perturbation scheme
on spherically symmetric backgrounds}
\label{sec:main}

In this section we combine the results in Section \ref{sec:OTactions} for
orthogonally transitive actions
and those in Section \ref{sec:axis_on_sph} involving spherically symmetric backgrounds
to construct a stationary and
axisymmetric perturbation scheme on spherically symmetric backgrounds.
We first show the existence of gauge vectors that render 
the first and second order perturbation tensors 
in the
standard forms assumed in the literature. Subsection
\ref{onUniqGauge}
  is devoted to
  discussing  uniqueness properties of these gauge vectors and the last subsection to
  studying 
   the gauge freedom left
    in those forms.
 
While in the previous section the spherical background was arbitrary, here we restrict ourselves to the static case, since this is what we shall need
  in \cite{PaperII}.  We start by making explicit the definition of static and spherically symmetric spacetime that we use and then impose the global assumptions on the background that will be needed.

\begin{definition}
\label{def:staticspher}
A (four-dimensional) spacetime $(\mmm,g)$ is static and spherically symmetric
if it admits an $SO(3)$ group of isometries
acting transitively on spacelike surfaces (which may degenerate to points), and 
a Killing vector $\stat$ which is timelike everywhere, commutes with the generators of $SO(3)$ and is orthogonal to the $SO(3)$ orbits.
\end{definition}

It is easy to check that such $\xi$ is necessarily hypersurface orthogonal, which justifies the name ``static'' in the definition.

Our global assumption is the following.

\vspace{3mm}

\noindent {\bf Assumption {\AsHone}:} $\mmm$ is diffeomorphic to
$\BB\times I$ where $I\subset\mathbb{R}$ is an open interval and
$\BB$ is a radially symmetric domain of $\mathbb{R}^3$,
which may or may not contain the origin, 
with the orbits of the Killing $\xi$ along the 
$I$ factor and $SO(3)$ acting in the standard way on $\BB$.
In addition, in the cartesian coordinates $\{x,y,z,t\}$ of
$\BB \times I$, the metric $g$ takes the form
\begin{align*}
g= - e^{\nu} dt^2 + \VR (x_i dx^i)^2 + \UR \delta_{ij} dx^i dx^j
\end{align*}
with $\nu,  \VR, \UR$ are $C^{n+1}$ functions of the coordinates
$x,y,z$ and radially symmetric.

\vspace{3mm}

Note that assumption {\AsHone} implies assumption {\AsSone}, so all the results in the
  previous section hold.
Observe also that the set
of fixed points of the $SO(3)$ action (the centre of symmetry) is
either empty or  $\centresph_0 := \{ 0_3\} \times I$.

The Lorentzian signature of $g$ implies that both $\UR$
and $\UR + \VR |x|^2$ are positive everywhere, so we may define a $C^{n+1}$ function on $M$ by $e^{\lambda} = \UR + \VR |x|^2$. We also introduce the non-negative function
$\Q \in C^{n+1}(M \setminus \centresph_0)$ defined by $\Q^2 = \UR |x|^2$. It is clear that
this function can be extended continuously to $\centresph_0$, where it vanishes.

From $\{\xc,\yc,\zc,t\}$ we may define
standard spherical coordinates $\{r,\theta,\phi\}$, see (\ref{def:cartesian_coord}), so that
$r=\normr$. The set $\{r,\theta,\phi,t\}$ is a coordinate system in
$M \setminus \axis$ with $r$ taking values in $(a,b)$ 
with $0 \leq a< b \leq + \infty$. Note that $\BB$ is a ball if and only if $\centresph_0 \neq \varnothing$, and  if and only if $a=0$.

The functions $\nu, \lambda, \Q$ are radially symmetric so, when
expressed in the spherical coordinates depend only on $r$. We write
$\nu(r), \lambda(r), \Q(r)$ (i.e. making explicit the argument $r$)
when we refer to this representation of the functions. We finally note that the metric $g$ on $M \setminus \axis$ in spherical coordinates takes the form
\begin{equation}
g   = - e^{\nu(r)} dt^2 + e^{\lam(r)} dr^2 + \Q^2(r) \left ( d\theta^2
+ \sin^2 \theta d \phi^2 \right ), \quad \quad 
\quad \xi = \partial_t . \label{metric}
\end{equation}

Consider the smooth vector fields $\{\partial_\xc,\partial_\yc,\partial_\zc,\partial_t\}$
and the vector field $\partialr=
\frac{1}{\normr} (\xc \partial_{\xc} + \yc \partial_{\yc} + \zc \partial_{\zc})=\partial_r$,
which is smooth outside $\centresph_0$.
The functions $\nu,\lambda$ satisfy
\[
e^\nu =  g(\partial_t,\partial_t) \quad \mbox{in} \quad \mmm, \quad \quad \quad
e^\lambda =  g(\partialr,\partialr) \quad \mbox{in} \quad \mmm\setminus \centresph_0, \quad \quad \quad
\]
and in fact can also be defined by these expressions in the respective domains.

Any of the Killing vectors  $\axial_a$, $a=1,2,3$,
of the $so(3)$ algebra together with the static Killing $\stat$
generate an Abelian $G_2$ group of isometries
which acts orthogonally transitively on timelike surfaces
outside the axis of rotation $\axis$.
Following Section \ref{sec:axis_on_sph} we choose
$\axial = \partial_{\phi}$, without loss of generality.
Observe that, under
our assumptions $\mmm \setminus \axis$ is a simply connected manifold.

Consider a perturbation scheme 
$(\mmm_\pertp, \gfam_\pertp,\{\psi_\pertp\})$ of class $C^{n+1}$ around this background
that inherits the 
orthogonally transitive stationary and axisymmetric 
action
generated by $\{ \xi, \eta \}$. By Proposition \ref{Block} there is a choice of gauge that preserves the differentiability (i.e. $C^n$ for $\fpt$ and
$C^{n-1}$ for $\spt$) such that
\begin{align}
&\fpt = {\fpt}_{\kvi\kvj}(x^\otp)dx^\kvi dx^\kvj+ {\fpt}_{\otm\otn}(x^\otp)dx^\otm dx^\otn,\label{prop:k1_before}\\
&\spt = {\spt}_{\kvi\kvj}(x^\otp)dx^\kvi dx^\kvj+ {\spt}_{\otm\otn}(x^\otp)dx^\otm dx^\otn,\label{prop:k2_before}
\end{align}
in the coordinates $\{ t,r, \theta,\phi\}$ where
the metric
is given by (\ref{metric}) and $\stat=\partial_t$, $\axial=\partial_\phi$.
The metric has the form  (\ref{form}) with 
$\{x^\kvi\}=\{t,\phi\}$, $\{x^\otm\}=\{r,\theta\}$. Observe
that the coordinates used in Section \ref{sec:spher_symm} correspond now
to $\{x^I\}=\{t,r\}$ and $\{x^A\}=\{\theta,\phi\}$.
We define for convenience the unit vector $n\defi -e^{-\lambda/2}\partialr$
outside the origin,
where it is of class $C^{n+1}$ by construction. In spherical coordinates we have
 $n= -e^{-\lambda/2}\partial_r$.

From now one we let
$(\mmm_\pertp, \gfam_\pertp,\{\psi_\pertp\})$ to denote the 
{\bf maximal} perturbation
scheme (of the given perturbation) where this holds.
Our aim is to show that the perturbation tensors $\Kper$ and $\Kperper$,
assumed to be $C^n$ and $C^{n-1}$ tensors respectively,
can be rendered in the forms found in the literature,
at the cost of (i) loosing their differentiability by one outside the origin (we refer to the discussion of this point in the Introduction) but
  \emph{keeping a crucial property of boundedness  at the origin.}
and (ii)
restricting the gauge freedom, of course.

\begin{proposition}\label{prop:pre_main}
Let $(\mmm,g)$ be a static and spherically symmetric background, with $g$ of class $C^{n+1}$,
with $n\geq 2$, satisfying assumption {\AsHone}. 
Let $\Kper$ and $\Kperper$ be first and second order perturbation tensors
of class $C^n$ and $C^{n+1}$ respectively,
satisfying \eqref{prop:k1_before} and \eqref{prop:k2_before}.
Then, there exists  gauge vectors $\sper$ and $\sperper$, that
  extend continuously to zero at $\centresph_0$, such that
    the gauge transformed tensors
$\fpt^\Psi$ and $\spt^\Psi$  are respectively $C^{n-1}$
and $C^{n-2}$ outside the origin and satisfy
\begin{align}
  &\fpt^\Psi = {\fpt^\Psi}_{\kvi\kvj}(x^\otp)dx^\kvi dx^\kvj+ {\fpt^\Psi}_{\otm\otn}(x^\otp)dx^\otm dx^\otn,\label{prop:k1_OT}\\
  &\spt^\Psi = {\spt^\Psi}_{\kvi\kvj}(x^\otp)dx^\kvi dx^\kvj+ {\spt^\Psi}_{\otm\otn}(x^\otp)dx^\otm dx^\otn,\label{prop:k2_OT}
\end{align}
on $\mmm\setminus\axis$ together with
\begin{align}
  {\fpt^\Psi}_{AB} dx^A dx^B = 4 \fok \Q^2 \gsph, 
  \quad \quad 
  {\spt^\Psi}_{AB} dx^A dx^B  = 4 \sok \Q^2 \gsph, 
  \label{proport}
\end{align}
where the functions $\fok$ and $\sok$, invariant under $\axial$ and $\stat$,
are defined $\mmm\setminus\centresph_0$ by
\begin{equation}
  4\fok=\frac{1}{\trho^2}\Kper^\Psi(\axial,\axial),\quad
  4 \sok=\frac{1}{\trho^2}\Kperper^\Psi(\axial,\axial)
  \label{def:k1ik}
\end{equation}
and satisfy $\fok\in C^{n}(\mmm \setminus \centresph_0)$ and
$\sok\in C^{n-1}(\mmm \setminus \centresph_0)$,
and are bounded near $\centresph_0$.
Moreover,
\begin{enumerate}
\item the 1-form $\fpt^\Psi(\stat,\cdot)$ equals $\fpt(\stat,\cdot)$ on $M \setminus \centresph_0$ and  thus extends to a $1$-form of class $C^n(\mmm)$.
\item $\fpt^\Psi(n,n)$ is $ C^n(\mmm\setminus\centresph_0)$ and bounded near $\centresph_0$.
\item the 1-form $\spt^\Psi(\stat,\cdot)$ is of class $C^{n-1}(\mmm\setminus\centresph_0)$ and bounded near $\centresph_0$. It is given by
  \begin{align}
    \spt^\Psi(\stat,\cdot)= \spt(\stat,\cdot)-2\talpha \lie_{\oaxial} \Kper(\stat,\cdot).
    \label{eq:sptpsi}
  \end{align}
  In particular
  \begin{align}
    \spt^\Psi(\stat,\axial)=\spt(\stat,\axial)-2\talpha \oaxial\left(\Kper(\stat,\axial)\right)=\spt(\stat,\axial)
    -2\talpha\left\{\normx^2\mathcal{P}_n+\oaxial(\Phi^{(n)})\right\},
    \label{eq:K2taxial}
  \end{align}
  where $\mathcal{P}_n\in C^{n-1}(\mmm\setminus\centresph_0)$ and bounded near $\centresph_0$
  and $\Phi^{(n)}$ is $C^{n}(\mmm)$ and $\Phi^{(n)}\in {\bf o}(\normx^n)$.
  
\item  $\spt^\Psi(n,n)$ is $ C^{n-1}(\mmm\setminus\centresph_0)$ and bounded near $\centresph_0$.
\item the functions 
  \begin{align}
    \qfonetheta=\normr^2\fpt^\Psi(\oaxial,\partialr) \quad \mbox{ outside } \centresph_0, \quad
    \qfonetheta=0 \mbox{ on } \centresph_0\label{def:qf1}\\
    \qftwotheta=\normr^2\spt^\Psi(\oaxial,\partialr) \quad \mbox{ outside } \centresph_0, \quad
    \qftwotheta=0 \mbox{ on } \centresph_0\label{def:qf2}
  \end{align}
  are  $C^{n-1}(\mmm\setminus\centresph_0)$ and $C^{n-2}(\mmm\setminus\centresph_0)$ respectively, and both
  $C^0(\mmm)$ and  ${\bf o} (\normx)$,
  and take the forms
  given by \eqref{eq:q1theta} and \eqref{eq:q2theta} respectively.
\end{enumerate}
\end{proposition}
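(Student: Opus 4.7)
The plan is to stack Propositions \ref{Block}, \ref{prop:K_spher}, and \ref{prop:K2_spher} and to verify at every stage that the block-diagonal structure and the prescribed regularity and boundedness are preserved. Starting from the already block-diagonal $\fpt,\spt$ of \eqref{prop:k1_before}--\eqref{prop:k2_before}, I would apply Proposition \ref{prop:K_spher} to produce $\sper=-\talpha\oaxial$ and then Proposition \ref{prop:K2_spher} to produce $\sperper=-\tX\oaxial$. Since the perturbation scheme inherits $\stat$, the inhomogeneities $q_{1-}$ of \eqref{eq:alpha_first} and the full right-hand side of \eqref{eq:tX_final} are $\stat$-invariant and $\axial$-invariant; the $\oaxial$-integration of Lemma \ref{res:S_c4_lemma} respects these symmetries, so the resulting $\talpha,\tX$ satisfy $\stat(\talpha)=\stat(\tX)=0$ and $\axial(\talpha)=\axial(\tX)=0$.

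The next task is to check that these gauge vectors preserve block-diagonality. Because $\sper,\sperper$ are tangent to the spheres, the only potentially nontrivial cross-components are $(\lie_\sper g)(\stat,Y)$ and $(\lie_\sper g)(\axial,Y)$ with $Y\in\{\partialr,\partial_\theta\}$. A direct computation, using $\stat(\talpha)=0$, $g(\stat,\oaxial)=0$, $[\oaxial,\stat]=0$ and the Killing equation for $\stat$, reduces the first to $2\talpha\,g(\oaxial,\nabla_Y\stat)$; the Christoffel symbols of \eqref{metric} yield $\nabla_Y\stat\in\mathrm{span}(\stat)$ for $Y\in\{\partialr,\partial_\theta\}$, so this vanishes. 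The analogous calculation gives $(\lie_\sper g)(\axial,Y)=2\talpha\,g(\oaxial,\nabla_Y\axial)$, which also vanishes since $\nabla_Y\axial\in\mathrm{span}(\axial)$ for the same $Y$. An identical analysis applied to $\lie_\sperper g+\lie_\sper M$ in \eqref{gaugeperper_bona} handles the second order, once one notes that $M(\stat,\oaxial)=0$ is immediate from the block structure of $\fpt$.

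Once block-diagonality is established, the angular block of $\fpt^\Psi$ satisfies $\fpt^\Psi(\axial,\axial)=\fpt^\Psi(\oaxial,\oaxial)$ (Proposition \ref{prop:K_spher}) and $\fpt^\Psi(\axial,\oaxial)=0$ (block structure), which forces it to equal $4\fok\,\Q^2\gsph$ with $\fok$ as in \eqref{def:k1ik}; the same argument at second order produces $\sok$. For items 1--5 the formulas are mostly in Propositions \ref{prop:K_spher}--\ref{prop:K2_spher}. Item 1 is the vanishing $(\lie_\sper g)(\stat,\cdot)=0$ shown above; Item 2 reduces to $\fpt^\Psi(n,n)=e^{-\lambda}\fpt(\partialr,\partialr)$, bounded by point 1 of Lemma \ref{res:q-q+}. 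For Items 3--4, combining \eqref{gaugeperper_bona} with $M(\stat,\oaxial)=0$ yields
\[
\spt^\Psi(\stat,Y)=\spt(\stat,Y)-2\talpha\,(\lie_\oaxial\fpt)(\stat,Y),
\]
which is \eqref{eq:sptpsi}; specialising to $Y=\axial$ and inserting the expansion \eqref{eq:Katz} of $\fpt(\stat,\axial)$ splits the result into a polynomial piece whose $\oaxial$-derivative factors out $\normx^2$ times a bounded $\mathcal{P}_n$, and an ${\bf o}(\normx^n)$ remainder $\Phi^{(n)}$, giving \eqref{eq:K2taxial}. Item 5 is then a direct transcription of \eqref{eq:q1theta}--\eqref{eq:q2theta} after identifying $\qfonetheta=\qtheta^g_1$ and $\qftwotheta=\qtheta^g_2$ in the notation of those propositions.

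The chief technical obstacle is the boundedness assertion of Items 3 and 4 near $\centresph_0$. The function $\talpha$ is only bounded, not continuous, at $\centresph_0$, so the product $\talpha\cdot(\lie_\oaxial\fpt)(\stat,Y)$ stays bounded only because the expansion of $\fpt(\stat,Y)$ from Lemma \ref{res:q-q+} furnishes enough powers of $\normx$ to absorb the singular behaviour. Similarly $\partialr(\talpha)$ and $\partialr(\tX)$ enter $\spt^\Psi(n,n)$ and are bounded only once multiplied by $\normr$ (Propositions \ref{prop:K_spher}--\ref{prop:K2_spher}), so the verification requires careful bookkeeping of the little-${\bf o}$ remainders developed in Section \ref{sec:axis_on_sph} and Appendix \ref{app:iota}.
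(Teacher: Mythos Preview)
Your proposal is correct and follows essentially the same route as the paper. The only presentational difference is in the block-diagonality check: the paper simply plugs $\vort_1=\stat$, $\vort_2=n$ into the already-established transformation formulas \eqref{eq:K1gix}--\eqref{eq:K1g12} and \eqref{eq:K2gix}--\eqref{eq:K2g12} together with \eqref{eq:K10000}--\eqref{eq:K20000}, whereas you rederive the vanishing of the cross-terms via the Christoffel/Killing identity $(\lie_{\sper}g)(\stat,Y)=2\talpha\,g(\oaxial,\nabla_Y\stat)$; both arguments are valid and yield the same conclusion, the paper's being a bit more economical since those formulas are already on the shelf. One small point of care: your phrase ``the $\oaxial$-integration respects these symmetries'' is slightly loose---the paper makes explicit that applying $\lie_\stat$ to \eqref{eq:for_alpha_prop} and \eqref{eq:difKperper} gives $\oaxial(\stat(\talpha))=0$ and $\oaxial(\stat(\tX))=0$, so $\stat(\talpha)=\stat(\tX)=0$ is a \emph{choice} of integration constant along each $\oaxial$-integral curve, not an automatic consequence; but this is exactly what you intend.
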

\begin{proof}
  The proof is based on Propositions \ref{prop:K_spher} and
  \ref{prop:K2_spher}. The vectors $\stat, \axial, \oaxial$ are smooth
  on $\mmm\setminus\centresph_0$ (the first two actually everywhere). From the coordinate
  expressions $\stat = \partial_t, \axial = \partial_{\phi}$,
  $\oaxial=-\sin\theta\partial_\theta$, valid on a dense set of
  $\mmm \setminus \centresph_0$, it follows immediately that
  $\{\stat, \axial, \oaxial\}$ commute with each other on $\mmm \setminus \centresph_0$. It is also clear that   $\stat(\trho^2)=0$. Since, by
  assumption $\Kper$, $\Kperper$ satisfy
  $\lie_{\stat} \Kper = \lie_{\stat} \Kperper=0$, the Lie derivative along
  $\stat$ of equation (\ref{eq:for_alpha_prop}) yields
  $\trho^2\stat(\oaxial(\talpha))=0$ and therefore $\stat(\oaxial(\talpha))=0$
  outside the axis $\axis$. Using $[\stat,\oaxial]=0$,
  this is equivalent to $\oaxial(\stat(\talpha))=0$ on $\mmm\setminus\axis$.
  This ensures that the solution $\talpha$ of (\ref{eq:for_alpha_prop})  
  can be constructed such that it  satisfies 
  $\stat(\talpha)=0$ outside the axis. Since
  $\talpha \in C^n (\mmm\setminus\centresph_0)$, we actually have
  $\stat(\talpha)=0$ everywhere on $\mmm \setminus \centresph_0$. Similarly,
  the Lie derivative along $\stat$ of \eqref{eq:difKperper} shows that $\tX$ can be constructed so that it satisfies $\stat (\tX)=0$
  on $\mmm \setminus \centresph_0$. We assume these choices from now on.
  
  The block diagonal form
  of  (\ref{prop:k1_before})-(\ref{prop:k2_before}) is equivalent to 
  \begin{align}
    &\Kper(\oaxial,\axial)=0,\quad
      \Kper(\oaxial,\stat)=0,\quad
      \Kper(\axial,n)=0,\quad
      \Kper(n,\stat)=0
      \label{eq:K10000}, \\
    &\Kperper(\oaxial,\axial)=0,\quad
      \Kperper(\oaxial,\stat)=0,\quad
      \Kperper(\axial,n)=0,\quad
      \Kperper(n,\stat)=0
      \label{eq:K20000}
  \end{align}
  outside the axis. However, since $\Kper$ and $\Kperper$ are $C^n(\mmm)$
  and $C^{n-1}(\mmm)$ respectively,
  (\ref{eq:K10000}) and (\ref{eq:K20000}) also hold on
  $\mmm\setminus \centresph_0$. Applying  the
  gauge transformation discussed in
  Propositions \ref{prop:K_spher}, \ref{prop:K2_spher} and denoting the corresponding
  gauge transformed tensors as $\fpt^\Psi=\Kper^g$, $\spt^\Psi=\Kperper^g$, it follows directly from
  (\ref{eq:K1gix})-(\ref{eq:K1g12}) (applied to $\vort_{1}=\stat$, $\vort_{2}=n$) that
  \begin{align}
    \fpt^{\Psi}(\oaxial,\axial)=0,\quad
    \fpt^{\Psi}(\oaxial,\stat)=0,\quad
    \fpt^{\Psi}(\axial,n)=0,\quad
    \fpt^{\psi}(n,\stat)=0
    \label{eq:K10000_bis}.
  \end{align}
  Similarly, \eqref{eq:K2gix}-\eqref{eq:K2g12}
  yield, after using $[\iota,n]=0$,
  \begin{align}
    \spt^{\Psi}(\oaxial,\axial)=0,\quad
    \spt^{\Psi}(\oaxial,\stat)=0,\quad
    \spt^{\Psi}(\axial,n)=0,\quad
    \spt^{\Psi}(n,\stat)=0.
    \label{eq:K20000_bis}
  \end{align}
  This proves in particular that the block diagonal form claimed in
  (\ref{prop:k1_OT})-(\ref{prop:k2_OT}) holds.

  Define $\fok$, $\sok$ as in \eqref{def:k1ik}. By virtue of Propositions
  \ref{prop:K_spher} and \ref{prop:K2_spher} these functions
  are bounded near the center and, respectively, $C^{n}(\mmm\setminus \centresph_0)$ and
  $C^{n-1}(\mmm\setminus \centresph_0)$. Using spherical coordinates $x^A=\{\theta,\phi\}$ and noting that
  $\oaxial=-\sin\theta\partial_\theta=-\frac{\normx}{\normr}\partial_\theta$,
  on $\mmm \setminus \axis$ we have
  \begin{align*}
    {\fpt^\Psi}_{AB}dx^A dx^B=&\fpt^\Psi(\partial_\theta,\partial_\theta)d\theta^2
                              +\fpt^\Psi(\partial_\phi,\partial_\phi) d\phi^2=
                              \frac{\normr^2}{\normx^2}\fpt^\Psi(\oaxial,\oaxial)d\theta^2+\fpt^\Psi(\axial,\axial)d\phi^2\\
    =&
       \frac{\normr^2}{\normx^2}\fpt^\Psi(\axial,\axial)\left(d\theta^2+\sin^2\theta d\phi^2\right)=
       \frac{\Q^2}{\trho^2}\fpt^\Psi(\axial,\axial)\gsph,
  \end{align*}
  where the first equality is a consequence of $\fpt^\Psi(\oaxial,\axial)=0$, and the third
  follows from  (\ref{proport_cov_g})
  (and that $\normr^2/\normx^2=\Q^2/\trho^2$).
  A similar calculation is valid for $\spt^{\Psi}$. This establishes \eqref{proport}.
  
  We focus now on $\Kper^\Psi(\stat,\cdot)$ and
  $\Kperper^\Psi(\stat,\cdot)$. For the first we apply
  \eqref{eq:K1gXY} to $\stat=\partial_t$ and any  $Y$ to get
  \begin{align}
    \fpt^\Psi(\stat,Y)=\Kper(\stat,Y)-\stat(\talpha)\bm\oaxial(\vect{Y})
    =\Kper(\stat,Y), \label{fpt:statY}
  \end{align}
  after using  $\vect{\stat}=0$ and $\stat(\talpha)=0$.
  Thus, $\Kper^\Psi(\stat,\cdot)=\Kper(\stat,\cdot)$, and claim \emph{1.} follows.
  At second order, \eqref{eq:K2gXY} applied to $\stat$ and
  $\partial_{x^{\mu}}$ gives, after  
  taking into account
  $[\stat,\oaxial]=0$, $\stat(\tX)=0$, $\stat(\talpha)=0$,
  $\Kper(\oaxial,\stat)=0$ 
  and $M(\stat,\cdot)=2\Kper(\stat,\cdot)$ (by \eqref{fpt:statY}),
  \begin{align}
    \spt^\Psi(\stat,\partial_{x^{\mu}})
    =&\spt(\stat,\partial_{x^{\mu}})-2\talpha (\lie_{\oaxial} \Kper)(\stat,\partial_{x^{\mu}}).\label{eq:K2gtx}
  \end{align}
  The first term at the right is clearly $C^{n-1}(M)$. By Proposition
  \ref{prop:K_spher}, $\talpha$ is $C^{n-1}(\mmm\setminus\centresph_0)$ and bounded near $\centresph_0$,
  and thence also is the second term.
  In particular we have $\spt^\Psi(\stat,\axial)=\spt(\stat,\axial)-2\talpha \oaxial\left(\Kper(\stat,\axial)\right)$. By
  expression \eqref{eq:Katz}  in Lemma \ref{res:q-q+} for $K=\Kper$, $m=n$ and $u=t$ one has (we drop a superindex $t$ to simplify the notation)
  \begin{align*}
    \Kper(\stat,\axial) = \normx^2 \sum_{k=0}^{[n/2]-1} \normx^{2k} P_k(t,z)
    + \Phi^{(n)} := \normx^2 \Polz  + \Phi^{(n)}
  \end{align*}
  and hence
  \begin{align*}
    \spt^\Psi(\stat,\axial)
    =&\spt(\stat,\axial)-2\talpha\left\{-2Y^3\normx^2 \Polz +\normx^2\oaxial(\Polz)+\oaxial(\Phi^{(n)})\right\},
  \end{align*}
  where
  \begin{align*}
    \oaxial(\Polz)
        &=\sum_{k=0}^{[n/2]-1} \left\{-2k Y^3\normx^{2k}P\right\}+\frac{\normx^2}{\normr}\partial_z\Polz
  \end{align*}
  is clearly $C^{n-1}(\mmm\setminus\centresph_0)$ and bounded near $\centresph_0$.
  With the appropriate definition of ${\mathcal P}_n$,
  expression \eqref{eq:K2taxial} and all the properties listed in item \emph{3.} follow.
  For $\Kper^\Psi(n,n)$ and $\Kperper^{\Psi}(n,n)$, 
  equations \eqref{eq:K1g12} and \eqref{eq:K2g12} with $\vort_1=\vort_2=n$ lead to
  \begin{align}
    &\Kper^{\Psi}(n,n) =\Kper(n,n), \nonumber \\
    & \spt^\Psi(n,n)
    =\spt(n,n)-2\talpha(\lie_\oaxial\Kper)(n,n)+2(\normr n(\talpha))^2\frac{\trho^2}{\normr^2}-4 \normr n(\talpha)\frac{1}{\normr}\Kper(\oaxial,n)
    \label{equ:sptPSinn}
  \end{align}
  where for the second we used $[n,\oaxial]=0$.
  Lemma \ref{res:q-q+} ensures that $\Kper(n,n)$ is $C^{n} (\mmm\setminus \centresph_0)$ and bounded near
  $\centresph_0$, so point \emph{2.} holds. Concerning the second order,
  all terms in \eqref{equ:sptPSinn},
  are clearly $C^{n-1} (\mmm\setminus \centresph_0)$.
  Lemma \ref{res:q-q+} applied to $\Kper$, $\lie_\oaxial\Kper$, and $\Kperper$,
  and recalling $e^{-\lambda}$ is bounded near $\centresph_0$, ensures
  $\Kperper(n,n)$, $\lie_\oaxial\Kper(n,n)$ are bounded near $\centresph_0$,
  and that $\Kper(\oaxial,n)$ is ${\bf o}(\normx)$. Therefore $\normr^{-1}\Kper(\oaxial,n)$
  is bounded by Lemma \ref{res:rhon-rn}. Since  
  $\normr n(\talpha)$ (by Proposition \ref{prop:K_spher})
  and $\trho^2/\normr^2$ are also bounded, point \emph{4.} is proved.

  Finally, the functions $\normr^2\fpt^\Psi(\partialr,\oaxial)$ and $\normr^2\spt^\Psi(\partialr,\oaxial)$
  are  $C^{n-1}(\mmm\setminus\centresph_0)$ and $C^{n-2}(\mmm\setminus\centresph_0)$
  by virtue of Propositions \ref{prop:K_spher} and \ref{prop:K2_spher}
  respectively. Those propositions also
  provide the explicit forms of $\qfonetheta=\qtheta_1^g$ and $\qftwotheta=\qtheta_2^g$,
  and their behaviour near the axis and the origin as indicated.
  \fin
\end{proof}

\subsection{Canonical form of the perturbations}

By (\ref{prop:k1_OT})-(\ref{prop:k2_OT}) and \eqref{proport}, the perturbation tensors $\fpt^{\Psi}$, $\spt^{\Psi}$ are determined by five functions each. Two of them, $\fok$ and $\sok$ have already been defined in the Proposition.
Before entering into the main result of the paper we introduce the
functions that will  determine the remaining parts (four components respectively)
of $\fpt^\Psi$ and $\spt^\Psi$.

Let us stress first the fact that given any 1-form $\bm{X}$ satisfying
$\bm{X}(\partialr)=\bm{X}(\oaxial)=0$ we have
$\bm{X}(\ngamma)=0$  (see \eqref{eq:ngamma2}) and, as a consequence, the set of equations
$\{-x F=\bm{X}(\partial_y),y F=\bm{X}(\partial_x)\}$ is compatible and
defines a unique function  $F$  outside the axis $x=y=0$.
The 1-forms of the form $\bm{X}=K(\stat,\cdot)$ with $K$ any of the tensors $\Kper$, $\Kperper$, $\fpt^\Psi$, $\spt^\Psi$
satisfy that property.
In the following we use $\stat=\partial_t$.

Regarding $\fpt^\Psi$, and recalling point \emph{1.} of  Proposition \ref{prop:pre_main}, the expressions
\begin{align}
  \foh & \defi -\frac{1}{4} e^{-\nu} \fpt^\Psi(\partial_t, \partial_t),\label{def:foh}\\
  -  x \UR\opert & =\fpt^\Psi(\partial_t, \partial_{y} ) , \qquad
                   y \UR\opert=\fpt^\Psi(\partial_t, \partial_{x} ),\label{def:opert}
\end{align}
define, respectively, $\foh \in C^n(M)$ and $\opert : M \setminus \axis \longrightarrow \mathbb{R}$
Combining \eqref{def:opert} with 
$\axial=x\partial_y-y\partial_x$ yields
\begin{equation*}
  -\trho^2\opert  = \fpt^\Psi(\partial_t,\axial)
\end{equation*}
so that, in particular, $\opert$ is axially symmetric. Since the right-hand sides of \eqref{def:opert} are $C^{n}(M)$
we may apply Lemma \ref{Analy} in Appendix \ref{app:diff_origin} to conclude that $\omega$ extends to a  $C^{n-1}(M)$ function.

The third function is $\fom \in C^{n}(\mmm \setminus \centresph_0)$ defined by
\begin{align}
&\fom \defi\frac{1}{4}\left\{\fpt^\Psi{}^\alpha_\alpha+e^{-\nu}\fpt^\Psi(\partial_t, \partial_t) - 8\fok\right\}=\frac{1}{4}\left\{\fpt^\Psi{}^\alpha_\alpha-4\foh - 8\fok\right\}.\label{def:fom_0}
\end{align}
Finally, we define $\fof : \mmm\setminus\centresph_0 \longrightarrow \mathbb{R}$ as
\begin{equation}
\fof\defi e^{-\lambda}\frac{1}{2\sqrt\UR}\frac{1}{\normr^3}\Upsilon_1 +\beta_1,\label{def:fof_a}
\end{equation}
where $\beta_1$ is any radially symmetric $C^{n-1}(\mmm\setminus\centresph_0)$
function bounded near $\centresph_0$ 
and $\Upsilon_1$ any axially symmetric solution to the equation
\begin{equation}
\oaxial(\Upsilon_1)=\qfonetheta
\label{def:fof_b}
\end{equation}
satisfying the outcome of Corollary \ref{res:main_coro_0} for $m=n$.
We will show below that this Corollary does in fact apply.

As regards to $\spt^\Psi$ we analogously define $\{\soh,\som\}$ on $M \setminus \centresph_0$ and $\sow$ on $M \setminus \axis$ by
\begin{align}
\soh & \defi -\frac{1}{4} e^{-\nu} \left(\spt^\Psi(\partial_t, \partial_t)
-2\trho^2\opert^2\right),\label{def:h}\\
-  x \UR \sow & =\spt^\Psi(\partial_t,\partial_y), \qquad
y \UR \sow =\spt^\Psi(\partial_t, \partial_{x} ),\label{def:sow}\\
\som & \defi\frac{1}{4}\left\{\spt^\Psi{}^\alpha_\alpha+ e^{-\nu}\spt^\Psi(\partial_t,\partial_t) - 8 \sok\right\},\label{def:m_0}
\end{align}
while we define $\sof$ on $\mmm\setminus\centresph_0$ as
\begin{equation}
  \sof\defi e^{-\lambda}\frac{1}{2\sqrt\UR}\frac{1}{\normr^3}\Upsilon_2 +\beta_2,\label{def:f_a}
\end{equation}
where $\beta_2$ is any radially symmetric $C^{n-2}(\mmm\setminus\centresph_0)$
function bounded near $\centresph_0$ and $\Upsilon_2$ is any axially symmetric solution to the equation
\begin{equation}
  \oaxial(\Upsilon_2)=\qftwotheta
  \label{def:f_b}
\end{equation}
satisfying the outcome of Corollary \ref{res:main_coro_0} with $m=n-1$.
As before, we will show below that this Corollary may be applied.
We emphasize that, unlike $\foh$ and $\opert$, we cannot
guarantee that $\soh$ and $\sow$ can be extended differentiably to $M$.
This is because  $\spt^{\Psi}(\stat,\cdot)$ is not known to be
$C^{n-1}(M)$ due to the presence of the last term in \eqref{eq:sptpsi}.

For later use we observe that \eqref{def:sow} implies 
\begin{equation}
-\trho^2\sow=\spt^\Psi(\partial_t,\axial),
\label{eq:K2sow}
\end{equation}
and that 
the functions $\fom$ and $\som$  satisfy (on $\mmm\setminus \centresph_0$)
\begin{align}
&\fom = \frac{1}{4}\fpt^\Psi (n,n)=\frac{1}{4} e^{-\lambda} \fpt^\Psi(\partialr, \partialr),\label{def:fom}\\
&\som = \frac{1}{4}\spt^\Psi (n,n)=\frac{1}{4} e^{-\lambda} \spt^\Psi(\partialr, \partialr).\label{def:m}
\end{align}

Proposition  \ref{prop:pre_main} already determines the regularity restrictions on the  functions $\fok$ and $\sok$ implied by the differentiability of $\Kper$ and $\Kperper$.  The next theorem, 
which is the main result in this work,
combines Propositions \ref{Block} and \ref{prop:pre_main}
in order to, firstly, establish rigorously that first and second order perturbation tensors of finite differentiability and preserving the axial symmetry 
admit a gauge transformation that puts them in a canonical form and, secondly, to  provide detailed information on the differentiability and regularity properties of suitably defined function components.

\begin{theorem}[\textbf{Main Theorem}]
\label{theo:main}
Let $(\mmm,g)$ be a static and spherically symmetric background
satisfying assumption {\AsHone},
with $g$ of class $C^{n+1}$ with $n\geq 2$, given in spherical coordinates by \eqref{metric}. 
Let us be given
first and second order perturbation tensors
$\Kper$ and $\Kperper$ of class $C^n(\mmm)$ and $C^{n-1}(\mmm)$ respectively
satisfying \eqref{prop:k1_before} and \eqref{prop:k2_before},
where $\{x^i\} = \{ t, \phi\}$ and $\{x^{\otm}\} = \{r,\theta\}$ and
$\xi= \partial_t$, $\eta =\partial_\phi$.

Then, there exists gauge vectors $\sper$ and $\sperper$, that extend continuously to zero
  at $\centresph_0$, such that the gauge transformed tensors
$\fpt^\Psi$ and $\spt^\Psi$  are of class
$C^{n-1}(M \setminus \centresph_0)$ and $C^{n-2}(\mmm\setminus \centresph_0)$
respectively, and 
define the functions $\{\foh,\fom,\fok,\opert,\fof, \soh,\som,\sok,\sow,\sof\}$ as above
that satisfy the following properties:
\begin{itemize}
\item[(a.1)]  $\foh$ is $C^{n}(\mmm)$,
\item[(a.2)] $\opert$ is $C^{n-1}(\mmm)$, 
\item[(a.3)] the vector field $\opert \eta$ is $C^n(\mmm\setminus\centresph_0)$,
\item[(a.4)] $\fom$ and  $\fok$ are
  $C^{n}(\mmm\setminus\centresph_0)$ and bounded near $\centresph_0$,
\item[(a.5)] $\fof$  is $C^{n-1}(\mmm\setminus\centresph_0)$,
  bounded near $\centresph_0$, $C^{n}(S_r)$ on all spheres $S_r$, and moreover    $\partial_\theta\fof$ is
    $C^{n-1}$ outside the axis and extends continuously to $\axis\setminus\centresph_0$, where it vanishes, and $\partialr(\fof)$ and $\partial_t \fof$ are $C^{n-1}(S_r)$ on all spheres $S_r$,
\item[(b.1)] $\soh$ is $C^{n-1}(\mmm\setminus\centresph_0)$ and bounded near $\centresph_0$,
\item[(b.2)] $\sow$ is $C^{n-2}(\mmm\setminus\centresph_0)$ 
  and bounded near $\centresph_0$,
\item[(b.3)] The vector field $\sow \eta$ is $C^{n-1}(\mmm\setminus\centresph_0)$, 
\item[(b.4)] $\som$ and $\sok$ are $C^{n-1}(\mmm\setminus\centresph_0)$
  and bounded near $\centresph_0$,
\item[(b.5)] $\sof$  is $C^{n-2}(\mmm\setminus\centresph_0)$,
  bounded near $\centresph_0$,
  $C^{n-1}(S_r)$ on all spheres $S_r$
  and moreover $\partial_\theta \sof$ is $C^{n-2}$ outside the axis and extends continuously
  to $\axis\setminus\centresph_0$, where it vanishes,
  and $\partialr(\sof)$ and $\partial_t \sof$ are $C^{n-2}(S_r)$ on all spheres $S_r$.
\end{itemize}
In terms of these functions
$\fpt^\Psi$ and $\spt^\Psi$ take the following form on $\mmm\setminus \axis$ 
\begin{align}
  \fpt^\Psi =& -4 e^{\nu(r)} \foh(r, \theta) dt^2
               -2  \opert(r,\theta) \Q^2(r)\sin^2\theta dtd\phi
               + 4 e^{\lambda(r)} \fom(r, \theta) dr^2 \nonumber\\
             &+4   \fok(r, \theta)\Q^2(r)(d\theta^2+ \sin ^2 \theta  d\phi^2)
               + 4e^{\lambda(r)}\partial_\theta \fof(r,\theta)\Q(r) dr d\theta ,
               \label{res:fopert_tensor}
  \\
  \spt^\Psi =& \left(-4 e^{\nu(r)} \soh (r, \theta) + 2{\opert}^2(r, \theta) \Q^2(r) \sin ^2 \theta \right)dt^2\nonumber\\
             &-2 
               \sow(r,\theta) \Q^2(r) \sin^2\theta dt d\phi
               + 4 e^{\lambda(r)} \som(r, \theta) dr^2 \nonumber\\
             &+4  \sok(r, \theta) \Q^2(r) (d\theta^2+\sin ^2 \theta  d\phi^2)
               + 4e^{\lambda(r)}\partial_\theta \sof(r,\theta)\Q(r)dr d\theta.\label{res:sopert_tensor}
\end{align}
\end{theorem}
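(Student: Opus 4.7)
The plan is to invoke Proposition \ref{prop:pre_main} directly on the given block-diagonal tensors $\Kper, \Kperper$. That proposition supplies gauge vectors $\sper = -\talpha\oaxial$ and $\sperper = -\tX\oaxial$ (continuous at $\centresph_0$ with vanishing value there) and transformed tensors $\fpt^\Psi, \spt^\Psi$ of the claimed differentiability that preserve the block-diagonal form \eqref{prop:k1_OT}-\eqref{prop:k2_OT} and additionally satisfy the angular proportionality \eqref{proport} with the functions $\fok, \sok$ defined in \eqref{def:k1ik}. The coordinate expressions \eqref{res:fopert_tensor}-\eqref{res:sopert_tensor} are then obtained by expansion in the adapted chart $\{t,r,\theta,\phi\}$: the block structure kills the mixed components $K^\Psi_{tr}, K^\Psi_{t\theta}, K^\Psi_{r\phi}, K^\Psi_{\theta\phi}$; the sphere proportionality \eqref{proport} delivers the two angular components in terms of $\fok$ (resp.\ $\sok$); and the remaining five independent components $K^\Psi_{tt}, K^\Psi_{t\phi}, K^\Psi_{rr}, K^\Psi_{r\theta}$ are identified with $\foh, \opert, \fom, \fof$ (resp.\ $\soh, \sow, \som, \sof$) via the defining equations \eqref{def:foh}-\eqref{def:fof_a} and \eqref{def:h}-\eqref{def:f_a}.

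The first-order regularity claims are then essentially bookkeeping. Claim (a.1) uses $\fpt^\Psi(\stat,\stat)=\fpt(\stat,\stat)\in C^n(\mmm)$ by point 1 of Proposition \ref{prop:pre_main}. Claims (a.2)-(a.3) again use point 1: since $\fpt^\Psi(\stat,\axial)=\fpt(\stat,\axial)\in C^n(\mmm)$, the defining equations \eqref{def:opert} place us in the setting of Lemma \ref{Analy} of Appendix \ref{app:diff_origin}, yielding $\opert\in C^{n-1}(\mmm)$ and the vector field $\opert\axial \in C^n(\mmm\setminus\centresph_0)$. Claim (a.4) follows from point 2 (for $\fom$, via \eqref{def:fom}) together with the properties of $\fok$ stated directly in Proposition \ref{prop:pre_main}. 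The second-order claims (b.1)-(b.4) are entirely parallel but draw on points 3-4 of the same proposition: the one-order drop in regularity of $\soh$ and $\sow$ relative to their first-order counterparts originates from the extra term $-2\talpha\,\lie_\oaxial\Kper(\stat,\cdot)$ in \eqref{eq:sptpsi}, which inherits the $C^{n-1}$ class of $\talpha$.

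The genuine technical obstacle concerns claims (a.5) and (b.5) for $\fof$ and $\sof$, which are built by integration from the ODEs $\oaxial(\Upsilon_1) = \qfonetheta$ and $\oaxial(\Upsilon_2) = \qftwotheta$ via \eqref{def:fof_a}-\eqref{def:fof_b} and \eqref{def:f_a}-\eqref{def:f_b}. I would invoke Corollary \ref{res:main_coro_0}, whose structural hypotheses are to be verified against the explicit forms \eqref{eq:q1theta}-\eqref{eq:q2theta} of $\qfonetheta$ and $\qftwotheta$ supplied by item 5 of Proposition \ref{prop:pre_main}. The corollary produces axially symmetric $\Upsilon_i$ of the right differentiability and with prescribed $O(\normr^{b})$ behaviour near $\centresph_0$; dividing by $2\sqrt{\UR}\,\normr^3$ (which is $C^{n+1}$ and non-vanishing on $\mmm\setminus\centresph_0$), multiplying by $e^{-\lambda}$, and adding a radially symmetric bounded $\beta_i$ yields $\fof$ and $\sof$ in the claimed classes on $\mmm\setminus\centresph_0$ and bounded near $\centresph_0$. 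The extra derivative along each sphere $S_r$ is recovered because the ODE restricted to a fixed sphere is one-dimensional in $\theta$ and integration gains one derivative. The most delicate point is the continuous extension of $\partial_\theta\fof$ (and analogously $\partial_\theta\sof$) to $\axis\setminus\centresph_0$ with vanishing limit: using $\oaxial = -\sin\theta\,\partial_\theta$ one obtains $\partial_\theta\Upsilon_1 = -\qfonetheta/\sin\theta$ (and analogously for $\Upsilon_2$), and since $\qfonetheta,\qftwotheta \in \mathbf{o}(\normx)$ while $\normx = \normr\sin\theta$ on each sphere, the apparent singular factor $(\sin\theta)^{-1}$ is cancelled and the limit on $\axis\setminus\centresph_0$ vanishes. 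The claims on $\partialr\fof$ and $\partial_t\fof$ follow by differentiating \eqref{def:fof_a} in the transverse variables and using the regularity of $\Upsilon_i$ in $\{r,t\}$ furnished by the same corollary, with the analogous statement for $\sof$.
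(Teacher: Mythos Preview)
Your outline follows the paper's proof closely and is correct in its overall architecture. Two places deserve a flag because the paper does nontrivial additional work that your phrase ``entirely parallel'' hides.

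For (b.2), boundedness of $\sow$ near $\centresph_0$ does \emph{not} follow from point~3 of Proposition~\ref{prop:pre_main} in the way boundedness of $\opert$ followed from point~1: point~3 only says the one-form $\spt^\Psi(\stat,\cdot)$ is bounded, which does not control $\spt^\Psi(\stat,\axial)/\trho^2$. The paper introduces an auxiliary $\sowf_0$ defined by \eqref{def:sow} with $\Kperper$ (the \emph{original}, $C^{n-1}(\mmm)$ tensor) in place of $\spt^\Psi$; then $\sowf_0\in C^{n-2}(\mmm)$ by Lemma~\ref{Analy}, and the difference $\sow-\sowf_0$ is controlled via the explicit decomposition \eqref{eq:K2taxial} together with Lemma~\ref{res:lemma_oaxial} (to handle $\oaxial(\Phi^{(n)})/\trho^2$).

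For (b.5), the source $\qftwotheta$ in \eqref{eq:q2theta} contains the extra term $-\oaxial(2\talpha\qtheta_1)$, which does not match the structural template of Corollary~\ref{res:main_coro_0}. The paper moves this term to the left, solving $\oaxial(\Upsilon_2+2\talpha\qtheta_1)=\qtheta_2+\normx^2\Phizero_2^{(n-2)}$ instead; the corollary then applies to $\Upsilon_2+2\talpha\qtheta_1$, and one recovers $\Upsilon_2$ by subtracting $2\talpha\qtheta_1$, whose $C^{n-1}(\mmm\setminus\centresph_0)\cap C^0(\mmm)$ regularity and $O(\normr^3)$ decay are already known from Lemma~\ref{res:q-q+} and Proposition~\ref{prop:K_spher}.
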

\begin{proof}
Expressions (\ref{res:fopert_tensor}) and  (\ref{res:sopert_tensor})
follow directly from Proposition \ref{prop:pre_main} and the definitions
\eqref{def:foh}-\eqref{def:sow} after taking into account
that $\normx^2 d\phi=\normr^2\sin^2\theta d\phi=  x dy - y dx$ and 
$\Q^2=\UR \normr^2$ and $\trho^2=\UR\normx^2$.

It  remains to analyse the differentiability and boundedness
near $\centresph_0$ of the various functions.
Points \emph{(a.1)} and \emph{(a.2)} have already been proved when $\foh$ and
$\opert$ were introduced in (\ref{def:foh}) and (\ref{def:opert}).
Point \emph{(a.3)} follows immediately from
\begin{align*}
\omega \bm{\eta} = - \left ( \fpt^\Psi(\partial_t, \cdot) + 4 e^{\nu} \foh dt \right )
\end{align*}
since the right-hand side is a $C^n(M)$ 1-form (by
point \emph{1.} in Proposition \ref{prop:pre_main}).
For point \emph{(a.4)}, the statement on $\fok$ has been established in Proposition \ref{prop:pre_main}.
By point \emph{2.} of that proposition the right-hand side of \eqref{def:fom},
and thus $\fom$, also is.

Establishing \emph{(a.5)} needs some additional work.
$\fof$ being defined  by \eqref{def:fof_a},
we directly have from \eqref{def:fof_b} that
\[
\partial_\theta \fof =-\frac{1}{\normx}e^{-\lambda}\frac{1}{2\sqrt{\UR}}\frac{1}{\normr^2} \qfonetheta
\]
outside the axis. Given that
$\qfonetheta$ is $C^{n-1}(\UinM\setminus\centresph_0)\cap C^0(\UinM)$ and ${\bf o}(\normx)$ (see  Proposition \ref{prop:pre_main}),
$\partial_\theta\fof \in C^{n-1}(\UinM\setminus\axis)$
and can be extended continuously to $\axis\setminus\centresph_0$, where it vanishes.
Regarding $\fof$ itself, we must analyse the solutions $\Upsilon_1$ of equation \eqref{def:fof_b}. By 
Proposition \ref{prop:pre_main}, $\qfonetheta$
is given by \eqref{eq:q1theta},  so the right-hand side of
\eqref{def:fof_b} matches the right-hand side
of \eqref{eq:for_gamma_general_zero}
with $b=\displaystyle{\min_{\V}\{2k+l\}=3}$, $c= \displaystyle{\min_{\V'}\{2k'+l'\}=2}$, and $m= n\geq 2$, so that
$$
3\leq b\leq c+n,
$$
and Corollary \ref{res:main_coro_0} ensures there exists a solution $\tlemma_0=
\Upsilon_1$
which is $C^{n-1}(\mmm\setminus\centresph_0)\cap C^0(\mmm)$, $C^n(S_r)$ on all $S_r$, and $O(\normr^3)$, and moreover $\partialr(\Upsilon_1)$ and
$\partial_t\Upsilon_1$ are $C^{n-1}(S_r)$ on all $S_r$.
Since $\UR\in C^{n+1}(\mmm)$ and nowhere zero, it follows that
  $\fof$ defined by \eqref{def:fof_a} has the properties listed in item \emph{(a.5)}.
 
Next, we consider the second order quantities. The definitions of 
$\soh$ and $\sow$ imply
\begin{align}
\Kperper^{\Psi}(\partial_t, \cdot ) = (- 4 e^{\nu} \soh+2\trho^2\opert) dt 
  - \sow\trho^2 d \phi.
  \label{Kperperppartialt}
\end{align}
Point \emph{3.} in Proposition \ref{prop:pre_main}
ensures that the left-hand side of \eqref{Kperperppartialt}
is $C^{n-1}(\mmm\setminus \centresph_0)$ and bounded near $\centresph_0$.
Using that $\opert$ is $C^{n-1}(\mmm)$, \emph{(b.1)} follows after contraction with $\partial_t$.

Expressions  \eqref{def:sow}
define an axially symmetric
function $\sow : \mmm\setminus \axis \longrightarrow \mathbb{R}$. Since the
right-hand sides of \eqref{def:sow}
are $C^{n-1}(\mmm\setminus\centresph_0)$, Lemma \ref{Analy}
(with $n$ replaced by $n-1$) implies that
$\sow \in C^{n-2}(M \setminus \centresph_0$). This proves the
differentiability claim in \emph{(b.2)}.  To show boundedness near the origin
we use an auxiliary function $\sowf_0$ defined in terms of the original second
order perturbation
$\Kperper$ (\emph{before} it has been gauge transformed into $\spt^{\Psi}$).
Specifically, $\sowf_0$ is defined by
\begin{equation}
  -  x \UR\sowf_0 =\Kperper(\partial_t,\partial_y), \quad
  y \UR\sowf_0 =\Kperper(\partial_t, \partial_{x} ).\label{def:sow_new}
  \end{equation}
Since $\Kperper$ is of class $C^{n-1}(\mmm)$, it follows from
Lemma \ref{Analy} that $\sowf_0$ is $C^{n-2}(\mmm)$.  It is also clear that
\eqref{def:sow_new}
imply
\begin{align}
  -\trho^2\sowf_0=\spt(\partial_t,\axial).
  \label{eta2swof}
\end{align}
The function $\sow$ can be written in terms of $\sowf_0$. 
Inserting \eqref{eta2swof} and \eqref{eq:K2taxial} into \eqref{eq:K2sow} yields
\[
  -\trho^2\sow= -\trho^2\sowf_0
  -2\talpha\left\{\normx^2\mathcal{P}_n+\oaxial(\Phi^{(n)})\right\}
\]
with $\mathcal{P}_n\in C^{n-1}(\mmm\setminus\centresph_0)$ and bounded near $\centresph_0$,
and $\Phi^{(n)}\in C^{n}(\mmm)$ and ${\bf o}(\normx^n)$.
Therefore, outside the axis we have
\[
  \sow= \sowf_0
  +2\talpha\left\{\frac{\normx^2}{\trho^2}\mathcal{P}_n+\frac{1}{\trho^2}\oaxial(\Phi^{(n)})\right\}.
\]
This proves that $\sow$ is bounded near the origin because
$\sowf_0\in C^{n-2}(\mmm)$, the functions
$\talpha$
and $\normx^2\mathcal{P}_n/\trho^2=\mathcal{P}_n/\UR$ are both 
bounded near $\centresph_0$ and, finally,
$\oaxial(\Phi^{(n)})/\trho^2$
vanishes at $\normx= 0$ by virtue of Lemma \ref{res:lemma_oaxial},
which ensures $\oaxial(\Phi^{(n)})\in {\bf o}(\normx^2)$ given that
$n\geq 2$. With this we have established point \emph{(b.2)}.

For \emph{(b.3)} we simply note that \eqref{Kperperppartialt}
  can be rewritten as
\begin{equation*}
\sow \bm{\eta} = - \left ( 
\spt^\Psi(\partial_t, \cdot) + (4 e^{\nu} \soh -2\trho^2\opert) dt\right )
\end{equation*}
and we have already seen that
the right-hand side is a $C^{n-1}$ one-form outside $\centresph_0$.

As for \emph{(b.4)}, Proposition \ref{prop:pre_main} already ensures
$\sok$ is $C^{n-1}(\mmm\setminus \centresph_0)$ and  bounded near $\centresph_0$,
and point \emph{4} of the same proposition states that the right-hand side of \eqref{def:m},
and thus $\som$, also is.

We finally focus on $\sof$, defined by \eqref{def:f_a}. Directly from 
  \eqref{def:f_b} we can write, outside the axis,
\[
\partial_\theta \sof =-\frac{1}{\normx}e^{-\lambda}\frac{1}{2\sqrt{\UR}}\frac{1}{\normr^2} \qftwotheta.
\]
Given that 
$\qftwotheta$ is $C^{n-2}(\UinM\setminus\centresph_0)\cap C^0(\UinM)$ and ${\bf o}(\normx)$, as stated
by Proposition \ref{prop:pre_main},
then $\partial_\theta\sof \in C^{n-2}(\UinM\setminus\axis)$
and can be extended continuously to $\axis\setminus\centresph_0$, where it vanishes.
Regarding $\sof$, analogously as for $\fof$,
we must analyse the solutions $\Upsilon_2$ of equation \eqref{def:f_b}.
This time
the form of the inhomogeneous term of the equation $\qftwotheta$,
given by \eqref{eq:q2theta}, renders \eqref{def:f_b} explicitly as
\[
\oaxial\left(\Upsilon_2+2\talpha\qtheta_1\right)=\qtheta_2+\normx^2\Phizero^{(n-2)}_2.
\]
Given the form of $\qtheta_2$ (c.f. the right hand side
of \eqref{eq:expr_for_q2}) and the properties
of $\Phizero^{(n-2)}_2$ in Proposition \ref{prop:K2_spher}, the right-hand side
of the equation  matches the right-hand side
of \eqref{eq:for_gamma_general_zero}
with $b=\displaystyle{\min_{\V}\{2k+l\}=3}$, $c=\displaystyle{\min_{\V'}\{2k'+l'\}=2}$, and $m= n-1\geq 1$, so that
$$
3\leq b \leq c +n-1
$$
and Corollary \ref{res:main_coro_0} shows that there exists a solution
$\tlemma_0=\Upsilon_2+2\talpha\qtheta_1$
which is $C^{n-2}(\mmm\setminus\centresph_0)\cap C^0(\mmm)$, $C^{n-1}(S_r)$ on all $S_r$, and $O(\normr^3)$, and moreover $\partialr(\Upsilon_2+2\talpha\qtheta_1)$ and
$\partial_t(\Upsilon_2+2\talpha\qtheta_1)$ are $C^{n-2}(S_r)$ on all $S_r$.
Since $\talpha \qtheta_1$ is $C^{n-1}(\mmm\setminus\centresph_0)\cap C^0(\mmm)$
and $\qtheta_1\in O(\normr^3)$ (the latter because of Lemma \ref{res:q-q+} with $K=\Kper$ and $\qtheta=\qtheta_1$),
we obtain $\Upsilon_2\in C^{n-2}(\mmm\setminus\centresph_0)\cap C^0(\mmm)$
and $O(\normr^3)$,
and $\partialr(\Upsilon_2)$ and $\partial_t\Upsilon_2$ are $C^{n-2}(S_r)$ on all $S_r$. Therefore
$\sof$ satisfies the properties listed in point \emph{(b.5)}.
\fin
\end{proof}

The previous theorem makes the hypotheses that $\Kper$ and $\Kperper$
are of the form \eqref{prop:k1_before} and \eqref{prop:k2_before}.
This assumption is well-justified in the present setup
because it holds automatically for any  perturbation scheme that inherits an orthogonally transitive, stationary and axially symmetric action, as proved in Proposition \ref{Block}. We state the corresponding result as a
corollary, where we also add the property that the gauge vectors
  $\sper$ and $\sperper$  whose existence has been proved are
  axially symmetric, tangent to $S_r$ and orthogonal to $\axial$,
  and extend continuously to zero at the centre (see  Propositions \ref{prop:K_spher} and \ref{prop:K2_spher}).

\begin{corollary}
\label{res:corollary_main}
  Let $(\mmm,g)$ be a static and spherically symmetric background
  satisfying assumption {\AsHone},
  with $g$ of class $C^{n+1}$, with $n\geq 2$, given in spherical coordinates by \eqref{metric}. Let us be given a $C^{n+1}$ maximal
  perturbation scheme $(\mmm_\pertp, \gfam_\pertp,\{\psi_\pertp\})$
  inheriting the orthogonally transitive stationary and axisymmetric action
  generated by $\{ \xi=\partial_t, \eta=\partial_\phi\}$.
  Then, the outcome of Theorem \ref{theo:main} holds.

  Moreover, the gauge vectors  $\sper$ and $\sperper$ whose existence has been granted and that transform
  any perturbation tensors $\Kper$, $\Kperper$ in this perturbation scheme
  into the form given by Theorem \ref{theo:main},  commute with $\axial$,
  are tangent to $S_r$ and orthogonal to $\axial$,
  and extend to zero at $\centresph_0$.\fin
\end{corollary}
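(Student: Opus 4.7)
The corollary is obtained by chaining Proposition \ref{Block} with Theorem \ref{theo:main}. Under hypothesis {\AsHone}, the static and spherically symmetric background is in particular stationary and axisymmetric through $\stat=\partial_t$ and $\axial=\partial_\phi$, and the $G_2$ action they generate is orthogonally transitive (a standard consequence of $\stat$ being hypersurface orthogonal and commuting with $\axial$, combined with the block-diagonal form \eqref{metric}). Consequently, the hypotheses of Proposition \ref{Block} are satisfied by the given perturbation scheme, and one may restrict to a sub-scheme in which the first and second order perturbation tensors take the block-diagonal form \eqref{prop:k1_before}--\eqref{prop:k2_before}. Within this sub-scheme, Theorem \ref{theo:main} applies directly and yields gauge vectors $\sper$, $\sperper$ together with the canonical expressions \eqref{res:fopert_tensor}--\eqref{res:sopert_tensor} and the stated regularity of the component functions; this establishes the first claim of the corollary.

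For the additional properties of $\sper$ and $\sperper$, I would trace through the proof of Theorem \ref{theo:main}, which relies on Proposition \ref{prop:pre_main} and, in turn, on Propositions \ref{prop:K_spher} and \ref{prop:K2_spher}. In those two propositions the gauge vectors are constructed \emph{explicitly} as $\sper=-\talpha\,\oaxial$ and $\sperper=-\tX\,\oaxial$, where $\oaxial=-\sin\theta\,\partial_\theta$ is tangent to each sphere $S_r$ and satisfies $\langle\oaxial,\axial\rangle=0$; both $\sper$ and $\sperper$ thus inherit these two properties. The axial symmetry of $\talpha$ and $\tX$, arranged in the proofs of Propositions \ref{prop:K_spher} and \ref{prop:K2_spher} by solving the defining ODEs within the class $\axial(\talpha)=\axial(\tX)=0$, combined with $[\axial,\oaxial]=0$, yields $[\sper,\axial]=[\sperper,\axial]=0$. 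Continuity at $\centresph_0$ with value zero follows because $\talpha$ and $\tX$ are bounded near the centre (again Propositions \ref{prop:K_spher} and \ref{prop:K2_spher}) while $\oaxial$ extends continuously to zero on $\centresph_0$.

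The technical work has already been absorbed in Proposition \ref{Block} (the block diagonalization with differentiability propagated across the axis, via the analysis of the one-forms $\bm\zeta^{\kvi}_\pertp$ near $\axis$) and in Theorem \ref{theo:main} (the SVT-type decomposition and the delicate behaviour near the origin), so no essential obstacle remains. The only conceptual point worth flagging is that the corollary's gauge vectors refer to the second step of the two-step chain only: the first-step vectors produced by Proposition \ref{Block} have components along $\partial_t$ and $\partial_\phi$ and are therefore not tangent to $S_r$; the corollary states the properties of the remaining gauge freedom \emph{after} the block-diagonal form has already been fixed, which is precisely the gauge freedom on which Theorem \ref{theo:main} operates.
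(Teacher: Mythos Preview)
Your proposal is correct and follows essentially the same route as the paper: first invoke Proposition \ref{Block} to reach the block-diagonal form \eqref{prop:k1_before}--\eqref{prop:k2_before}, then apply Theorem \ref{theo:main}, and finally read off the properties of $\sper=-\talpha\,\oaxial$ and $\sperper=-\tX\,\oaxial$ directly from Propositions \ref{prop:K_spher} and \ref{prop:K2_spher}. Your closing observation that the stated properties (tangent to $S_r$, orthogonal to $\axial$) pertain only to the second-step gauge vectors---the first-step ones from Proposition \ref{Block} being built from $\partial_t$ and $\partial_\phi$---is exactly the right reading of the corollary.
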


This corollary ensures that there is a change of gauge
that takes any (orthogonally transitive) stationary and asymmetric perturbation (to second oder)
of a static and spherically symmetric configuration
to the usual forms (\ref{res:fopert_tensor})-(\ref{res:sopert_tensor}),
at the expense of loosing one derivative and deteriorating the regularity at the origin.
In the following we discuss 
the gauge properties compatible
with (\ref{res:fopert_tensor})-(\ref{res:sopert_tensor}).

\subsection{On uniqueness of gauges}
\label{onUniqGauge}
So far all the results have focused on the existence of the
gauge vectors $\sper$ and $\sperper$
that take from \eqref{prop:k1_before} and \eqref{prop:k2_before} to the canonical
form of Theorem \ref{theo:main}.
In this subsection we discuss their uniqueness properties for fixed transformed tensors $\Kper^\Psi$ and $\Kperper^\Psi$. We start with a general result  based on the symmetries of the background and of the
perturbation tensors.

\begin{lemma}
Consider a static and spherically symmetric background admitting no further local isometries and denote by $\stat$ its static
   and $\axial$ one of its axial Killing vectors. Let $\Kper$ and $\Kperper$ be $C^{2}$  perturbation tensors invariant under $\xi$ and $\eta$ on this background. If $\Kper'$ and $\Kperper'$ are obtained from $\Kper$ and $\Kperper$ by a gauge transformation defined by axially symmetric gauge vectors
 $\sper$ and $\sperper$ (i.e. that commute with $\axial$), then the most general such gauge vectors are given by 
\[
\tilde \sper =\sper + A_1\stat+B_2\axial, \quad \tilde \sperper =\sperper + A_2\stat + B_2\axial, \qquad A_1, A_2, B_1, B_2 \in \mathbb{R}.
\]
\end{lemma}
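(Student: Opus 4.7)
The plan is to set $W_1 := \tilde{\sper} - \sper$ and $W_2 := \tilde{\sperper} - \sperper$, and to show that both lie in $\mathrm{span}\{\stat,\axial\}$.

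At first order, equating $\Kper' = \Kper + \lie_\sper g = \Kper + \lie_{\tilde\sper} g$ gives $\lie_{W_1} g = 0$, so $W_1$ is a Killing vector of the background. Since both $\sper$ and $\tilde\sper$ commute with $\axial$, so does $W_1$. Under the hypothesis that the background admits no further local isometries, the Killing algebra is $\mathbb{R}\stat\oplus so(3)$, whose subspace commuting with $\axial$ is precisely $\mathbb{R}\stat\oplus\mathbb{R}\axial$ (within $so(3)$ only multiples of $\axial$ itself commute with $\axial$). Hence $W_1 = A_1\stat + B_1\axial$ for constants $A_1,B_1$. The same two-dimensional characterisation applied to $[\stat,\sper]$ --- which is Killing because $\Kper'$ is also stationary (implicit in the stationary--axisymmetric setup), and axially symmetric by Jacobi using $[\axial,\stat]=[\axial,\sper]=0$ --- gives the preliminary fact $[\stat,\sper]\in \mathrm{span}\{\stat,\axial\}$.

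At second order, I will apply the gauge law \eqref{gaugeperper_bis} to both pairs $(\sper,\sperper)$ and $(\tilde\sper,\tilde\sperper)$ and subtract. The nontrivial algebraic step is the expansion
\[
\lie_{\tilde\sper}^2 g = (\lie_\sper + \lie_{W_1})^2 g = \lie_\sper^2 g + \lie_{W_1}\lie_\sper g,
\]
valid because $\lie_{W_1} g = 0$ (both as the first input and after being commuted through). Combining this with $\lie_{W_1}\Kper = 0$ (which follows from $W_1 \in \mathrm{span}\{\stat,\axial\}$ and the invariance of $\Kper$ under $\stat$ and $\axial$) together with the identity $\lie_{W_1}\lie_\sper g = \lie_{[W_1,\sper]} g$ (again using $\lie_{W_1} g = 0$), the subtraction collapses to
\[
\lie_{W_2 + [W_1,\sper]}\, g = 0.
\]
Now $W_2$ commutes with $\axial$, and so does $[W_1,\sper]$ by Jacobi together with $[\axial,W_1]=[\axial,\sper]=0$. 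Hence $W_2 + [W_1,\sper]$ is an axially symmetric Killing field, so equal to $A_2\stat + B_2\axial$ for some constants. Finally $[W_1,\sper] = A_1[\stat,\sper] \in \mathrm{span}\{\stat,\axial\}$ by the preliminary fact, so $W_2 \in \mathrm{span}\{\stat,\axial\}$ as claimed.

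The main subtlety I expect is the quadratic term $\lie_\sper^2 g$ in the second-order gauge law: the proof depends on verifying that the residual commutator $[W_1,\sper]$ is absorbed into $\mathrm{span}\{\stat,\axial\}$, and this in turn relies on the preliminary identification of $[\stat,\sper]$, which is the only place where stationarity (rather than merely axial symmetry) of the transformed first-order tensor enters the argument.
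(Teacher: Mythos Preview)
Your argument is correct, but it takes a longer route at second order than the paper. The paper rewrites the second-order gauge law in the equivalent form
\[
\Kperper' = \Kperper + \lie_{\sperper} g + \lie_{\sper}(\Kper' + \Kper),
\]
obtained from \eqref{gaugeperper_bis} by substituting \eqref{gaugeper}. Subtracting the two versions then gives directly
\[
0 = \lie_{\sperper-\tilde\sperper} g - \lie_{\zeta_1}(\Kper' + \Kper),
\]
with $\zeta_1 = \tilde\sper - \sper \in \mathrm{span}\{\stat,\axial\}$. Since both $\Kper$ and $\Kper'$ are invariant under $\stat$ and $\axial$, the last term vanishes and $W_2$ is Killing immediately. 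No commutator $[W_1,\sper]$ ever appears, and your ``preliminary fact'' about $[\stat,\sper]$ is not needed.

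Your approach, working instead with the $\lie_{\sper}^2 g$ form, is perfectly valid and reaches the same conclusion; the extra step of identifying $[\stat,\sper]$ as an axially symmetric Killing vector is the price for not repackaging the nonlinear terms. Both proofs rely on the stationarity of $\Kper'$ (the paper uses it when invoking $\lie_{\zeta_1}\Kper' = 0$, you use it to make $[\stat,\sper]$ Killing), so your parenthetical remark that this is ``implicit in the setup'' is exactly right.
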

   
\begin{proof}
We start with the first order. From \eqref{gaugeper} and, by assumption,
\[
\Kper'=\Kper +\lie_{\sper}g, \qquad \Kper'=\Kper +\lie_{\tilde \sper}g.
\]
This leads to $\lie_{\sper-\tilde \sper}g=0$
and therefore $\tilde \sper-\sper$ is any Killing $\zeta_1$
of the background. By assumption $\zeta_1$ commutes with $\axial$
and given that the Killing algebra of the backgroud is
  $\mathbb{R} \oplus so(3)$ it must be $\zeta_1= A_1\stat+B_1\axial$
with $A_1,B_1\in\mathbb{R}$.
At second order we have, from \eqref{gaugeperper_bis}
using \eqref{gaugeper}, and by assumption
\begin{align*}
\Kperper'=\Kperper +\lie_{\sperper}g+\lie_{\sper}(\Kper^g+\Kper),\quad
\Kperper'=\Kperper +\lie_{\tilde \sperper}g+\lie_{\tilde \sper}(\Kper^g+\Kper).
\end{align*}
Therefore
$
0=\lie_{\sperper-\tilde \sperper}g- \lie_{\zeta_1}(\Kper'+\Kper)=\lie_{\sperper-\tilde \sperper}g,
$
where in the second equality we use that $\Kper'$ and $\Kper$ are invariant
under $\stat$ and $\axial$. The claim follows.\fin

\end{proof}

  This result combined with Lemma \ref{lemma:s_axial_commute} allows us to complement Corollary \ref{res:corollary_main} with the following uniqueness result.

\begin{remark}
  \label{rem:cor_main}In the setup of Corollary \ref{res:corollary_main}, if the
background admits no further local isometries and the perturbation scheme
$(\mmm_\pertp, \gfam_\pertp,\{\psi_\pertp\})$ is restricted so that
the inherited axial Killing vector  $\hat \axial_\pertp=d\psi_\pertp(\axial)$
is independent of the choice $\psi_{\pertp} \in
  \{ \psi_{\pertp} \}$, then
the gauge vectors $\sper$ and $\sperper$ are both unique up to
the addition of a Killing vector of the background that commutes with $\axial$.
We emphasize that this condition on the perturbation scheme is no restriction at all if $\gfam_\pertp$, $\pertp \neq 0$, admits only one axial symmetry.
\end{remark}

\subsection{Gauge freedom}
\label{sec:gauges}
In this subsection we investigate the gauge freedom to second order
that respects the \emph{form} of the first and second order perturbation tensors, as given in the main theorem. Our aim is to find the most general gauge transformation respecting this form under the additional condition
that the first order perturbation tensor takes a very special simple form (corresponding to a pure rotation). This is interesting for two reasons. Firstly,
because in the setup of \cite{PaperII}, this form of the first order perturbation tensor will in fact be a consequence of the gravitational field equations for a rotating fluid ball. And secondly because, as we will discuss later in more detail, our result will include as a particular case the most general gauge transformation that respects the structure of the canonical form of a general 
perturbation tensor \emph{at first order}.

We start with a general lemma concerning gauge vectors and symmetries. 

\begin{lemma}
\label{Lem1}
Let $(M,g)$ be a spacetime with Killing algebra $\Al$.
Let $\K$ be a symmetric two-covariant
tensor invariant under a subalgebra $\Al_0 \subset \Al$. Then the 
tensor $\K' := \K + \lie_{\s} g$, with $s \in \X(M)$ is 
also invariant under $\Al_0$
if and only if
\begin{align}
[ \zeta, \s ] \in \Al \quad \quad \forall \zeta \in \Al_0.
\label{commu1}
\end{align}
Assume now  $\Al = \mbox{span}\{\xi\} \oplus \mbox{span} \{ \eta_a \}$ with 
$\{ \eta_a \}$ a basis of an    $so(3)$ Killing algebra
and  $[\xi,\eta_a]=0$, $a=1,2,3$.
Let $\Al_0 = \mbox{span}\{ \xi, \eta\}$
with $\eta := \eta_1$ then there exists $\widehat{\eta} \in so(3)$ such that
$\s = \widehat{\s} + \widehat{\eta}$  with $\widehat{s}$ satisfying
\begin{align}
[\xi, \widehat{\s} \, ] & = C \eta + D \xi, 
\quad \quad \quad \quad  C,D \in \mathbb{R},
\label{comm1} \\
  [\eta, \widehat{\s} \, ] & =  A \eta + B  \xi, 
\quad \quad \quad \quad A,B \in \mathbb{R}.
\label{comm2}
\end{align}
If, in addition, the orbits of $\eta$ are closed, then $A = B =0$.
\end{lemma}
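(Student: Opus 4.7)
The plan is to prove the two assertions essentially independently. For the first part (the equivalent characterisation \eqref{commu1}), I would start from the direct calculation
\begin{equation*}
\lie_\zeta K' = \lie_\zeta K + \lie_\zeta \lie_{\s} g = \lie_{\s}\lie_\zeta g + \lie_{[\zeta,\s]}g = \lie_{[\zeta,\s]}g,
\end{equation*}
using that $\zeta$ is Killing and $\lie_\zeta K = 0$ by hypothesis. Invariance of $K'$ under every $\zeta\in \Al_0$ then amounts to $[\zeta,\s]$ being Killing for every $\zeta\in\Al_0$, which is precisely \eqref{commu1}.

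For the second part, \eqref{commu1} applied to $\zeta = \xi$ and $\zeta = \eta$ lets me expand, with real coefficients,
\begin{equation*}
[\xi,\s] = D\xi + D_1\eta_1 + D_2\eta_2 + D_3\eta_3, \qquad [\eta,\s] = B\xi + A_1\eta_1 + A_2\eta_2 + A_3\eta_3 .
\end{equation*}
The key step is the Jacobi identity $[\xi,[\eta,\s]] + [\eta,[\s,\xi]] + [\s,[\xi,\eta]] = 0$. Since $\xi$ commutes with every element of $\Al$ and $[\xi,\eta]=0$, the first and third terms vanish, leaving $[\eta,[\s,\xi]] = 0$. Using the standard $so(3)$ relations with $\eta = \eta_1$ this reduces to $D_3\eta_2 - D_2\eta_3 = 0$, forcing $D_2 = D_3 = 0$ and yielding \eqref{comm1} with $C:=D_1$.

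It remains to absorb the unwanted $\eta_2,\eta_3$ components of $[\eta,\s]$ into an $so(3)$ shift. A short computation with the same commutation relations shows that the choice $\widehat{\eta} := A_3\eta_2 - A_2\eta_3 \in so(3)$ satisfies $[\eta,\widehat{\eta}] = A_2\eta_2 + A_3\eta_3$, so $\widehat{\s}:=\s - \widehat{\eta}$ obeys $[\eta,\widehat{\s}] = B\xi + A_1\eta$, which is \eqref{comm2} with $A:=A_1$. Since $[\xi,\widehat{\eta}]=0$, relation \eqref{comm1} is unaffected by this redefinition.

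For the final claim, I would use the classical closed-orbit argument. Letting $\phi_t$ denote the flow of $\eta$, periodic with some period $T$, both $\eta$ (trivially) and $\xi$ (because $[\xi,\eta]=0$) are $\phi_t$-invariant. Hence $\frac{d}{dt}(\phi_t^*\widehat{\s}) = \phi_t^*\lie_\eta\widehat{\s} = A\eta + B\xi$, and integrating from $0$ to $T$, using $\phi_T = \mathrm{id}$, yields $T(A\eta + B\xi)=0$. Linear independence of $\xi$ and $\eta$ gives $A=B=0$. The only genuinely non-trivial ingredient in the whole argument is the Jacobi identity step that collapses the $so(3)$-component of $[\xi,\s]$ onto $\eta$; the remaining manipulations and the closed-orbit integration are standard.
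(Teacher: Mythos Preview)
Your proof is correct and, for the first two parts, essentially identical to the paper's: the same Lie-derivative computation gives \eqref{commu1}, and the same Jacobi-identity trick collapses the $so(3)$-component of $[\xi,\s]$ onto $\eta$, followed by the same $so(3)$ shift to clean up $[\eta,\s]$.

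The only difference is in the closed-orbit claim. The paper does not argue directly but invokes a result from the literature (Section~3 of \cite{alanaxialcomment}) to conclude that $\widehat{\s}$ must commute with the cyclic $\eta$. Your flow-integration argument is a clean self-contained alternative: since $\phi_t^{*}(\lie_\eta\widehat{\s})=A\eta+B\xi$ is $t$-independent, integrating over a period kills it. This is more elementary and transparent, with the small caveat that ``closed orbits'' must be read as ``closed with a common period $T$'' (equivalently, an $S^1$ action), which is exactly the situation in the paper's applications and is noted there in a footnote. The cited reference may cover slightly more general situations, but for the purposes of this lemma your argument suffices.
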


\begin{proof}
Taking the Lie derivative of $\K'$ along $\zeta \in \Al_0$ and imposing
invariance
\[
0 = \lie_{\zeta} \K' = \lie_{\zeta} \K + 
\lie_{\zeta} \lie_{\s} g =
\lie_{[\zeta,\s]} g + \lie_{\s} \lie_{\zeta} g =
\lie_{[\zeta,\s]} g 
\]
so $[\zeta,\s]$ is a Killing vector field and (\ref{commu1})
is established. 
For the second statement recall that the structure constants of 
$so(3)$ are 
\[
[\eta_a, \eta_b] = \epsilon_{ab}{}^c \eta_c \quad \quad
a,b,c = 1,2,3
\]
where $\epsilon_{ab}{}^{c}$ is the antisymmetric symbol. From (\ref{commu1})
we have $[\eta,\s] = 
A^a \eta_a + B \xi$
and $[\xi,\s] = C^a \eta_a + D \xi
$, $A^a, B,C^a, D \in \mathbb{R}$.
Commuting $\xi$ with $[\eta,\s]$ and using the Jacobi identity
\[
0 = [\xi, A^a \eta_a + B \xi]  =
[ \xi, [ \eta,\s]]  = [[\xi,\eta], \s] + [\eta,[\xi,\s]]  = [\eta,  C^a \eta_a + D \xi ]  = C^a \epsilon_{1a}{}^b \eta_b,
\]
which is equivalent to $C^2 = C^3 =0$.  Define
$\widehat{\eta} = \epsilon_{1}^{\phantom{1}c}{}_{d} A^d \eta_{c}$
and
\[
\widehat{\s} := \s - \widehat{\eta}.
\]
Setting $C:= C^1$, (\ref{comm1}) follows. Moreover,
\begin{align*}
[\eta, \widehat{\s} \,  ]  & =
[\eta, \s - \epsilon_{1}^{\phantom{1}c}{}_{d} A^d \eta_{c}] 
 = B \xi + A^a \eta_a  - 
\epsilon_{1}^{\phantom{1}c}{}_{d} A^d \epsilon_{1c}{}^a \eta_a 
 = B \xi + A^a \eta_a - \left ( \delta^a_d - \delta^a_1 \delta_{d1} \right ) A^d
\eta_a  \\
& =
B \xi + A^a \eta_a - A^a \eta_a + \eta_1 A^1  = B \xi + A^1 \eta_1,
\end{align*}
which is (\ref{comm2}) after setting $A:=A^1$.

Assume now that the orbits of $\eta$ are closed\footnote{Note that this condition is automatic when suitable global conditions, such as e.g. assumption {\AsHone}, are imposed.} (we stay away from the axis).
Since $\eta$, $\xi$ and $\widehat \s$
form a 3-dimensional algebra $\Al_3$ determined by (\ref{comm1})-(\ref{comm2})
plus an Abelian subalgebra for $\{\eta,\xi\}$, the argument
in Section 3 in \cite{alanaxialcomment}\footnote{As noted in the reference,
although the text refers to Killing vectors, the fact that the corresponding
local group of transformations are isometries is not used in the proof.}
shows that $\widehat\s$ must also commute with the cyclic $\eta$. Therefore (\ref{comm2})
must hold with $A=B=0$.\fin
\end{proof}

\begin{remarklem} The results (\ref{commu1}), (\ref{comm1}) 
and (\ref{comm2}) are purely local, so that they apply also
to spacetimes admitting no isometry group action (e.g. because
only a portions of the spacetimes is considered)
\end{remarklem}

Lemma \ref{Lem1} and the last argument of its proof
also implies the following 
Corollary.
\begin{corolemm}
  Let $(M,g)$ be a spacetime admitting a
  two-dimensional Abelian Killing algebra  $\Al$. Write
$\Al =
\mbox{span} \{ \xi,\eta \}$ and assume that $\eta$ has closed
orbits. Then $\lie_{s} g$ is invariant under $\Al$
if and only if
\begin{align}
[\eta, \s ] & =  0 , \label{comm1bis} \\
[\xi, \s \, ] & = C \eta + D \xi, 
\quad \quad \quad \quad  C,D \in \mathbb{R}.
\label{comm2bis}
\end{align}
\fin
\end{corolemm}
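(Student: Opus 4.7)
The plan is to obtain the Corollary essentially for free from the first assertion of Lemma \ref{Lem1}, combined with the closed-orbit argument already invoked in the last paragraph of its proof. The setting here is genuinely simpler than in Lemma \ref{Lem1}: now the ambient Killing algebra $\Al$ itself is two-dimensional and Abelian, so there is no splitting of $\s$ into a ``background-Killing part'' plus a remainder to worry about.

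First, I would apply the opening iff-statement of Lemma \ref{Lem1} with $\Al_0 = \Al$ and with the trivially invariant choice $\K = 0$, for which $\K' = \lie_{\s} g$. This directly gives that $\lie_{\s} g$ is $\Al$-invariant if and only if $[\zeta, \s] \in \Al$ for every $\zeta \in \Al$, and applying this to the generators $\xi,\eta$ yields real constants $A,B,C,D$ with
\[
[\xi, \s] = D\xi + C\eta, \qquad [\eta, \s] = A\eta + B\xi.
\]
The first relation is precisely \eqref{comm2bis}; the second will reduce to \eqref{comm1bis} as soon as we show $A = B = 0$.

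The only remaining task, and the one place where the hypothesis of closed orbits of $\eta$ must really be used, is to prove $A = B = 0$. The three vector fields $\{\xi,\eta,\s\}$ span a Lie algebra of dimension at most three, with $[\xi,\eta] = 0$ by assumption; the Jacobi identity is satisfied automatically (each of its three summands vanishes separately, because $[\xi,\eta] = 0$ and both $[\xi,\s]$ and $[\eta,\s]$ lie in $\mathrm{span}\{\xi,\eta\}$). This is structurally the same 3-dimensional algebra encountered in the final paragraph of the proof of Lemma \ref{Lem1}, and the Barnes-type argument of Section 3 of \cite{alanaxialcomment} applies verbatim: closedness of the orbits of the cyclic Killing $\eta$ forces $\s$ to commute with $\eta$, i.e.\ $A = B = 0$. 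This is the main (and really the only non-routine) step of the proof.

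The converse direction is immediate: if \eqref{comm1bis} and \eqref{comm2bis} hold, then $[\zeta,\s] \in \Al$ for every $\zeta \in \Al$, and the same iff statement of Lemma \ref{Lem1}, applied again with $\K=0$, gives that $\lie_{\s} g$ is invariant under $\Al$. Thus the entire corollary collapses to an application of Lemma \ref{Lem1} together with the single ingredient imported from \cite{alanaxialcomment} that makes closed orbits of an Abelian generator force commutation.
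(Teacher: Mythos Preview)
Your proposal is correct and is essentially the same approach the paper takes: the paper simply states that the Corollary follows from Lemma~\ref{Lem1} together with the closed-orbit argument in the last paragraph of its proof, and you have spelled out exactly that. The choice $\K=0$ and $\Al_0=\Al$ is the natural way to instantiate the first part of Lemma~\ref{Lem1}, and your verification that the Jacobi identity is automatically satisfied (so that $\{\xi,\eta,\s\}$ genuinely spans a Lie algebra to which the argument of \cite{alanaxialcomment} applies) is the one detail worth making explicit.
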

This result may have implications for studying gauge invariance
properties of perturbations in general stationary and axially symmetric
spacetimes.

From now on, however, we restrict ourselves to static
and spherically symmetric spacetimes as defined in Definition \ref{def:staticspher} and Assumption {\AsHone} with $n\geq 0$,
and assume we have selected a Killing vector $\eta$
in the $so(3)$ Killing algebra and have chosen 
spherical coordinates adapted to $\eta$, so that, away from the 
axis of $\eta$, the metric takes
the form (\ref{metric}). In particular, for the rest of the section
the functions $\nu,\lambda, \Q$ will be functions of only one variable $r$.
For simplicity we relax our convention of writing the argument explicitly, i.e. $\nu(r)$, etc. since  no confusion will arise.

Define $\widetilde{\s}$ by
\begin{equation}
\widehat{s} = C t \eta + D t \xi + \widetilde{\s},
\label{defstilde}
\end{equation}
and note that, from (\ref{comm1})-(\ref{comm2}) with $A=B=0$,
\begin{equation}
[\eta, \widetilde{\s}] = [\xi,\widetilde{\s}] = 0. \label{commutation}
\end{equation}
The difference tensor $\K'- \K = \lie_{\s} g$ has the form
\begin{align}
\K' - \K & = \lie_{\s} g = \lie_{\widehat{\s}} g 
= \lie_{C t \eta + D t \xi + \widetilde{\s}} g  =
C \left ( dt \otimes \bm{\eta} +  \bm{\eta} \otimes dt \right ) +  
D \left ( dt \otimes \bm{\xi} +  \bm{\xi} \otimes dt \right ) 
+ \lie_{\widetilde{\s}} g \nonumber \\
& = 
- 2 D e^{\nu} dt^2
+ 2 C \Q^2 \sin^2 \theta dt d \phi + \lie_{\widetilde{\s}} g.
\label{poundsstilde}
\end{align}
The form of $\widetilde{\s}$ can be restricted further
under additional assumptions on the form of $\K - \K'$, i.e.
$\lie_{\s} g$.

\begin{proposition}
\label{trans}
Let $(M,g)$ be a static and spherically symmetric spacetime
satisfying assumption {\AsHone} for $n\geq 0$,
with a selection of axial Killing vector field $\eta$ and 
$\s$ a vector field on $M$. Assume 
\begin{itemize}
\item[(i)] $\lie_{\s} g$ 
is invariant under $\mbox{span} \{ \xi,\eta\}$.
\item[(ii)] $\lie_{\s}g$  satisfies, in the coordinates 
$\{t,\phi,r,\theta \}$,
\[
\left ( \lie_{\s} g \right )_{\kvi \otm} =0
\]
where $\{ x^\kvi \} = \{ t,\phi\}$ and $\{ x^\otm \} = \{ r, \theta\}$.
\end{itemize}
(A) Then
\begin{align}
  \s = C t \eta + D t \xi + \overline{\s} + \zeta \quad \quad \quad
  C, D  \in \mathbb{R}
\label{sper2}
\end{align}
where $\overline{\s} = \overline{\s}^\otm (x^\otn) \partial_{x^\otm}$ and
$\zeta$ is any Killing vector field of $g$.

\noindent
(B) If, in addition to (i) and (ii), 
$\lie_{\s} g$ has $x^A = \{ \theta,\phi\}$ components of the form
\begin{align}
(\lie_{\s} g )_{AB} dx^A dx^B =  W(x^\otm) \left (  d\theta^2 + \sin^2 \theta d\phi^2 \right )
\label{thirdcase}
\end{align}
for some function $W(x^\otm)$, then there exists $\Y(r,\theta)$
and $\alpha(r)$ such that 
\begin{align}
 \overline{\s}  =&  2 \Y(r,\theta) \partial_r + 2 \alpha(r) \sin \theta
\partial_{\theta}, \quad \quad \mbox{and} \label{vector} \\
 \lie_{\s} g =&   
- 2 e^{\nu} \left ( \Y \nu_{,r}  + D \right ) dt^2 
+ 2 C \Q^2 \sin^2 \theta dt d \phi + 
4 e^{\lam} \left ( \Y_{,r} + \frac{1}{2} \Y \lam_{,r}\right ) dr^2 \nonumber \\
& + 4 \Q^2 \left ( \Y \frac{\Q_{,r}}{\Q} + \alpha(r) \cos \theta \right )
\left ( d \theta^2 + \sin^2\theta d\phi^2 \right )
+ 4 \Q e^{\lam} \left (\frac{\Y_{,\theta}}{\Q} + \Q e^{-\lam} \alpha_{,r}
\sin \theta \right ) d r d \theta.
\label{var3}
\end{align}
\noindent (C) If, in addition to (i) and (ii), $\Q_{,r}$ and $\nu_{,r}$
are not zero
on dense subsets  and $\lie_{\s} g$ has only component
in $\{t, \phi\}$, i.e. exists $Z(x^\otm)$ such that
\begin{align*}
  \lie_{\s} g = 2 Z(x^\otm) \Q^2 \sin^2  \theta dt d \phi,
\end{align*}
then
\begin{align}
  s &= C t \eta + \zeta, 
    \quad \quad \quad \mbox{and} \quad \quad
  \lie_{\s} g = 2 C \Q^2 \sin^2 \theta dt d \phi, \quad \quad i.e. \quad
  Z(x^\otm) = C. \label{var4}
  \end{align}
\end{proposition}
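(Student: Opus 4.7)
The plan is to address the three parts in sequence, each refining the structure of $\s$ using an additional piece of the hypothesis.

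For part (A), I would begin by applying Lemma \ref{Lem1} with $\Al_0 = \mbox{span}\{\xi,\eta\}$ (noting that the orbits of $\eta$ are closed by assumption {\AsHone}, so $A=B=0$) to get $\s = \widehat{\s} + \widehat{\eta}$ with $\widehat{\eta}\in so(3)$, and then use the decomposition \eqref{defstilde} to write $\widehat{\s} = Ct\eta + Dt\xi + \widetilde{\s}$ with $\widetilde{\s}$ commuting with both $\xi$ and $\eta$. In the coordinates $\{t,\phi,r,\theta\}$ this means the components of $\widetilde{\s}$ depend only on $(r,\theta)$. Because the metric \eqref{metric} is block diagonal, the four components $(\lie_{\widetilde{\s}} g)_{\kvi\otm}$ reduce to
\begin{align*}
(\lie_{\widetilde{\s}} g)_{tr} &= -e^{\nu}\partial_r \widetilde{\s}^{\,t}, &
(\lie_{\widetilde{\s}} g)_{t\theta} &= -e^{\nu}\partial_\theta \widetilde{\s}^{\,t}, \\
(\lie_{\widetilde{\s}} g)_{\phi r} &= \Q^2\sin^2\theta\,\partial_r \widetilde{\s}^{\,\phi}, &
(\lie_{\widetilde{\s}} g)_{\phi\theta} &= \Q^2\sin^2\theta\,\partial_\theta \widetilde{\s}^{\,\phi}.
\end{align*}
By \eqref{poundsstilde} the extra $Dt\xi$ and $Ct\eta$ pieces do not contribute to $(\lie_{\s} g)_{\kvi\otm}$, so condition (ii) forces $\widetilde{\s}^{\,t}$ and $\widetilde{\s}^{\,\phi}$ to be constants. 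The combination $\widetilde{\s}^{\,t}\xi + \widetilde{\s}^{\,\phi}\eta + \widehat{\eta}$ is then a Killing field that I would absorb into $\zeta$, yielding \eqref{sper2} with $\overline{\s} = \widetilde{\s}^{\,r}(r,\theta)\partial_r + \widetilde{\s}^{\,\theta}(r,\theta)\partial_\theta$.

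For part (B), since $\zeta$ is Killing, \eqref{poundsstilde} gives $\lie_{\s} g = -2De^{\nu}dt^2 + 2C\Q^2\sin^2\theta\,dtd\phi + \lie_{\overline{\s}} g$, and only $\lie_{\overline{\s}} g$ contributes to the $AB$ components. A direct calculation on \eqref{metric} yields $(\lie_{\overline{\s}} g)_{\theta\phi}=0$ together with
\[
(\lie_{\overline{\s}} g)_{\theta\theta} = 2\overline{\s}^{\,r}\Q\Q_{,r} + 2\Q^2\partial_\theta\overline{\s}^{\,\theta}, \qquad
(\lie_{\overline{\s}} g)_{\phi\phi} = 2\overline{\s}^{\,r}\Q\Q_{,r}\sin^2\theta + 2\Q^2\sin\theta\cos\theta\,\overline{\s}^{\,\theta}.
\]
Imposing \eqref{thirdcase}, i.e.\ $(\lie_{\overline{\s}} g)_{\phi\phi} = \sin^2\theta\,(\lie_{\overline{\s}} g)_{\theta\theta}$, collapses to the ODE $\partial_\theta \overline{\s}^{\,\theta} = \cot\theta\,\overline{\s}^{\,\theta}$, whose general solution is $\overline{\s}^{\,\theta} = 2\alpha(r)\sin\theta$. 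Defining $\Y(r,\theta) := \overline{\s}^{\,r}/2$ gives \eqref{vector}, and plugging this back into a component-by-component computation of $\lie_{\overline{\s}} g$, combined with the $dt^2$ and $dtd\phi$ pieces already isolated in \eqref{poundsstilde}, produces \eqref{var3}. This step is routine but bookkeeping-heavy.

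For part (C), I would equate to zero every coefficient in \eqref{var3} except the $dtd\phi$ one. The $d\theta^2$ condition reads $\Y\Q_{,r}/\Q + \alpha(r)\cos\theta = 0$. Since $\Y$ and $\Q_{,r}/\Q$ are independent of $\theta$ while $\alpha\cos\theta$ varies with $\theta$, both sides must vanish separately: $\alpha\equiv 0$ and $\Y\Q_{,r}=0$. The density of the locus where $\Q_{,r}\neq 0$, together with the regularity needed to form $\lie_{\s} g$, then forces $\Y\equiv 0$. The $dt^2$ condition $\Y\nu_{,r}+D=0$ then gives $D=0$ (using that $\nu_{,r}$ is non-zero on a dense subset to conclude that the equation $\Y\nu_{,r} = -D$ with $\Y\equiv 0$ is consistent only for $D=0$). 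The remaining $dr^2$ and $drd\theta$ conditions become automatic, so $\s = Ct\eta + \zeta$ and $\lie_{\s} g = 2C\Q^2\sin^2\theta\,dtd\phi$, yielding \eqref{var4}. The main subtlety is the separation-of-variables argument in this last step: it is clean only because the dependence on $\theta$ (through $\cos\theta$) and the dependence on $r$ (through $\Q_{,r}/\Q$) cannot cancel against a function of $r$ alone unless each vanishes, with density of the non-vanishing loci ensuring the vanishing propagates globally.
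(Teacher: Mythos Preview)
Parts (A) and (B) of your proposal are essentially identical to the paper's proof and are correct.

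Part (C), however, contains a genuine error. You write that ``$\Y$ and $\Q_{,r}/\Q$ are independent of $\theta$'', but this contradicts your own definition in Part (B), where $\Y(r,\theta) := \overline{\s}^{\,r}/2$ explicitly depends on $\theta$ (as the proposition statement itself records). Your separation-of-variables argument therefore does not go through: from $\Y(r,\theta)\,\Q_{,r}/\Q + \alpha(r)\cos\theta = 0$ alone you cannot conclude that $\alpha \equiv 0$ and $\Y\Q_{,r} \equiv 0$, because the $\theta$-dependence of $\Y$ can perfectly well cancel the $\cos\theta$ term.

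The paper repairs this by using the equations in a different order. First the $dr^2$ condition $\Y_{,r} + \tfrac12\Y\lambda_{,r} = 0$ forces $\Y = e^{-\lambda/2}\Theta(\theta)$ for some function $\Theta$. Then the $d\theta^2$ condition becomes a genuine separation equation $e^{-\lambda/2}\Theta(\theta)\,\Q_{,r}/\Q = -\alpha(r)\cos\theta$; on the dense set where $\Q_{,r}\neq 0$ this yields $\Theta(\theta) = -\alpha_0\cos\theta$ and $\alpha(r) = \alpha_0 e^{-\lambda/2}\Q_{,r}/\Q$ for a constant $\alpha_0$. Only then does the $dt^2$ condition $-\alpha_0 e^{-\lambda/2}\nu_{,r}\cos\theta + D = 0$, combined with the density of $\{\nu_{,r}\neq 0\}$ and the $\theta$-variation of $\cos\theta$, force $\alpha_0 = 0$ and $D = 0$. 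The key missing idea in your attempt is that one must first exploit the $dr^2$ equation to factor the $r$- and $\theta$-dependence of $\Y$ before any separation argument can be applied.
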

\begin{proof}
We have already shown, from Lemma \ref{Lem1} and (\ref{defstilde}), that
$\s = \widehat{\eta} + C t \eta + D t \xi + \widetilde{\s}$
for some $\widetilde{\s}$ satisfying (\ref{commutation}).
This imposes that the components
$\tilde{\s}^{\,\alpha}$ of $\tilde{\s}$ only depend on $x^\otn$.
Moreover, (\ref{poundsstilde}) implies $\lie_{\tilde \s} g (\partial_{x^\kvi},
\partial_{x^\otm}) =0$ due to assumption (ii). Explicitly
\[
0 = \tilde{\s}^\mu \partial_{\mu} g_{\kvi \otm} + g_{\kvi\kvj} \partial_\otm \tilde{\s}^\kvj
+ g_{\otm\otn} \partial_{\kvi} \tilde{\s}^\otn =  
g_{\kvi\kvj} \partial_\otm \tilde{\s}^\kvj
\quad \quad \Longrightarrow  \quad \quad \partial_\otm \tilde{\s}^\kvj =0,
\]
and therefore $\tilde{\s}^{\,\kvj}$ are in fact constants. Thus, there
exist constants $a, b$ such that
$\tilde{\s} = a \eta + b \xi + \overline{\s}$ with
$\overline{\s} = \overline{\s}^\otm (x^\otn) \partial_{x^\otm}$.
Defining $\zeta = \widehat{\eta} + a \eta + b \xi$, expression
(\ref{sper2}) follows.

In case (B), i.e. under the assumption (\ref{thirdcase}), we need to 
impose that  
$(\lie_{\overline{\s}} g )_{\phi\phi} -
\sin^2 \theta (\lie_{\overline{\s}} g )_{\theta\theta} =0$. Explicitly
\[
0 = \overline{\s}^\otm \partial_{\otm} g_{\phi\phi}
- \sin^2 \theta \left (
\overline{\s}^\otm \partial_\otm g_{\theta\theta}
+ 2 g_{\theta\theta} \partial_{\theta} \overline{\s}^{\theta} \right ) 
=
2 \Q^2 \sin^2 \theta \left ( \frac{\cos \theta }{\sin \theta}
\overline{\s}^{\theta}
- \partial_{\theta} \overline{\s}^{\theta} \right ).
\]
Integrating, there exists $\alpha(r)$ such that $\overline{\s}^{\theta}
=  2 \alpha(r) \sin \theta $. Letting $2 \Y(r,\theta):=\overline{\s}^{r}$
we obtain (\ref{vector}). Moreover, a direct computation gives
\begin{align*}
\lie_{\overline{\s}} g =
& - 2 \Y e^{\nu} \nu_{,r}  dt^2 + 4 e^{\lam} \left ( \Y_{,r} + \frac{1}{2} \Y \lam_{,r}\right ) dr^2
+ 4 \Q^2 \left ( \Y \frac{\Q_{,r}}{\Q} +  \alpha(r) \cos \theta \right )
\left ( d \theta^2 + \sin^2 \theta d\phi^2 \right ) \\
& + 4 \Q e^{\lam} \left ( \frac{\Y_{,\theta}}{\Q} + \Q e^{-\lam} \alpha_{,r}(r)
\sin \theta \right ) d r d \theta, 
\end{align*}
which inserted into (\ref{poundsstilde}) yields (\ref{var3}).

Case (C) is obviously a particular case of (B), so (\ref{vector})
and (\ref{var3}) hold. This already implies that $Z = C$. To show the rest
of (\ref{var4}) we use the fact that the
coefficients in (\ref{var3}) all vanish, except the $\{ t, \phi\}$. The
$dr^2$ component restricts $\Y(r,\theta)$ to be 
\begin{align*}
  \Y = e^{-\lambda/2} \Theta(\theta).
\end{align*}
for some function $\Theta(\theta)$.
The $d\theta^2$ component forces, after using that $\Q_{,r}$ is non-zero
on a dense set, that
\begin{align*}
  \alpha(r) = \frac{\alpha_0 \Q_{,r} e^{-\lambda/2}}{\Q}, \quad \quad
  \Y = -\alpha_0 e^{-\lambda/2} \cos \theta, \quad \quad \alpha_0 \in \mathbb{R}.
\end{align*}
The $dt^2$ then gives, using that $\nu_{,r}$ is non-zero almost everywhere,
$\alpha_0=0$ and $D=0$, and (\ref{var4}) follows.
\fin
\end{proof}

We have now the ingredients to analyse in detail the complete gauge freedom
that respects the form of the perturbation tensors achieved in our main
Theorem \ref{theo:main} assuming that the first order perturbation tensor
has the specific form given by \eqref{Kper} below.
As already said, this assumption turns out not to be restrictive
at all in the setting of \cite{PaperII} where
slowly rotating fluid balls with fixed central pressure are considered.

Even more,
the next proposition determines the most general gauge transformation that leaves invariant the form of a \emph{general} first order perturbation tensor in canonical form. Indeed, the gauge transformation law of second order perturbation tensors reduces to the first order one when $\Kper$ is identically zero  (see  (\ref{gaugeper})-(\ref{gaugeperper_bis})). Since the case $\Kper \equiv 0$ is covered by the proposition (by simply setting $\opert=0$)  it follows that the most general gauge transformation that leaves invariant the form of 
a first order perturbation $\Kper$ defined by the right-hand side of
(\ref{Kperper}) (with $\opert=0$) is given by (\ref{hg})-(\ref{Wtwog}) with corresponding gauge vector (\ref{gauge_second}).

\begin{proposition}[\textbf{Gauge freedom}]
  \label{prop:full_class_gauges}
  In the  setup of Theorem \ref{theo:main}
  we assume further that $\Q_{,r}(r)$ and $\nu_{,r}(r)$
  do not vanish identically on open sets and that the first
  and second order perturbation tensors take the form
\begin{align}
\Kper = &  -2  \opertbase(r,\theta)\Q^2(r) \sin^2 \theta dt d \phi.
\label{Kper} \\
\Kperper = &\left ( -4 e^{\nu(r)} h(r,\theta) +
2 \opertbase^2(r,\theta) \Q^2(r) \sin^2 \theta  \right ) dt^2
+ 4 e^{\lam(r)} m(r,\theta) dr^2 \nonumber \\
&+ 4  k(r,\theta)\Q^2(r) \left ( d \theta^2 + \sin^2 \theta
d \phi^2 \right )
+ 4  e^{\lam(r)} \partial_{\theta} f (r,\theta) \Q(r) dr d \theta\nonumber\\
&-2  \Wtwo (r,\theta)\Q^2(r)\sin^2\theta  dt d\phi.\label{Kperper}
\end{align}
Then a first order gauge vector $\sper$ 
preserves the form of $\Kper$ (i.e. there is $\opertbase^{g}$
such  that $\Kper^{g} := \Kper + \lie_{\sper} g$ is given by (\ref{Kper})
with $\opertbase \longrightarrow \opertbase^{g}$) if and only if, up to the
addition of a Killing vector
of the background,
\begin{align}
  \sper = C t \partial_{\phi}, \quad \quad C \in \mathbb{R},
  \quad \quad  \mbox{and then} \quad \quad
\opertbase^{g} = \opertbase - C.
\label{gauge_first}
\end{align}
For $\sper$ as in (\ref{gauge_first}), the second order
gauge vector $\sperper$ preserves the form of (\ref{Kperper})
if and only if 
\begin{align}
\sperper = A t \partial_t + B t \partial_{\phi}
+ 2 \Y(r,\theta) \partial_r + 2 \alpha(r) \sin \theta \partial_{\theta}
+ \zeta,
\quad A,B \in \mathbb{R}, \quad 
\zeta \mbox{ Killing vector of } g,
\label{gauge_second}
\end{align}
and $\Kperper^{g}$ takes the form 
(\ref{Kperper}) with the coefficients $h,m,k,f$ transformed to
\begin{align}
h^{g} &= h + \frac{1}{2} A  + \frac{1}{2} \Y  \nu_{,r}, \label{hg} \\
m^{g} &= m + \Y_{,r} + \frac{1}{2} \Y \lam_{,r}, \label{mg} \\
k^{g} &= k + \Y \frac{\Q_{,r}}{\Q} + \alpha(r) \cos \theta, \label{kg} \\
f^{g} &= f + \frac{\Y}{\Q} - \Q e^{-\lam} \alpha_{,r} \cos \theta + \beta(r), \label{fg}\\
\Wtwo {}^{g} &= \Wtwo -B, \label{Wtwog}
\end{align}
where the arbitrary function $\beta(r)$ arises because
$\Kperper^{g}$ only involves $\partial_{\theta} f^{g}$.
\end{proposition}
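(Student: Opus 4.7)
\emph{The plan is} to reduce the result to two successive applications of Proposition~\ref{trans}, using Lemma~\ref{Lem1} for the symmetry bookkeeping. I handle the first-order statement first and feed its conclusion into the second-order analysis via the rewritten gauge law \eqref{gaugeperper_bona}.

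\emph{First order.} Both $\Kper$ and $\Kper^g$ have the form \eqref{Kper}, so they are invariant under $\Al_0:=\mathrm{span}\{\xi,\eta\}$, and hence so is $\lie_\sper g=\Kper^g-\Kper$. Lemma~\ref{Lem1} therefore applies (the closed-orbit hypothesis for $\eta$ follows from Assumption~\AsHone). The difference $\Kper^g-\Kper=-2(\opertbase^g-\opertbase)\Q^2\sin^2\theta\,dt\,d\phi$ has only a $dt\,d\phi$ component, placing us exactly in hypothesis (C) of Proposition~\ref{trans}, whose non-degeneracy assumptions on $\Q_{,r}$ and $\nu_{,r}$ are already in force. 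Case~(C) then gives $\sper=Ct\,\eta+\zeta$ with $\zeta$ a background Killing and $\lie_\sper g=2C\,\Q^2\sin^2\theta\,dt\,d\phi$; matching coefficients yields $\opertbase^g=\opertbase-C$, which is \eqref{gauge_first}.

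\emph{Second order.} Fixing the representative $\sper=Ct\,\partial_\phi$ and using \eqref{gaugeperper_bona}, one has $\Kperper^g=\Kperper+\lie_\sperper g+\lie_\sper M$ with $M:=\Kper^g+\Kper=-2(2\opertbase-C)\Q^2\sin^2\theta\,dt\,d\phi$. A direct component computation (using $\sper^\phi=Ct$ as the only non-zero component and that $M$ is $\phi$-independent) shows that $\lie_\sper M$ contributes \emph{only} to the $dt^2$ slot, and in fact equals $-2C(2\opertbase-C)\Q^2\sin^2\theta\,dt^2$. Consequently $\lie_\sperper g=(\Kperper^g-\Kperper)-\lie_\sper M$ is invariant under $\Al_0$, has vanishing $\{t,\phi\}\times\{r,\theta\}$ cross-components, and its $\{\theta,\phi\}$-block is proportional to $\gsph$. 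These are precisely the hypotheses (i), (ii) and \eqref{thirdcase} of Proposition~\ref{trans}(B), whose conclusion delivers the representation \eqref{gauge_second} (with the $A,B$ of the statement playing the role of the $D,C$ in that case of Proposition~\ref{trans}).

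\emph{Extracting the transformation laws.} Matching the explicit formula \eqref{var3} for $\lie_\sperper g$, shifted by $\lie_\sper M$, against $\Kperper^g-\Kperper$ yields \eqref{hg}--\eqref{Wtwog} component by component: the $dt^2$ balance combines the $-2e^\nu(\Y\nu_{,r}+A)$ term of \eqref{var3} with the extra $\lie_\sper M$ contribution and with the quadratic shift $2((\opertbase^g)^2-\opertbase^2)\Q^2\sin^2\theta$ coming from $\opertbase^g=\opertbase-C$, collapsing cleanly into $h^g=h+A/2+\Y\nu_{,r}/2$; the $dt\,d\phi$ balance immediately gives $\Wtwo^g=\Wtwo-B$; the $dr^2$ and $\{\theta,\phi\}$ balances read off \eqref{mg} and \eqref{kg} with no further work; and the $dr\,d\theta$ balance reduces to $\partial_\theta(f^g-f)=\Y_{,\theta}/\Q+\Q e^{-\lam}\alpha_{,r}\sin\theta$, whose $\theta$-integration at fixed $r$ yields \eqref{fg}, with $\beta(r)$ appearing as the $\theta$-independent ``integration constant''. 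The only delicate point is verifying that $\lie_\sper M$ lands entirely in the $dt^2$ slot, since this is what preserves the hypotheses of Proposition~\ref{trans}(B) intact at second order; once that is secured, the remainder is pure coefficient matching.
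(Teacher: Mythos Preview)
Your proof is correct and follows essentially the same approach as the paper: reduce the first-order statement to Proposition~\ref{trans}(C), compute $\lie_{\sper}M$ explicitly to see it lands purely in the $dt^2$ slot (and cancels against the $(\opertbase^g)^2-\opertbase^2$ shift), then apply Proposition~\ref{trans}(B) at second order and read off the coefficient transformations. The only cosmetic difference is that you package the second-order gauge law via \eqref{gaugeperper_bona} with $M=\Kper^g+\Kper$, whereas the paper uses the equivalent expression $2\Kper^g-\lie_{\sper}g$ coming from \eqref{gaugeperper_bis}; the resulting tensor is the same and your identification of the $A,B\leftrightarrow D,C$ relabeling is correct.
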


\begin{remarkpro}
  This proposition determines the most general gauge transformation that leaves the \emph{form} of the perturbation tensors (\ref{Kper}) and (\ref{Kperper}). One could analyse also which restrictions on the gauge vectors are required to ensure that the gauge transformed tensors  also satisfy the regularity and boundedness properties  of Theorem \ref{theo:main}. This is however not needed for the applications we have in mind.
\end{remarkpro}

\begin{proof}
The first order gauge
transformation (\ref{gaugeper}) imposes
 $\lie_{\sper} g = \Kper^g - \Kper$, so we may apply 
part (C) of Proposition \ref{trans} to $\s = \sper$ and (\ref{gauge_first})
follows immediately.
For the second part, we let $\sper :=  C t \partial_{\phi}$. The second order
  gauge transformation (\ref{gaugeperper_bis}) takes the form
  \begin{align*}
    \Kperper^{g} = \Kperper + \lie_{\sperper} g
    + \lie_{\sper} \left ( 2 \Kper^g - \lie_{\sper} g \right ).
  \end{align*}
  The tensor in parenthesis is $
    2 \Kper^g - \lie_{\sper} g
    = - 2 \Q^2 \left  ( 2 \opertbase - C \right ) \sin^2 \theta dt d\phi$
and its  Lie derivative along $Ct \partial_{\phi}$ is immediately computed to
  be
  \begin{align*}
    \lie_{C t \partial_{\phi}} \left (      2 \Kper^g - \lie_{\sper} g
\right ) = -2 \Q^2 C \left ( 2 \opertbase - C \right ) \sin^2 \theta dt^2.
\end{align*}
Thus the equation that $\sperper$ must satisfy is
\begin{align}
\lie_{\sperper} g  = & \Kperper^g - \Kperper + 2 \Q^2 C \left (
2 \opertbase - C \right ) \sin^2 \theta dt^2  \nonumber \\
= & \left ( -4 e^{\nu(r)} (h^g - h)  +
2 \Q^2 \sin^2 \theta \left ( 
\opertbase^g{}^2 - \opertbase^2 + C ( 2 \opertbase -C)
\right ) \right ) dt^2
+ 4 e^{\lam(r)} (m^g- m)  dr^2  \nonumber \\
&+ 4 \Q^2(r) (k^g- k) \left ( d \theta^2 + \sin^2 \theta
d \phi^2 \right )
+ 4 \Q(r) e^{\lam(r)} \partial_{\theta} (f^g - f) dr d \theta 
\nonumber \\
&-2 \Q^2(r) (\Wtwo^g - \Wtwo) \sin^2\theta  dt d\phi  \nonumber \\
= & -4 e^{\nu(r)} (h^g - h)  dt^2 + 4 e^{\lam(r)} (m^g- m)  dr^2  + 4 \Q^2(r) (k^g- k) \left ( d \theta^2 + \sin^2 \theta
d \phi^2 \right ) \nonumber \\
& + 4 \Q(r) e^{\lam(r)} \partial_{\theta} (f^g - f) dr d \theta -2 \Q^2(r) (\Wtwo^g - \Wtwo) \sin^2\theta  dt d\phi, \label{Diff}
\end{align} 
where in the third equality we inserted (\ref{Kperper})
and the corresponding expression for $\Kperper^g$
and the cancellations in the last equality follow from
$\opertbase^g{} = \opertbase - C$.
Thus, we may apply part (B) of Proposition (\ref{trans})
with $\s = \sperper$.
Expression (\ref{gauge_second}) follows directly from (\ref{sper2}) and 
(\ref{vector}), after renaming  $C \rightarrow A$ and $D \rightarrow B$
while
(\ref{hg})-(\ref{Wtwog}) are obtained by comparing (\ref{var3}) with
(\ref{Diff})  after the same redefinition for $C$ and $D$.\fin
\end{proof}

\section*{Acknowledgements}

We thank David Brizuela for valuable discussions.
M.M. acknowledges financial support under the projects
PGC2018-096038-B-I00
(Spanish Ministerio de Ciencia, Innovaci\'on y Universidades and FEDER)
and SA083P17 (JCyL).
B.R and R.V. acknowledge financial support under the projects
FIS2017-85076-P
(Spanish Ministerio de Ciencia, Innovaci\'on y Universidades and FEDER)
and IT-956-16 (Basque Government). B.R. was supported by the post-doctoral grant POS-2016-1-0075 (Basque Government). Many calculations have been performed using the free PSL version of the REDUCE computer algebra system.

\appendix

\section{Differentiability of radially symmetric functions}
\label{app:diff_origin}
In this Appendix we analyse the relationship between
the differentiability properties of a radially symmetric
function and of its trace (see below). We expect several of
these results to be known, but they do not seem to be
easily accessible in the literature, at least in the specific
form that we need. The starting point is, however, well-known (see Lemma 3.1 in \cite{Andreasson}). We include a proof for completeness.

We set $\Bp \subset \mathbb{R}^p$ to be the open ball of radius $\rad>0$ centered
at the origin.

\begin{lemma}
Let $q : \Bp \rightarrow \mathbb{R}$ be radially symmetric,
i.e. such that there exists $\trace{q}:  [0,\rad) \rightarrow \mathbb{R}$
(the trace of $q$) with $q(x)= \trace{q}(|x|) $. 
Then $q$ is a $C^n(\Bp)$ ($n\geq 0$) function
if and only if $\trace{q}$ is $C^n([0,\rad))$ (i.e.  up to the inner
    boundary) and all its odd derivatives up to order $n$ vanish at zero. Equivalently, if and only if 
    \begin{align}
      \trace{q}(r) = \Qpol_n(r^2)  + \Phi^{(n)}(r), \label{polinom}
      \end{align}
 where $\Qpol_n$ is a polynomial
of degree 
$[\frac{n}{2}]$ 
and $\Phi^{(n)}$ is $C^n([0,\rad))$ and satisfies
$\Phi^{(n)}(r) = o (r^n)$.
\label{origin}
\end{lemma}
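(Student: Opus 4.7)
My plan is to prove both equivalences via Taylor's theorem at the origin combined with scaling estimates for derivatives of $|x|$.

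I would first dispatch the equivalence between the two characterizations (odd derivatives up to order $n$ vanishing at $0$ versus the decomposition \eqref{polinom}) using Taylor's theorem with Peano remainder. If $\mathring{q} \in C^n([0,a))$ has vanishing odd derivatives at $0$ up to order $n$, then $\mathring{q}(r) = \sum_{j=0}^{[n/2]} \frac{\mathring{q}^{(2j)}(0)}{(2j)!} r^{2j} + o(r^n)$, and the remainder, being a difference of $C^n$ functions, is itself $C^n$. Conversely, if $\mathring{q}$ admits the decomposition with $\Phi^{(n)} \in C^n$ and $o(r^n)$, a second application of Taylor forces all derivatives of $\Phi^{(n)}$ at $0$ up to order $n$ to vanish, so the odd derivatives of $\mathring{q}$ at $0$ coincide with those of the polynomial $\mathcal{P}_n(r^2)$, which are zero.

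For the main equivalence, the $(\Rightarrow)$ direction is straightforward: if $q \in C^n(B^p)$, then the composition $t \mapsto q(t,0,\ldots,0)$ with the smooth line $t \in (-a,a)$ produces a $C^n$ even extension of $\mathring{q}$, and evenness forces the odd derivatives at the origin to vanish.

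The $(\Leftarrow)$ direction is the main obstacle and requires real work. After writing $q(x) = \mathcal{P}_n(|x|^2) + \Phi^{(n)}(|x|)$, the polynomial piece is manifestly smooth, so the problem reduces to showing that if $\Phi \in C^n([0,a))$ is $o(r^n)$, then $\Phi(|x|) \in C^n(B^p)$. My approach is to compute partial derivatives of $\Phi(|x|)$ away from the origin via the chain rule. Each derivative produces a finite sum of the form $\partial^\alpha[\Phi(|x|)] = \sum_{j=1}^{|\alpha|} \Phi^{(j)}(|x|)\, R_j^\alpha(x)$, where $R_j^\alpha$ is a rational function built from derivatives of $|x|$ that satisfies the homogeneity bound $|R_j^\alpha(x)| \leq C|x|^{j-|\alpha|}$. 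By the first part of the argument, all derivatives of $\Phi$ at the origin up to order $n$ vanish, so $\Phi^{(j)}(r) = o(r^{n-j})$ for $j<n$ and $\Phi^{(n)}(r) = o(1)$. Combining these estimates yields $|\partial^\alpha[\Phi(|x|)]| = o(|x|^{n-|\alpha|})$ for $|\alpha| \leq n$, so each such derivative extends continuously to $0$ with value $0$.

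The delicate point that I would handle carefully is promoting this continuous extension to the genuine existence of the partial derivative at the origin, so that the extended functions are actually the classical partial derivatives of $\Phi(|x|)$. This I would carry out by induction on $|\alpha|$: assuming $\partial^{\beta}[\Phi(|x|)]$ exists on $B^p$ and is continuous with the value zero at the origin for $|\beta| < |\alpha|$, a one-dimensional mean value argument applied along segments to the origin combined with the $o(|x|^{n-|\alpha|})$ estimate shows that the next partial derivative exists at the origin with value zero and matches the continuous extension. This completes the proof that $\Phi(|x|) \in C^n(B^p)$, hence $q \in C^n(B^p)$.
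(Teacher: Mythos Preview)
Your proposal is correct and follows essentially the same route as the paper: the $(\Rightarrow)$ direction by restricting to a coordinate line and using evenness, and the equivalence with the decomposition \eqref{polinom} via Taylor's theorem. The paper dispatches the converse with the single phrase ``the converse follows by a simple computation,'' whereas you supply that computation in full (chain-rule expansion of $\partial^\alpha[\Phi(|x|)]$, the homogeneity bound $|R_j^\alpha(x)|\le C|x|^{j-|\alpha|}$, and the inductive promotion from continuous extension to genuine differentiability at the origin); your details are correct and in fact reproduce the mechanism used elsewhere in the paper (e.g.\ in the proof of Lemma~\ref{res:cart_product}).
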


\begin{proof}
Evaluating on the line $x = (\x,0,\cdots)$ we have $q(x_1,0,\cdots) = \trace{q} (\mbox{abs}(x_1))$. The left-hand side is a $C^n$ even function on $(-\rad,\rad)$, so all its odd derivatives (up to order $n$) vanish at zero. It follows from the equality that $\trace{q}$ is $C^n([0,\rad))$ up to boundary and that all odd derivatives (up to order $n$) vanish at zero. A Taylor expansion then gives (\ref{polinom}). The converse follows by a simple computation.
\end{proof}

\begin{remarklem}
This result can be applied to functions $q(x^i)$ which are $C^m$ in 
all variables and radially symmetric only in a subset of the coordinates.
A similar remark will apply for the remaining results in this Appendix.
\end{remarklem}

Lemma  \ref{origin} will be used in several ways. Our first application 
is the following statement.
\begin{lemma}
\label{Analy}
Consider the space $\mathbb{R}^2 \times \mathbb{R}^q$ ($q \geq 0$) coordinated
by $\{ x_1,x_2,\z\}$ and let $W$ be an open and connected neighbourhood of the axis $\axis := \{ x_1=x_2=0\}$,
radially symmetric in $\{x_1,x_2\}$.
Let $q : W \setminus \axis \rightarrow \mathbb{R}$ be radially symmetric
in $\{ x_1,x_2\}$ and assume that 
$x_1 q$ and $x_2 q$ extend to $C^n(W)$ functions, $n \geq 1$. Then $q$ extends to a $C^{n-1}(W)$ function.
\end{lemma}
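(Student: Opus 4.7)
The strategy is to reduce everything to a one-variable Hadamard-type argument along a coordinate line and then feed the result back into the radially symmetric structure via Lemma \ref{origin}. Write $\trace{q}(\rho,z)$ for the (a priori only defined for $\rho>0$) trace of $q$, and set $f_1:=x_1q\in C^n(W)$. The slice $\{x_2=0\}\cap W$ is an open neighbourhood of $\axis$ because $W$ is radially symmetric in $\{x_1,x_2\}$, so restriction to this slice makes sense on an open set containing the axis. It yields a $C^n$ function
\[
F(x_1,z):=f_1(x_1,0,z)=x_1\,\trace{q}(|x_1|,z),
\]
which is manifestly odd in $x_1$; in particular $F(0,z)=0$.

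Next I would apply the parametric Hadamard lemma to write $F(x_1,z)=x_1\,H(x_1,z)$ where
\[
H(x_1,z)=\int_0^1(\partial_{x_1}F)(tx_1,z)\,dt
\]
is $C^{n-1}$ jointly in its arguments. The oddness of $F$ in $x_1$ forces $H$ to be \emph{even} in $x_1$, and for $x_1>0$ one reads $H(x_1,z)=\trace{q}(x_1,z)$. This supplies a $C^{n-1}$ extension of $\trace{q}$ across $\rho=0$.

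To finish, since $H$ is $C^{n-1}$ and even in its first argument, all its odd $x_1$-derivatives up to order $n-1$ vanish at $x_1=0$. Invoking Lemma \ref{origin} (with $p=2$, $n$ replaced by $n-1$, and using the remark that allows inert parameters $z$) in its ``trace-to-function'' direction, the map $(x_1,x_2,z)\mapsto H(\sqrt{x_1^2+x_2^2},z)$ defines a $C^{n-1}(W)$ function; on $W\setminus\axis$ it agrees with $q$, and is therefore the desired extension. The one delicate point -- and the source of the single derivative lost -- is the Hadamard division $F\mapsto F/x_1$; the subsequent appeal to Lemma \ref{origin} is purely algebraic and loses no further regularity. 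Note also that $f_2$ is never used: radial symmetry of $q$ makes $\trace{q}$ entirely determined by the restriction of $f_1$ to a single radial line, so $f_2$ is redundant.
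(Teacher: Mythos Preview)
Your proof is correct and takes a genuinely different, more direct route than the paper's.

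The paper argues as follows: it forms the radially symmetric $C^n$ function $Q:=x_1q_1+x_2q_2=|x|^2q$, applies Lemma~\ref{origin} to $Q$ to obtain an expansion $r^2\trace{q}=r^2\Qpol_n(r^2)+\Phi^{(n)}$, then uses the $C^n$-regularity of $r\trace{q}$ (obtained, as you do, from $q_1|_{x_2=0}$) to upgrade $\Phi^{(n)}/r$ and arrive at $\trace{q}=\Qpol_n(r^2)+\alpha(z)r^{n-1}+\widehat\Phi^{(n-1)}$. For odd $n$ this already fits Lemma~\ref{origin}; for even $n$ the obstruction $\alpha(z)$ must be killed, and the paper does so by computing $(\partial_{x_1})^{n-1}q_2$ and showing that a nonzero $\alpha$ would make this derivative path-dependent at the origin, contradicting $q_2\in C^n$. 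Thus the paper genuinely uses $q_2$.

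Your argument sidesteps all of this: the Hadamard division of the \emph{odd} $C^n$ function $F=f_1|_{x_2=0}$ by $x_1$ yields an \emph{even} $C^{n-1}$ function in one shot, and evenness is exactly the hypothesis needed to push through Lemma~\ref{origin}. The parity bookkeeping that the paper handles by a separate case analysis (and an appeal to $q_2$) is absorbed automatically. Your observation that the hypothesis on $x_2q$ is redundant is therefore correct; the paper's route obscures this because it chooses to build the radially symmetric combination $x_1q_1+x_2q_2$ rather than work on a single radial line. The trade-off is that the paper's approach stays within the toolkit already developed (Lemma~\ref{origin} plus the one-dimensional case of Lemma~\ref{res:lemma_nkf}), whereas yours imports the parametric Hadamard lemma --- but that is an entirely standard device and the resulting argument is shorter and conceptually cleaner.
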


\begin{remarklem}
The result on the differentiability of $q$ is sharp. Consider the function $|x|^{\alpha}$
with $1< \alpha< 2$. If is easy to check that 
$x_1 |x|^{\alpha}$ and $x_2 |x|^{\alpha}$ are $C^2(\mathbb{R}^2)$ while $|x|^{\alpha}$
  (which is $C^1(\mathbb{R}^2)$ in agreement with the lemma) is not
  $C^2(\mathbb{R}^2)$.
\end{remarklem}

\begin{proof}
Set $x= (x_1,x_2)$ and define $\trace{q}(r,\z)$ by $q(x,\z)= \trace{q}(|x|,\z)$.
Let $q_1 := x_1 q$, $q_2:= x_2 q$. By assumption $q_1$ is $C^n(W)$,
in particular when restricted to the line $\{ x_1 \geq 0, x_2=0\}$. Moreover,
\begin{align*}
q_1 (x_1,x_2=0,\z)= x_1 \trace{q}(x_1,\z).
\end{align*}
Since the left-hand side is $C^n$ up to and including the boundary
$x_1=0$ the same holds for $r \trace{q}(r,\z)$. 

On the other hand, the function $Q:= x_1 q_1 + x_2 q_2 = |x|^2 q$
is radially symmetric (in $\{x_1,x_2\}$) and $C^n(W)$. Moreover
$Q(x,\z)= r^2 \trace{q}(r,\z)|_{r=|x|}$. Given that $r^2\trace{q}(r,\z)$ is
$C^{n}$ up to boundary, Lemma \ref{origin} implies that
(we also use the fact that, by construction, $Q$ vanishes at the origin)
\begin{align}
r^2 \trace{q}  (r,\z) = r^2 \Qpol_n(r^2) +   \Phi^{(n)} (r,\z)
\label{eq:r2Phi}
\end{align}
where $\Qpol_n(u)$ is identically zero for $n=1$ and a polynomial in 
$u$ of degree $[\frac{n}{2}-1]$ when $n \geq 2$
(with coefficients  that are $C^n$ functions of $\z$)
and $\Phi^{(n)}(r,\z)$ is $C^n$ up to boundary 
and $o (r^n)$.
Now, $\frac{\Phi^{(n)}}{r} = r \trace{q} (r,\z) - r\Qpol_n$.
Since $r \trace{q}(r,\z)$ is $C^{n}$  it follows that
 $\frac{\Phi^{(n)}}{r}$ is $C^{n}$ and $o(r^{n-1})$
and therefore admits a Taylor expansion of the form
\[
\frac{\Phi^{(n)}(r,\z)}{r} =\alpha(\z)r^n+ \widehat\Phi^{(n)}(r,\z),
\]
where the remainder $\widehat\Phi^{(n)}(r,\z)$ is $C^n$ and $o(r^n)$.
Inserting this back into (\ref{eq:r2Phi}) yields
$\trace{q} (r,\z) = \Qpol_n(r^2)+\alpha(\z) r^{n-1} + \widehat\Phi^{(n)}/r$.
The last term is $C^{n-1}$ and $o(r^{n-1})$ by
item \emph{(iii)} of Lemma \ref{res:lemma_nkf} applied to the one-dimensional case,
\footnote{We stress that all the results in 
    Appendix \ref{app:iota} up to and including Lemma  \ref{res:lemma_nkf}
    are independent of Appendix \ref{app:diff_origin}. Proving the one-dimensional result here
    to avoid quoting a result from Appendix \ref{app:iota}
  would be redundant.}
and we
have found, after renaming $\widehat{\Phi}^{(n-1)}(r,\z)\defi \widehat\Phi^{(n)}/r$,
\begin{align*}
\trace{q} (r,\z) =  \Qpol_n(r^2)+\alpha(\z) r^{n-1} +\widehat{\Phi}^{(n-1)}(r,\z).
\end{align*}
If $n$ is odd, we can apply Lemma \ref{origin} 
to conclude that $q(x,\z)= \trace{q}(|x|,\z)$ is $C^{n-1}(W)$.
If $n$ is even, we still need to show that $\alpha(\z)=0$ before we can apply
that lemma. Let $k$ be defined by $n=2k$ (note that  $k \geq 1$).
The polynomial $\Qpol_n(r^2)$ plays no role
in the argument, so it can be set to zero without loss of generality
(we redefine $q(x) - \Qpol_n(|x|^2)$ as $q(x)$). 
Let us compute
\begin{align}
& \left ( \partial_\x \right )^{2k-1} (q_2) 
= \left ( \partial_\x \right )^{2k-1} (x_2 q) 
= x_2 \alpha(\z) \left ( \partial_\x \right )^{2k-1} |x|^{2k -1}
+ x_2 (\partial_\x)^{2k-1} (\widehat{\Phi}^{(2k-1)}(|x|,\z)).
\label{eq:q_2}
\end{align}
The term
$(\partial_\x)^{2k-1} (\widehat{\Phi}^{(2k-1)})$ is convergent as $|x|
\rightarrow 0$ (this claim is a particular case of Lemma \ref{res:lemma_di} in
  Appendix \ref{app:iota}). However, the first term does not converge at zero unless $\alpha(\z)=0$. This can be shown explicitly as follows. A simple induction argument based on $\partial_\x |x| = \x/|x|$ shows that, for all $l \leq  2k-1$, $k \geq 1$,  
  \begin{align*}
    \left ( \partial_\x \right )^l |x|^{2k-1} = \sum_{i=0}^l |x|^{2k-1-l-i} \x^i a_i,
    \qquad a_i\in\mathbb{R},\quad \quad a_{l} \neq 0.
  \end{align*}
  In particular, $(\partial_\x)^{2k-1} |x|^{2k-1}$ evaluated on the path
$x(r) = \beta r$ with
  $\beta= (\beta_1, \cdots, \beta_p)$ a constant unit vector yields
  $\lim_{r \rightarrow 0} \left ( \partial_\x \right )^{2k-1} |x|^{2k-1} = \sum_{i=0}^{2k-1} \beta_1^i a_i$, and the limit depends on the path.
Since, by assumption $q_2$ is $C^{2k}(W)$ (in particular $C^{2k-1}(W)$ as well) the only possibility
that the right hand side in \eqref{eq:q_2} has
a limit  as $|x| \rightarrow 0$ is  $\alpha(\z)=0$ and we may apply Lemma \ref{origin}
to $q(x,\z)$ to conclude the  proof. \fin
\end{proof}

\begin{lemma}
Let $f : \Bp \rightarrow \mathbb{R}$
be radially symmetric.
Assume that $f$ is $C^n(\Bp)$ and $o(|x|^n)$ with $n \geq 0$, then
the function
$\frac{x_i}{|x|} f(x)$ ($i=1,\cdots,p$)  
is also $C^n(\Bp)$ and $o(|x|^n)$.
\label{res:cart_product}
\end{lemma}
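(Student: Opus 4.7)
The proof rests on Lemma \ref{origin} together with a careful analysis at the origin. First, applying Lemma \ref{origin} to the radially symmetric $C^n$ function $f$, we have the decomposition $\trace{f}(r) = \Qpol_n(r^2) + \Phi^{(n)}(r)$ with $\Qpol_n$ a polynomial of degree $[n/2]$ in $r^2$ and $\Phi^{(n)}(r) = o(r^n)$. Combined with the hypothesis $f = o(|x|^n)$, this forces $\Qpol_n \equiv 0$, so the trace $\Phi(r) := \trace{f}(r)$ is $C^n([0,a))$ with $\Phi(r) = o(r^n)$; a Taylor expansion then gives $\Phi^{(j)}(0) = 0$ for $j = 0, \ldots, n$. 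The decay estimate for $\tilde g(x) := (x_i/|x|) f(x)$ is immediate: $|\tilde g| \leq |f| = o(|x|^n)$.

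For the $C^n$ regularity, I would argue by induction on $n$. The base case $n = 0$ is direct since $\tilde g$ extends continuously to the origin with value zero. For the inductive step, iterated Hadamard applied to $\Phi$ gives $\tilde g(x) = x_i\, \psi(|x|)$, where $\psi(r) := \Phi(r)/r$ is $C^{n-1}([0,a))$ with $\psi^{(k)}(0) = 0$ for $k = 0, \ldots, n-1$. On $\Bp \setminus \{0\}$ one computes
\[
\partial_j \tilde g(x) = \delta_{ij}\,\psi(|x|) + \frac{x_i x_j}{|x|}\,\psi'(|x|).
\]
The first term is $C^{n-1}$ on $\Bp$ by Lemma \ref{origin} applied to $\psi$. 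For the second term, the identity $|x|\,\psi'(|x|) = \Phi'(|x|) - \psi(|x|)$ rewrites it as $(x_j/|x|)\bigl[(x_i/|x|)\Phi'(|x|) - (x_i/|x|)\psi(|x|)\bigr]$, where each bracket is of the same form as $\tilde g$ but with $f$ replaced by one of the radially symmetric $C^{n-1}$ functions $\Phi'(|x|)$ or $\psi(|x|)$, both of which are $o(|x|^{n-1})$; the inductive hypothesis then renders each bracket $C^{n-1}$ on $\Bp$.

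The main obstacle will be showing that multiplying these $C^{n-1}$ brackets by the bounded but discontinuous angular factor $x_j/|x|$ still yields a $C^{n-1}$ function on all of $\Bp$ and not merely on $\Bp \setminus \{0\}$. This requires exploiting the high-order vanishing of the brackets at the origin (inherited from the $o(|x|^{n-1})$ decay) together with an internal bookkeeping of radial versus angular orders, so that the products can be reorganized as $x_j$ times a radially extendable function to which Lemma \ref{origin} applies. A clean execution of this step---presumably by a further application of the inductive scheme, or equivalently by manipulating the integral form of the Taylor remainder of $\Phi$---is the technical heart of the argument.
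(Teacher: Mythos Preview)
Your proposal is incomplete, and you correctly identify where: after applying the inductive hypothesis once to obtain that the bracket $(x_i/|x|)\Phi'(|x|)$ is $C^{n-1}$ and $o(|x|^{n-1})$, this bracket is \emph{no longer radially symmetric}, so the inductive hypothesis as stated cannot be applied again to absorb the outer factor $x_j/|x|$. The induction does not close.

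The natural repair is to strengthen the inductive statement to: for every multi-index $\beta$, the function $\dfrac{x^{\beta}}{|x|^{|\beta|}}\,f(x)$ is $C^n(\Bp)$ and $o(|x|^n)$. Differentiating such a product in $x_j$ produces only terms of the same type, namely an angular monomial $x^{\gamma}/|x|^{|\gamma|}$ times a radially symmetric $C^{n-1}$ function which is $o(|x|^{n-1})$ (either $\Phi'(|x|)$ or $\Phi(|x|)/|x|$). The strengthened hypothesis then applies directly and the induction closes.

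The paper takes a more direct route that bypasses this issue entirely. Rather than inducting on $n$, it establishes by a simple structural induction on $|\alpha|$ the closed form
\[
\partial^{\alpha}\!\left(\frac{x_i}{|x|}\,f(x)\right)
= \sum_{a=0}^{|\alpha|}\frac{\trace{f}^{(a)}(|x|)}{|x|^{|\alpha|-a}}
\left(\sum_{\beta\leq \alpha+1_i} b^{\alpha}_{a\,\beta}\,\frac{x^{\beta}}{|x|^{|\beta|}}\right),
\]
and then observes that $\trace{f}^{(a)}$ is $o(r^{n-a})$ while the angular factors are bounded, so each term is $o(|x|^{n-|\alpha|})$ and in particular tends to zero at the origin. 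This gives $C^n$ in one stroke. In effect, the paper keeps track of all angular monomials from the outset, which is exactly the bookkeeping your induction would need once the hypothesis is strengthened as above.
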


\begin{proof} 
  That $\frac{x_i}{|x|} f$ is $o(|x|^n)$ follows immediately
from the boundedness of $x_i/|x|$ near the origin.
Let $\trace{f}: [0,\rad) \rightarrow \mathbb{R}$ be the trace of $f$.
By construction $\trace{f}(r)$ is $C^n([0,\rad))$ and $o(r^n)$.
 
Let $\alpha = (\alpha_1, \alpha_2, \cdots ,\alpha_p)$ , $\alpha_j \in
\mathbb{N}$ and use the natural notion of (partial) order $\beta \leq \alpha$
iff all $\alpha_i \leq \beta_i$.
We use the multiindex notation
$\partial^{\alpha} := \frac{\partial^{\alpha_1}}{\partial (x_1){}^{\alpha_1}}
\cdots \frac{\partial^{\alpha_p}}{\partial (x_p){}^{\alpha_p}}$,
and $x^{\alpha} := x_1^{\alpha_1} \cdots x_p^{\alpha_p}$.
We write, as usual,  $|\alpha| = \sum_{j=1}^p \alpha_j$. 
Define the $i$-th 1-addition over $\alpha$
by $\alpha+1_i := (\alpha_1,\cdots, \alpha_i+1, \cdots, \alpha_m)$.
A simple
induction argument (with induction parameter $|\alpha|$, note that
$|\alpha+1_i|=|\alpha|+1$ for all $i$)
and the fact that, when acting
on any radially symmetric functions, 
$\partial_j f = \frac{x^j}{r} \left . \frac{d\trace{f}(r)}{dr} \right |_{r=|x|}$ implies that,
as long as $|\alpha| \leq n$,
\begin{align}
\partial^{\alpha} \left ( \frac{x^i}{|x|} f(x) \right )
= \sum_{a=0}^{|\alpha|} 
\frac{\trace{f}^{(a)}(|x|)}{|x|^{|\alpha|-a}}
\left ( \sum_{\beta \leq \alpha+1_i} b^{\alpha}_{a \, \beta} \frac{x^{\beta}}{|x|^{|\beta|}}
\right ),
\label{ders}
\end {align}
where $b^{\alpha}_{a \, \beta}$  are constants (which may vanish) and
$\trace{f}^{(a)}$ denotes the $a$-th derivative of $\trace{f}$. 
The properties of $\trace{f}$ imply that $\trace{f}^{(a)}$ is $C^{n-a}([0,\rad))$ and
$o(r^{n-a})$. Thus 
$\frac{\trace{f}^{(a)}(|x|)}{|x|^{|\alpha|-a}}$ is
$o(|x|^{n-|\alpha|})$, hence $o(1)$ since $|\alpha| \leq n$. Moreover
$x^{\beta}/|x|^{|\beta|}$ is bounded in a neighbourhood of the origin. We conclude
from (\ref{ders})  that 
$\partial^{\alpha} \left ( \frac{x^i}{|x|} f(x) \right )$ converges to 
zero at the origin, which proves
the lemma.\fin
\end{proof}

\section{Existence and regularity of a singular differential equation on spheres}
\label{app:iota}

The fundamental aim of this Appendix is to establish Lemma \ref{res:lemma_oaxial_equation} below.
In arbitrary dimension $p \geq 1$ 
let $\Bp \subset \mathbb{R}^p$ be an open
ball centered at 
the origin $\{ 0_p\}$.
Consider a natural number (possibly zero) $q$ and let $V^q \subset \mathbb{R}^q$
be an open connected neighbourhood 
in $\mathbb{R}^q$. Define also
$V:=\Bp\times V^q\subset \mathbb{R}^{p+q}$, where we make the usual
identification of $\mathbb{R}^p\times \mathbb{R}^q$ and
$\mathbb{R}^{p+q}$. Letting $\pi := \mathbb{R}^{p+q} \longrightarrow \mathbb{R}^p$
be the projection into the first factor, we set
$\normx := | \pi(x)|_{\mathbb{R}^p}$, where
$| \cdot |_{\mathbb{R}^p}$ is the standard euclidean norm in $\mathbb{R}^p$.
We introduce the axis $\axis\defi \{ 0_p \} \times V^q$ and note that, also,
$\axis=\{x\in V;\normx=0\}$.

The full set of coordinates in $V$ will be denoted by $x_\mu$, $\mu=1,\cdots, p+q$,
while  $a=1,\cdots, p$ coordinate the first $\mathbb{R}^{p}$ factor.
We will use a special name $u$ to refer to points in the first factor $\mathbb{R}^{p}$
and $w$ to refer to points in the second factor $\mathbb{R}^{q}$,
so that a point in $V$ reads $x=(u,w)$. In particular $(0_p,w)$ corresponds to a point in $\axis$.
Multiindices  (see proof of Lemma \ref{res:cart_product})
will include the following  special notation to refer to the part related to $\mathbb{R}^{p}$:
any multiindex
$\alpha=(\alpha_1,\cdots,\alpha_{p+q})$ will be separated as
$\alpha=\arho \oplus\alpha_w$ where $\arho=(\alpha_1,\cdots,\alpha_{p})$
is a multiindex in $\mathbb{R}^{p}$ and $\alpha_w$ in $\mathbb{R}^{q}$. Observe that
$|\alpha|=|\arho|+|\alpha_w|$.

Following the splitting
of $V$ we  extend
the usual little-o notation in terms of a limit on the axis $\axis$:
For any positive function $g$ defined on  $V \setminus \axis$ we set
(observe that the notation for ${\bf o}$ is in boldface)
$$f\in {\bf o}(g)\iff \lim_{\normx\to 0}f g^{-1}=0.$$

We need a simple result concerning Taylor expansions of $C^l$
functions with  ${\bf o}(\normrho{x}^l)$ behaviour.  Let us first recall that 
given a $C^l$ ($l \geq 1$) function $f:V\to \mathbb{R}$ the expansion in $u$ in the $\mathbb{R}^p$ factor around $\zerorho$ provides\footnote{As usual we set  $\arho!=\alpha_1!\cdots \alpha_{p}!$.}
\begin{equation}
  f(x)=\sum_{|\arho|\leq l}\frac{1}{\arho!}\partial^{\arho}f(\zerorho,w) x^{\arho}+\sum_{|\arho|= l}Q_{\arho}(x) x^{\arho},
  \label{TaylorExp}
\end{equation}
where the remainder
\[
Q_{\arho}(x)=\frac{|\arho|}{\arho!}\int_0^1(1-t)^{|\arho|-1}\partial^{\arho} f(tu,w) dt
-\frac{1}{\arho!}\partial^{\arho} f(\zerorho,w)
\]
is $C^0(V)$ and satisfies $Q_{\arho}(\zerorho,w)=0$.
\begin{lemma}[Taylor]
\label{lemma:taylor}
Let $f:V\to \mathbb{R}$ be $C^{l}$ and ${\bf o}(\normrho{x}^l)$
with $l \geq 1$. Then $\partial^{\alpha} f(a)=0$
for all $a\in\axis$ and $|\alpha|\leq l$. Consequently, in particular,
\[
f(x)=\sum_{|\arho|= l}R_{\arho}(x) x^{\arho},
\]
where
\[
R_{\arho}(x)=\frac{|\arho|}{\arho!}\int_0^1(1-t)^{|\arho|-1}\partial^{\arho} f(tu,w) dt
\]
is  $C^0(V)$ and $R_{\arho}(a)=0$ for all  $a\in\axis$.
\end{lemma}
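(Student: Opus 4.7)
The plan has two main stages: first, establish that every partial derivative $\partial^{\alpha}f$ of order $|\alpha|\leq l$ vanishes at every point of the axis; second, feed this information back into the Taylor expansion (\ref{TaylorExp}) to obtain the integral representation with the polynomial part absent.

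For the first stage I would proceed in two substeps. The key step is to show that the ``pure'' $u$-derivatives vanish, namely $\partial^{\arho}f(0_p,w)=0$ for every $w\in V^q$ and every $|\arho|\leq l$. Fixing $w$ and dividing (\ref{TaylorExp}) by $\normx^l$ gives
\[
\frac{f(x)}{\normx^{l}}= \sum_{|\arho|\leq l}\frac{1}{\arho!}\partial^{\arho}f(\zerorho,w)\,\frac{u^{\arho}}{\normx^{l}}+\sum_{|\arho|=l}Q_{\arho}(x)\,\frac{u^{\arho}}{\normx^{l}}.
\]
As $\normx\to 0$ with $w$ fixed, the left-hand side tends to zero by hypothesis; in the remainder sum the factor $u^{\arho}/\normx^l$ is bounded (since $|\arho|=l$) and $Q_{\arho}(x)\to Q_{\arho}(0_p,w)=0$, so that sum also tends to zero. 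Hence the polynomial $P(u):=\sum_{|\arho|\leq l}\frac{1}{\arho!}\partial^{\arho}f(0_p,w)\,u^{\arho}$ satisfies $P(u)={\bf o}(\normx^l)$. A standard scaling argument, restricting to rays $u=r\hat e$ with $\hat e$ a unit vector in $\mathbb{R}^p$ and writing $P(r\hat e)=\sum_{k=0}^{l}r^k P_k(\hat e)$ with $P_k$ homogeneous of degree $k$, then forces $P_0=0$ first (to cancel the $r^{-l}$ divergence), then $P_1\equiv 0$ (since $P_1(\hat e)$ must vanish for every direction), and so on up to $P_l\equiv 0$. This yields $\partial^{\arho}f(0_p,w)=0$ for all $|\arho|\leq l$. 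For the second substep I would then extend this to mixed derivatives by differentiating the identity $\partial^{\arho}f(0_p,w)=0$ along the $w$ variables: since $f\in C^{l}(V)$, the function $\partial^{\arho}f$ is $C^{l-|\arho|}$, so we may apply $\partial^{\alpha_w}$ for any $|\alpha_w|\leq l-|\arho|$, producing $\partial^{\alpha}f(a)=0$ for every $a\in\axis$ and every $|\alpha|\leq l$ with $\alpha=\arho\oplus\alpha_w$.

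For the second stage, apply the single-variable Taylor theorem with integral remainder to the map $t\mapsto f(tu,w)$ between $t=0$ and $t=1$: when all lower-order $u$-derivatives at $(0_p,w)$ vanish, the polynomial portion of (\ref{TaylorExp}) disappears and the remainder collapses precisely to $\sum_{|\arho|=l}R_{\arho}(x)\,x^{\arho}$ with the stated integral formula for $R_{\arho}$. The continuity $R_{\arho}\in C^0(V)$ is inherited from the continuity of $\partial^{\arho}f$ (an order-$l$ derivative of a $C^l$ function), and $R_{\arho}(a)=0$ for $a\in\axis$ is an immediate consequence of $\partial^{\arho}f(a)=0$ combined with $\int_0^1(1-t)^{|\arho|-1}dt=1/|\arho|$.

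The only genuinely delicate point is the scaling argument that kills the polynomial $P$, and even there the difficulty is more conceptual than technical: one must resist the temptation to simply ``read off'' coefficients from $P(u)={\bf o}(\normx^l)$ and instead peel off the homogeneous components one degree at a time, using that vanishing must hold along every ray. Everything else is a direct application of the multivariate Taylor theorem with integral remainder together with the commutation between $w$-differentiation and the axis limit.
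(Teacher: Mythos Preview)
Your proposal is correct and follows essentially the same approach as the paper. The only cosmetic difference is organizational: the paper argues by contradiction, picking the smallest order $s$ with a nonvanishing $u$-derivative and dividing by $\normx^{s}$, whereas you divide by $\normx^{l}$ at once and peel off the homogeneous components of $P$ degree by degree; both rely on the same ray argument $u=r\hat e$ to kill the polynomial, and both extend to mixed derivatives by differentiating the identity $\partial^{\arho}f(0_p,w)=0$ in $w$.
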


\begin{proof}
  We start by proving that $\partial^{\arho} f(a)=0$ for all $a\in\axis$ and $|\arho|\leq l$. 
  Assume this conclusion is not true and let $s \leq l$ be the smallest number
  for which there exists $a_0 \in \axis$ and a multiindex
  $\arho{}_0$ with $|\arho{}_0| = s$ satisfying
  $\partial^{\arho{}_0} f (a_0)
  \neq 0$. Define the (not all zero) constants
  $C_{\arho} := \frac{1}{\alpha_u!} \partial^{\arho} f (a_0)$ for all $|\arho| = s$.
  Then, writing $a_0 = (\zerorho,w_0)$, 
  we have
  from (\ref{TaylorExp}) that at all points   $x = (u,w_0)$ 
    \begin{align*}
    \frac{f(x)}{\normrho{x}^s} =
    \sum_{|\arho| = s} C_{\arho} \frac{x^{\arho}}{\normrho{x}^s} +
    \sum_{s < |\arho|\leq l}\frac{1}{\arho!}\partial^{\arho}f(\zerorho,w_0)
    \frac{x^{\arho}}{\normrho{x}^s} + \sum_{|\arho|= l}Q_{\arho}(x) \frac{x^{\arho}}{\normrho{x}^s}.
  \end{align*}
  The left-hand side tends to zero when $\normrho{x} \rightarrow 0$ by assumption.
  The same is clearly true for the second and third terms of the right-hand side.
  Taking the limit along a path $x = (\beta r,w_0)$, $r \rightarrow 0$ with
  any $\beta \in \mathbb{R} ^{p}$ of unit norm, it must be that
  \begin{align*}
    \sum_{|\arho|=s} C_{\arho} \beta^{\arho} = 0, \quad \quad \forall \beta
    \in \mathbb{R}^{p}.
  \end{align*}
  This polynomial in $\beta$ can vanish identically only if all 
  $C_{\arho}$ vanish, which is a contradiction.
  Therefore, we have that $\partial^{\arho} f(\zerorho,w)=0$ for $0\leq |\arho|\leq l$, and for all $w$.
  This implies that $\partial^{\aw}\partial^{\arho} f(\zerorho,w)\equiv\partial^{\alpha}f(\zerorho,w)=0$ for $0\leq |\alpha|\leq l$
  with $\alpha=\arho\oplus\aw $, as claimed. \fin
\end{proof}

\begin{remarklem}
  \label{remark:taylor}
  Any $f$ as in Lemma \ref{lemma:taylor} is, in particular, $C^k$ and ${\bf o}(\normrho{x}^k)$ for $0\leq k \leq l$, so the following is also true
  \begin{equation}
    f(x)=\sum_{|\arho|= k}R_{\arho}(x) x^{\arho} \quad \mbox{ for any } 0\leq k \leq l.
    \label{res:remainder}
  \end{equation}
  Note also that $R_{\arho}$ with $|\arho|\leq l-1$ can be differentiated. In particular
  \begin{equation}
    \partial_\mu R_{\arho}(x)=\frac{|\arho|}{\arho!}\int_0^1(1-t)^{|\arho|-1}
    t^{\epsilon_\mu}
    \partial_\mu\partial^{\arho} f(tu,w) dt, \quad\epsilon_\mu = 1
    \mbox{ if $\mu \leq p$,} \quad 0 \mbox{ otherwise}, 
    \label{res:dRa}
  \end{equation}
  so that $\partial_\mu R_{\arho}(a)=0$ for all $a\in\axis$.
  Observe that since $f\in {\bf o}(\normrho{x}^l)$,
  (\ref{res:remainder}) implies
  \[
    \lim_{\normrho{x}\to 0}\sum_{|\arho|= k}\frac{R_{\arho}(x)}{\normrho{x}^{l-|\arho|}} \frac{x^{\arho}}{\normrho{x}^{|\arho|}}=0 \quad \mbox{ for any } 0\leq k \leq l.
  \]
  The application of the $\beta$-path method as in the previous proof
  implies that for any $\beta\in \mathbb{R}^{p}$
  \[
    \lim_{\normrho{x}\to 0}\sum_{|\arho|= k}\frac{R_{\arho}(x)}{\normrho{x}^{l-|\arho|}} \beta^{\arho}=0 \quad \mbox{ for any } 0\leq k \leq l,
  \]
  and therefore $R_{\arho}(x)\in {\bf o}(\normrho{x}^{l-|\arho|})$. In summary,
  \begin{align*}
    \partial_\mu  R_{\arho} (a)=0, \quad 
    R_{\arho}(x)\in {\bf o}(\normrho{x}^{l-|\arho|}) \quad \quad
    \forall \arho \mbox{ with } |\arho| \leq l -1.
  \end{align*}
\end{remarklem}

\begin{lemma}
\label{res:lemma_di}
Let $f:V\to \mathbb{R}$ be $C^{l}$  with $l\geq 1$
and let $\alpha=\arho\oplus\aw$ be any multiindex satisfying $0\leq |\alpha|\leq l$.
\begin{itemize}
\item $f\in{\bf o}(\normx^{l})$ $\implies$ 
$
\partial^{\alpha}f \in {\bf o}(\normx^{l-|\arho|}) \quad (\implies \partial^{\alpha}f \in {\bf o}(\normx^{l-|\alpha|}))
$
\end{itemize}
\end{lemma}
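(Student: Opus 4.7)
The plan is to proceed by induction on $|\alpha|$, with Lemma \ref{lemma:taylor} together with Remark \ref{remark:taylor} providing the Taylor-type decomposition that drives each step. The base case $|\alpha|=0$ is immediate, since $\partial^{0}f=f\in{\bf o}(\normx^{l})$ and $|\arho|=0$, so hypothesis and conclusion coincide.

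For the inductive step with $1\leq|\alpha|\leq l-1$, set $k:=l-|\alpha|\in\{1,\ldots,l-1\}$ and invoke the remark at expansion order $k$ to write
\[
f(x)=\sum_{|\gamma|=k} R_{\gamma}(x)\,u^{\gamma},
\]
where each $R_{\gamma}\in C^{|\alpha|}(V)\cap{\bf o}(\normx^{|\alpha|})$ and all its first-order partials vanish on $\axis$. Applying $\partial^{\alpha}=\partial^{\arho}\partial^{\aw}$ to this sum and expanding via Leibniz (using that $u^{\gamma}$ is independent of $w$) produces a finite sum of terms of the form
\[
c_{\arho,\gamma,\delta}\,\partial^{\delta\oplus\aw}R_{\gamma}(x)\,u^{\gamma-\arho+\delta},
\]
with $\delta\leq\arho$ and $\arho-\delta\leq\gamma$. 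Since $|\delta\oplus\aw|=|\delta|+|\aw|\leq|\arho|+|\aw|=|\alpha|$ and $R_{\gamma}$ satisfies the hypotheses of the lemma with the parameter $l$ replaced by $|\alpha|<l$, the inductive hypothesis (induction on $l$) yields $\partial^{\delta\oplus\aw}R_{\gamma}\in{\bf o}(\normx^{|\alpha|-|\delta|})$. Combined with the elementary bound $|u^{\gamma-\arho+\delta}|\leq\normx^{l-|\alpha|-|\arho|+|\delta|}$, every term is ${\bf o}(\normx^{l-|\arho|})$, and summing finitely many such contributions preserves this bound.

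The delicate step is the edge case $|\alpha|=l$, where $k=0$ and the decomposition above collapses to the trivial identity. The subcase $\aw=0$ is easy: the claim becomes $\partial^{\arho}f\in{\bf o}(1)$, which follows from continuity of $\partial^{\arho}f$ (an $|\arho|$-th derivative of a $C^{l}$ function) together with its vanishing on $\axis$ provided by Lemma \ref{lemma:taylor}. The genuinely hard subcase is $|\alpha|=l$ with $|\aw|\geq 1$: the induction cannot be invoked with a strictly smaller $l$, and a direct application of Lemma \ref{lemma:taylor} to $\partial^{\aw}f$ only yields ${\bf o}(\normx^{l-|\alpha|})$ in place of the sharper ${\bf o}(\normx^{|\aw|})$ required here. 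I expect this to be the main technical obstacle, and I would attack it by combining the joint vanishing on $\axis$ of all mixed partials of $f$ of total order up to $l$ (again from Lemma \ref{lemma:taylor}) with an iterated Taylor expansion in the $u$-variables applied to $\partial^{\aw}f$ at each axis point, so as to recover the missing $|\aw|$ powers of $\normx$ from the specific polynomial structure of the remainders.
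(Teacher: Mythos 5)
Your handling of the range $1\leq|\alpha|\leq l-1$ is plausible, but the gap you flag at the end is not a peripheral ``edge case'': it undermines the whole induction. When you invoke the statement at the lower level $l'=|\alpha|$ for the remainders $R_\gamma$ (which are $C^{|\alpha|}$ and ${\bf o}(\normx^{|\alpha|})$), the multiindices $\delta\oplus\aw$ you need range up to total order $|\delta|+|\aw|=|\arho|+|\aw|=|\alpha|=l'$, i.e.\ you need precisely the top-order case with $|\aw|\geq 1$ at every level of the induction, which is the case you leave unproven. (The induction should also be organized as strong induction on $l$, not on $|\alpha|$, as your own step shows.) In particular, already at the base level $l=1$ with $\alpha$ a single $w$-derivative the claim $\partial_\nu f\in{\bf o}(\normx)$, $\nu>p$, is not established by your proposal, so nothing above it is either; only the subcase $|\alpha|=l$, $\aw=0$ (continuity plus vanishing on $\axis$) is actually complete.

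Moreover, the attack you sketch for the hard case cannot work as described. A Taylor expansion of $\partial^{\aw}f$ in the $u$-variables is limited to order $l-|\aw|=|\arho|$, its axis coefficients vanish by Lemma \ref{lemma:taylor}, and its remainders are merely continuous: they can neither be differentiated (so no Leibniz step is available to produce $\partial^{\arho}\partial^{\aw}f$ from that representation) nor do they carry any decay beyond vanishing on the axis, which only yields ${\bf o}(1)$. In the extreme instance $\arho=0$, $|\aw|=l$ the expansion is of order zero and completely vacuous, while the claim $\partial^{\aw}f\in{\bf o}(\normx^{l})$ is the strongest statement of the lemma. The missing idea is the one the paper uses: take the $u$-derivatives \emph{first}, each costing exactly one power of $\normx$ via the Taylor-remainder argument you already employ, and then show that each subsequent $w$-derivative preserves the decay. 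That last step is not a Taylor argument at all; it comes from the identity
\[
f(u,w_0+h_{(\nu)})=f(u,w_0)+\int_0^h\partial_\nu f(u,w_0+s_{(\nu)})\,ds ,
\]
divided by $h\normx^n$, followed by an interchange of the limits $\normx\to 0$ and $h\to 0$ (Moore--Osgood), which is legitimate because $\partial_\nu f$ is continuous in a neighbourhood of the axis and the inner limit is therefore uniform on compacts. Without an argument of this kind in the $w$-directions your induction cannot close.
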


\begin{proof}
The proof is based  on the following
two facts, which we establish first. The first one is that 
for any
$f^{(n)}\in C^{n}(V)$ and ${\bf o}(\normx^n)$ with  $n\geq 1$ it is true that
$\partial_\mu f^{(n)} \in {\bf o}(\normx^{n-1})$ and the second is that when
$\mu> p$ the decay gets improved to
$\partial_\mu f^{(n)} \in {\bf o}(\normx^{n})$.
We show the first claim by applying (\ref{res:remainder}) in Remark \ref{remark:taylor} to $f^{(n)}$ and
$k = n-1$
  to write
\[f^{(n)}(x)=\sum_{|\arho|=n-1}R_{\arho}(x) x^{\arho}.\] 
Differentiating and dividing by $\normrho{x}^{n-1}$ yields
\begin{equation}
\frac{\partial_\mu f^{(n)}(x)}{\normrho{x}^{n-1}}=\sum_{|\arho|=n-1}\left(\partial_\mu R_{\arho}(x) \frac{x^{\arho}}{\normrho{x}^{n-1}}+
R_{\arho}(x) \partial_{\mu}x^{\arho}\frac{1}{\normrho{x}^{n-1}}\right).
\label{eq:df_o_p-1}
\end{equation}
We consider now the limit of this equation to any $a\in\axis$.
The limit of the first term in the right hand side vanishes as a consequence
  of $\partial_\mu R_{\arho}(a)=0$ (Remark \ref{remark:taylor})
  given  that $\frac{x^{\arho}}{\normrho{x}^{n-1}}$ are  bounded
  (because $|\arho|=n-1\geq 0$).
The limit of the second term vanishes because $R_{\arho}(x)\in {\bf o}(\normrho{x})$
(again by Remark  \ref{remark:taylor}),
and $\partial_\mu x^{\arho}/\normrho{x}^{n-2}$ are bounded, so that 
$R_{\arho}(x) \frac{\partial_i x^{\arho}}{\normrho{x}^{n-2}}\frac{1}{\normrho{x}}\in {\bf o}(\normrho{x}^0)$. Hence
\[
\lim_{\normrho{x}\to 0}\frac{\partial_\mu  f^{(n)}}{\normrho{x}^{n-1}}=0,
\]
i.e. $\partial_\mu f^{(n)}\in  {\bf o} (\normx^{n-1})$, as claimed. Obviously this function is also $C^{n-1}(V)$, which allows us to
repeat the process as long as $n-1\geq 1$.
We continue with the second claim, i.e.
that 
$\lim_{\normx\to 0}(\partial_\nu f^{(n)})/\normx^n =0$ for any $\nu> p$.
Let such $\nu$ be fixed and introduce the notation
$h_{(\nu)}$ for $ (\ldots, 0_{\nu-1},h,0_{\nu+1},\ldots)$. The
fundamental theorem of calculus gives, at any point 
$w_0\in V^q$,
\begin{equation}
f^{(n)}(u,w_0+h_{(\nu)})=f^{(n)}(u,w_0) +\int^h_0\partial_\nu f^{(n)}(u,w_0+s_{(\nu)})ds.
\label{eq:integral_f_z}
\end{equation}
If we define
\[
\mathcal{F}_\nu(h,u)\defi \frac{1}{h\normx^n}\int^{h}_{0}\partial_\nu f^{(n)}(u,w_0+s_{(\nu)})ds,
\]
equation (\ref{eq:integral_f_z}) divided by $h \normx^n$  taken to the limit $\normx\to 0$
leads to
\begin{align}
  &0=\lim_{\normx\to 0} \mathcal{F}_\nu(h,u),
\quad\mbox{ for each } \nu>p \quad\mbox{ and all }\quad h\neq 0.
\label{cond_A}
\end{align}
  Since $\partial_\nu f^{(n)}$ is continuous
  the limit $\lim_{\normx\to0}\mathcal{F}_\nu(h,u)$ (at fixed $h \neq 0$)
converges uniformly to zero on compact subsets of $V^q$.
Therefore we can take the limit $h\to 0$ of (\ref{cond_A}) and interchange the
limits by the Moore-Osgood theorem \cite{Moore-Osgood} to get
\begin{align}
&0=\lim_{h\to 0}\lim_{\normx\to 0}\mathcal{F}_\nu(h,u)
=\lim_{\normx\to 0}\lim_{h\to 0}\mathcal{F}_\nu(h,u)
   =\lim_{\normx\to 0}\frac{1}{\normx^p}\partial_\nu f^{(n)}(u,w_0),
   \quad \nu > p,
\end{align}
where in the last equality we simply used the definition
of $\mathcal{F_\nu}$.
Since the point $w_0$ in $V^q$ is arbitrary our second claim is established.

We may now return to the proof of lemma, which follows by simply
applying the derivative $\partial^{\alpha} f$
with $\alpha = (\alpha_1, \cdots, \alpha_{p+q})$ and $|\alpha| \leq l$ 
in the reverse order
$\partial_{x_{p+q}}^{\alpha_{p+q}} \cdots
\partial_{x_1}^{\alpha_1} f$. The derivatives in the first $p$ directions decrease the order to ${\bf o}(\normx^{l-|\arho|})$, while the remaining derivatives do not
change this order at all. This proves the first implication of the lemma.  The second implication is trivial because $|\alpha| \geq |\arho|$.\fin
\end{proof}

This lemma has an immediate application for products of functions. We restrict
to the one-dimensional case, since this is all  we need in the main text.

\begin{corollary}
\label{corollary:ufff}
Let $l\geq 1$ and  $f,g \in C^l([0,a)), a>0$. Assume that
$f(x)$ and $g(x)$ are $o(x^l)$.
Then the funcion $h(x)=f(x)g(x)/x$ extends as a $C^l([0,a))$
function to $x=0$.
\end{corollary}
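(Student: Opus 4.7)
My plan is to reduce the statement to a one-dimensional continuous-extension problem and verify it by a direct induction on the order of differentiation.

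First, I would apply Lemma \ref{res:lemma_di} (specialised to $p=1$, $q=0$) to both $f$ and $g$: since each is $C^l$ and $o(x^l)$, this yields $f^{(j)}(x) = o(x^{l-j})$ and $g^{(j)}(x) = o(x^{l-j})$ for every $0 \leq j \leq l$. A direct application of Leibniz then gives
\[
(fg)^{(n)}(x) = \sum_{j=0}^{n} \binom{n}{j} f^{(j)}(x) g^{(n-j)}(x) = o(x^{2l-n}),\qquad 0\leq n\leq l,
\]
so $fg\in C^l([0,a))$ decays at $0$ at every order no slower than the indicated rate.

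Next, on $(0,a)$ the function $h = fg/x$ is manifestly of class $C^l$. Differentiating the identity $x\,h(x) = f(x) g(x)$ by Leibniz (and using that $x^{(k)} = 0$ for $k \geq 2$) produces the one-step recursion
\[
x\, h^{(n)}(x) + n\, h^{(n-1)}(x) = (fg)^{(n)}(x), \qquad n\geq 1,\ x>0.
\]
I would then prove by induction on $n$ the estimate $h^{(n)}(x) = o(x^{2l-1-n})$ for $0 \leq n \leq l$: the base case $n=0$ is immediate from $fg = o(x^{2l})$, and the inductive step combines the decay of $(fg)^{(n)}$ with the inductive bound $h^{(n-1)}(x) = o(x^{2l-n})$ in the recursion and divides by $x$. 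Since $2l-1-n \geq l-1 \geq 0$ whenever $l \geq 1$ and $n \leq l$, each derivative $h^{(n)}(x)$ tends to $0$ as $x \to 0^+$.

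To conclude I would invoke the standard elementary fact that a function $h \in C^l((0,a))$ whose successive derivatives $h,h', \ldots, h^{(l)}$ each admit a finite limit at $x=0^+$ extends to a function of class $C^l$ on $[0,a)$; this is a short induction on $l$ using the mean value theorem to identify the value of $h^{(n)}(0)$ with $\lim_{x\to 0^+} h^{(n)}(x)$. Applied here with all limits equal to zero, this yields the required $C^l([0,a))$ extension of $h$ (and incidentally shows that $h$ together with its first $l$ derivatives all vanish at the origin). The only nontrivial ingredient is the decay estimate for $h^{(n)}$, which is the purpose of the recursion above; I do not anticipate any serious obstacle beyond this bookkeeping.
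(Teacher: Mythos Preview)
Your argument is correct and follows essentially the same route as the paper: both invoke Lemma \ref{res:lemma_di} to obtain $f^{(j)},g^{(j)}=o(x^{l-j})$, then bound the derivatives of $h$ to conclude $h^{(k)}=o(x^{2l-1-k})$ and hence that all limits at $0$ exist. The only cosmetic difference is that the paper expands the triple Leibniz rule for $f\cdot g\cdot(1/x)$ directly, whereas you organise the same computation via the recursion coming from $xh=(fg)$; the resulting estimates are identical.
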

\begin{proof}
It suffices to check that the limits $\lim_{x\to 0}(\partial_x)^{k}h(x)$ for $k=0,\ldots,l$ exist.
We have
\begin{equation*}
(\partial_x)^{k} h(x)= \sum^{k}_{p=0}\sum^{k-p}_{r=0}\frac{1}{x^{p+1}} a_{kp}
(\partial_x)^{k-p-r}f(x)(\partial_x)^{r}g(x)
\end{equation*}
where $a_{kp}$ are constants. Lemma \ref{res:lemma_di} applied 
to the one-dimensional interval $[0,a)$ ensures that $(\partial_x)^k f(x)$ is $C^{l-k}$
and $o(x^{l-k})$, and the same for $g$.
Therefore, each term in the  sum is
$o(x^{l-k+p+r+l-r-p-1})=o(x^{2l-k-1})$ and, in particular, $o(x^{l-1})$
  for $k=0, \cdots, l$.\fin
\end{proof}

We now introduce an additional structure in the $V^q\subset \mathbb{R}^q$ factor.
We split $\mathbb{R}^q = \mathbb{R}^{d_z}
\times \mathbb{R}^{d_t}$ (in particular $q = d_z + d_t$) and
assume that $V^q$ is of the form $V^q=Z^{d_z}\times T^{d_t}$ where
$Z^{d_z}\subset \mathbb{R}^{d_z}$
and $T^{d_t}\subset \mathbb{R}^{d_t}$ are open and connected.
Furthermore, we
require that $Z^{d_z}$ contains the origin of $\mathbb{R}^{d_z}$.
Thus, we have now the splitting $\mathbb{R}^{p+q} = \mathbb{R}^p \times
\mathbb{R}^{d_z} \times \mathbb{R}^{d_t}$ and we want to concentrate on
the first two factors, so we define $\mathbb{R}^d =
\mathbb{R}^p \times \mathbb{R}^{d_z}$ (again with the usual identification).
Cartesian coordinates in $\mathbb{R}^{d}$ are denoted by
$x_i$, $i=1,\cdots, d$. As before we introduce the projection $\pi_2$
of $\mathbb{R}^{p+q} = \mathbb{R}^d \times \mathbb{R}^{d_t}$ into the first
$\mathbb{R}^d$ factor and define the seminorm $\normr = |\pi_2(x)|_{\mathbb{R}^d}$.
The set $\centresph_0:=\{x\in V; \normr=0\}$ is precisely
$\{0_d\} \times T^{d_t}$, where $\{0_d\}$ stands for the origin
of $\mathbb{R}^d$. It is immediate to check that
$\centresph_0\subset \axis$.

We also introduce the usual little-o notation associated to $\normr$
by defining (observe that $o$ is not in boldface now)
$$\qquad f\in o(|x|^l)\iff \lim_{\normr\to 0}\frac{f}{|x|^{l}}=0.$$

When $d_z=0$ we recover the previous setup. Therefore,
Lemma \ref{res:lemma_di} implies in particular that for $l\geq 1$
(irrespectively of the value of $d_z$)
$$f\in  C^{l}(V) \mbox{ and } o(\normr^{l})\implies \partial^{\alpha}f \in o(\normr^{l-|\alpha|}).
$$

Let us continue by providing two easy but convenient auxiliary results.
\begin{lemma}
  \label{res:rhon-rn}
Let  $f: V \rightarrow \mathbb{R}$.
If $f\in {\bf o}(\normx^l)$
then $f\in o(\normr^l)$.
\end{lemma}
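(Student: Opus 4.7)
The plan is to exploit the obvious pointwise inequality $\normx \leq \normr$, which follows immediately from the identity $\normr^{\,2} = \normx^{\,2} + |z|_{\mathbb{R}^{d_z}}^{\,2}$, where $z$ denotes the coordinates on the $\mathbb{R}^{d_z}$ factor. Two consequences of this inequality drive the entire argument: first, $\normr \to 0$ forces $\normx \to 0$; and second, assuming $l \geq 0$, we have $\normr^{\,l} \geq \normx^{\,l}$, so $1/\normr^{\,l} \leq 1/\normx^{\,l}$ wherever $\normx > 0$.

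First I would take an arbitrary point $x \in V$ with $\normr(x) > 0$ in the domain where $f$ is defined, and at any such point with $\normx(x) > 0$ write the elementary estimate
\[
\frac{|f(x)|}{\normr(x)^{\,l}} \;=\; \frac{|f(x)|}{\normx(x)^{\,l}} \left(\frac{\normx(x)}{\normr(x)}\right)^{\!l} \;\leq\; \frac{|f(x)|}{\normx(x)^{\,l}}.
\]
Given $\varepsilon > 0$, the hypothesis $f \in {\bf o}(\normx^{\,l})$ provides $\delta > 0$ such that $|f(x)|/\normx(x)^{\,l} < \varepsilon$ whenever $0 < \normx(x) < \delta$. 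For points with $0 < \normr(x) < \delta$ and $\normx(x) > 0$ the above estimate yields $|f(x)|/\normr(x)^{\,l} < \varepsilon$ directly. This handles the bulk of the argument.

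The only step that requires a small comment concerns points where $\normr(x) > 0$ but $\normx(x) = 0$, i.e.\ points of $\axis \setminus \centresph_0$. These are genuine points of $V$ but the estimate above is not available there. The easy way out is to observe that our use of ${\bf o}(\normx^{\,l})$ presupposes the limit is well-defined in the complement of $\axis$, so these points are either outside the domain on which $f \in {\bf o}(\normx^{\,l})$ is defined, or, when $f$ is continuous on $V$ with $l \geq 0$, the assumption forces $f$ to vanish on $\axis$ and so $|f(x)|/\normr(x)^{\,l} = 0$ there. In either case the estimate extends trivially, and I can conclude $\lim_{\normr \to 0} f/\normr^{\,l} = 0$, which is the statement $f \in o(\normr^{\,l})$. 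There is no real obstacle; the lemma is essentially a one-line consequence of $\normx \leq \normr$.
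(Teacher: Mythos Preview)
Your proof is correct and follows essentially the same approach as the paper: both arguments rest on the inequality $\normx \leq \normr$, bound $|f|/\normr^{\,l} \leq |f|/\normx^{\,l}$, and then use that $\normr \to 0$ forces $\normx \to 0$. Your $\varepsilon$-$\delta$ version and the remark about points on $\axis \setminus \centresph_0$ are a bit more explicit than the paper's two-line proof, but the substance is identical.
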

\begin{proof}
Since $\normr^l\geq \normx^l$ for $l\geq 0$ then $|f| \normr^{-l}\leq |f| \normx^{-l}$,
so $\lim_{\normr\to 0}|f| \normr^{-l}\leq \lim_{\normr\to 0}|f| \normx^{-l}$.
But $\normr \rightarrow 0$ implies $\normx \rightarrow 0$, so the limit is zero
by the assumption $f\in {\bf o}(\normx^l)$.
\fin
\end{proof}

\begin{lemma}
\label{res:lemmaf8}
Let  $f: V \rightarrow \mathbb{R}$ and $k\geq 0.$ Then
  $f\in {\bf o}(\normx^l) \implies f/\normr^k\in {\bf o}(\normx^{l-k})$.
\end{lemma}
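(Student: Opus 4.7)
The plan is to exploit directly the key inequality $\normrho{x} \leq \normr$, which holds everywhere on $V$ because $\pi_2$ projects onto $\mathbb{R}^d = \mathbb{R}^p \times \mathbb{R}^{d_z}$, so $\normr^2 = \normrho{x}^2 + |\pi_{\mathbb{R}^{d_z}}(x)|^2 \geq \normrho{x}^2$. This elementary comparison reduces the statement to the definition of ${\bf o}$.

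More concretely, I would argue as follows. At any point of $V \setminus \axis$ one has $\normrho{x} > 0$ (and hence $\normr > 0$ as well), so the quotient $\normrho{x}^{k}/\normr^{k}$ is well defined and bounded by $1$ for every $k \geq 0$. Multiplying and dividing by $\normrho{x}^k$ gives the simple identity
\[
\frac{f}{\normr^k \, \normrho{x}^{l-k}} \;=\; \frac{f}{\normrho{x}^{l}} \cdot \frac{\normrho{x}^{k}}{\normr^{k}},
\]
valid on $V \setminus \axis$. Taking absolute values and using $\normrho{x}^{k}/\normr^{k} \leq 1$ produces the estimate
\[
\left| \frac{f}{\normr^k \, \normrho{x}^{l-k}} \right| \;\leq\; \left| \frac{f}{\normrho{x}^{l}} \right|.
\]

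The hypothesis $f \in {\bf o}(\normrho{x}^l)$ means precisely that the right-hand side tends to $0$ as $\normrho{x} \to 0$, so by the comparison above the left-hand side does as well, which is exactly the statement $f/\normr^k \in {\bf o}(\normrho{x}^{l-k})$. There is no real obstacle here: the only thing to verify carefully is that the splitting $\mathbb{R}^{p+q} = \mathbb{R}^p \times \mathbb{R}^{d_z} \times \mathbb{R}^{d_t}$ and the definition of $\normr$ indeed yield $\normr \geq \normrho{x}$, which is immediate from the construction of $\pi_2$ as a projection enlarging (or equalling, in the case $d_z = 0$) the set of coordinates used by $\pi$.
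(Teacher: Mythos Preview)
Your proof is correct and follows essentially the same approach as the paper: both use the identity $\dfrac{|f|}{\normr^k\,\normrho{x}^{\,l-k}} = \dfrac{|f|}{\normrho{x}^{\,l}}\cdot\dfrac{\normrho{x}^{\,k}}{\normr^{\,k}}$ together with $\normrho{x}\leq\normr$ to bound the left-hand side by $|f|/\normrho{x}^{\,l}$, and then pass to the limit $\normrho{x}\to 0$.
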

\begin{proof}
  The result follow directly
    from
  $\frac{1}{\normx^{l-k}}\frac{|f|}{\normr^k}=\frac{|f|}{\normx^{l}}\frac{\normx^k}{\normr^k}\leq \frac{|f|}{\normx^l}$, taking the limit
    $\normx\to 0$.
  \fin
\end{proof}

Let us introduce now the vector $
  \partialr\defi \frac{1}{\normr}\left(x^i\partial_{i}\right),
  $
  which is clearly smooth in $V\setminus \centresph_0$ and bounded near $\centresph_0$. Define also
  \begin{align*}
    \partialr^i\defi\partialr(x^i)= \frac{x^i}{\normr}=\partial^i \normr.
  \end{align*}
  We note the  well-known fact that
$\normr^{|\alpha|}\partial^\alpha \partialr^i$ for $|\alpha|\geq 0$,
as well as $\normr^{|\alpha|+1}\partial^\alpha (\frac{1}{\normr})$, are bounded.
We will use $v^{(m)}(f)$ for the derivation of $f$ $m$ times
along a vector field $v$. Since $\partialr$ does not exist at $\centresph_0$ we define 
$\partialr(f): V \rightarrow \mathbb{R}$ by setting
$\partialr(f) (\centresph_0)=0$, and similarly for higher powers $\partialr^{(m)}(f)$.

\begin{lemma}
\label{res:lemma_nkf}
Let $f:V\to \mathbb{R}$ be $C^{l}$, $l\geq 1$,
and assume $f\in {\bf o}(\normx^{l})$. Then, for any $1 \leq k \leq l$
\begin{itemize}
\item [(i)]
$ \partialr^{(k)}(f)\in {\bf o}(\normx^{l-k}) \mbox{ and } C^{l-k}(V).$
\item [(ii)] $\normr \partialr(f)\in {\bf o}(\normx^{l}).$
\item [(iii)] $ \frac{1}{\normr^k}f\in {\bf o}(\normx^{l-k}) \mbox{ and } C^{l-k}(V)$.
\end{itemize}
\end{lemma}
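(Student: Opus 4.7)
I would prove the three parts jointly, first the $k=1$ case and then induction on $k$.

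Part (ii) is an immediate computation. Since $\partialr=\frac{1}{\normr}(u^a\partial_{u^a}+z^\alpha\partial_{z^\alpha})$ (summing over the $p$ indices $a$ of the $u$-factor and the $d_z$ indices $\alpha$ of the $z$-factor),
\[
\normr\,\partialr(f) = u^a\partial_{u^a}f + z^\alpha\partial_{z^\alpha}f.
\]
Lemma \ref{res:lemma_di} gives $\partial_{u^a}f\in{\bf o}(\normx^{l-1})$ and $\partial_{z^\alpha}f\in{\bf o}(\normx^l)$; since $|u^a|\leq\normx$ and $z^\alpha$ is locally bounded, both terms lie in ${\bf o}(\normx^l)$, proving (ii). Dividing the result by $\normr$ and applying Lemma \ref{res:lemmaf8} gives the ${\bf o}(\normx^{l-1})$ part of (i) for $k=1$; applying Lemma \ref{res:lemmaf8} directly to $f$ gives the ${\bf o}(\normx^{l-1})$ part of (iii) for $k=1$.

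The $C^{l-1}(V)$ property of $\partialr(f)$ and $f/\normr$ is automatic on $V\setminus\centresph_0$, since $\partialr$ and $1/\normr$ are smooth there; the work is in the extension to $\centresph_0$. For this I would combine two ingredients. First, since $f\in{\bf o}(\normx^l)\subset o(\normr^l)$ by Lemma \ref{res:rhon-rn}, an argument parallel to Lemma \ref{lemma:taylor} now applied in the $d=p+d_z$ variables $x^i=(u^a,z^\alpha)$ about $\centresph_0$ yields $\partial^\beta f|_{\centresph_0}=0$ for every multi-index with $|\beta|\leq l$. Second, Taylor's theorem with integral remainder then furnishes, for each $0\leq j\leq l$, a factorisation
\[
f(x) = \sum_{|\beta|=j} x^\beta G^{(j)}_\beta(x),\qquad G^{(j)}_\beta\in C^{l-j}(V),\qquad G^{(j)}_\beta \mbox{ vanishing to order } l-j \mbox{ at } \centresph_0.
\]
Taking $j=1$ writes $f=x^iH_i$ with $H_i\in C^{l-1}(V)$ vanishing to order $l-1$ at $\centresph_0$, whence $f/\normr=(x^i/\normr)H_i$ and $\partialr(f)=(x^i/\normr)\partial_if$. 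The unit vector $x^i/\normr$ is bounded but has no limit at $\centresph_0$; combining the high-order vanishing of $H_i$ and $\partial_if$ there with the sharper ${\bf o}(\normx^l)$ decay along $\axis\supset\centresph_0$ would produce the required $C^{l-1}$ extension.

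The induction in $k$ is then automatic: once both $\partialr(f)$ and $f/\normr$ lie in $C^{l-1}(V)\cap{\bf o}(\normx^{l-1})$, they satisfy the hypotheses of the lemma with $l$ replaced by $l-1$, and a second application of the $k=1$ case yields $\partialr^{(2)}(f),f/\normr^2\in C^{l-2}(V)\cap{\bf o}(\normx^{l-2})$; iterating $k$ times completes the proof. The expected main obstacle is precisely the $C^{l-1}$ extension at $\centresph_0$ for $k=1$. Because $x^i/\normr$ has no limit there, one cannot argue by bounding and multiplying; the argument must exploit the strict strengthening of $o(\normr^l)$ to ${\bf o}(\normx^l)$ in the hypothesis — that is, decay information on the axis $\axis$, a strictly larger set than $\centresph_0$ — to compensate for the irregularity of $x^i/\normr$. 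I expect the rigorous implementation to require a careful multi-index induction that separately tracks the $u$- and $z$-coordinate contributions, a level of technical detail better suited to an Appendix than to the main exposition.
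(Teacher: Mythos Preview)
Your overall strategy---prove the $k=1$ case and then iterate---matches the paper exactly, as does your proof of (ii) and the ${\bf o}(\normx^{l-1})$ conclusions for (i) and (iii) at $k=1$. The genuine gap is the $C^{l-1}$ extension across $\centresph_0$, which you correctly identify as the crux but do not actually carry out. Your proposed Taylor factorisation $f=x^iH_i$ does not obviously help: the remainders $H_i$ inherit only $o(\normr^{l-1})$ information at $\centresph_0$, not the ${\bf o}(\normx^{l-1})$ decay along the whole axis that you yourself note is essential, so for $(x^i/\normr)H_i$ you are back to the same problem you started with.

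The paper's route is more direct than you anticipate and requires no separate bookkeeping of $u$- versus $z$-coordinates. Writing $g=\partialr(f)=\partialr^i\,\partial_if$ (and analogously $f/\normr$), one expands $\partial^\alpha g$ by Leibniz into terms $\partial^\beta\partialr^i\cdot\partial^{\alpha-\beta}\partial_if$. Two elementary facts then suffice: (a) $\normr^{|\beta|}\partial^\beta\partialr^i$ is bounded (the standard homogeneity of $x^i/\normr$), and (b) $\partial^{\alpha-\beta}\partial_if\in{\bf o}(\normx^{\,l-1-|\alpha|+|\beta|})$ by Lemma~\ref{res:lemma_di}, which after division by $\normr^{|\beta|}$ (Lemma~\ref{res:lemmaf8}) gives ${\bf o}(\normx^{\,l-1-|\alpha|})$. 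Hence every $\partial^\alpha g$ with $|\alpha|\leq l-1$ lies in ${\bf o}(\normx^{\,l-1-|\alpha|})\subset o(\normr^{\,l-1-|\alpha|})$ and in particular tends to zero at $\centresph_0$. A short argument using the definition of the differential then shows these continuous extensions really are the successive derivatives of $g$ at $\centresph_0$, yielding $g\in C^{l-1}(V)$. The same scheme, with $\normr^{|\beta|+1}\partial^\beta(1/\normr)$ bounded, handles $f/\normr$. The point is that the ${\bf o}(\normx^{\cdot})$ information propagates cleanly through Lemma~\ref{res:lemma_di} and Lemma~\ref{res:lemmaf8} without ever needing to split the coordinate blocks; the multi-index induction you fear is unnecessary.
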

\begin{proof}
We start by establishing the following fact: 
For $n\geq 1$ and  $f^{(n)}\in C^{n}(V)$ satisfying
$f^{(n)} \in {\bf o}(\normx^{n})$, it holds 
\begin{align}
f^{(n-1)}:=\partialr(f^{(n)}) \quad \mbox{ is }  C^{n-1}(V) \mbox{ and }  {\bf o}(\normx^{n-1}).
\label{claim}
\end{align}
By the comment before the lemma, this claim actually involves
the function $g: V \rightarrow \mathbb{R}$ defined by
$g(\centresph_0) =0$ and
$g\defi \partialr(f^{(n)})=\partialr^i\partial_if^{(n)}$ outside $\centresph_0$.
Clearly $g\in C^{n-1}(V\setminus \centresph_0)$. Moreover,  given that $\partial_if^{(n)}\in {\bf o}(\normx^{n-1})$ (by Lemma \ref{res:lemma_di}),
and $\partialr^i$ are bounded everywhere, we have
$\partialr^i\partial_if^{(n)}\in {\bf o}(\normx^{n-1})$, which in particular means that  
$g\in C^0(V)$ and $g\in {\bf o}(\normx^{n-1})$.
Furthermore, the boundedness near the axis of
  $x^a/\normx$ for $1 \leq a \leq p$ and $x^j$ for $p <j\leq d$ combined with
  Lemma \ref{res:lemma_di} imply $ x^a\partial_af^{(n)}\in {\bf o}(\normx^n)$
and $x^j\partial_jf^{(n)}\in {\bf o}(\normx^n)$, and therefore
\[
\normr\partialr(f^{(n)})= \normr \partialr^a\partial_af^{(n)}+\normr \partialr^j\partial_jf^{(n)}=x^a\partial_af^{(n)}+x^j\partial_jf^{(n)}\in {\bf o}(\normx^n).
\]
This already proves  point (ii) in the lemma.

It only remains  to check that $g$ is $C^{n-1}(V)$.
Take $\alpha$ such that  $0\leq |\alpha|\leq n-1$ and compute
\[
\partial^\alpha \left ( \partialr(f^{(n)}) \right )
=\sum_{\beta\leq\alpha}b_\beta \partial^\beta \partialr^i \partial^{\alpha-\beta}\partial_i f^{(n)},
\]
where $b_{\beta}\in \mathbb{R}$.
The terms in the sum can be written as
$\normr^{|\beta|}\partial^\beta\partialr^i\frac{1}{\normr^{|\beta|}}\partial^{\alpha-\beta}\partial_i f^{(n)}$,
and given that $\normr^{|\beta|}\partial^\beta\partialr^i$ are bounded
and that $\partial^{\alpha-\beta}\partial_i f^{(n)}\in {\bf o}(\normx^{n-1-|\alpha|+|\beta|})$
by Lemma \ref{res:lemma_di} (note that $|\alpha|-|\beta|= |\alpha-\beta|$ because $\beta\leq \alpha$),
each term belongs to ${\bf o}(\normx^{n-1-|\alpha|+|\beta|-|\beta|})={\bf o}(\normx^{n-1-|\alpha|})$.
Therefore $\partial^\alpha \left ( \partialr(f^{(n)}) \right ) \in {\bf o}(\normx^{n-1-|\alpha|})$
and thus  $\partial^\alpha \left ( \partialr(f^{(n)}) \right ) \in o(\normr^{n-1-|\alpha|})$
by Lemma \ref{res:rhon-rn}. 
Since $|\alpha|\leq n-1$ the limits
of $\partial^\alpha \left ( \partialr(f^{(n)}) \right )$
vanish at $\normrho{x}=0$, and thus at $\normr=0$ in particular. 
Define the functions
\[
g^\alpha \defi \left\{
\begin{array}{ll}
        \partial^\alpha \left ( \partialr(f^{(n)}) \right ) & x\neq \centresph_0\\
        0& x = \centresph_0
\end{array}
\right.,
\]
with $g^0=g$, which are $C^0(V)$ by construction, and ${\bf o}(\normx^{n-1-|\alpha|})$
(and thus $ o(\normr^{n-1-|\alpha|})$).

It remains to verify
that the differentials
(to order $n-1$) of $g$ on $\centresph_0$
  exist and vanish
  for all $c\in\centresph_0$. We do that at once 
  by showing that $D g^{\beta}_c$ (namely,
  the differential of $g^{\beta}$  at $c$) vanishes 
for any $0\leq |\beta|\leq n-2$.
We compute
\begin{align*}
&\lim_{x\to c}\frac{|g^\beta(x)-g^\beta(c)|}{\normu{x}}
=\lim_{x\to c}\frac{|g^\beta(x)|}{\normu{x}}\leq \lim_{x\to c}\frac{|g^\beta(x)|}{\norm{x}}=0, \mbox{ for } 0\leq |\beta|\leq n-2
\end{align*}
where in the first equality we insert
$g^\beta(c)=0$ (since $c \in \centresph_0$), the inequality
follows from $\normu{x}\geq \norm{x}$,
and in the final equality we use that the limit $x\to c$ is equivalent to $\norm{x}\to 0$,
and that  $g^\beta\in o(\normr^{n-1-|\beta|})$,
so that $g^\beta/\normr\in o(\normr^{n-2-|\beta|})$. By definition of differential, this limit shows that $Dg^\beta_c$ exists and vanishes
for $0\leq |\beta|\leq n-2$ and all $c\in\centresph_0$.
For $|\beta|=0$ this means that $g$ is differentiable at $\centresph_0$ with vanishing differential, which establishes
that $g\in C^1(V)$. Iterating, and noting that $|\beta| \leq n-2$ means
that we may take up to $n-1$ derivatives of $g$, it follows  that
$g \in C^{n-1} (V)$, and the claim (\ref{claim}) is verified.

We now apply this result iteratively to the functions
$f^{(l)} := f$ and
$f^{(l-s)}\defi\partialr^{(s)}(f^{(l)})$ for $s=0,\cdots,l$.
By hypothesis, $f^{(l)}$ is $C^l(V)$ and ${\bf o}(\normrho{x}^{l})$,
so 
\begin{align*}
  f^{(n-1)}=\partialr(f^{(n)}) \quad \mbox{ is } \quad C^{n-1}(V) \quad \mbox{ and } \quad {\bf o}(\normx^{n-1})
\end{align*}
for $1\leq n\leq l $. Given that $1\leq l$ and $1\leq k\leq l$ by assumption,
we have $1\leq l-k+1\leq l$, and we can take $n = l-k+1$ in the preceding 
statement 
to conclude $\partialr^{(k)}(f^{(l)})=f^{(l-k)}=\partialr(f^{(l-k+1)})$ is $C^{l-k}(V)$ and ${\bf o}(\normx^{l-k})$. This is item (i) of the Lemma.

The proof of the third point (iii) follows an analogous procedure. First, for
$f^{(n)}\in C^{n}(V)$ and ${\bf o}(\normx^{n})$ we have
$\frac{1}{\normr}f^{(n)}\in  {\bf o}(\normx^{n-1})$ by virtue of Lemma \ref{res:lemmaf8}.
We define a new $g$, now setting $g(\centresph_0) =0$ and $g\defi  \frac{1}{\normr}f^{(n)}$ outside $\centresph_0$ and show that $g\in C^{n-1}(V)$. Indeed,
for $0\leq |\alpha|\leq n-1$,
\[
\partial^\alpha \left (\frac{1}{\normr}f^{(n)} \right )
=\sum_{\beta\leq\alpha}b_\beta \partial^\beta \frac{1}{\normr} \partial^{\alpha-\beta} f^{(n)}.
\]
The terms in the sum can be written as
$\normr^{|\beta|+1}\partial^\beta\frac{1}{\normr}\frac{1}{\normr^{|\beta|+1}}\partial^{\alpha-\beta}f^{(n)}$,
and given that $\normr^{|\beta|+1}\partial^\beta\frac{1}{\normr}$ are bounded
and that $\partial^{\alpha-\beta}f^{(n)}\in {\bf o}(\normx^{n-|\alpha|+|\beta|})$
by Lemma \ref{res:lemma_di}, we get that 
each term belongs to
${\bf o}(\normx^{n-1-|\alpha|})$.
Therefore $\partial^\alpha \left ( \frac{1}{\normr}f^{(n)} \right ) \in {\bf o}(\normx^{n-1-|\alpha|})$
and thus  $\partial^\alpha \left (  \frac{1}{\normr}f^{(n)} \right ) \in o(\normr^{n-1-|\alpha|})$
by Lemma \ref{res:rhon-rn}. 
Since $|\alpha|\leq n-1$ the limits
of $\partial^\alpha \left (  \frac{1}{\normr}f^{(n)}  \right )$
vanish at $\normrho{x}=0$, and thus $\normr=0$ in particular.
The rest of the proof follow the same steps as for the previous $g$
and the same iteration process as for item (i).
\fin
\end{proof}

From here onwards we particularise to $d=3$, $d_z=1$ ($\Longrightarrow p=2$) and $d_t=1$.
Given  Cartesian coordinates in $B^2\times Z\subset\mathbb{R}^3$, $\{x_i\}=\{x_1,x_2,z\}$, $i=1,2,3$
(indices raised and lowered with $\delta_{ij}$), the seminorms read explicitly
$\normr\defi \sqrt{x^i x_i}$, $\normx=\sqrt{x^a x_a}=\sqrt{(x_1)^2+(x_2)^2}$.
Moreover we consider a ball $B^3\subset \mathbb{R}^3$ centered at the origin $\{0_3\}$
small enough so that $B^3\subset B^2\times Z$, and denote by $U$ the corresponding set in $V$, that is,
$U=B^3\times T\subset V$.

Introduce the additional vectors tangent to the $B^3$ factor of $U$
\begin{align}
&\ngamma\defi x^a\partial_a= x_1\partial_{x_1}+x_2\partial_{x_2},\quad \nzeta\defi \partial_{z},\quad \axial\defi x_1\partial_{x_2}-x_2 \partial_{x_1}, \mbox{ smooth in } U
\nonumber \\
&\oaxial\defi -\frac{1}{\normr}\left(z\ngamma-\normx^2\nzeta\right),  \mbox{ smooth in } U\setminus \centresph_0
\mbox{ and extends continuously } (\oaxial=0) \mbox{ to } \centresph_0.
\label{app:vectors}
    \end{align}
It is straightforward to check that $\partialr$, $\oaxial$ and $\axial$ are mutually orthogonal and commute.
Their explicit expressions in spherical coordinates
in the $B^3\setminus\{0_3\}$ factor of $U \setminus \centresph_0$ are given by  $\partialr=\partial_r$,
$\axial=\partial_\phi$ and $\oaxial=-\sin \theta\partial_\theta$.
Regarding the $T$ factor of $U$, it is now an interval in the real line, and
we will denote by $t$ both the points and coordinate within $T$. Clearly $\partial_t$ commutes with
$\partialr$, $\oaxial$ and $\axial$.

Let us continue with a result we will also need in the main text.
\begin{lemma}
\label{res:lemma_oaxial}
Let $f:V\to \mathbb{R}$ be $C^{l}$, $l\geq 1$,
and assume $f\in {\bf o}(\normx^{l})$. Then $\oaxial(f)\in {\bf o}(\normx^{l})$.
Also, $\normr \oaxial(f)\in C^{l-1}(V)$.
\end{lemma}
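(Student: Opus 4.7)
My plan is to exploit the explicit decomposition of $\oaxial$ given in \eqref{app:vectors}, namely
\[
\oaxial = -\frac{z}{\normr}\ngamma + \frac{\normx^2}{\normr}\nzeta, \qquad \ngamma = x^a\partial_a,\ \nzeta = \partial_z,
\]
and separate the two claims according to whether the $1/\normr$ factor is present or not. For the first claim $\oaxial(f)\in\mathbf{o}(\normx^l)$, I would apply Lemma \ref{res:lemma_di} to extract the crucial decay of the derivatives of $f$ near the axis: since $f\in C^l$ and $f\in\mathbf{o}(\normx^l)$, we have $\partial_a f \in \mathbf{o}(\normx^{l-1})$ for $a=1,2$ and $\partial_z f\in \mathbf{o}(\normx^l)$ (the derivative along $z$ does not increase the $\arho$-count).

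Using these, the estimate splits into two pieces. For $\frac{z}{\normr}\ngamma(f)$, I would first bound $|x^a\partial_a f|$ by $\normx\,|\partial_a f| = \mathbf{o}(\normx^l)$, and then use $|z/\normr|\leq 1$ (valid everywhere $\normr>0$, which is automatic for $\normx\to 0$ along any sequence not meeting $\centresph_0$, and handled separately at $\centresph_0$) to conclude this term is $\mathbf{o}(\normx^l)$. For $\frac{\normx^2}{\normr}\nzeta(f)$, I would use $\normx^2/\normr \leq \normx$ together with $\partial_z f\in\mathbf{o}(\normx^l)$ to obtain a bound of $\mathbf{o}(\normx^{l+1})\subset\mathbf{o}(\normx^l)$. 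Adding the two yields the claim.

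For the second claim, the key observation is that multiplying by $\normr$ cancels the singular denominator completely:
\[
\normr\,\oaxial(f) = -z\,\ngamma(f) + \normx^2\,\nzeta(f) = -z\,x^a\partial_a f + \normx^2\,\partial_z f,
\]
which is now expressed as a first-order differential operator with \emph{polynomial} (in particular smooth) coefficients acting on $f$. Since $f\in C^l(V)$, each of its first partial derivatives is $C^{l-1}(V)$, and after multiplication by the polynomial factors the result stays in $C^{l-1}(V)$, as required.

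I do not anticipate a genuine obstacle: the whole content of the lemma is that the apparent $1/\normr$ singularity of $\oaxial$ is either tamed by the vanishing of $\ngamma(f)$ and $\normx^2\nzeta(f)$ on the axis (first claim) or cancelled outright after multiplication by $\normr$ (second claim). The only mild subtlety is justifying boundedness of the factors $z/\normr$ and $\normx^2/\normr$ uniformly on neighbourhoods of the axis including points arbitrarily close to $\centresph_0$; this is handled by the crude estimates $|z|\leq \normr$ and $\normx\leq \normr$, both valid wherever $\normr>0$, while the limit $\normx\to 0$ defining $\mathbf{o}(\normx^l)$ never requires evaluation at $\normr=0$.
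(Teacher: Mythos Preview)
Your proof is correct and follows essentially the same route as the paper: both decompose $\oaxial(f)$ via \eqref{app:vectors}, invoke Lemma~\ref{res:lemma_di} to get $\partial_a f\in\mathbf{o}(\normx^{l-1})$ and $\partial_z f\in\mathbf{o}(\normx^{l})$, and then use boundedness of $z/\normr$ and $\normx/\normr$ to conclude the first claim, while the second claim follows from the smooth (polynomial) coefficients appearing in $\normr\,\oaxial(f)$. The only cosmetic difference is that you bound the second term by $\mathbf{o}(\normx^{l+1})$ rather than $\mathbf{o}(\normx^{l})$, which is harmless.
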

\begin{proof}
By virtue of (\ref{app:vectors}) we have
\[
\oaxial(f)=-\frac{1}{\normr}\left(z x^a \partial_af-\normx^2\partial_z f\right)=
-\frac{z}{\normr} x^a \partial_af+\frac{\normx}{\normr} \normx\partial_z f.
\]
As in the proof above, boundedness near the axis of  $x^a/\normx$ 
combined with
Lemma \ref{res:lemma_di} imply $ x^a\partial_af\in {\bf o}(\normx^l)$.
Boundedness of $z/\normr$ near the axis thus establishes the first term is 
${\bf o}(\normx^l)$.
For the second term, Lemma \ref{res:lemma_di}
also ensures that  $\partial_z f\in {\bf o}(\normx^l)$, and hence
boundedness of $\normx/\normr$ 
leads to the first result. The final statement follows from the $C^{\infty}$ smoothness
of both factors $z x^a$ and $\normx^2$.\fin
\end{proof}

The main result of this Appendix is the following lemma, which
concerns radially symmetric functions
in $\{x_1,x_2\}$ (see Appendix \ref{app:diff_origin}), that is,
functions invariant under $\axial$. We refer to these functions also
as ``axially symmetric''. The corresponding trace functions
are  defined in the domain
$U_{\eta} := \{ \rho \in \mathbb{R}_{\geq 0};
(\rho,0,z,t ) \in U\} \subset \mathbb{R}_{\geq 0} \times \mathbb{R}^{2}$.
We will also use the notation introduced in Section \ref{sec:spher_symm}
regarding the spheres $S_r\defi\{x\in U; t=const.,\normr=r>0\}$.
Observe that we suppress the label $t$ in $S_r$.
As in the main text, given any vector field $V$
in $\mathbb{R}^3$, we write $\vect{V}$ for its
tangential projection to the spheres $S_r$. The operator
$\dsph$ acting on a scalar $\beta$ gives the differential of its pull-back
on each of the spheres. We also let $\la,\ra$
denote the euclidean metric on $\mathbb{R}^3$ and observe that
$\la V,\cdot\ra= \norm{x}^2 \gsph( \vect{V},\cdot )$, where
$\gsph$ is the standard round unit metric. Note also that
$\axial$ and $\oaxial$ coincide with $\vect{\axial}$ and $\vect{\oaxial}$
respectively, and
$\la \axial, \axial\ra = \la \oaxial, \oaxial \ra = \normx^2$.

We have the ingredients to  state the key result of the Appendix.
In the first item of the lemma we use the big-O notation with its usual meaning.

\begin{lemma}
\label{res:lemma_oaxial_equation}
\begin{enumerate} Let $k$, $l$ be non-negative integers.
\item Let $P(z,t)$ be  $C^m(U)$ with $m\geq 0$. Then the equation on $U$
\begin{equation}
  \label{eq:for_gamma_P}
  \norm{x} \oaxial(Z_P)=\normrho{x}^{2k} z^lP(z,t)
\end{equation}
with $k\geq 1$ admits an axially symmetric solution
$Z_P\in C^{m}(U)$
and $O(\norm{x}^{2k+l-1})$. Moreover, if $m\geq 1$, $\partialr(Z_P)\in O(\normr^{2k+l-2})$. 
\item Let  $ \Phi^{(m)}$ be radially symmetric in $\{x_1,x_2\}$,  $C^m(U)$
and ${\bf o}(\normrho{x}^{m})$ with $m\geq 1$.
Then the equation on $U$
\begin{equation}
  \label{eq:for_gamma_f}
   \norm{x} \oaxial(\Zphi)=\normrho{x}^{2k} z^l \Phi^{(m)}
\end{equation}
with $2k+l\geq 1$ admits an axially symmetric solution
$\Zphi\in C^{m}(U\setminus \centresph_0)\cap C^0(U)$  and $o(\norm{x}^{2k+l+m-1})$.
Moreover, $\partialr(\Zphi)\in o(\normr^{2k+l+m-2})$.
\item Let  $ \Phizero^{(m)}$ be radially symmetric in $\{x_1,x_2\}$,
$C^m(U\setminus\centresph_0)\cap C^0(U)$ and $O(\normr)$ with $m\geq 0$.
Then the equation on $U$
\begin{equation}
  \label{eq:for_gamma_f_zero}
     \norm{x} \oaxial(\Zzero)= \normx^{2} \Phizero^{(m)}
\end{equation}
admits an axially symmetric solution
$\Zzero\in C^m(U\setminus\centresph_0)\cap C^{0}(U)$  and $O(\norm{x}^{2})$
which is also $C^{m+1}$ on each sphere $S_r$.
Moreover, $\partialr(\Zzero)$ and $\partial_t(\Zzero)$ are
$C^{m}$ on each sphere $S_r$.
\end{enumerate}
\vspace{2mm}
\end{lemma}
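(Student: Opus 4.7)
My plan is to work in spherical coordinates $(r,\theta,\phi,t)$, in which $\oaxial = -\sin\theta\,\partial_\theta$, $\normr = r$, $\normx = r\sin\theta$ and $z = r\cos\theta$. The equation $\normr\,\oaxial(Z) = \mathcal{Q}$ then reduces, for each fixed $(r,t)$, to a first-order ODE in $\theta$ on each sphere $S_r$. Since in all three items the right-hand side $\mathcal{Q}$ contains at least a factor $\normx^2$, the quotient $\mathcal{Q}/(r\sin\theta)$ is regular at the poles and the ODE can be integrated. A naive $\theta$-integration however typically yields a solution that fails to be smooth in Cartesian coordinates on $U$, so the core of the construction is to select the primitive in a form that manifestly lifts to a smooth axially symmetric function. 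This is achieved by the substitution $u = r\cos\theta$ together with integration from the equator: the $\theta$-integral becomes an integral in $u \in [0, z]$ whose integrand depends on the spatial variables only through the smooth quantities $r^2 = \normx^2 + z^2$ and $z$.

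For item 1, I would define explicitly
\begin{equation*}
Z_P(r,z,t) := \int_0^z (r^2 - u^2)^{k-1}\, u^l\, P(u, t)\,du,
\end{equation*}
which is axially symmetric by construction. Using $\partial_\theta z = -r\sin\theta$ and $r^2 - z^2 = \normx^2$, a direct differentiation gives $\partial_\theta Z_P = -r\sin\theta\,\normx^{2(k-1)} z^l P(z,t)$, whence $\normr\,\oaxial(Z_P) = \normx^{2k} z^l P(z,t)$, and the equation is satisfied. Because $P \in C^m(U)$ and $r^2$ is $C^\infty$ in Cartesian coordinates, the integrand is jointly $C^m$ and $Z_P \in C^m(U)$. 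The bound $Z_P \in O(\normr^{2k+l-1})$ follows from $|u| \leq |z| \leq \normr$ together with $r^2 - u^2 \leq r^2$; the estimate for $\partialr(Z_P)$ is obtained by differentiating under the integral, which produces a boundary contribution at $u = z$ (coming from $\partialr z = \cos\theta$) and an interior term from $\partial_r (r^2 - u^2)^{k-1}$, each of which is easily bounded by $O(\normr^{2k+l-2})$.

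For items 2 and 3 I would use the same construction, replacing $P(u,t)$ by the values of $\Phi^{(m)}$ (resp.\ $\Phizero^{(m)}$) along the meridian $(\sqrt{r^2 - u^2}, 0, u, t)$ of $S_r$. Axial symmetry makes this evaluation well-defined and $C^m$ off the axis; for item 2 with $k = 0$ (so $l \geq 1$) the potentially singular factor $(r^2 - u^2)^{-1}$ at $u \to \pm r$ is controlled by the ${\bf o}(\normx^m)$ decay of $\Phi^{(m)}$ with $m \geq 1$. Verification of the equation is formally identical to item 1. The decay at $\centresph_0$ is then inherited from the right-hand side: the $o(\normr^{2k+l+m-1})$ bound in item 2 is obtained by expanding $\Phi^{(m)}$ via Lemma \ref{lemma:taylor} as $\sum_{|\arho|=m} R_{\arho}(x)\, x^{\arho}$ with $R_{\arho} \in {\bf o}(1)$, inserting into the integral and estimating, and the analogous bound for $\partialr(\Zphi)$ follows by similar reasoning. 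In item 3 the assumption $\Phizero^{(m)} \in O(\normr)$ yields $\Zzero \in O(\normr^2)$ and a continuous extension by zero to $\centresph_0$ directly from the integral representation.

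The main obstacle is the subtle splitting in item 3 between sphere-wise regularity ($\Zzero \in C^{m+1}(S_r)$ and $\partialr(\Zzero), \partial_t(\Zzero) \in C^m(S_r)$) and the weaker global statement $\Zzero \in C^m(U\setminus\centresph_0)$. Derivatives along $S_r$ keep $r$ fixed and act only on the upper integration limit $z$ and on $t$, so they merely integrate $\Phizero^{(m)}$ and gain one order of smoothness. By contrast, $\partialr$ differentiates the integrand through both the exponent $(r^2 - u^2)^{k-1}$ and the first slot of $\Phizero^{(m)}$ viewed as $\Psi(\normx^2, z, t)$, producing the factor $\partial_s\Psi = (2\normx)^{-1} \partial_\normx \Phizero^{(m)}$; this is well-defined off $\centresph_0$ thanks to axial symmetry, but brings in radial variations of $\Phizero^{(m)}$ which, under the bare $O(\normr)$ hypothesis, do not extend regularly to $\centresph_0$. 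Bookkeeping these losses using Lemmas \ref{res:lemma_di} and \ref{res:lemma_nkf} and matching them to the claimed regularity classes is the bulk of the technical work.
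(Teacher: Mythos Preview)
Your treatment of item 1 is essentially identical to the paper's: you write down the same explicit primitive $\int_0^z (r^2-u^2)^{k-1}u^l P(u,t)\,du$, verify the equation by differentiation, and read off the $O(\normr^{2k+l-1})$ bound and its radial derivative by the same scaling argument.

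For items 2 and 3 you and the paper genuinely diverge. You integrate from the equator and plan to obtain regularity by direct differentiation of the integral, invoking Lemmas \ref{res:lemma_di} and \ref{res:lemma_nkf} for bookkeeping. The paper instead integrates from the north pole ($\widehat\beta|_{\theta=0}=0$) and proceeds in two separate moves: sphere-wise regularity is obtained by recasting the equation as $\dsph\beta=\gsph(V,\cdot)$ for a vector field $V$ shown to be $C^{s-1}$ via Lemmas \ref{res:cart_product} and \ref{Analy}, then taking a divergence to get $\Delta_{\mathbb{S}^2}\beta=\mathrm{div}_{\mathbb{S}^2}V$ and invoking (one-dimensional) elliptic regularity; transversal regularity in $r$ is handled by an iteration in which $\partialr^{(m-s)}(\Zphi)$ is shown to satisfy the same type of equation with source $\normx^{2k}z^l f^{(s)}$, where $f^{(s-1)}=(2k+l-1)\normr^{-1}f^{(s)}+\partialr(f^{(s)})$. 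This iteration is the paper's device for avoiding exactly the difficulty you flag: differentiating your integral in $r$ produces the singular factor $r/\sqrt{r^2-u^2}$ acting on $\partial_\rho\Phi^{(m)}$, and controlling this uniformly up to order $m$ is not straightforward.

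There is also a gap in your sketch for item 2 when $k=0$. You note that the $(r^2-u^2)^{-1}$ factor is ``controlled'' by $\Phi^{(m)}\in{\bf o}(\normx^m)$, and indeed the integrand is $o(\normx^{m-2})$ on each sphere, hence integrable. But this only gives existence and continuity of $\Zphi$; it does not give $\Zphi\in C^m(U\setminus\centresph_0)$. Your argument that sphere-tangent derivatives act only on the upper limit $z$ and hence gain a derivative works for item 3 (where the integrand is the regular object $\Phizero^{(m)}$), but for item 2 with $k=0$ the integrand itself is singular at the poles and you would need an argument like the paper's elliptic one (or a careful induction on derivatives of $\normx^{-1}\Phi^{(m)}$ using Lemma \ref{Analy}) to conclude. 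You also identify item 3 as the ``main obstacle'', whereas the paper's proof makes clear that item 2 is the technically demanding part; item 3 is handled there comparatively directly via the Laplace equation and the integral expression.
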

\begin{proof}
Since the inhomogeneous (source) terms of all the equations are invariant under $\axial$,
and we only care about existence, it will suffice to consider solutions
which are radially symmetric in $\{x_1,x_2\}$.
We concentrate first on the problem for $Z_P$.
Clearly $\ngamma$ (see (\ref{app:vectors})) acts
on radial functions in $\{x_1,x_2\}$ as $\ngamma(g(\normx))=\rho\partial_\rho g(\rho)|_{\rho=\normx}$. 
Note that the trace of the function $\norm{x}$ is $\sqrt{\rho^2+z^2}$.
Equation (\ref{eq:for_gamma_P}) for $Z_P$ is cast in terms of the traces in $U_{\eta}$ as
\begin{equation}
-\rho(z\partial_\rho-\rho\partial_z) \trace{Z}_P(\rho,z,t)=
\rho^{2k} z^l P(z,t).
\label{eq:for_alpha_2}
\end{equation}
It is direct to check that
\[
\trace{Z}{}_P(\rho,z,t)=\int^z_0(\rho^2+z^2-s^2)^{k-1}s^l P(s,t)ds
\]
satisfies the equation. The assumption $k\geq 1$
ensures that the integrand is a polynomial in even powers of $\rho$,
which by Lemma \ref{origin} ensures, in turn, that $Z_P$ is
$C^m(U)$ (i.e. the differentiability of $P$). 
In terms of functions on $U$
\[
Z_P=\int^z_0(|x|^2-s^2)^{k-1}s^l P(s,t)ds
\]
satisfies (\ref{eq:for_gamma_P}).
In spherical coordinates, and using the change of variable $s=\lambda r$
in the integral, we can reexpress $Z_P$ as
\begin{equation}
Z_P(r,\theta,\phi)=r^{2k+l-1}\int^{\cos\theta}_0(1-\lambda^2)^{k-1}\lambda^l P(\lambda r,t)d\lambda.
\label{eq:Zp_spher}
\end{equation}
Boundedness of the integral for $k\geq 1$ establishes that
$Z_P \in O(|x|^{2k+l-1})$ as claimed.
We compute now a radial derivative using
  (\ref{eq:Zp_spher}), taking into account that 
  in spherical coordinates $\partialr=\partial_r$,   to obtain
  \[
    \partialr(Z_P)(r,\theta,\phi)=(2k+l-1)\frac{Z_P}{r}+r^{2k+l-1}\int^{\cos\theta}_0(1-\lambda^2)^{k-1}\lambda^{l+1} \partial_zP (\lambda r,t)d\lambda.
  \]
  This time boundedness of the integral for $k\geq 1$ implies $\partialr(Z_P)\in O(\normr^{2k+l-2})$.

We proceed with the second point. The proof rests on an iteration
based on the following claim.

\vspace{3mm}

{\bf Claim.}  Fix integers $m\geq 1$ and $0\leq s\leq m$  and let
  $f^{(s)}\in C^{s}(U)$   be a radially symmetric
  function (in $\{x_1,x_2\}$) satisfying $\normr f^{(s)} \in {\bf o}(\normx^{s+1})$ for $0 \leq s \leq m-1$ and $f^{(m)} \in {\bf o}(\normx^{m})$.
  Let $l,k$ be non-negative integers satisfying $2k + l\geq 1$ and define
  $g^{(s)}=\normx^{2k}z^l  f^{(s)}$.
Then the equation
\begin{equation}
  \label{eq:for_beta_s}
   \normr \oaxial(\beta)=g^{(s)} 
\end{equation}
admits a unique axially symmetric solution $\beta=\widehat\beta{}_s$
such that its restriction to $S_r$ is $C^{s}$
and satisfies the boundary condition
$\widehat\beta{}_s|_{\theta=0}=0$. Moreover, this function is 
also $C^0$ with respect to $r$  and $t$ for $r>0$ and satisfies
$\widehat\beta{}_s\in o (\normr^{2k+l+s-1})$, so that in particular it
extends
continuously to $\centresph_0$ as $\widehat\beta{}_s|_{\centresph_0}=0$.

\vspace{3mm}

\emph{Proof of the claim}.
  We  observe that the assumptions imply (since $\normx/\normr$ is bounded)
\begin{align}
&  f^{(s)} \in {\bf o} (\normx^{s}), \quad \quad
  \normr f^{(s)} \in {\bf o}(\normx^{\min\{s+1,m\}}), \quad \quad
  0 \leq s \leq m, \nonumber \\
&  f^{(m)}/\normx \in o (\normr^{m-1}), \quad \quad \quad 
  \normr f^{(s)}/\normx \in {\bf o}(\normr^{s}), \quad \quad
  0 \leq s \leq m-1, \label{condsf}
\end{align}
the second line following form the first combined with Lemma \ref{res:rhon-rn}
(and $m \geq 1$).
Clearly $g^{(s)}\in C^s(U)$ and, since $\normx^{2k} z^l /\normr$ is also bounded,
$g^{(s)} \in {\bf o}(\normx^{\min\{s+1,m\}})$.
We start by writing (\ref{eq:for_beta_s})
in spherical coordinates in $U\setminus\centresph_0$
recalling that $\oaxial=-\sin\theta\partial_\theta$ and $g^{(s)}$ is invariant under
$\axial=\partial_\phi$, as
\begin{equation}
\partial_\theta\beta=-\frac{1}{r\sin\theta}g^{(s)}(\theta,r,t) \quad \mbox{ for }\quad r>0.
\label{eq:beta_spher}
\end{equation}
We may write the solution $\widehat\beta{}_s$ that satisfies
$\widehat\beta{}_s|_{\theta=0}=0$ as 
\begin{equation}
\widehat\beta{}_s(\theta;r,t)=-\int^\theta_0 \frac{g^{(s)}(\lambda,r,t)}{r\sin\lambda}d\lambda.
\label{eq:int_beta}
\end{equation}
Since in particular $g^{(s)}\in {\bf o}(\normx^{1})$ the integrand is bounded in all the domain of integration
and thus the integral exists and $\widehat\beta{}_s(\theta;r,t)$ is continuous in $\theta$
up to the boundary. Clearly, $\widehat\beta{}_s$ thus constructed is $C^0$ on each $S_r$.
On the other hand, since $g^{(s)}\in C^s(U)$ then 
$g^{(s)}$ is $C^{s}$ in $r$ and $t$ for $r>0$, and
therefore $\widehat\beta{}_s(\theta;r,t)$
is also
$C^{0}$ in $r$  and $t$ for $r>0$.
Moreover,  at each fixed $t$ we have
\begin{align}
\frac{\norm{\widehat\beta{}_s}(\theta;r,t)}{r^{2k+l+s-1}}
\leq\frac{1}{r^{2k+l+s-1}}\int^\theta_0  \left|\frac{g^{(s)}}{r\sin\lambda} \right| d\lambda
  \leq \frac{\theta}{r^{2k+l+s-1}} \sup_{\normr=r}\left|\frac{g^{(s)}}{\normx} \right|\nonumber \\
  \leq \frac{\theta}{r^{s-1}} \sup_{\normr=r}\left|\frac{f^{(s)}}{\normx} \right|
  = \frac{\theta}{r^{s}} \sup_{\normr=r}\left|\frac{\normr f^{(s)}}{\normx} \right|,
  \label{ineqs}
\end{align}
where we have used the explicit form
of $g^{(s)}=\normx^{2k}z^l f^{(s)}$ and
$\sup_{\normr=r}(\normx^{2k} |z^l|) \leq r^{2k+l}$ in the last inequality.
Properties (\ref{condsf}) imply in particular that
  $\normr f^{(s)} /\normx \in o (\normr^{s})$ for all $s \in \{0, \cdots, m\}$, so, the last term is $o(1)$ and we conclude
\begin{align}
\label{limitbeta}
  \lim_{\normr\to 0}\frac{\norm{\widehat\beta{}_s}}{\normr^{2k+l+s-1}}=0,
  \quad \quad 0 \leq s \leq m.
\end{align}

The claim for $s=0$ has been shown. In the following we deal with $s\geq 1$.
In order to obtain the differentiability of the solutions $\widehat\beta{}_s$
along the spheres $S_r$ one could differentiate \eqref{eq:int_beta}
repeatedly and take control over the behaviour of the terms around the axis.
That is trivial for the first derivative, but for $s\geq 2$ we follow a more
straightforward strategy. 
Note also that with the integral expression only
we cannot extract any sort of differentiability (nor continuity) of
$\partial_r\widehat\beta{}_s(\theta;r,t)$ on $S_r$ because the fact that
$g^{(s)}$ is ${\bf o}(\normx^s)$ does not translate to $\partial_r g^{(s)}$ in any way in general.

A convenient form of expressing equation (\ref{eq:for_beta_s}) is
\begin{align}
  \normr\d \beta(\oaxial)=\frac{1}{\la \axial,\axial \ra} g^{(s)}\la\oaxial,\oaxial\ra,
  \label{Eqbeta}
\end{align}
where at the right hand side we are just using $\la\oaxial,\oaxial\ra=
\la \axial,\axial \ra$.
Clearly $\d\beta(\oaxial)=\dsph\beta(\vect{\oaxial})$ outside the origin.
On the other hand, since we are constructing solutions invariant under $\axial$,
it is enough to restrict
$\beta$ so that $\d\beta(\axial)=\dsph\beta(\vect{\axial})=\axial(\beta)=0$.
Therefore, since $\la\oaxial,\axial\ra=0$,
 equation (\ref{Eqbeta}) is equivalent to
\begin{equation}
\dsph \beta(\cdot)
=\frac{1}{\normr}\frac{1}{\trho^2} g^{(s)} \la\oaxial,\cdot\ra
= \frac{\norm{x}}{\normx^2}  g^{(s)} \gsph(\vect{\oaxial},\cdot)
=:\gsph(V,\cdot)
\label{eq:for_beta}
\end{equation}
on each sphere $S_r$,  where we have defined the vector 
\[
  V := \frac{\normr}{\normx^2}g^{(s)} \vect{\oaxial} 
  =-\normx^{-2}g^{(s)}(zx_1\partial_{x_1}+zx_2\partial_{x_2}-\normx^2\partial_z).
\]
and (\ref{app:vectors}) has been used in the second equality.
We deal first with
the regularity and behaviour around the axis of the vector $V$.
The applications of $V$ to the Cartesian coordinate functions (observe $V(t)=0$) provide
\[
V(x_1)=-z \frac{x_1}{\normx}g^{(s)} \frac{1}{\normx},\quad
V(x_2)=-z \frac{x_2}{\normx}g^{(s)} \frac{1}{\normx},\quad
V(z)=g^{(s)}.
\]
Since $g^{(s)}$ is ${\bf o}(\normx^s)$ in particular, Lemma \ref{res:cart_product} 
ensures that 
$\frac{x_1}{\normx}g^{(s)} $ and  $\frac{x_2}{\normx}g^{(s)} $
are $C^s$ and ${\bf o}(\normx^s)$.
We apply
next Lemma \ref{Analy} to these two functions to find that
$\frac{1}{\normx}g^{(s)} $ is $C^{s-1}$. Given that, also,
$\frac{1}{\normx}g^{(s)} \in{\bf o}(\normx^{s-1})$
we finally apply  Lemma \ref{res:cart_product} to $\frac{1}{\normx}g^{(s)} $ to conclude that $\frac{x_2}{\normx}\frac{1}{\normx}g^{(s)}$ and
$\frac{x_1}{\normx}\frac{1}{\normx}g^{(s)}$ are $C^{s-1}$ and ${\bf o}(\normx^{s-1})$.
Therefore $V(x_1)$ and $V(x_2)$ are $C^{s-1}$ and ${\bf o}(\normx^{s-1})$ in $U$,
while $V(z)$ is clearly $C^s$.
As a result, $V$ is a $C^{s-1}$ vector field on
$U$, thus on each $S_r$.

For $s=1$ it suffices now to trivially use the equation \eqref{eq:for_beta}, which applied to
$\partial_{x^i}=\{\partial_{x_1}, \partial_{x_2},\partial_{z}\}$ gives
\[
\dsph \beta(\partial_{x^i})=V(x^i),\quad  \dsph \beta(\partial_t)=0.
\]
Given the above, any solution $\beta$ of \eqref{eq:for_beta} invariant under
$\axial$ satisfies $\dsph \beta\in C^0(S_r)$. Therefore
we have that $\widehat\beta{}_1\in C^1(S_r)$ in particular.

We deal now with the case $s\geq 2$.
The application of the divergence ($\starsphere\dsph\starsphere$)
at both sides of \eqref{eq:for_beta} yields
\begin{align}
\Delta_{\mathbb{S}^2}\beta&
=\mbox{div}_{\mathbb{S}^2}V
\label{eq:laplace_beta}
\end{align}
on each $S_r$.
Since we are dealing with a one-dimensional
problem
(because $\axial(\beta)=0$) and $V$ is a $C^{s-1}$ vector field 
on each $S_r$,
the solutions $\beta$ of the equation (\ref{eq:laplace_beta}) for $s\geq 2$
must be $C^{s}$ on each $S_r$.\footnote{Without axial symmetry
we would need that the inhomogeneous term is Hölder $C^{s,\alpha}$.}
Observe that a solution $\beta$ of \eqref{eq:laplace_beta}
is unique up to an additive constant at each $S_r$, hence up to a
radially symmetric (and $t$-dependent) function. 
Therefore, the solution $\widehat{\beta}{}_s(r,t)$ (for $r>0$)  for each $s$
constructed above is the unique solution at each $S_r$ and $t$,
fixed by the condition $\widehat{\beta}{}_s|_{\theta=0}(r,t)=0$, i.e. vanishing at the north
poles of each $S_r$, and 
satisfies $\widehat\beta{}_s\in C^s(S_r)$.
This finishes the proof of the claim.\fin

\vspace{3mm}
We deal now with the original equation (\ref{eq:for_gamma_f}). Applying the claim to $s=m\geq 1$ in (\ref{eq:for_beta_s}) 
and $g^{(m)}=\normx^{2k}z^l\Phi^{(m)}$
and setting $\Zphi(r,t)=\widehat \beta{}_m(r,t)$, we conclude that this function is $C^m$ on each
$S_r$ and $o(\normr^{2k+l+m-1})$. It remains to show that  $\Zphi$
is also  differentiable in $r$ and $t$, which will then imply
$\Zphi\in C^{m}(U\setminus \centresph_0)$.

We first consider the radial derivatives. 
Let us set  $f^{(m)}\defi\Phi^{(m)}$ and define $f^{(s)}$  by the iteration
 \begin{equation}
f^{(s-1)}\defi(2k+l-1)\frac{1}{\normr}f^{(s)}+\partialr(f^{(s)})\quad \mbox{ for } \quad 1\leq s\leq m.
\label{def:f_s-1}
\end{equation}
Points $(i)$ and $(iii)$ in Lemma \ref{res:lemma_nkf} show, after a 
  trivial iteration 
  starting at $f^{(m)} \in C^m(U)$ and $\bm{o} (\normx^m)$,
  that $f^{(s)}$ is $C^{s}(U)$ and ${\bf o}(\normx^{s})$ for $0\leq s \leq m$.
In addition, since
  $\normr \partialr(f^{(s)}) \in
  {\bf o}(\normx^{(s)})$
  (by point $(ii)$ in Lemma \ref{res:lemma_nkf}),
\eqref{def:f_s-1} implies $\normr f^{(s)}  \in {\bf o}(\normx^{s+1\}})$ for all
$0\leq s \leq m-1$.
The point of introducing these functions is that $\partialr^{(m-s)} (\Zphi)$
with $s\in\{0,\ldots,m\}$ satisfies the equation
\begin{align}
  \normr \iota(\partialr^{(m-s)} (\Zphi) ) = \normx^{2k} z^l f^{(s)}.
 \label{iota_hatns}
\end{align}
We show this by iteration. The statement is clearly true for $s=m$ (the original
equation). Assume it is true for a given $s \in \{1, \cdots, m\}$
and apply $\partialr$ to (\ref{iota_hatns}). Using
 $\partialr(\norm{x}) =1$, $\partialr(\normx)= \norm{x}^{-1} \normx$ and
$\partialr(z)= \norm{x}^{-1} z$, together with the fact that $\partialr$ and $\iota$ commute yields 
\begin{align*}
  \iota(\partialr^{(m-s)} (\Zphi) ) +
  \norm{x} \iota
  (\partialr^{(m-s+1)} (\Zphi)) & =
  \normx^{2k} z^l \left ( (2k+l) \frac{f^{(s)}}{\norm{x}} + \partialr (f^{(s)}) \right )
  \quad \quad \Longrightarrow \\
  \norm{x} \iota (\partialr^{(m-(s-1))} (\Zphi)) & = \normx^{2k} z^l \left ( (2k+l-1 ) \frac{f^{(s)}}{\norm{x}} + \partialr (f^{(s)}) \right ) =
\normx^{2k} z^l f^{(s-1)},
\end{align*}
where in the second line we inserted the equation for
$\iota(\partialr^{(m-s)} (\Zphi) )$ and the definition of $f^{(s-1)}$. Thus,
equation (\ref{iota_hatns}) is also true for $s-1$, and hence for all $s \in \{
0,\cdots, m\}$.

Moreover, the function $\partialr^{(m-s)} (\Zphi)$ vanishes  on the line $\theta=0$.
Thus, all the conditions of the claim are satisfied for equation (\ref{iota_hatns})
and we conclude that 
$\partialr^{(m-s)} (\Zphi)$ is $C^{s}$ on each sphere and $o(\norm{x}^{2k+l+s-1})$.
This proves that all radial derivatives
$\partial_r^{(m-s)} \Zphi$ for $s\in\{0,\ldots,m\}$ are $C^{s} (U \setminus \centresph_0)$
and, moreover, extend continuously to $r=0$ with the value zero.

Regarding the  derivatives with respect
to $t$, a similar (in fact much easier) argument applies.
Indeed, $\partial^{(m-s)}_t \Zphi$, $s \in \{0, \cdots, m\}$ satisfies the equation
\begin{align*}
  \norm{x} \iota \left ( \partial^{(m-s)}_t \Zphi \right )
  = \normx^{2k} z^l \partial_t^{(m-s)} \Phi^{(m)}.
\end{align*}
Lemma \ref{res:lemma_di} ensures that
$f^{(s)}\defi \partial_t^{(m-s)} \Phi^{(m)}$ is $C^{s}(U)$ and ${\bf o}(\normx^{m})$,
and this also implies in particular
$\normr f^{(s)}\in {\bf o}(\normx^{s+1})$, $s \in \{0,\cdots, m-1\}$.
We may apply our main claim to
$g^{(s)}= \normx^{2k} z^l  f^{(s)}$
to conclude that $\partial^{(m-s)}_t\Zphi$ is $C^{s}$ on each $S_r$ and
$o(\normr^{2k+l+s-1})$ for $s\in\{0,\ldots,m\}$.

  To sum up,  $\Zphi(r,t)$  is $C^m$ on each $S_r$ and it is also $C^m$ with respect to the parameters $\{r,t\}$ for $r>0$ and $t\in T$. We may coordinate the sphere with  two charts so that, together  with $r$ and $t$, we also cover 
$U\setminus \centresph_0$ with  two coordinate charts. The coordinate change 
to cartesian coordinates is smooth on $U \setminus \centresph_0$, so we have proved that (\ref{eq:for_beta_s}) on $U$ admits an axially symmetric solution
$\Zphi\in C^m(U\setminus\centresph_0)$ (as function of $\{x_1,x_2,z,t\}$),
and $o(\normr^{2k+l+m-1})$.

The proof of point 3 is more direct than the previous one because the right-hand side of the equation has better behaviour near the axis.
We first consider $m\geq 1$, for which
equation \eqref{eq:for_gamma_f_zero} has the form of
  \eqref{eq:for_beta_s} with a $C^1(U)$ right-hand side. We therefore recover \eqref{eq:laplace_beta}  in the form
\[
\Delta_{\mathbb{S}^2}\Zzero
=\mbox{div}_{\mathbb{S}^2}V,
\]
where the vector $V$ is defined as in \eqref{eq:for_beta} with $g^{(s)}$ replaced by  $\normx^2 \Gamma^{(m)}$, i.e.
$V=
\normr\Phizero^{(m)} \vect{\oaxial}=-\Phizero^{(m)}(zx_1\partial_{x_1}+zx_2\partial_{x_2}-\normx^2\partial_z)$.
Clearly, $V$ has the same differentiability of $\Phizero^{(m)}$,
that is, $C^m(U\setminus\centresph_0)$,
and therefore $V$ is $C^m$ on each
$S_r$. As a result (see above), the solution $\Zzero$ is a  $C^{m+1}$ function
on each $S_r$. We denote by $\widehat\Zzero(r,t)$ the solution for $r>0$ that
vanishes at the north poles of each $S_r$. Using spherical coordinates, as done previously,
we can express that solution by
\begin{equation}
\widehat\Zzero(\theta;r,t)=-\int^\theta_0r\sin\lambda \Phizero^{(m)}(\lambda,r,t)d\lambda.
\label{eq:int_zzero}
\end{equation}
This integral expression directly shows that
$\widehat\Zzero(\theta;r,t)$ is $C^m$ with respect to $\{r,t\}$ by construction,
which together with $\widehat\Zzero\in C^{m+1}(S_r)$, implies
$\widehat\Zzero\in C^m(U\setminus\centresph_0)$.
To obtain the differentiability on each $S_r$ of both
$\partialr(\Zzero)$ and $\partial_t(\Zzero)$
we differentiate \eqref{eq:for_gamma_f_zero} accordingly, to get
\[
\norm{x} \oaxial(\partialr(\Zzero))= \normx^{2} \left(\frac{\Phizero^{(m)}}{\normr}+\partialr(\Phizero^{(m)})\right),\quad
\norm{x} \oaxial(\partial_t\Zzero)= \normx^{2} \left(\partial_t\Phizero^{(m)}\right).
\]
The terms within brackets at
right hand sides in both equations are $C^{m-1}(U\setminus\centresph_0)$, and the same argument
as above involving the Laplace equation
shows that $\partialr(\Zzero)$ and $\partial_t(\Zzero)$ are $C^{(m-1)}(S_r)$.

It only remains to  consider the case $m=0$. It suffices
to use the integral expression
\eqref{eq:int_zzero} which is obviously still valid. It is immediate 
that the integral is
continuous in $\{\theta,r,t\}$
for $r >0$, from where it follows easily that
$\widehat\Zzero \in C^0(U\setminus\centresph_0)$. Clearly
this function can be differentiated once with respect to
$\theta$ on $\theta \in [0,\pi]$, with continuous derivative. Again, we easily conclude that $\widehat\Zzero\in C^1(S_r)$ on each $S_r$.

It only remains to show that, irrespectively of the value of $m \geq 0$,
  $\widehat{\Zzero} \in (\normr^2)$, as this already implies that this function extends continuously to $\centresph_0$. We compute
\[
\frac{\norm{\widehat\Zzero}(\theta;r,t)}{r^{2}}
\leq\int^\theta_0  \left|\sin\lambda\frac{\Phizero^{(m)}}{r} \right| d\lambda
  \leq \theta \sup_{\normr=r}\left|\frac{\Phizero^{(m)}}{\normr} \right|.
\]
By the assumption $\Phizero^{(m)}\in O(\normr)$, the last term is bounded
and the property follows.
\fin
\end{proof}

\begin{corollary}
\label{res:main_coro}
Let $\V$, $\V'$ be finite subsets of $\mathbb{N}\times
  \mathbb{N}$ with the restrictions $\V \subset \{k \geq 1\} \times
\{ l \geq 0\}$ and $\V' \subset \{ 2k'+ l' \geq 1 \}$
and define $\displaystyle{b:= \min_{\V}\{2k +l\}, c :=
  \min_{\V'} \{ 2k'+l'\}}$. Consider the equation on $U$
\begin{equation}
  \label{eq:for_gamma_general}
  \oaxial(\tlemma)=
  \sum_{(k,l) \in \V}\normrho{x}^{2k} z^lP_{lk}(z,t)
  +\sum_{(k',l') \in \V'} \normrho{x}^{2k'} z^{l'} \Phi^{(m)}_{l'k'}(x),
\end{equation}
where $P_{lk}$ are $C^m$ in their arguments and $\Phi^{(m)}_{l'k'}$ are $C^m(U)$ and ${\bf o}(\normx^m)$.
If $m\geq 1$ the equation
admits an axially symmetric solution $$\tlemma=\normr(Z_P+\Zphi)$$ where $Z_P+\Zphi\in C^m(U\setminus \centresph_0)$ with $Z_P\in O(\normr^{b-1})$ and
$\Zphi\in o(\normr^{c+m-1})$. Moreover, $\partialr(Z_P)\in O(\normr^{b-2})$ and
$\partialr(\Zphi)\in o(\normr^{c+m-2})$.
In particular, 
\[
\tlemma\in O(\normr^{b}),\quad \partialr(\tlemma)\in O(\normr^{b-1}) \quad \mbox{ provided } b\leq c +m,
\]
\[
\tlemma\in o(\normr^{c+m}),\quad \partialr(\tlemma)\in o(\normr^{c+m-1}) \quad \mbox{ for } b > c +m.
\]
\end{corollary}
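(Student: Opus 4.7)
The plan is to reduce the corollary to the two model equations already solved in Lemma \ref{res:lemma_oaxial_equation}, using linearity of the operator $\oaxial$. A key observation is that $\oaxial(\normr)=0$ (immediate from the explicit form of $\oaxial$ in spherical coordinates, or by direct computation from \eqref{app:vectors}), so setting $\tlemma = \normr\,(Z_P + \Zphi)$ reduces equation \eqref{eq:for_gamma_general} to the pair of equations
\begin{align*}
\normr\,\oaxial(Z_P) &= \sum_{(k,l)\in \V}\normx^{2k}z^l P_{lk}(z,t),\\
\normr\,\oaxial(\Zphi) &= \sum_{(k',l')\in\V'}\normx^{2k'}z^{l'}\Phi^{(m)}_{l'k'}(x).
\end{align*}
Since both equations are linear in the unknown and their right-hand sides are finite sums of the model terms appearing in parts (1) and (2) of Lemma \ref{res:lemma_oaxial_equation}, I would construct $Z_P$ (respectively $\Zphi$) as the sum, over $(k,l)\in\V$ (respectively $(k',l')\in \V'$), of the solutions provided termwise by Lemma \ref{res:lemma_oaxial_equation}.

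From part (1), each summand of $Z_P$ is a $C^m(U)$ function that is $O(\normr^{2k+l-1})$, with radial derivative $O(\normr^{2k+l-2})$. From part (2), each summand of $\Zphi$ is in $C^m(U\setminus\centresph_0)\cap C^0(U)$ and $o(\normr^{2k'+l'+m-1})$, with radial derivative $o(\normr^{2k'+l'+m-2})$. Taking finite sums preserves each differentiability class, and the decay rate of a finite sum is controlled by the worst (smallest) exponent appearing. Thus $Z_P\in C^m(U)\cap O(\normr^{b-1})$ with $\partialr(Z_P)\in O(\normr^{b-2})$, and $\Zphi\in C^m(U\setminus\centresph_0)\cap C^0(U)$, with $\Zphi\in o(\normr^{c+m-1})$ and $\partialr(\Zphi)\in o(\normr^{c+m-2})$. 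This establishes the intermediate statement of the corollary.

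The final two regime statements follow by elementary comparison of orders. Since $\tlemma = \normr(Z_P+\Zphi)$ and $\partialr(\normr)=1$, we have $\partialr(\tlemma) = (Z_P+\Zphi) + \normr\,\partialr(Z_P+\Zphi)$. When $b\leq c+m$, the ``polynomial'' part dominates: $b-1\leq c+m-1$, so $o(\normr^{c+m-1})\subset O(\normr^{b-1})$, giving $Z_P+\Zphi\in O(\normr^{b-1})$; multiplying by $\normr$ yields $\tlemma\in O(\normr^b)$, and a similar computation controls $\partialr(\tlemma)\in O(\normr^{b-1})$, the $\normr\,\partialr(Z_P)\in O(\normr^{b-1})$ contribution being the dominant one. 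When $b>c+m$, the polynomial part's decay is strictly better than the remainder's, so $O(\normr^{b-1})\subset o(\normr^{c+m-1})$, whence $\tlemma\in o(\normr^{c+m})$ and $\partialr(\tlemma)\in o(\normr^{c+m-1})$ by the same bookkeeping.

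There is no real obstacle here: all the hard analytical work (construction of solutions on each sphere with controlled behaviour on the axis and at the centre, plus the transversal regularity in $r,t$) has already been absorbed into Lemma \ref{res:lemma_oaxial_equation}. The only point requiring slight care is remembering that while the first class of summands is $C^m$ up to and including $\centresph_0$, the second class is only $C^m$ away from $\centresph_0$, so the natural differentiability class of the sum $Z_P+\Zphi$ is $C^m(U\setminus\centresph_0)$, as stated. The continuous extension to $\centresph_0$ with value zero then follows automatically from the decay $\tlemma\in O(\normr^b)$ with $b\geq 1$.
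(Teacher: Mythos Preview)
Your proposal is correct and follows essentially the same approach as the paper: write $\tlemma = \normr\,\widehat{Z}$ (using $\oaxial(\normr)=0$), then apply Lemma~\ref{res:lemma_oaxial_equation} termwise by linearity. The paper's proof is in fact far terser than yours --- it simply records the substitution, invokes linearity and the lemma, and declares ``the rest is immediate'' --- so your additional bookkeeping for the two regimes $b\leq c+m$ and $b>c+m$ merely makes explicit what the paper leaves to the reader.
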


\begin{proof}
  Write $\gamma = \norm{x} \widehat{Z}$ so that $\widehat{Z}$ solves
  \begin{align*}
    \norm{x} \iota (\widehat{Z}) =
    \sum_{(k,l) \in \V}\normrho{x}^{2k} z^lP_{lk}(z,t)
+\sum_{(k',l') \in \V'}\normrho{x}^{2k'} z^{l'} \Phi^{(m)}_{l'k'}(x),
  \end{align*}
  Since the equation is linear the solution decomposes into a sum and
  we may apply Lemma \ref{res:lemma_oaxial_equation} to each term.
  The rest 
  is immediate.\fin
\end{proof}

\begin{corollary}
\label{res:main_coro_0}
With the above definitions (in particular
$\Gamma^{(m-1)}$ is axially symmetric, $C^{m-1}(U \setminus \centresph_0)
\cap C^0(U)$ and $O(|x|)$, cf. point \emph{3.} in Lemma \ref{res:lemma_oaxial_equation})
the equation on $U$
\begin{equation}
  \label{eq:for_gamma_general_zero}
  \oaxial(\tlemma_0)=
  \sum_{(k,l) \in \V}\normrho{x}^{2k} z^lP_{lk}(z,t)
+\sum_{(k',l') \in \V'}\normrho{x}^{2k'} z^{l'} \Phi^{(m)}_{l'k'}(x) +\normx^{2}\Phizero^{(m-1)},
\end{equation}
with $m\geq 1$
admits an axially symmetric solution
$\tlemma_0\in C^{m-1}(U\setminus\centresph_0)\cap C^0(U)$,
which is also $C^m$ on each sphere $S_r$,
and moreover $\partialr(\tlemma_0)$ and $\partial_t \tlemma_0$ are
$C^{m-1}$ on each sphere $S_r$.
In addition, $\tlemma_0 \in  O(\normr^{3})$ provided
$3\leq b\leq c +m$.
\end{corollary}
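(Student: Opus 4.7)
\medskip

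The plan is to exploit the linearity of the equation together with the fact that the three sums on the right-hand side of \eqref{eq:for_gamma_general_zero} are precisely of the types treated in the three items of Lemma \ref{res:lemma_oaxial_equation}. Specifically, I will write $\tlemma_0 = \normr\bigl(Z_P+\Zphi+\Zzero\bigr)$, where $Z_P$ comes from item 1 of that lemma applied to each polynomial summand in $\V$, $\Zphi$ comes from item 2 applied to each summand in $\V'$, and $\Zzero$ comes from item 3 applied to $\Phizero^{(m-1)}$ (with the differentiability parameter of that item taken to be $m-1$, which is admissible since $m\geq 1$). Because $\oaxial(\normr)=0$, the equation $\oaxial(\tlemma_0)=\mathrm{RHS}$ is equivalent to $\normr\oaxial(\widehat Z_0)=\mathrm{RHS}$ with $\widehat Z_0\defi Z_P+\Zphi+\Zzero$, which is exactly the normalisation under which Lemma \ref{res:lemma_oaxial_equation} is stated.

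Once existence is obtained, the differentiability claims follow by taking the weakest of the three regularities. Item 3 with parameter $m-1$ gives $\Zzero\in C^{m-1}(U\setminus\centresph_0)\cap C^0(U)$, while items 1 and 2 give $Z_P\in C^m(U)$ and $\Zphi\in C^m(U\setminus\centresph_0)\cap C^0(U)$. Hence $\widehat Z_0\in C^{m-1}(U\setminus\centresph_0)\cap C^0(U)$, and since $\normr$ is smooth away from $\centresph_0$ we obtain $\tlemma_0\in C^{m-1}(U\setminus\centresph_0)$. On each fixed sphere $S_r$ the factor $\normr=r$ is constant; restricted to $S_r$, item 1 gives $Z_P\in C^m(S_r)$, item 2 (via its internal statement that $\widehat\beta_m$ is $C^m$ on each $S_r$) gives $\Zphi\in C^m(S_r)$, and item 3 applied with $m-1$ yields $\Zzero\in C^m(S_r)$, so $\tlemma_0\in C^m(S_r)$. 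For $\partialr(\tlemma_0)=\widehat Z_0+\normr\,\partialr(\widehat Z_0)$, the first summand is $C^m(S_r)\subset C^{m-1}(S_r)$; for the second, $\partialr(Z_P)\in C^{m-1}$ trivially, $\partialr(\Zphi)\in C^{m-1}(S_r)$ is in the conclusion of item 2, and $\partialr(\Zzero)\in C^{m-1}(S_r)$ is the explicit content of item 3 (with $m\to m-1$). The same reasoning applied to $\partial_t$ (which commutes with $\oaxial$ and preserves the inhomogeneous structure) gives $\partial_t\tlemma_0\in C^{m-1}(S_r)$.

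Finally, to obtain $\tlemma_0\in O(\normr^{3})$ under the additional hypothesis $3\leq b\leq c+m$, I use the bounds from items 1--3: $Z_P\in O(\normr^{b-1})\subset O(\normr^{2})$, $\Zphi\in o(\normr^{c+m-1})\subset o(\normr^{2})$ (both via $b\geq 3$ and $c+m\geq b\geq 3$), and $\Zzero\in O(\normr^{2})$. Multiplying by $\normr$ gives $\tlemma_0\in O(\normr^{3})$. Irrespectively of this restriction, continuity of $\tlemma_0$ at $\centresph_0$ is automatic, since $\widehat Z_0$ is bounded near $\centresph_0$ (each summand is, as $b\geq 2$ and $c+m\geq 2$) and $\normr\to 0$ there. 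No essential obstacle arises; the only place that needs mild care is the bookkeeping of the regularity on each $S_r$ of the radial and $t$-derivatives, where one must quote the finer statements buried in the proof of item 2 of Lemma \ref{res:lemma_oaxial_equation} rather than just the differentiability on $U\setminus\centresph_0$.
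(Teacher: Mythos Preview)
Your proof is correct and follows essentially the same approach as the paper: use linearity to write $\tlemma_0=\normr(Z_P+\Zphi+\Zzero)$, invoking the three items of Lemma \ref{res:lemma_oaxial_equation} (item 3 with differentiability parameter $m-1$), which is exactly what the paper does via Corollary \ref{res:main_coro} plus the addition of $\normr\Zzero$. Your version is in fact more thorough than the paper's terse one-line proof, since you explicitly verify the $C^{m-1}(S_r)$ regularity of $\partialr(\tlemma_0)$ and $\partial_t(\tlemma_0)$ and correctly note that this requires the sphere-wise regularity of $\partialr(\Zphi)$ and $\partial_t(\Zphi)$ established inside the proof of item 2 rather than in its statement.
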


\begin{proof}
We use the previous corollary, which ensures the existence of the solution $\tlemma$,
followed by the addition $\tlemma_0=\tlemma+\normr\Zzero$ of the solution $\Zzero$ for
\eqref{eq:for_gamma_f_zero} given by the third point of the lemma.\fin
\end{proof}

\bibliography{references}{}
\bibliographystyle{review_bib}

\end{document}